\definecolor{mygreen}{rgb}{0,0.6,0}
\lstdefinelanguage{links}{
  language=C,
  basicstyle={\linespread{0.8}\ttfamily}, 
  tabsize=3,
  commentstyle=\color{mygreen},
  morekeywords={lens,lensselect,lensjoin,lensdrop,get,put,table,var,database,where,with,tablekeys,from,determined,by,on,insert,into,values},
}
\newcommand*{\Secref}[1]{\S\,\ref{sec:#1}}
\newcommand*{\SecrefTwo}[2]{\S\S\,\ref{sec:#1} and \ref{sec:#2}}
\newcommand*{\opname}[1]{\operatorname{\mathit{#1}}}
\newcommand*{\Genrecrevise}{\opname{recrevise}}
\newcommand*{\Relmerge}{\opname{merge}}
\newenvironment{salign}
   {\par\nobreak\small\noindent\csname align*\endcsname}
   {\csname endalign*\endcsname}
\newenvironment{sflalign}
   {\par\nobreak\small\setcounter{equation}{0}\noindent\csname flalign\endcsname}
   {\csname endflalign\endcsname}
\newcommand*{\newrecrevise}[3]{\Genrecrevise_{#2}(#1,#3)}
\newcommand*{\relrevise}[3]{\opname{revise}_{#2}(#1,#3)}
\newcommand*{\recupdate}[2]{{#1} \leftplusarrow {#2}}
\newcommand*{\recproj}[2]{{#1}[#2]}
\newcommand*{\relmerge}[3]{\opname{merge}_{#2}(#1,#3)}
\newcommand*{\proj}[2]{\pi_{#2}{(#1)}}
\newcommand*{\param}{\cdot}
\newcommand*{\dom}{\opname{dom}}
\newcommand*{\getOp}{\opname{get}}
\newcommand*{\putOp}{\opname{put}}
\newcommand*{\id}{\opname{id}}
\newcommand*{\sym}{\opname{sym}}
\newcommand*{\assoc}{\opname{assoc}}
\newcommand*{\defeq}{\stackrel{\smash{\text{\tiny def}}}{=}}
\newcommand*{\domsubtract}[2]{{#1}\backslash_{#2}}
\newcommand*{\select}[2]{\sigma_{#1}(#2)}
\newcommand*{\rename}[2]{\rho_{#1}(#2)}
\newcommand*{\set}[1]{\{#1\}}
\newcommand*{\record}[1]{\{#1\}}
\newcommand*{\positive}[1]{{#1}^{+}}
\newcommand*{\negative}[1]{{#1}^{-}}
\newcommand{\pDelta}[1]{\positive{\Delta #1}}
\newcommand{\nDelta}[1]{\negative{\Delta #1}}
\newcommand*{\letin}[2]{\texttt{let}\;{#1} = {#2}\;\texttt{in}}
\newcommand*{\letexpr}[3]{\letin{#1}{#2}\;{#3}}
\newcommand*{\Deltify}[1]{\dot{#1}}
\newcommand*{\deltify}[1]{{\delta\hspace{-0.15em}#1}} 
\newcommand*{\deltaOp}[1]{\deltify{\opname{#1}}}
\newcommand*{\deltaselect}[2]{\Deltify{\sigma}_{#1}(#2)}
\newcommand*{\deltarename}[2]{\Deltify{\rho}_{#1}(#2)}
\newcommand*{\deltasetminus}{\mathrel{\Deltify{\setminus}}}
\newcommand*{\deltaproj}[2]{\Deltify{\pi}_{#2}{(#1)}}
\newcommand*{\deltaJoin}{\mathrel{\Deltify{\Join}}}
\newcommand*{\deltarelrevise}[3]{\deltaOp{revise}_{#2}(#1,#3)}
\newcommand*{\deltarelmerge}[3]{\deltaOp{merge}_{#2}(#1,#3)}
\let\oldoplus\oplus
\let\oplus\undefined
\let\oldominus\ominus
\let\ominus\undefined
\newcommand*{\deltaminus}{\oldominus}
\newcommand*{\deltaplus}{\oldoplus}
\newcommand*{\deltaapp}{\oldoplus}
\newcommand{\hd}[1]{\textbf{#1}}
\newcommand{\tx}[1]{\texttt{`#1'}}
\newcommand{\ch}[1]{\textcolor{red}{\textbf{#1}}}
\newcommand{\vtype}[3]{\mathit{Rel}(#1,#2,#3)}
\newcommand{\reltype}[1]{\mathit{Rel}(#1)}
\newcommand*{\Right}{\opname{right}}
\newcommand*{\Left}{\opname{left}}
\newcommand*{\roots}{\opname{roots}}
\newcommand*{\outputs}{\opname{outputs}}
\newcommand*{\affected}{\opname{affected}}
\newcommand*{\SelectPrim}[1]{\texttt{select}_{#1}}
\newcommand*{\DropPrim}[3]{\texttt{drop}\;{#1}\;\texttt{determined\;by}\;(#2,#3)}
\newcommand*{\JoinDLPrim}{\texttt{join\_dl}}
\newcommand*{\JoinDRPrim}{\texttt{join\_dr}}
\newcommand*{\JoinBothPrim}{\texttt{join\_both}}
\newcommand*{\IdPrim}[1]{\texttt{id}_{#1}}
\newcommand*{\SymPrim}[1]{\texttt{sym}_{#1}}
\newcommand*{\AssocPrim}[1]{\texttt{assoc}_{#1}}
\newcommand*{\RenamePrim}[1]{\texttt{rename}_{#1}}
\newcommand*{\DeltaSelectPrim}[1]{\deltaOp{\texttt{select}}_{#1}}
\newcommand*{\DeltaDropPrim}[3]{\deltaOp{\texttt{drop}}\;{#1}\;\texttt{determined\;by}\;(#2,#3)}
\newcommand*{\DeltaJoinDLPrim}{\deltaOp{\texttt{join\_dl}}}
\newcommand*{\DeltaRenamePrim}[1]{\deltaOp{\texttt{rename}}_{#1}}
\newcommand*{\lensto}{\Leftrightarrow}
\newcommand*{\leftplusarrow}{\mathop{\leftarrow \negthickspace \negthickspace \negmedspace +}}
\renewcommand*{\emptyset}{\varnothing}
\renewcommand*{\setminus}{-}
\newenvironment{mathfig}{\begin{displaymath}}{\end{displaymath}}
\newenvironment{syntaxfig}{\begin{mathfig}\begin{array}{@{}l@{\quad}r@{~~}c@{\quad}ll}}{\end{array}\end{mathfig}}
\newcommand*{\tblname}[1]{\emph{#1}}
\newcommand*{\tblfd}[2]{
	\begin{array}{c}
		#1 \\ \text{with } \left\{ #2 \right\}
	\end{array}
}
\newcommand*{\colname}[1]{\emph{#1}}
\newcommand{\ms}[1]{$#1 ms$}
\newcommand*{\proofContext}[1]{\def\currentprefix{proof:#1}}
\newcommand*{\locallabel}[1]{\label{\currentprefix:#1}}
\newcommand*{\localref}[1]{\ref{\currentprefix:#1}}
\newcommand*{\lift}[1]{{#1}^\dagger}
\newcommand*{\DeltaOp}[1]{\lift{\opname{#1}}}
\newcommand*{\Deltaselect}[2]{\lift{\sigma}_{#1}(#2)}
\newcommand*{\Deltarename}[2]{\lift{\rho}_{#1}(#2)}
\newcommand*{\Deltasetminus}{\mathrel{\lift{\setminus}}}
\newcommand*{\Deltaproj}[2]{\lift{\pi}_{#2}{(#1)}}
\newcommand*{\DeltaJoin}{\mathrel{\lift{\Join}}}
\newcommand*{\Deltarelmerge}[3]{\lift{\opname{merge}}_{#2}(#1,#3)}
\newcommand*{\qedLocal}{\hookrightarrow}
\newcommand*{\interCaseZero}[1]{%
& \uline{\textbf{\emph{Case}}\ #1:}& %
\\[0.25em]%
}
\newcommand*{\interCaseOne}[1]{%
& \indent \uline{\textbf{\emph{Case}}\ #1:}& %
\\[0.25em]%
}
\newcommand*{\interCaseTwo}[1]{%
& \indent\indent \uline{\textbf{\emph{Case}}\ #1:}& %
\\[0.25em]%
}
\newcommand*{\interSubcaseZero}[1]{%
& \uline{#1}:& %
\\[0.25em]%
}
\newcommand*{\interSubcaseOne}[1]{%
& \indent \uline{#1}:& %
\\[0.25em]%
}
\DeclareFontFamily{U}{mathb}{\hyphenchar\font45}
\DeclareFontShape{U}{mathb}{m}{n}{
      <5> <6> <7> <8> <9> <10>
      <10.95> <12> <14.4> <17.28> <20.74> <24.88>
      mathb10
      }{}
\DeclareSymbolFont{mathb}{U}{mathb}{m}{n}
\DeclareMathSymbol{\sqbullet}{1}{mathb}{"0D}
\newcommand{\minimal}[2]{#1 \text{ minimal for~} #2}
\newcommand{\reflemma}[1]{lem.~\ref{lem:#1}}
\newcommand{\negDM}{\negative{\Delta M}}
\newcommand{\negDN}{\negative{\Delta N}}
\newcommand{\posDM}{\positive{\Delta M}}
\newcommand{\posDN}{\positive{\Delta N}}
\newcommand{\Derive}[1]{\delta(#1)}
\newcommand{\Dagger}[1]{(#1)^\dagger}
\newcommand{\opmap}[2]{\opname{map}_{#1}(#2)}
\newcommand{\dopmap}[2]{\deltaOp{map}_{#1}(#2)}
\begin{document}

\title{Incremental Relational Lenses}

\author{Rudi Horn}
\orcid{0000-0001-6164-2208}
\affiliation{
  \institution{University of Edinburgh}            
  \country{United Kingdom}                    
}
\email{r.horn@ed.ac.uk}

\author{Roly Perera}
\orcid{0000-0001-9249-9862}
\affiliation{
  \institution{University of Edinburgh and University of Glasgow}            
  \country{United Kingdom}                    
}
\email{roly.perera@ed.ac.uk}

\author{James Cheney}
\orcid{0000-0002-1307-9286}
\affiliation{
  \institution{University of Edinburgh}            
  \country{United Kingdom}                   
}
\email{jcheney@inf.ed.ac.uk}

\begin{abstract}

  Lenses are a popular approach to bidirectional transformations, a
  generalisation of the \emph{view update} problem in databases, in
  which we wish to make changes to \emph{source} tables to effect a
  desired change on a \emph{view}.  However, perhaps surprisingly,
  lenses have seldom actually been used to implement updatable views
  in databases. Bohannon, Pierce and Vaughan proposed an approach to
  updatable views called \emph{relational lenses}, but to the best of
  our knowledge this proposal has not been implemented or evaluated to
  date.  We propose \emph{incremental relational lenses}, that equip
  relational lenses with change-propagating semantics that map small
  changes to the view to (potentially) small changes to the source
  tables.  We also present a language-integrated implementation of
  relational lenses and a detailed experimental evaluation, showing
  orders of magnitude improvement over the non-incremental
  approach. Our work shows that relational lenses can be used to
  support expressive and efficient view updates at the language level,
  without relying on updatable view support from the underlying
  database.

\end{abstract}


\begin{CCSXML}
<ccs2012>
<concept>
<concept_id>10011007.10011006.10011050.10011017</concept_id>
<concept_desc>Software and its engineering~Domain specific languages</concept_desc>
<concept_significance>500</concept_significance>
</concept>
<concept>
<concept_id>10002951.10002952.10003190.10003205</concept_id>
<concept_desc>Information systems~Database views</concept_desc>
<concept_significance>500</concept_significance>
</concept>
<concept>
<concept_id>10002951.10002952.10003197.10010822</concept_id>
<concept_desc>Information systems~Relational database query languages</concept_desc>
<concept_significance>300</concept_significance>
</concept>
</ccs2012>
\end{CCSXML}

\ccsdesc[500]{Software and its engineering~Domain specific languages}.
\ccsdesc[500]{Information systems~Database views}
\ccsdesc[300]{Information systems~Relational database query languages}

\keywords{lenses, bidirectional transformations, relational calculus, incremental computation}  

\maketitle

\section{Introduction}

A typical web application is based on a three-tier architecture with user
interaction on the client (web browser), application logic on the server, and
data stored in a (typically relational) database. Relational databases employ
SQL query and update expressions, including \emph{projections},
\emph{selections} and \emph{joins}, which correspond closely to familiar list
comprehension operations in functional languages.  Relational databases offer
data persistence and high performance and are flexible enough for a range of
applications. However, the \emph{impedance mismatch} between database queries
and conventional programming makes even simple programming tasks
challenging~\cite{copeland84smalltalk}.  Languages such as
C\#~\cite{meijer2006sigmod}, F\#~\cite{syme2006linq,cheney2013linq},
Links~\cite{cooper2006links}, and Ur/Web~\cite{chlipala2015urweb}, and libraries
such as Database-Supported Haskell~\cite{ulrich2015dsh}, have partly overcome
this challenge using \emph{language-integrated query}, in which query
expressions are integrated into the host language and type system.

However, language support for database programming is still
incomplete.  For example, \emph{views} are a fundamental concept in
databases~\cite{ramakrishnan2003database} that are typically not
supported in language-integrated query.  A view is a relation defined
over the database tables using a relational query.  Views have many
applications:
\begin{enumerate}
\item A \emph{materialised} view can \emph{precompute} query answers, avoiding
expensive recomputation;
\item a \emph{security} view shows just the information a user needs, while
hiding confidential data;
\item finally, views can be used to define the \emph{external schema} of a
database, presenting the data in a form convenient for a particular application,
or for \emph{integrating} data from different databases.
\end{enumerate}

Most databases allow specifying views using a variant of table
creation syntax, and views can then be used in much the same way as
regular database tables.  It is therefore natural to wish to
\emph{update} a view; for example, making a security view updatable
would make sense if the user is intended to have write access to the
view data but should not have write access to the underlying table.
Unfortunately, it is nontrivial to update views.  Some view updates
may correspond to multiple possible updates to the source tables,
while others may not be translatable at
all~\cite{dayal1982correct,bancilhon1981update}.  Most relational
database systems only allow selection and projection operations in
updatable views, so updating views defined using joins is not allowed.

\emph{Lenses} were introduced by~\citet{foster2007combinators} as a
generalisation of updatable views to arbitrary data structures.  A lens is a
pair of functions $\getOp : S \to V$ and $\putOp : S\times V \to S$, subject to
the following \emph{round-tripping} or \emph{well-behavedness} laws:
\[\getOp(\putOp(s,v)) = v \qquad \putOp(s,\getOp(s)) = s\]
Lenses are particularly well-suited to programming tasks where it is necessary
to maintain consistency between `the same' data stored in different places, as
often arises in web programming.  A great deal of research on
\emph{bidirectional programming} has considered this problem, especially in the
functional programming
community~\cite{stevens2007landscape,bohannon2006relational,hidaka2010unql,foster2007combinators,foster10ssgip,diskin2011delta,hofmann2011symmetric,hofmann2012edit,wang2011incremental}.

Perhaps surprisingly, relatively little of this work has considered view updates
in databases. The most important exception is the work of
\citet{bohannon2006relational}, who proposed lens combinators for projections,
selections and joins on relational data, and proved their well-behavedness.
These \emph{relational lenses} are defined using $\putOp$ functions which map
the source database and an updated view to the updated source database.
\citeauthor{bohannon2006relational}'s work showed that it is possible in
principle for databases to support updatable views including joins, provided the
type system tracks \emph{integrity constraints} on the data, such as functional
dependencies.

However, to the best of our knowledge, the practicality of relational lenses has
not been demonstrated.  The proposed definitions of $\getOp$ and $\putOp$ are
\emph{state-based} --- they showed how to compute the view from the base table
state, and how to compute the entire new state of the base tables from the
updated view and the old table state.  These definitions suggest an obvious, if
naive, implementation strategy: computing the new source table contents and
replacing the old contents.  This is simply impractical for any realistic
database, and is usually hugely wasteful, in the common case when updates affect
relatively few records.  Replacing source tables would also necessitate locking
access to the affected tables for long periods, destroying any hope of
concurrent access.

Luckily, replacing entire source tables with their new contents is
seldom necessary.  The reason is that updates to tables (and views)
are often small: for example, a row might be inserted or deleted, or a
single field value modified.  Indeed, there is a large literature on
the problem of \emph{incremental view
  maintenance}~\cite{gupta1995maintenance,koch2010incremental,koch2016incremental} addressing the problem of
how to modify a materialised view to keep it consistent with changes
to the source tables.
\begin{figure}
	\includegraphics[width=0.6\textwidth]{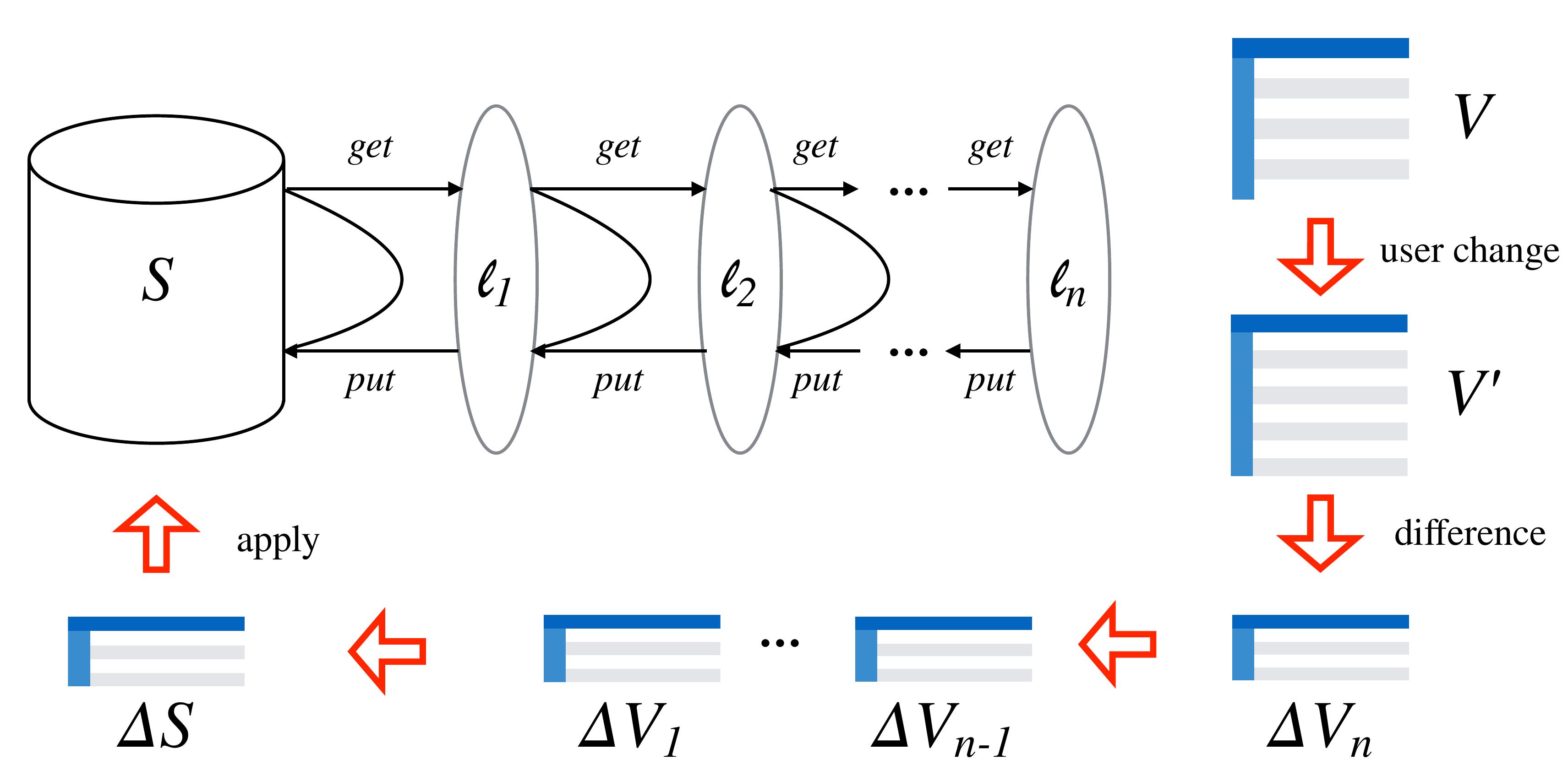}
	\caption{Propagating changes through lenses from view to
          source.}
	\label{fig:incremental_get_put}
\end{figure}
The benefits of incremental evaluation are not
confined to databases, either: witness the growing literature on
\emph{adaptive} and \emph{incremental} functional
programming~\cite{acar2006toplas,hammer2014pldi,cai2014pldi}.  Indeed,
the foundations of change-oriented bidirectional transformations have even been
investigated in the form of \emph{edit lenses}~\cite{hofmann2012edit}
and other formalisms.  It  is natural to ask whether
incrementalisation can be used to make relational lenses practical.

In this paper, we propose \emph{incremental relational lenses}.
Figure~\ref{fig:incremental_get_put} illustrates our approach.  Given a lens
$\ell$ (defined by composing several primitive lenses $\ell_i$), and the initial
view value $V = \getOp_\ell(S)$, suppose $V$ is updated to $V'$.  We begin by
calculating a \emph{view delta} (i.e. change set) $\Delta V = V' \deltaminus V$.
Here $\deltaminus$ is the operation that calculates a delta mapping one value to
another.  Then, for each step $\ell_i$ of the definition of $\ell$, we translate
the view delta $\Delta V_i$ of $V_i$ to a source delta $\Delta V_{i-1}$.  We do
this by defining an incremental version of the $\putOp$ operation,
$\deltaOp{\putOp}$, which takes $S$ and $\Delta V$ as arguments, and which
satisfies the following law:
\[\putOp(S,\getOp(S) \deltaapp \Delta V) = S \deltaapp \deltaOp{\putOp}(S,\Delta V)
\]
where $\deltaapp$ denotes the application of a delta to a value.
Finally, once we have calculated the source delta $\Delta S = \Delta V_0$, we translate it to a
sequence of SQL \texttt{INSERT}, \texttt{UPDATE} and \texttt{DELETE}
commands.

Our approach avoids recomputing and replacing entire
tables.  Moreover, it can often translate small view
deltas to small source deltas.  Working with small deltas reduces the amount of
computation and data movement incurred.  On the other hand, incremental
relational lenses still may need to access the source tables to compute
correct deltas.  We show that this can be done efficiently by issuing
auxiliary queries during delta propagation.

We have implemented incremental relational lenses in
Links~\cite{cooper2006links}, a web programming language with comprehensive
support for language-integrated queries.  Our experiments show that incremental
evaluation offers dramatic performance benefits over the naive state-based
approach, just as one would hope or even expect.  Perhaps more importantly, we
prove the correctness of our approach.  The state-based relational lens
definitions have a number of subtleties, and proving the correctness of their
incremental versions is a nontrivial challenge. Since relational lenses use
set-based rather than multiset-based semantics, recent work by
\citet{koch2010incremental} and \citet{cai2014pldi} on incrementalising multiset
operations does not apply; instead, we build on classical work on incremental
view updates~\cite{qian91tkde,griffin1997incremental}.  Incremental relational
lenses are also related to edit lenses \cite{hofmann2012edit} and some other
frameworks; we discuss the relationship in detail in Section~\ref{sec:related}.

\paragraph{Outline and contributions}
In the rest of this paper, we present necessary background on
relational lenses and incrementalisation, define and prove the
correctness of incremental relational lenses, and empirically validate
our implementation to establish practicality.
\begin{itemize}
\item \Secref{overview} shows how relational lenses are integrated into Links, and illustrates our incremental semantics for relational lenses via examples.
\item \Secref{background} presents background from
  \citet{bohannon2006relational}, including auxiliary concepts and
  their state-based definitions of relational lenses.
\item \Secref{framework} introduces our framework for incremental relational queries.
\item \Secref{incremental} presents our main contribution,
  incremental relational lenses, along with proofs of correctness of
  optimised forms of their $\deltaOp{put}$ operations.
\item \Secref{evaluation} presents our experimental results.
\item \SecrefTwo{related}{concl} discuss related work and present our conclusions.
\end{itemize}

\section{Relational lenses by example}\label{sec:overview}

\lstset{language=links}

In this section we illustrate the use of relational lenses as integrated into
Links, a web and database programming language~\cite{cooper2006links}.
We first
illustrate the naive implementation of relational lenses and then show the
incremental approach.  We use an example from \citet{bohannon2006relational}
involving a small database of albums and tracks, and an updatable view that can
be defined over it using relational lenses.  In Links, it is straightforward to
define a web-based user interface for editing such a view; we elide those
details.  We also suppress the technical details of relational lenses and
incrementalisation until later sections.

In Links, we can initialise a database connection using the
\lstinline{database}\footnote{Information such as the hostname, port, username
and password needed to access the database also needs to be provided; this can
be included in the \lstinline{database} expression or, as here, in a separate
configuration file.} expression, e.g.:

\begin{lstlisting}
var db = database "music_database";
\end{lstlisting}

\noindent
We can then define table references that correspond to actual database
tables, using the \texttt{table} expression; for example,
we can define references to the tables ``albums'' and
``tracks'' as follows:

\begin{lstlisting}
var albumsTable = table "albums"
                  with (album: String, quantity: Int)
                  tablekeys [["album"]] from db;

var tracksTable = table "tracks"
                  with (track: String, date: Int, rating: Int, album: String)
                  tablekeys [["track", "album"]] from db;
\end{lstlisting}
In a \lstinline{table} expression, the table name and names of fields
and their types are given, and the \lstinline{tablekeys} clause
provides a list of lists of field names.  Each element of this list
should be a \emph{key} in the sense that the combined field values are
unique in the table; for example, in \lstinline{albumsTable} there is
at most one row with a given \lstinline{album} field, whereas in
\lstinline{tracksTable} the track and album fields together uniquely
identify a row.  Key constraints are a special kind of
\emph{functional dependency}, discussed further below.

Links supports language integrated queries and updates over base tables.  We
have extended Links with updatable views based on relational lenses. Relational
lens definitions require the specification of \emph{functional dependencies} for
base tables, constraints that indicate which attributes determine the value of
other attributes.  As an example, a functional dependency $album \to quantity$
says that if two rows have the same \lstinline{album} attribute, they must also
have the same \lstinline{quantity} value.

The \lstinline{lens} keyword makes a base table into a basic lens, and
allows specifying additional functional dependencies beyond key
constraints~\footnote{Such constraints are not currently checked in
  our implementation, but this is straightforward and an orthogonal
  concern.}. We define basic lenses for each of our tables: the
\emph{albums} table has a functional dependency $album \to quantity$,
and the \emph{tracks} table has a functional dependency
$track \to date\ rating$, which says that $date$ and $rating$ depend
on $track$.  This implies that any track may appear in different
albums, but should have the same $date$ and $rating$.  (These
functional dependencies are as specified by
\citet{bohannon2006relational}, but result in a database with some
redundancy; however, we keep the example as is for ease of comparison
with prior work.)

\begin{lstlisting}
var albumsLens = lens albumsTable with album -> quantity;
var tracksLens = lens tracksTable with track -> date rating;
\end{lstlisting}

A common workflow within a web application is to extract some view of
the data from the database, associate it with a form, and map form
responses to updated versions of the associated data.  If the mapping
from the database to the form data is defined by a lens, then the view
can be fetched from the lens using the $\getOp$ operation.  When the
user submits the form, the $\putOp$ operation of the lens should
allow us to propagate the changes to the underlying database.  So,
we can add a row to $albumsTable$ as follows:
\begin{lstlisting}
var albums = get albumsLens;
var newAlbums = albums ++ [(album='Disintegration', quantity=1)]
put albumsLens with newAlbums;
\end{lstlisting}
Because \lstinline{albumsLens} is simply a basic lens wrapping
\lstinline{albumsTable}, this is equivalent to just using a
SQL-style update, written in Links as follows:
\begin{lstlisting}
insert into albumsTable values [(album='Disintegration', quantity=1)]
\end{lstlisting}

In practice, views are more useful for selecting subsets of data or
combining tables.  For example, the web application might show a form
allowing updates to a single album at a time, such as \tx{Galore}.
The updatable form data can be extracted using a select lens on
\emph{tracksLens} which requires rows to have $album = \tx{Galore}$.
We then call $\getOp$ on the select lens, make any desired changes and
then use $\putOp$ on the select lens and the new view to update the
database.

\begin{lstlisting}
/* create a lens which selects only the tracks from 'Galore' */
var selectLens = lensselect from tracksLens where album = 'Galore';
/* get the smaller subset of tracks */
var tracks = get selectLens;
/* ... newTracks = updates to tracks ... */
/* update database */
put selectLens with newTracks;
\end{lstlisting}

Suppose \lstinline{tracksTable} contains the entries specified in Figure
\ref{fig:select_lens_get_example} on the left. Calling $\getOp$ on the lens
produces the view on the right, containing only the records having
$album=\tx{Galore}$. If the user changes the rating of the track \tx{Lullaby} to
4, then submits the form, an updated view is generated as shown on the right in
Figure \ref{fig:select_lens_put_example}.  The application can then call
$\putOp$ on the view, which will cause the underlying \emph{tracksTable} table
to be updated with the changes to the \tx{Lullaby} tracks.  Notice that we must
change both tracks because of the functional dependency $track \to date~rating$.
This will produce the updated table as shown on the left in Figure
\ref{fig:select_lens_put_example}.

\begin{figure}[tb]
	\begin{equation*}
		\begin{array}{c|cccc}
			& \hd{track} & \hd{date} & \hd{rating} & \hd{album} \\
			\hline
			& \tx{Lullaby} & 1989 & 3 & \tx{Galore} \\
			& \tx{Lullaby} & 1989 & 3 & \tx{Show} \\
			& \tx{Lovesong} & 1989 & 5 & \tx{Galore} \\
			& \tx{Lovesong} & 1989 & 5 & \tx{Paris} \\
			& \tx{Trust} & 1992 & 4 & \tx{Wish} \\
		\end{array}
		\stackrel{\smash{\text{\tiny{$\getOp$}}}}{\Rightarrow}
		\begin{array}{c|cccc}
			& \hd{track} & \hd{date} & \hd{rating} & \hd{album} \\
			\hline
			& \tx{Lullaby} & 1989 & 3 & \tx{Galore} \\
			& \tx{Lovesong} & 1989 & 5 & \tx{Galore} \\
		\end{array}
	\end{equation*}
	\caption{Select lens example: computing the view (right) from the
          source (left) using $\getOp$ }
	\label{fig:select_lens_get_example}
	\begin{equation*}
		\begin{array}{c|cccc}
			& \hd{track} & \hd{date} & \hd{rating} & \hd{album} \\
			\hline
			& \tx{Lullaby} & 1989 & \ch{4} & \tx{Galore} \\
			& \tx{Lullaby} & 1989 & \ch{4} & \tx{Show} \\
			& \tx{Lovesong} & 1989 & 5 & \tx{Galore} \\
			& \tx{Lovesong} & 1989 & 5 & \tx{Paris} \\
			& \tx{Trust} & 1992 & 4 & \tx{Wish} \\
		\end{array}
		\stackrel{\smash{\text{\tiny{$\putOp$}}}}{\Leftarrow}
		\begin{array}{c|cccc}
			& \hd{track} & \hd{date} & \hd{rating} & \hd{album} \\
			\hline
			& \tx{Lullaby} & 1989 & \ch{4} & \tx{Galore} \\
			& \tx{Lovesong} & 1989 & 5 & \tx{Galore} \\
		\end{array}	\end{equation*}
              \caption{Select lens example: computing the new source
                (left) using $\putOp$ on the new view
                (right) and old source (from
                Fig.~\ref{fig:select_lens_get_example}, left).  The change
                to the view results in two changes to the
                source.}
	\label{fig:select_lens_put_example}
\end{figure}

\paragraph{Type and Integrity Constraints}
Both views and base tables can be associated with integrity
constraints, and updated views need to respect these constraints.
There are three kinds of constraints:

\begin{enumerate}
\item The updated view should be well-typed in the usual sense.  Views
  that have rows with extra or missing fields, or field values of the wrong
  types, are ruled out statically by the type system.
\item The updated view should satisfy the functional dependencies
  associated with the view. Thus, the functional dependency
  $track \to  date\ rating$ from our example implies that we cannot
  change the rating or date of \tx{Lullaby} in one row without
  changing all the others to match.
\item The updated view may also need to satisfy a predicate on the
  rows.  Views defined by lenses may have selection conditions, such
  as $album = \tx{Galore}$.  Inserting rows with other $album$ values,
  or changing this field in existing rows, is not allowed.
\end{enumerate}
These constraints all originate in the definition of \emph{schemas}
for relational lenses introduced by \citet{bohannon2006relational}.  The correctness
properties of relational lenses rely on these integrity constraints,
and if the updated view
satisfies its constraints then the updated underlying table will
also satisfy its own constraints.

\paragraph{Incremental View Updates}

\citet{bohannon2006relational} define the $\getOp$ and $\putOp$
directions of relational lenses as set-theoretic expressions showing
how to compute the new source given the old source and the updated
view.  The most obvious approach to implementing the $\putOp$ behavior
of a relational lens is to use these definitions to calculate the
new source table `from scratch' and replace the old one with the new
one.  For example, in Figure \ref{fig:select_lens_put_example} this
would mean deleting all five tuples of the old tracks table and then
re-inserting the three unchanged tuples and the new versions of the
two modified ones.  We could accomplish the desired effect in
Figure~\ref{fig:select_lens_put_example} using SQL UPDATE operations
to change just the ratings of the \tx{Lullaby} tracks to 4. This would
typically be more efficient (especially if there were many more
unaffected rows).

Therefore, we adopt an incremental approach, as outlined in the introduction.
Instead of working with the entire tables, we first compute a delta for the
modified view, that is, sets of rows to be inserted and deleted. We illustrate
deltas as tables with rows annotated with `+' (for insertion) or `-' (for
deletion).  An example delta for the update shown in
Figure~\ref{fig:select_lens_put_example} is shown on the left of
Figure~\ref{fig:select_lens_put_delta_example}.  This delta is then used to
calculate a delta for the source table, as shown on the right side of
Figure~\ref{fig:select_lens_put_delta_example}.

\begin{figure}[tb]
	\begin{equation*}
		\begin{array}{c|cccc}
			& \hd{track} & \hd{date} & \hd{rating} & \hd{album} \\
			\hline
			-& \tx{Lullaby} & 1989 & 3 & \tx{Galore} \\
			+& \tx{Lullaby} & 1989 & 4 & \tx{Galore} \\
			-& \tx{Lullaby} & 1989 & 3 & \tx{Show} \\
			+& \tx{Lullaby} & 1989 & 4 & \tx{Show} \\
		\end{array}
		\stackrel{\smash{\text{\tiny{$\deltaOp{\putOp}$}}}}{\Leftarrow}
		\begin{array}{c|cccc}
			& \hd{track} & \hd{date} & \hd{rating} & \hd{album} \\
			\hline
			-& \tx{Lullaby} & 1989 & 3 & \tx{Galore} \\
			+& \tx{Lullaby} & 1989 & 4 & \tx{Galore} \\
		\end{array}
	\end{equation*}
	\caption{Select lens example: using $\deltaOp{\putOp}$, we
          compute the source delta (left-hand side) from the view
          delta (right-hand side) and the original source
          (Fig.~\ref{fig:select_lens_get_example}, left).}
	\label{fig:select_lens_put_delta_example}
\end{figure}

Once we have computed the change set for the underlying tables from
Figure~\ref{fig:select_lens_put_delta_example}, we can use the delta and other
available information (such as \lstinline{tablekeys} declarations) to produce
SQL update commands that perform the desired update.  For our example, we can
perform the needed updates using two SQL \lstinline{update} operations, as
follows:

\begin{lstlisting}[language=SQL]
update tracks set date = 1989 rating = 4 where track = 'Lullaby' and album = 'Galore';
update tracks set date = 1989 rating = 4 where track = 'Lullaby' and album = 'Show';
\end{lstlisting}

\paragraph{Composition}

So far we have discussed only one relational lens primitive, namely
selection.  Updatable views can also be defined using relational
lenses for dropping attributes (projection) or combining data from
several tables (joining).  We can combine these primitive relational
lenses using the general definition of composition for
lenses~\cite{foster2007combinators}.

The join operation raises a subtle issue: joining a table with itself
can lead to copying data, which lacks clean bidirectional
semantics~\cite{foster2007combinators}.
\citet{bohannon2006relational} implicitly used a linear type discipline for
relational lenses, forbidding repeated use of the same base table in
different parts of a view.  We do not currently check this constraint
in Links, but there is no fundamental obstacle to doing so, and our
formalisation also enforces it.

\begin{figure}[p]
\begin{lstlisting}
var joinLens = lensjoin albumsLens with tracksLens on album;
var dropLens = lensdrop date determined by track default 2018 from joinLens;
var selectLens = lensselect from dropLens where quantity > 2;
\end{lstlisting}
\caption{A view \lstinline{selectLens} defined by composing join,
  drop and select operators.}
\label{fig:lens_composition_example}
	\includegraphics[width=1\linewidth]{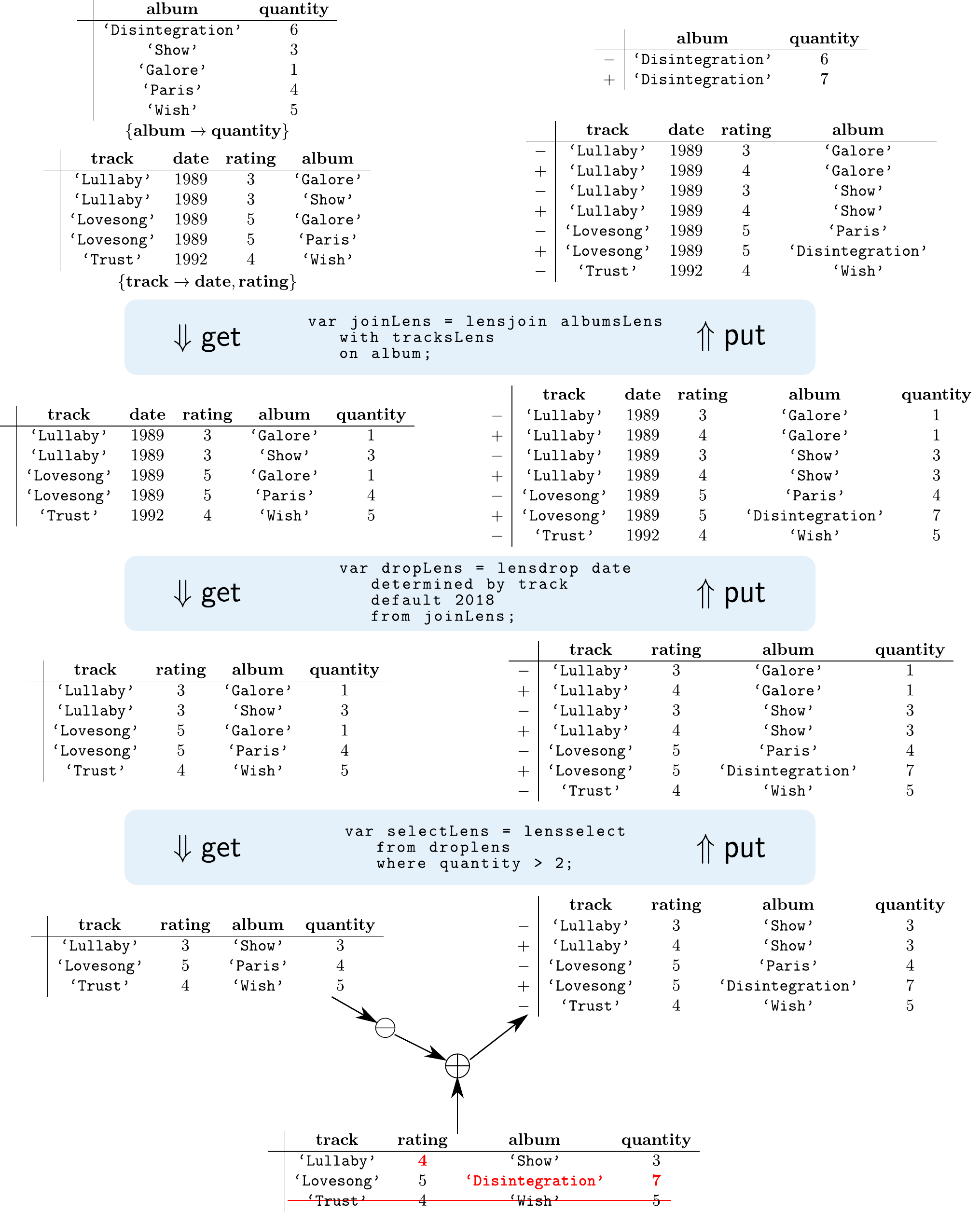}
	\caption{An example of how an update propagates through
          \lstinline{selectLens}. Changes to the view are shown at the bottom in red.}
	\label{fig:lens_composition_example_propagation}
\end{figure}

We extend our track example as shown in
Figure~\ref{fig:lens_composition_example} by first joining the two
tables.
This gives us a view \lstinline{joinLens} containing all
tracks and their corresponding albums and album quantities.  We may
then decide to discard the \emph{date} attribute using a projection
lens, yielding view \lstinline{dropLens}.  (The \lstinline{lensdrop}
combinator includes a \lstinline{default} value giving a value to use
when new data is inserted into the view.)
Finally we use selection to define a view \lstinline{selectLens} retaining rows with quantity
greater than 2.
Figure~\ref{fig:lens_composition_example_propagation} shows each of
the lenses in blue, and along the left shows how the composite lens's
$\getOp$ produces the table in the bottom left with the three tracks
\texttt{'Lullaby'}, \texttt{'Lovesong'} and \texttt{'Trust'}.  We show
the intermediate views in the $\getOp$ direction for completeness, but
it is not necessary to compute them explicitly; we can compose the
$\getOp$ directions and extract a single SQL query to produce the
final output.  The query for the example in Figure
\ref{fig:lens_composition_example} would be:

\begin{lstlisting}[language=SQL]
	select t1.track, t1.rating, t1.album, t2.quantity
		from tracks as t1
		join albums as t2 on t1.album = t2.album
		where t2.quantity > 2;
\end{lstlisting}

Suppose a user then makes the changes shown in red at the bottom of
Figure \ref{fig:lens_composition_example_propagation}.  Performing the update with
composed lenses works similarly to the case for single lenses: for a
composite lens $\ell_1;\ell_2$ we first propagate the view delta backwards
through $\ell_2$ to obtain a source delta, then treat that as a view
delta for $\ell_1$.  We calculate an initial delta by comparing the
updated view with the original view for the last lens.  This is shown at
the bottom of Figure~\ref{fig:lens_composition_example_propagation}: comparing the
original view with the updated table yields the change set shown at
the bottom right.

All intermediate change sets are calculated using the previous change
set and by querying the database.  Since the (non-incremental)
$\putOp$ function is defined in terms of the previous source and
updated view, sometimes we need to know parts of the values of the old
source or old view to calculate the incremental behaviour. Therefore,
for some relational lens steps we need to run one or more queries
against the database during change propagation.  The select lens is an
example: in order to ensure that the source update preserves the
functional dependency $track \to date~ rating$, we need to query the
database to find out what other album/track rows might need to have
their ratings updated.  The drop lens step also illustrates the need
for auxiliary querying, in this case to find out the dropped dates of
rows that are being updated. Finally, the join lens splits the changes
of the joined view into changes for the individual tables; this too
may require querying the underlying data.  This produces the deltas
shown in the top right corner of Figure
\ref{fig:lens_composition_example_propagation}.

Finally, we convert the source deltas into
SQL update commands to update the underlying tables.  Again we make use of
table key information to generate concise updates, as follows:

\begin{lstlisting}[language=SQL]
update albums set quantity = 7 where album = 'Disintegration';
update tracks set date = 1989 rating = 4 where track = 'Lullaby' and album = 'Galore';
update tracks set date = 1989 rating = 4 where track = 'Lullaby' and album = 'Show';
delete from tracks where track = 'Lovesong' and album = 'Paris';
insert into tracks (track, date, rating, album)
values ('Lovesong', 1989, 5, 'Disintegration');
delete from tracks where track = 'Trust' and album = 'Wish';
\end{lstlisting}

\section{State-based relational lenses}\label{sec:background}

In this section we recapitulate background concepts from database
theory~\cite{abiteboul1995foundations} and then review the definitions of relational
lenses~\cite{bohannon2006relational}.  We use different notation from
that paper in some cases, and explain the differences as necessary.

\subsection{Database preliminaries}

\subsubsection{Attributes and records}

\emph{Attribute names}, or simply \emph{attributes}, are ranged over by $A$,
$B$, $C$ and attribute values by $a$, $b$, $c$.  Records $m$, $n$ are partial
functions from attributes to attribute values.  For simplicity, we assume a
single (unwritten) type for attribute values; of course, in our implementation
we support the usual integers, strings, booleans, etc.  Records are written
$\record{A = a, B = b, \ldots}$.  Identifiers $U$, $V$ range over sets of
attributes considered as record domains; we use $X,Y,Z$ for arbitrary sets. We
write $m : U$ to indicate that $\dom(m) = U$. Basic operations on records
include:
\begin{itemize}
	\item[-] record projection $\recproj{m}{V}$, which means record $m$ domain-restricted to $\dom(m)
\cap V$;
 	\item[-] domain antirestriction $\domsubtract{m}{V}$, which means $m$
domain-restricted to $\dom(m) \setminus V$;
   \item[-] record update $\recupdate{m}{n} : U \cup V$ where $m : U, n : V$, which defines
$(\recupdate{m}{n})(A)$ as $n(A)$ if $A \in V$ and $m(A) $ otherwise.
\end{itemize}

\noindent Given attribute $A \in U$ and $B \notin U$, we write $U[A/B]$ for $(U
\setminus \set{A}) \cup \set{B}$, and similarly if $m : U$ then $m[A/B] :
U[A/B]$ is the tuple resulting from renaming attribute $A$ in $M$ to $B$.
Renaming is definable as
$\recupdate{(\domsubtract{m}{\set{A}})}{\record{B=m(A)}}$.

\subsubsection{Relations}

\begin{figure}
	\small
\begin{syntaxfig}
   \mbox{Queries}
   &
   q,q'
   &
   ::=
   &
   M \mid R \mid 
   \letexpr{R}{q}{q'}
& \text{Relations, names and let binding}
\\
&&
\mid&
   q \setminus q'
\mid
   q \cup q'
\mid
   q \cap q'
& \text{Set operations}
\\
 &&\mid&
  \select{P}{q}
\mid
   \proj{q}{U}
\mid
   q \Join q'
\mid
   \rename{A/B}{q}
& \text{Relational algebra}\\
\mbox{Predicates} 
& 
P,Q
& 
::= 
& 
\top \mid \neg P \mid P \wedge Q \mid P \vee Q & \text{Logical connectives}\\
&&\mid&A = B \mid A = a \mid X\in{q} & \text{Tuple predicates}\\
&&\mid&\pi_U(P) \mid P \Join Q \mid
   \rename{A/B}{q} 
& \text{Relational algebra}
\end{syntaxfig}
\caption{Syntax of relational expressions and predicates}
\label{fig:syntax:expr}
\end{figure}

\if 0
\begin{figure}
\begin{syntaxfig}
   \mbox{Relational expression}
   &
   Q
   &
   ::=
   &
   M
   &
   \text{constant relation}
   \\
   &&&
   \select{P}{Q}
   &
   \text{select}
   \\
   &&&
   \proj{Q}{U}
   &
   \text{project}
   \\
   &&&
   \rename{A/B}{Q}
   &
   \text{rename}
   \\
   &&&
   Q \Join Q'
   &
   \text{join}
   \\
   &&&
   Q \setminus Q'
   &
   \text{difference}
   \\
   &&&
   Q \cup Q'
   &
   \text{union}
   \\
   &&&
   Q \cap Q'
   &
   \text{intersection}
   \\[2mm]
   &&&
   \letexpr{x}{Q}{Q'}
   &
   \text{let}
\end{syntaxfig}
\caption{Syntax of relational expressions}
\label{fig:syntax:expr}
\end{figure}
\fi

\emph{Relations} $M$, $N$, $O$ are (finite) sets of records with the same
domain. $M$ has domain $U$, or equivalently $M$ is a relation of type
$U$, written $M: U$, if $m: U$ for all $m \in M$. Relations are closed
under the standard operations of relational algebra;
Figure~\ref{fig:syntax:expr} defines the syntax of relational
expressions $q$. This includes relation constants $M$, relation names
$R$, and a $\tt{let}$ construct we include for convenience. The
operations $\setminus$, $\cup$ and $\cap$ have their usual
set-theoretic interpretation, subject to the constraint that the
arguments $q, q'$ have the same type $U$, i.e. $q, q': U$.  We explain
the remaining relational algebra operations in this section.

\emph{Relational projection} is record projection extended to relations:
\[\proj{M}{U} \defeq \set{\recproj{m}{U} \mid m \in M}\]

\noindent Given $M: U$ and $N: V$, their \emph{natural join} is defined by
\[M \Join N = \set{m: U \cup V \mid \recproj{m}{U} \in M\text{ and }\recproj{m}{V} \in N}\]

$P$ ranges over predicates, which we can interpret as (possibly
infinite) sets of records, or equivalently as functions from records
to Booleans.  Predicates are required for specifying selection filters
in relational selection, as well as for the specification of filter
conditions in lens definitions.  We write $P:U$ to indicate a
predicate over records with domain $U$.  The predicate $A = B$ holds
for records $m$ satisfying $m(A) = m(B)$, while $A =a$ holds when
$m(A) = a$, and $X \in q$ holds when $m[X]$ is in the result of query
$q$.  The predicates $\top$ (truth), $\neg P$ (negation), and
$P \wedge Q$ (conjunction) are interpreted as usual.  For convenience
we include predicates $\proj{P}{U}$, $P \Join Q$, and
$\rename{A/B}{P}$ which behave analogously to the relational
operations, if we view predicates as sets of records.  For example,
$\proj{P}{U}$ holds for records $u$ such that $t[U] = u$ for some $t$
satisfying $P$.

Given a predicate $P$ and relation $M$, the selection $\select{P}{M}$ is defined
as follows:
\[\select{P}{M} \defeq \{m \in M \mid P(m) \}\]
We will be interested in cases where predicates are insensitive to the values of
certain attributes; we write ``$P$ ignores $U$'' when $P(m)$ can be determined
without considering any of the values that $m$ assigns to attributes in $U$ ---
i.e.~when for all $m$ and $n$, if $\domsubtract{m}{U} = \domsubtract{n}{U}$ then
$P(m) \iff P(n)$.

We define the relational renaming operation
$\rename{A/B}{M}$ as
\[\rename{A/B}{M} \defeq \{m[A/B] \mid m \in M\}\]
which makes it possible to join tables with differing column names. As mentioned
above, we also write $\rename{A/B}{P}$ for the result of renaming attribute $A$
in predicate $P$ to $B$.

\subsubsection{Functional dependencies}

A \emph{functional dependency} is a pair of sets of attributes, written $X \to
Y$. We say $X \to Y$ is a functional dependency \emph{over $U$}, written $X \to
Y : U$, iff $X \cup Y \subseteq U$. If $X \to Y$ is a functional dependency over
$U$ and $M: U$, then $M$ \emph{satisfies} $X \to Y$, written $M \vDash X \to Y$,
iff $\recproj{m}{X} = \recproj{n}{X}$ implies $\recproj{m}{Y} = \recproj{n}{Y}$
for all $m, n \in M$. We write $m, M \vDash X \to Y$ as a shorthand for $\set{m}
\cup M \vDash X \to Y$. It is conventional in database theory to write sets of
attributes such as $\{A,B,C\}$ as $A\;B\;C$, and $A \to B\;C$ to mean the
functional dependency $\set{A} \to \set{B, C}$.

Typically we work with sets $F, G$ of functional dependencies over a fixed $U$
and write $F: U$ iff $X \to Y: U$ for every $X \to Y \in F$. The notation $M
\vDash F$ means that $M \vDash X \to Y$ for all $X \to Y \in F$.  Likewise, $F
\vDash G$ means that $M \vDash F$ implies $M \vDash G$ for any $M$, and $F
\equiv G$ means that $F \vDash G$ and $G \vDash F$. We write $F[A/B]$ for the
result of renaming attribute $A$ to $B$ in all functional dependencies in $F$,
i.e. $F[A/B] \defeq \{X[A/B] \to Y[A/B] \mid X \to Y \in F\}$.

\subsubsection{Relation types and database schemas}
\citet{bohannon2006relational} employ \emph{relation types} $\vtype{U}{P}{F}$,
where $U$ is an attribute set, $P$ is a predicate over $U$, and $F$ is a set of
functional dependencies over $U$.  A value $M$ of type $\vtype{U}{P}{F}$ is a
relation $M : U$ such that $P(m) = \top$ for each $m\in M$ and $M \models F$.
(\citet{bohannon2006relational} wrote $(U,P,F)$ instead of $\vtype{U}{P}{F}$; we
prefer the more descriptive notation.)

\citet{bohannon2006relational} also defined lenses between
\emph{relational schemas} $\Sigma$ mapping relation names to relation
types $\vtype{U}{P}{F}$, and relational lens combinators referred to
the relations in the source and target schemas by their names. We
adopt an alternative, but equivalent approach. Schemas have the following
syntax:
\[\Sigma ::= \vtype{U}{P}{F} \mid \Sigma \otimes \Sigma'\]
Thus a database schema is a product of one or more relation types, written
$\vtype{U_1}{P_1}{F_1}\otimes \cdots \otimes \vtype{U_n}{P_n}{F_n}$. Values of
schema type $\Sigma \otimes \Sigma'$ are pairs of values of type $\Sigma$ and
$\Sigma'$.  We use tensor product notation $\otimes$ instead of $\times$ to
indicate that schema types obey a linear typing discipline. When we work with
relational lenses, the initial schema will consist of the tensor product of all
tables that contribute to the view we wish to define; we assume that a table
appears at most once in a value of such a schema.

\subsection{Lenses} \label{sec:background:lenses}

A \emph{lens}~\cite{foster2007combinators} $\ell: S \lensto V$ is a bidirectional
transformation between two sets $S$ and $V$, where $S$ is a set of
source values and $V$ is a set of possible views,
determined by two functions, $\getOp_\ell$ and $\putOp_\ell$, with the following
signatures:
\begin{align*}
	\getOp_\ell &: S \to V
&
	\putOp_\ell &: S \times V \to S
\end{align*}

\noindent A lens is \emph{well-behaved} if it satisfies two round-tripping
properties relating $\getOp_\ell$ and $\putOp_\ell$. The property PutGet ensures that
whatever data we put into a lens is returned unchanged if we get it again. The
property GetPut ensures that if we put the view value back into the lens
unchanged, the underlying source value is also unchanged.
\begin{align*}
	\tag{PutGet}
	& \getOp_\ell(\putOp_\ell(s, v)) = v \\
	\tag{GetPut}
	& \putOp_\ell(s, \getOp_\ell(s)) = s
\end{align*}

\noindent From now on the notation $\ell: S \lensto V$ means that $\ell$ is a
well-behaved lens from $S$ to $V$.

Lenses form a category with identity and composition constructions.
The identity lens $\id_X: X \lensto X$ is given by the functions
$\getOp$ and $\putOp$ defined as:
\begin{align*}
	\getOp_{\id}(x) &= x
&
	\putOp_{\id}(x, x') &= x'
\end{align*}

\noindent
We omit the subscript on $\id$ when clear from context. The identity
lens is trivially well-behaved.  Diagram-order composition
$\ell_1; \ell_2: X \lensto Z$ of the lenses $\ell_1: X \lensto Y$ and
$\ell_2: Y \lensto Z$ is given by the functions $\getOp$ and $\putOp$ defined
as:
\begin{align*}
	\getOp_{\ell_1; \ell_2}(x) &= \getOp_{\ell_2}(\getOp_{\ell_1}(x))
	&
	\putOp_{\ell_1; \ell_2}(x, z) &= \putOp_{\ell_1}(x, \putOp_{\ell_2}(\getOp_{\ell_1}(x), z))
\end{align*}

As discussed in previous work~\cite{hofmann2011symmetric,hofmann2012edit}, the
category of lenses also has symmetric monoidal products; that is, there is a
construction $\otimes$ on its objects such that $X \otimes Y$ is the set of
pairs $\{(x,y) \mid x \in X, y \in Y\}$, and which satisfies symmetry and
associativity laws:
\begin{align*}
\tag{Sym}
X \otimes Y & \equiv Y \otimes X \\
\tag{Assoc}
X \otimes (Y \otimes Z) &\equiv (X \otimes
Y) \otimes Z
\end{align*}
These laws
are witnessed by (invertible) lenses $\sym_{X,Y}$ and $\assoc_{X,Y,Z}$, defined as follows:
\begin{align*}
	\getOp_{\sym}(x,y) &= (y,x) & \getOp_{\assoc}(x,(y,z)) &= ((x,y),z)
	\\
	\putOp_{\sym}(\_,(y,x)) &= (x,y)
                                    &
\putOp_{\assoc}(\_,((x,y),z)) &= (x,(y,z))
\end{align*}
In addition, we have the following combinator for combining two lenses
`side-by-side':
\begin{align*}
	\getOp_{\ell_1\otimes \ell_2}(x_1,x_2) &= (\getOp_{\ell_1}(x_1),\getOp_{\ell_2}(x_2))
	\\
	\putOp_{\ell_1\otimes \ell_2}((x_1,x_2), (y_1,y_2)) &= (\putOp_{\ell_1}(x_1,y_1),\putOp_{\ell_2}(x_2,y_2))
\end{align*}
\begin{cameraready}
  so that if $\ell_1: X_1 \lensto Y_1$ and $\ell_2: X_2 \lensto Y_2 $
  then
  $ \ell_1 \otimes \ell_2 : X_1 \otimes X_2 \lensto Y_1 \otimes Y_2$.
\end{cameraready}
\begin{techreport}These lens constructs preserve well-behavedness as characterised by the following
inference rules:
\begin{mathpar}
	\inferrule*[right={T-id}]
	{
		\strut
	}
	{
		\id_X: X \lensto X
	}
	\and
	\inferrule*[right={T-Compose}]
	{
		\ell_1: X \lensto Y
		\\
		\ell_2: Y \lensto Z
	}
	{
		\ell_1;\ell_2: X \lensto Z
	}
\and
	\inferrule*[right={T-sym}]
	{
	}
	{
		\sym_{X,Y}: X \otimes Y \lensto Y \otimes X
	}
	\and
	\inferrule*[right={T-assoc}]
	{
   }
	{
		\assoc_{X,Y,Z}: X \otimes (Y \otimes Z) \lensto (X \otimes Y) \otimes Z
	}
\and
	\inferrule*[right={T-Product}]
	{
		\ell_1: X_1 \lensto Y_1
		\\
		\ell_2: X_2 \lensto Y_2
	}
	{
          \ell_1 \otimes \ell_2 : X_1 \otimes X_2 \lensto Y_1 \otimes Y_2
        }
\end{mathpar}
\end{techreport}
\subsection{Relational revision}

A key relational lens concept introduced by \citet{bohannon2006relational} is
\emph{relational revision}.  Given a set of functional dependencies $F:U$ and
relations $M,N:U$ such that $N \models F$, relational revision modifies $M$ to
$M'$ so that $M' \cup N \models F$.  For example, given $F = \{A \to B\}$ and $M =
\set{\record{A=1,B=2},\record{A=2,B=3}}$ and $N = \set{\record{A=1,B=42}}$, the
result of revising $M$ to be consistent with $N$ and $F$ is
$\set{\record{A=1,B=42},\record{A=2,B=3}}$.

\subsubsection{Functional dependencies in tree form}

In general revision may not be uniquely defined, for example if there are
cycles among functional dependencies.
 \citet{bohannon2006relational} avoid this problem by requiring that
sets of functional dependencies be in a special form called \emph{tree
  form}.  We briefly restate the definition for concreteness.
\begin{definition}
  Given functional dependencies $F$, define
\[V_F = \{X \mid X\to Y \in  F\} \cup \{Y \mid X \to Y \in F\} \qquad
E_F = \{(X,Y) \mid X \to Y \in F\}\]
Then we say $F$ is in \emph{tree form} if the graph $T_F = (V_F,E_F)$ is a forest and
$V_F$ partitions $\bigcup V_F$.
\end{definition}
If $F$ is in tree form, then each attribute set of $F$ corresponds to
a node in a tree (or forest) where the edges correspond to elements of
$F$.  Moreover, no distinct nodes of $T_F$ have common atttibutes. For
example, $\{A \to B~C, B \to D\}$ is not in tree form, but is
equivalent to $\{A \to B, A \to C, B \to D\}$ which is in tree form.
However, $\{A \to B~C, C \to A~D\}$ has no equivalent tree form
representation.


Figure~\ref{fig:FD-operations} defines various functions on sets of functional
dependencies. The sets $\Left(F)$ and $\Right(F)$ consist of all attributes
appearing on the left or right side of any functional dependency $X \to Y \in
F$. The set $\outputs(F)$ consists of all attributes that are actually
constrained in $F$ by other attributes. Finally, $\roots(F)$ is the set of all
nodes of $T_F$ that have indegree zero.
From now on, we assume sets of functional dependencies $F$ appearing
in relation types $\vtype{U}{P}{F}$ are in tree form, as a
prerequisite to this type being well-formed.

\subsubsection{Revision and merge operations}

Relational revision is expressed in terms of a \emph{record revision} operation
$\newrecrevise{m}{F}{N}$ which takes a set of functional dependencies $F:U$ in
tree form, a record $m:U$, and a set of records $N:U$ such that $N \models F$,
and is defined by recursion over the tree structure of $F$.  If $F$ is empty,
record revision simply returns $m$. Otherwise, there must be at least one
functional dependency $X \to Y$ in $F$ such that $X$ is a root. If $m$ and some
$n \in N$ have the same values for $X$, we return
$\recupdate{m}{\recproj{n}{Y}}$, that is, a copy of $m$ whose $Y$ attributes
have been updated with those from $\recproj{n}{Y}$; otherwise we return $m$
unchanged. We then recursively process the remaining functional dependencies.

Traversing $F$ starting from the roots is nondeterministic, but provided $F$ is
in tree form, the end result of record revision is uniquely defined, because
each attribute in $\Right(F)$ is modified at most once and no attribute can be
modified until all other attributes it depends on have been modified.

\begin{figure}[tb]
\begin{salign}
	\Left(F) &= {\textstyle\bigcup\,\set{X \mid X \to Y \in F}}
   \\
	\Right(F) &= {\textstyle\bigcup\,\set{Y \mid X \to Y \in F}}
   \\
   \outputs(F) &= {\textstyle\bigcup\,\set{Y \mid \exists X.F \vDash X \to Y\text{ and }X \cap Y = \varnothing}}
   \\
   \roots(F) &= {\textstyle\set{X \mid \exists Y. X \to Y \in F \text{ and } X \cap \Right(F) = \varnothing}}
   \\[2mm]
   F = {\set{X \to Y}} \cdot F'
   &\iff
   F = \set{X \to Y} \uplus F'\text{ and }X \in \roots(F)
   \\[2mm]
   \Genrecrevise_F
   &: U \to \vtype{U}{F}{P} \to U
   \\
   \newrecrevise{m}{\varnothing}{N}
   &=
   m
   \\
  \newrecrevise{m}{ \set{X \to Y} \cdot F }{N}
   &=
   \begin{cases}
      \newrecrevise{\recupdate{m}{\recproj{n}{Y}}}{F}{N}
      & \text{if }\exists n \in N.\recproj{m}{X} = \recproj{n}{X}
      \\
      \newrecrevise{m}{F}{N}
      & \text{otherwise}
   \end{cases}
   \\[2mm]
   \relrevise{M}{F}{N}
   &=
   \set{\newrecrevise{m}{F}{N} \mid m \in M}
   \\
   \relmerge{M}{F}{N}
   &=
   \relrevise{M}{F}{N} \cup N
\end{salign}
\caption{Operations on functional dependencies and relational
  revision}
\label{fig:FD-operations}
\end{figure}

\begin{definition}[Relational revision]
\label{def:rel_revise}
Figure~\ref{fig:FD-operations} defines the \emph{relational revision} operation
$\relrevise{M}{F}{N}$ that takes two sets of records $M: U$ and $N: U$ where $N
\vDash F$, and applies record revision to every record $m \in M$ using
the given functional dependencies $F$.
\end{definition}

\begin{definition}[Relational merge]
\label{def:rel_merge}
Figure~\ref{fig:FD-operations} also defines the \emph{relational
  merge} operation $\relmerge{M}{F}{N}$, where $N \vDash F$, which revises $M$ according
to $F$ and $N$ and then unions the result with $N$.
\end{definition}

\subsection{Relational lens primitives}
\label{sec:background:relational-lenses}

In this section we recapitulate the primitive relational lenses introduced by
\citet{bohannon2006relational}: a \emph{selection} lens that corresponds to
selection, a \emph{drop} lens that corresponds to projection, and a \emph{join}
lens that corresponds to relational join. (``Corresponds'' means that the
$\getOp$ direction coincides with the relational operation.) We also introduce a
trivial \emph{rename} lens corresponding to the relational renaming operator.
The syntax of the relational lenses, including the generic operations from
\Secref{background:lenses}, is given in Figure~\ref{fig:syntax:relational-lens}.

Our presentation differs from that of \citet{bohannon2006relational} in that we
use generic lens combinators arising from the symmetric monoidal product
structure to deal with linearity.  This makes it possible for each primitive to
mention only the affected source and target data and not the rest of the
database instance.

\begin{figure}[tb]
\begin{syntaxfig}
   \mbox{Relational lenses}
   &
   \ell,\ell'
   &
   ::=
   &
   \SelectPrim{P}
\mid
   \DropPrim{A}{X}{a}
\mid
   \JoinDLPrim
\mid
   \RenamePrim{A/B}
   \\
   &&\mid&
   \IdPrim{}
\mid   \ell_1;\ell_2
\mid
   \SymPrim{}
\mid
   \AssocPrim{}
\mid
   \ell_1 \otimes \ell_2
 \end{syntaxfig}
 \caption{Syntax of relational lens expressions}
 \label{fig:syntax:relational-lens}
\end{figure}
\if 0
\begin{figure}[tb]
\begin{syntaxfig}
   \mbox{Lens expression}
   &
   v, w
   &
   ::=
   &
   \SelectPrim{P}
   &
   \text{select}
   \\
   &&&
   \DropPrim{A}{X}{a}
   &
   \text{project}
   \\
   &&&
   \JoinDLPrim
   &
   \text{join}
   \\
   &&&
   \RenamePrim{A/B}
   &
   \text{renaming}
   \\
   &&&
   \IdPrim{U,P,F}
   &
   \text{identity lens}
   \\
   &&&
   v;w
   &
   \text{sequential composition}
   \\
   &&&
   \SymPrim{}
   & \text{symmetry}
   \\
   &&&
   \AssocPrim{}
   & \text{associativity}
   \\
   &&&
   v_1 \otimes v_2
   & \text{tensor product}
 \end{syntaxfig}
 \caption{Syntax of relational lens expressions}
 \label{fig:syntax:relational-lens}
\end{figure}
\fi

Relational lenses are lenses between schema types $\Sigma$, that is, tensor
products of relation types $\vtype{U}{P}{F}$.  Relational lens expressions are
subject to a typing judgement given in Figure~\ref{fig:typing-relational-lens};
well-typed lenses are guaranteed to be well-behaved.  The preconditions in these
rules are those given by \citet{bohannon2006relational}, to which we refer the
reader for further explanation.

\begin{figure}
\begin{mathpar}
   \inferrule*[right={T-Select}]
	{
		\label{eq:t_select}
		Q\text{ ignores }\outputs(F)
	}
	{
		\SelectPrim{P}: \vtype{U}{Q}{F} \lensto \vtype{U}{P \wedge Q}{F}
	}
   \and
   \inferrule*[right={T-Drop}]
	{
		\label{eq:t_drop}
		P = \proj{P}{U \setminus A} \Join \proj{P}{A}
		\\
		\record{A=a} \in \proj{P}{A}
	}
	{
		\DropPrim{A}{X}{a}:
      \vtype{U}{P}{F \uplus \set{X \to A}} \lensto \vtype{U \setminus A}{\proj{P}{U \setminus A}}{F}
	}
   \and
   \inferrule*[right={T-JoinDL}]
	{
		\label{eq:t_joinDL}
		G \vDash U \cap V \to V
		\\
		P\text{ ignores }\outputs(F)
		\\
		Q\text{ ignores }\outputs(G)
	}
	{
		\JoinDLPrim:
      \vtype{U}{P}{F} \otimes \vtype{V}{Q}{G} \lensto \vtype{U \cup V}{P \Join Q}{F \cup G}
	}
\and
   \inferrule*[right={T-Rename}]
	{
		\label{eq:t_rename}
		A \in U \\ B \notin U
	}
	{
		\RenamePrim{A/B}: \vtype{U}{P}{F} \lensto \vtype{U[A/B]}{\rename{A/B}{P}}{F[A/B]}
	}
\end{mathpar}
\caption{Typing rules for relational lens primitives}
\label{fig:typing-relational-lens}
\end{figure}

\subsubsection{Select lens}

The lens $\SelectPrim{P}: \vtype{U}{Q}{F} \lensto \vtype{U}{P \wedge Q}{F}$ is
defined as follows by the functions $\getOp$ and $\putOp$:
\[\begin{array}{lrcl}
&	\getOp(M) &=& \select{P}{M}
	\\
&	\putOp(M, N) &=& \letexpr{M_0}{\relmerge{\select{\neg
                         P}{M}}{F}{N}}{}\\
&&& \letexpr{N_\#}{\select{P}{M_0} \setminus N}{}
\\
&&&M_0 \setminus N_\#
\end{array}\]

\noindent The $\putOp$ operation first calculates $M_0$, the set of records
$\select{\neg P}{M}$ excluded from the original view, revised to be consistent
with the functional dependencies witnessed by the updated view $N$
together with $N$ itself. The set $N_\#$ collects records matching $P$
but not in $N$, which are removed from $M_0$ in order to satisfy PutGet.

\subsubsection{Project lens}
\label{sec:background:relational-lenses:project}

The lens $\DropPrim{A}{X}{a}:  \vtype{U}{ P}{ F \uplus \set{X \to A}}\lensto
\vtype{U \setminus A}{ \proj{P}{U \setminus A}}{F }$ is defined as follows
\begin{techreport}
	\footnote{This is slightly simpler than, but equivalent to, the definition given
	by \citet{bohannon2006relational}. The proof that the two definitions are
	equivalent is provided in Appendix~\ref{appendix:background}.}
\end{techreport}
\begin{cameraready}
	\footnote{This is slightly simpler than, but equivalent to, the definition given
	by \citet{bohannon2006relational}. The proof that the two definitions are
	equivalent is provided in the full version of the paper~\cite{irl-full}.}
\end{cameraready}
by the functions
$\getOp$ and $\putOp$:
\[\begin{array}{lrcl}
&	\getOp(M) &=& \proj{M}{U \setminus A}
	\\
&	\putOp(M, N) &= &
\letexpr{M'}{N \Join \set{\{ A = a \}}}{}
\\
&&& \relrevise{M'}{X \to A}{M}
\end{array}\]

\noindent 
For $\putOp$, each row in $N$ is initially given the default value $a$
for $A$. $M$ is then used to override the default value in $M'$ using relational
revision, so that if there is an entry $m \in M'$ with the same value for the
determining column $X$ the corresponding $A$ value from $M$ is used instead.

\subsubsection{Join lens} \citet{bohannon2006relational} described several
variants of lenses for join operations.  All three perform the natural join of
their two input relations in the $\getOp$ direction, but differ in how deletions
are handled in the $\putOp$ direction.  A view tuple deletion could be
translated to a deletion in the left, right, or both source relations, and so
there are three combinators $\JoinDLPrim$, $\JoinDRPrim$, and $\JoinBothPrim$
expressing these three alternatives.  In their extended report,
\citet{bohannon2006tr} showed how to define all three combinators as a special
case of a generic template.  To keep the presentation simple, we present just
the $\JoinDLPrim$ combinator here.

The `join/delete left' lens $\JoinDLPrim: \vtype{U}{P}{F} \otimes \vtype{V}{Q}{G}
\lensto \vtype{U \cup V}{P \Join Q}{ F \cup G}$ is given by the functions $\getOp$
and $\putOp$ defined as follows:
\[\begin{array}{lrcl}
&	\getOp(M,N) &=& M \Join N
	\\
&	\putOp((M,N), O) &=& \letexpr{M_0}{\relmerge{M}{F}{\proj{O}{U}}}{}
\\
&&&
\letexpr{N'}{\relmerge{N}{G}{\proj{O}{V}}}{}
\\
&&&
\letexpr{L}{ (M_0 \Join N') \setminus O}{}
\\
&&&
\letexpr{M'}{M_0 \setminus \proj{L}{U}}{}
\\
&&&(M',N')
\end{array}\]

\noindent The intuition for the $\putOp$ direction is as follows.  We first
compute $M_0$ by merging the projection $\proj{O}{U}$ into source table $M$, and
likewise $N'$, merging the projection $\proj{O}{V}$ into the source table $N$.
We next identify those tuples $L$ which are in the join of $M_0$ and $N'$ but
which are not present in the updated view $O$. To satisfy PutGet, we must make
sure these tuples do not appear in the join after updating the source relations.
It is sufficient to delete each of those records from one of the two source
tables; since this lens deletes uniformly from the left table, we compute $M'$
by subtracting the projection $\proj{L}{U}$ from $M_0$.  Finally, $M'$ and $N'$
are the new values for the source tables.

\subsubsection{Renaming lens}

Renaming is a trivial but important operation in relational algebra, since
otherwise there is no way to join the $A$ field of one table with the $B \neq A$
field of another. We introduce a renaming lens $\RenamePrim{A/B} :
\vtype{U}{P}{F} \lensto \vtype{U[A/B]}{\rename{A/B}{P}}{F[A/B]}$, provided $A
\in U $ and $B \notin U$, with its $\getOp$ and $\putOp$ operations defined as
follows: \[\begin{array}{lrcl} &	\getOp(M) &=& \rename{A/B}{M} \\ &	\putOp(\_,N) &=&
\rename{B/A}{N} \end{array}\] \citet{bohannon2006relational} did not define such
a lens, but its well-behavedness is obvious.

\section{Incremental framework}\label{sec:framework}

To describe the incremental behaviour of relational lenses, we need to
represent changes to query results in a simple, compositional way. We
adopt an approach similar to \citet{griffin1997incremental}, who model
``delta relations'' as disjoint pairs of relations specifying tuples
to be added and removed from a relation of the same type. We use notation similar to
\citet{cai2014pldi} and our relations and delta relations form a \emph{change
  structure} in their sense, though we do not spell out the details formally.

\subsection{Delta relations}

\begin{definition}[Delta relation]
A \emph{delta relation over $U$} is a pair $\Delta M = (\pDelta{M},
\nDelta{M})$ of disjoint relations $\pDelta{M}: U$
and $\nDelta{M}: U$. The empty delta relation $(\emptyset, \emptyset)$
is written $\emptyset$.  We write $\Delta M: \Delta U$ to indicate that $\Delta
M$ is a delta relation over $U$.
\end{definition}

A delta specifies a modification to a relation: for example, if $M = \set{2, 3,
4}$ and $\Delta M = (\set{3, 5}, \set{4, 9})$ then $\pDelta{M}$ specifies that
$\set{3, 5}$ are to be added to $M$, resulting in the set $\set{1, 3, 5}$. Note
that the redundant insertion of $3$ specified by $\pDelta{M}$ and the redundant
deletion of $9$ specified by $\nDelta{M}$ are both permitted. However,
\citet{griffin1997incremental} define a delta $\Delta M: \Delta U$ to be
\emph{minimal} for $M: U$ if it contains no redundant insertions or deletions of
that sort; for example, $(\{5\},\{4\})$ is the minimal delta relative to $M$
equivalent to $\Delta M$ above.

\begin{definition}[Minimal delta]
\label{def:minimality}
   $\Delta M: \Delta U$ is \emph{minimal} for $M: U$ iff $\pDelta{M}
   \cap M = \emptyset$ and $\nDelta{M} \subseteq M$.
\end{definition}

\begin{definition}[Implicit coercion to delta-relation]
Any relation $M: U$ can be implicitly coerced to a delta-relation $M: \Delta U
\defeq (M, \emptyset)$ which is minimal for $\emptyset: U$.
\end{definition}

Deltas of the same type can be combined by a minimality-preserving \emph{merge}
operation $\deltaplus$.

\begin{definition}[Delta merge]
For any $\Delta M, \Delta N: \Delta U$, define
\[
   (\Delta M \deltaplus \Delta N): \Delta U
   \defeq
   ((\pDelta{M} \setminus \nDelta{N}) \cup (\pDelta{N} \setminus \nDelta{M}),
    (\nDelta{M} \setminus \pDelta{N}) \cup (\nDelta{N} \setminus \pDelta{M}))
\]
\end{definition}

Implicit coercion of $M$ to the delta-relation $(M, \emptyset)$, combined with
delta merge $\deltaplus$, gives rise to a notion of \emph{delta application} $M
\deltaapp \Delta M$. If $\Delta M$ is minimal then the resulting delta has an
empty negative component and can be coerced back to a relation.

\begin{restatable}{lemma}{restatelemexpandminimal}
	If $\Delta M$ is minimal for $M$ then $M \deltaplus \Delta M = (M \setminus
	\nDelta{M}) \cup \pDelta{M} = (M \cup \pDelta{M}) \setminus \nDelta{M}$.
	\label{lem:expand_minimal}
\end{restatable}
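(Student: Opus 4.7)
The plan is a direct calculation from the definition of $\deltaplus$ applied to the coerced relation $M = (M, \emptyset)$. First, I would expand
\[
M \deltaplus \Delta M = (M,\emptyset) \deltaplus (\pDelta{M}, \nDelta{M})
\]
using the definition of delta merge, obtaining
\[
\bigl((M \setminus \nDelta{M}) \cup (\pDelta{M} \setminus \emptyset),\ (\emptyset \setminus \pDelta{M}) \cup (\nDelta{M} \setminus M)\bigr).
\]
Simplifying the trivially empty set operations leaves the positive component as $(M \setminus \nDelta{M}) \cup \pDelta{M}$ and the negative component as $\nDelta{M} \setminus M$.

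Next I would invoke minimality: since $\nDelta{M} \subseteq M$ by Definition~\ref{def:minimality}, we have $\nDelta{M} \setminus M = \emptyset$, so the negative component vanishes and the result is coerced back to the relation $(M \setminus \nDelta{M}) \cup \pDelta{M}$. This establishes the first equality.

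For the second equality, $(M \setminus \nDelta{M}) \cup \pDelta{M} = (M \cup \pDelta{M}) \setminus \nDelta{M}$, a routine set-theoretic check shows these two sets differ at most on $\nDelta{M} \cap \pDelta{M}$. But by the definition of delta relation, $\pDelta{M}$ and $\nDelta{M}$ are disjoint, so this intersection is empty and the two expressions coincide. I would verify this by a short element-chase: $x$ lies in the left-hand side iff $(x\in M \land x\notin \nDelta{M}) \lor x \in \pDelta{M}$, and in the right-hand side iff $(x \in M \lor x \in \pDelta{M}) \land x \notin \nDelta{M}$; the two conditions agree precisely because $\pDelta{M} \cap \nDelta{M} = \emptyset$.

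There is no real obstacle here; the only point to be careful about is to notice that the second equality needs the disjointness built into the definition of delta relation (not the minimality hypothesis), while the first equality needs the minimality hypothesis $\nDelta{M} \subseteq M$ (not the disjointness). Isolating these two uses is the cleanest way to present the proof.
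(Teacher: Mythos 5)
Your proposal is correct and follows essentially the same route as the paper's proof: expand $(M,\emptyset)\deltaplus(\pDelta{M},\nDelta{M})$ by the definition of delta merge, use $\nDelta{M}\subseteq M$ from minimality to kill the negative component, and use the disjointness of $\pDelta{M}$ and $\nDelta{M}$ for the final rewriting. Your explicit remark about which equality uses which hypothesis is a nice touch but does not change the substance of the argument.
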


\begin{definition}[Delta negate]
   For any $\Delta M: \Delta U$, define $\deltaminus\Delta M: \Delta U \defeq (\nDelta{M}, \pDelta{M})$.
\end{definition}

\begin{definition}[Delta difference]
   For any $\Delta M, \Delta N: \Delta U$, define $(M \deltaminus N): \Delta U \defeq M \deltaplus (\deltaminus N)$.
\end{definition}

\noindent The implicit coercion to delta-relations gives rise to a notion of
relational difference $(M \deltaminus N): \Delta U$, not to be confused with $(M
\setminus N): U$, which is the set difference and only removes elements $N$ from $M$. $M \deltaminus
N$ can be used, for example, to calculate the difference between two
views, such that $N \deltaplus (M \deltaminus N) = M$.

\begin{restatable}{lemma}{restatelemdiff}
\label{lem:diff}
   Suppose $M: U$ and $N: U$. Then $(M \deltaminus N): \Delta U = (M \setminus
   N, N \setminus M)$. Moreover $M \deltaminus N$ is minimal for $N$.
\end{restatable}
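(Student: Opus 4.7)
The plan is a straightforward unfolding of the relevant definitions; there is no genuine obstacle, only bookkeeping.

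First I would coerce $M$ and $N$ to delta-relations using the implicit coercion, giving $M = (M, \emptyset)$ and $N = (N, \emptyset)$, both of type $\Delta U$. Then I would apply delta negation to obtain $\deltaminus N = (\emptyset, N)$ directly from the definition of delta negate.

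Next I would compute $M \deltaminus N$ by unfolding its definition as $M \deltaplus (\deltaminus N) = (M, \emptyset) \deltaplus (\emptyset, N)$ and then applying the definition of delta merge. The positive component becomes $(M \setminus \emptyset) \cup (\emptyset \setminus \emptyset) = M \setminus N$ after simplification (noting that $\pDelta{M} \setminus \nDelta{N} = M \setminus N$ and the other summand is empty), and similarly the negative component becomes $N \setminus M$. This establishes the first claim $(M \deltaminus N) = (M \setminus N, N \setminus M)$.

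For minimality with respect to $N$, I would appeal to Definition~\ref{def:minimality} and check the two conditions directly: $\pDelta{} \cap N = (M \setminus N) \cap N = \emptyset$, and $\nDelta{} = N \setminus M \subseteq N$. Both are immediate set-theoretic facts, completing the proof. The only mild care needed is making the implicit coercion explicit at the start so that the delta-merge formula has the intended shape; once that is done the calculation is mechanical.
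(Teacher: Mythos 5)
Your proposal is correct and follows essentially the same route as the paper: coerce $M$ and $N$ to $(M,\emptyset)$ and $(N,\emptyset)$, unfold $\deltaminus$ as merge with the negation $(\emptyset,N)$, read off the components $(M \setminus N,\, N \setminus M)$, and verify the two minimality conditions for $N$ directly from Definition~\ref{def:minimality}. The only blemish is the intermediate display $(M \setminus \emptyset) \cup (\emptyset \setminus \emptyset)$, which as written equals $M$ rather than $M \setminus N$; your parenthetical correctly identifies the intended instantiation $\pDelta{M} \setminus \nDelta{N} = M \setminus N$, so this is a transcription slip rather than a gap.
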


The following are some useful straightforward properties of deltas:

\begin{restatable}{lemma}{restatelemdeltabasicspositive}
	\label{lem:delta-basics-positive}
	Suppose $\Delta M$ minimal for $M$. Then $(M \deltaapp \Delta M) \setminus M = \pDelta M$.
\end{restatable}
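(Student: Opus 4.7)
The plan is to reduce the claim to the expanded form of $M \deltaapp \Delta M$ given by Lemma~\ref{lem:expand_minimal} and then do a short set-theoretic calculation using the minimality hypothesis. Since $\Delta M$ is minimal for $M$, Lemma~\ref{lem:expand_minimal} lets me rewrite $M \deltaapp \Delta M$ as $(M \setminus \nDelta M) \cup \pDelta M$, so the goal becomes showing that $((M \setminus \nDelta M) \cup \pDelta M) \setminus M = \pDelta M$.

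Next I would distribute the outer $\setminus M$ over the union, obtaining
\[
((M \setminus \nDelta M) \setminus M) \;\cup\; (\pDelta M \setminus M).
\]
The left disjunct is empty because $M \setminus \nDelta M \subseteq M$, so the expression collapses to $\pDelta M \setminus M$. Finally, the minimality assumption gives $\pDelta M \cap M = \emptyset$, which means $\pDelta M \setminus M = \pDelta M$, completing the proof.

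There is no real obstacle here; the only thing to be careful about is to use minimality in both places it is needed, namely implicitly in invoking Lemma~\ref{lem:expand_minimal} and then explicitly via $\pDelta M \cap M = \emptyset$ to eliminate the residual $\setminus M$. The argument does not require the $\nDelta M \subseteq M$ half of minimality directly beyond what Lemma~\ref{lem:expand_minimal} already consumed.
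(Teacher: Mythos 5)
Your proposal is correct and follows essentially the same route as the paper's proof: expand $M \deltaapp \Delta M$ via Lemma~\ref{lem:expand_minimal}, distribute $\setminus M$ over the union, kill the left disjunct since $M \setminus \nDelta M \subseteq M$, and use $\pDelta M \cap M = \emptyset$ to simplify the right. No differences worth noting.
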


\begin{restatable}{lemma}{restatelemdeltabasicsnegative}
	\label{lem:delta-basics-negative}
	Suppose $\Delta M$ minimal for $M$. Then $(M \cap (M \deltaapp
        \Delta M)) \deltaminus M = \deltaminus \nDelta M$, hence $M \setminus (M \deltaapp \Delta M) =  \nDelta M$.
\end{restatable}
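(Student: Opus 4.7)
The plan is a direct computation using the minimality hypotheses $\pDelta M \cap M = \emptyset$ and $\nDelta M \subseteq M$ together with Lemma~\ref{lem:expand_minimal} and Lemma~\ref{lem:diff}.

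First I would unfold $M \deltaapp \Delta M$ via Lemma~\ref{lem:expand_minimal}, which gives $M \deltaapp \Delta M = (M \setminus \nDelta M) \cup \pDelta M$. Intersecting with $M$ and distributing, the $\pDelta M$ component vanishes because $\pDelta M \cap M = \emptyset$ by minimality, leaving $M \cap (M \deltaapp \Delta M) = M \setminus \nDelta M$. This is the only algebraic step that really uses the positive half of minimality.

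Next, to evaluate $(M \setminus \nDelta M) \deltaminus M$, I would apply Lemma~\ref{lem:diff} (both arguments are relations coerced to deltas), obtaining the pair $\bigl((M \setminus \nDelta M) \setminus M,\; M \setminus (M \setminus \nDelta M)\bigr)$. The first component is $\emptyset$; the second is $\nDelta M$ because $\nDelta M \subseteq M$ by minimality. On the other side of the desired equation, unfolding the definitions of implicit coercion and delta negation gives $\deltaminus \nDelta M = (\emptyset, \nDelta M)$, so the two sides match.

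For the ``hence'' part, I would apply De Morgan to $M \setminus ((M \setminus \nDelta M) \cup \pDelta M)$, obtaining $\bigl(M \setminus (M \setminus \nDelta M)\bigr) \cap (M \setminus \pDelta M)$. The first factor equals $\nDelta M$ since $\nDelta M \subseteq M$, and the second factor equals $M$ since $\pDelta M \cap M = \emptyset$, so the overall expression simplifies to $\nDelta M$. The whole argument is a sequence of routine set manipulations; the only ``obstacle'' is keeping careful track of which half of minimality is needed at each step (both are used exactly once in each half of the statement).
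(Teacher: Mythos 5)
Your proof is correct and follows essentially the same route as the paper's: expand $M \deltaapp \Delta M$ via Lemma~\ref{lem:expand_minimal}, use the two halves of minimality to discard $\pDelta{M}$ and to reduce $M \setminus (M \setminus \nDelta{M})$ to $\nDelta{M}$, then finish with Lemma~\ref{lem:diff}. The only cosmetic difference is in the ``hence'' clause, where you compute $M \setminus (M \deltaapp \Delta M)$ directly by De Morgan while the paper reads it off the negative component of the delta just computed; both amount to the same routine set algebra.
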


\begin{corollary}
  \label{cor:minimal-diff}
  If $\minimal{\Delta M}{M}$ then $(M \deltaplus
\Delta M) \deltaminus M = \Delta M$.
\end{corollary}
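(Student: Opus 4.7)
The plan is to reduce the claim directly to the three preceding results (Lemma \ref{lem:expand_minimal}, Lemma \ref{lem:diff}, Lemmas \ref{lem:delta-basics-positive} and \ref{lem:delta-basics-negative}), which together already compute every piece we need.

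First I would set $N := M \deltaplus \Delta M$. Because $\Delta M$ is minimal for $M$, Lemma \ref{lem:expand_minimal} gives $N = (M \setminus \nDelta M) \cup \pDelta M$, so $N$ is a bona fide relation of type $U$ (its negative component, after the implicit coercion of $M$, vanishes), and we may legitimately form the delta-difference $N \deltaminus M$ between two relations.

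Next, I would apply Lemma \ref{lem:diff} to this pair of relations, obtaining $N \deltaminus M = (N \setminus M,\; M \setminus N)$. The goal now splits into showing $N \setminus M = \pDelta M$ and $M \setminus N = \nDelta M$. But the first equality is exactly Lemma \ref{lem:delta-basics-positive} (restated with $N = M \deltaapp \Delta M$), and the second is the final clause of Lemma \ref{lem:delta-basics-negative}. Substituting back yields $N \deltaminus M = (\pDelta M, \nDelta M) = \Delta M$, which is the conclusion.

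There is no real obstacle here: the corollary is essentially a repackaging of the two ``delta-basics'' lemmas via the concrete formula for difference of two relations from Lemma \ref{lem:diff}. The only thing to be careful about is the implicit coercion between relations and delta relations — namely, that one can freely view $M$ and $N$ as delta relations with empty negative parts when feeding them into $\deltaminus$, which is exactly the situation Lemma \ref{lem:diff} covers. Once that is noted, the rest is a one-line substitution.
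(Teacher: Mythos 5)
Your proof is correct and follows essentially the same route as the paper's: apply Lemma~\ref{lem:diff} to write $(M \deltaplus \Delta M) \deltaminus M$ as the pair of set differences, then identify the components as $\pDelta{M}$ and $\nDelta{M}$ via Lemmas~\ref{lem:delta-basics-positive} and~\ref{lem:delta-basics-negative}. The extra remark justifying the coercion via Lemma~\ref{lem:expand_minimal} is a harmless (and slightly more careful) addition.
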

\begin{proof}
By the previous two lemmas, $(M \deltaplus
\Delta M) \deltaminus M = ((M \deltaplus \Delta M)\setminus
M, M \setminus (M \deltaplus \Delta M)) = (\pDelta M,\nDelta M) =
\Delta M$.
\end{proof}

\begin{corollary}
  \label{cor:minimal-unique}
If $\Delta M$ and $\Delta M'$ are minimal for $M$ and $M \deltaplus \Delta M =
M \deltaplus \Delta M'$ then $\Delta M = \Delta M'$.
\end{corollary}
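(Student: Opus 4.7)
The plan is to apply the preceding corollary (minimal-diff) symmetrically to both $\Delta M$ and $\Delta M'$. Since each is assumed minimal for $M$, Corollary~\ref{cor:minimal-diff} gives us the two identities
\[
(M \deltaplus \Delta M) \deltaminus M = \Delta M
\qquad\text{and}\qquad
(M \deltaplus \Delta M') \deltaminus M = \Delta M'.
\]
Under the hypothesis $M \deltaplus \Delta M = M \deltaplus \Delta M'$, the left-hand sides of these two identities are syntactically identical expressions, so their right-hand sides must agree: $\Delta M = \Delta M'$.

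There is essentially no obstacle here, since all the work has already been done by Lemmas~\ref{lem:delta-basics-positive} and~\ref{lem:delta-basics-negative} via Corollary~\ref{cor:minimal-diff}. The only thing to double-check is that the hypothesis of \emph{both} deltas being minimal for the \emph{same} relation $M$ is what permits us to recover each delta as the difference $(M \deltaplus \Delta M) \deltaminus M$; without minimality, this recovery would fail because redundant insertions or deletions in $\Delta M$ would be silently discarded by $\deltaminus$, and two different non-minimal deltas can indeed yield the same applied value. So the proof is a one-line invocation of the previous corollary on each side of the hypothesised equation, observing that the two computed deltas must coincide.
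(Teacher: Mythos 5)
Your proposal is correct and follows exactly the paper's own argument: the paper also proves this by applying Corollary~\ref{cor:minimal-diff} to each delta and chaining the resulting equalities, $\Delta M = (M \deltaplus \Delta M) \deltaminus M = (M \deltaplus \Delta M') \deltaminus M = \Delta M'$. Your additional remark about why minimality is essential is accurate and a useful observation, though not needed for the proof itself.
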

\begin{proof}
By Corollary~\ref{cor:minimal-diff}, $\Delta M = (M \deltaplus\Delta M)
\deltaminus M = (M \deltaplus\Delta M')
\deltaminus M = \Delta M'$.
\end{proof}
The property $(M \deltaapp \Delta M) \deltaminus M = \Delta M$ is
mentioned in \citet{cai2014pldi} but not required by their definition of
change structures.  It is very helpful in our setting because it
implies that query expressions incrementalise in a unique,
compositional way, as we show next.

\subsection{Delta-relational operations}\label{sec:delta-relational}

We now consider how to incrementalise relational operations.  For each
relational operator, such as $\select{P}{M}$ or $\relmerge{M}{F}{N}$, we would
like to define an operation that translates deltas to the arguments to a delta
to the result. Incremental operations with symbolic names are written with a
dot, for example $\deltaselect{P}{M}$, while alphabetic names have their
incremental counterpart written with a preceding $\delta$, for example
$\deltarelmerge{M}{F}{N}$.

The notion of \emph{delta-correctness} characterises when a function
$\deltaOp{op}$ with a suitable signature which operates on deltas can be
considered to be a valid ``incrementalisation'' of a non-incremental operation
$\opname{op}$.  As observed by \citet{griffin1997incremental}, composing
incremental relational operations is easier if they also preserve minimality, so
we build this property into our definition.

\begin{definition}[Delta-correctness]
  For any operation $\opname{op}: X \to Y$, a delta operation
  $\deltaOp{op}: X \times \Delta X \to \Delta Y$ is
  \emph{delta-correct} for $\opname{op}$ if for any  $\Delta x$
  minimal for $x$, we have:
  \begin{enumerate}
  \item
    $\deltaOp{op}(x,\Delta x)$ is minimal for $\opname{op}(x)$.
  \item $\opname{op}(x \deltaapp \Delta x) = \opname{op}(x) \deltaapp \deltaOp{op}(x,\Delta x)$
  \end{enumerate}
\end{definition}

We generalise the above definition to binary operations in the obvious
way.  Delta-correct operations are uniquely determined by the
minimality condition:

\begin{lemma}\label{lem:delta-correct-unique}
  If $\deltaOp{op}$ is delta-correct then $\deltaOp{op}(x,\Delta x) =
  \opname{op}(x \deltaplus \Delta x) \deltaminus \opname{op}(x)$ provided $\Delta x$
  is minimal for $x$.  In particular,
  $\deltaOp{op}(x,\emptyset) = \emptyset$.
\end{lemma}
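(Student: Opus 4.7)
The plan is to combine the two clauses of delta-correctness with Corollary~\ref{cor:minimal-diff}. Assume $\Delta x$ is minimal for $x$. By clause~(1) of delta-correctness, $\deltaOp{op}(x,\Delta x)$ is minimal for $\opname{op}(x)$, so Corollary~\ref{cor:minimal-diff} applies with $M := \opname{op}(x)$ and $\Delta M := \deltaOp{op}(x,\Delta x)$, yielding
\[
\deltaOp{op}(x,\Delta x) \;=\; \bigl(\opname{op}(x) \deltaplus \deltaOp{op}(x,\Delta x)\bigr) \deltaminus \opname{op}(x).
\]
Then rewrite the right-hand side using clause~(2) of delta-correctness, which says $\opname{op}(x) \deltaplus \deltaOp{op}(x,\Delta x) = \opname{op}(x \deltaplus \Delta x)$, to obtain
\[
\deltaOp{op}(x,\Delta x) \;=\; \opname{op}(x \deltaplus \Delta x) \deltaminus \opname{op}(x),
\]
as required. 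So the first part is essentially a direct chase through the definitions, assuming Corollary~\ref{cor:minimal-diff}.

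For the ``in particular'' clause, observe that $\emptyset = (\emptyset,\emptyset)$ is trivially minimal for any $x$ (both conditions in Definition~\ref{def:minimality} hold vacuously), and $x \deltaplus \emptyset = x$ by unfolding the definition of $\deltaplus$. So by the first part,
\[
\deltaOp{op}(x,\emptyset) \;=\; \opname{op}(x) \deltaminus \opname{op}(x).
\]
By Lemma~\ref{lem:diff}, $\opname{op}(x) \deltaminus \opname{op}(x) = (\opname{op}(x) \setminus \opname{op}(x), \opname{op}(x) \setminus \opname{op}(x)) = (\emptyset,\emptyset) = \emptyset$.

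There is essentially no hard step: the whole argument is a one-line consequence of Corollary~\ref{cor:minimal-diff} together with the two conditions bundled into the definition of delta-correctness. The only thing to be slightly careful about is confirming that the hypothesis of Corollary~\ref{cor:minimal-diff} is supplied by clause~(1) of delta-correctness (rather than something we would need to prove separately), and checking that the empty delta qualifies as minimal so that the first part specialises to give the ``in particular'' statement.
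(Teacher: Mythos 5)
Your proof is correct and follows essentially the same route as the paper's: the paper shows that $\opname{op}(x \deltaplus \Delta x) \deltaminus \opname{op}(x)$ is minimal via Lemma~\ref{lem:diff} and then invokes uniqueness of minimal deltas (Corollary~\ref{cor:minimal-unique}), whereas you apply Corollary~\ref{cor:minimal-diff} directly to $\deltaOp{op}(x,\Delta x)$ and rewrite with clause~(2) --- but these two corollaries are interderivable and the argument is the same up to which one is cited. Your explicit treatment of the ``in particular'' clause (minimality of $\emptyset$ and $x \deltaplus \emptyset = x$) fills in a step the paper leaves implicit, and is correct.
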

\begin{proof}
By lemma~\ref{lem:diff}, $\opname{op}(x \deltaplus \Delta x) \deltaminus \opname{op}(x)$ is
minimal for $\opname{op}(x)$, and by the definition of
delta-correctness and lemma~\ref{lem:diff} we have
\[
\opname{op}(x) \deltaapp \deltaOp{op}(x,\Delta x) =
\opname{op}(x \deltaplus \Delta x) =
\opname{op}(x) \deltaplus (\opname{op}(x \deltaplus \Delta x) \deltaminus \opname{op}(x))
\]
By Corollary~\ref{cor:minimal-unique} we can conclude $
\deltaOp{op}(x,\Delta x) = \opname{op}(x \deltaplus \Delta x) \deltaminus \opname{op}(x)$.
\end{proof}

For ease of composition, we define $\DeltaOp{op}(x,\Delta x) =
(\opname{op}(x),\deltaOp{op}(x,\Delta x))$ as the function that returns both the
updated result and the delta.
\begin{definition}
   For any $\opname{op}: X \to Y$, define      $ \DeltaOp{op} : X \times \Delta X \to Y \times \Delta Y$ as
   \begin{align*}
      \DeltaOp{op}(x, \Delta x) &= (\opname{op}(x), \deltaOp{op}(x, \Delta x))
   \end{align*}
We say $\DeltaOp{op}$ is delta-correct (with respect to $\opname{op}$) when $\deltaOp{op}$ is.
\end{definition}
If $\deltaOp{op}$ is delta-correct then whenever $\Delta x$ is minimal
for $x$, so is $\deltaOp{op}(x,\Delta x)$ for $\opname{op}(x)$.  This implies composability in the following sense:
\begin{lemma}
\label{lem:delta-correct-composition}
  If $\deltaOp{op_1} : X \times \Delta X \to \Delta Y,\deltaOp{op_2} :
  Y \times \Delta Y \to Z$ are delta-correct then $\deltaOp{op_2}
  \circ \DeltaOp{op_1}$ and $\DeltaOp{op_2} \circ \DeltaOp{op_1}$ are
  delta-correct (both with respect to $\opname{op_2} \circ
  \opname{op_1}$).
\end{lemma}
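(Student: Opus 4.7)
The plan is to unfold the definitions and chain the two delta-correctness hypotheses, with the critical observation being that delta-correctness of $\deltaOp{op_1}$ guarantees that its output is itself a minimal delta, which is exactly what is required to feed it as a second argument into $\deltaOp{op_2}$.

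First I would fix $(x, \Delta x)$ with $\Delta x$ minimal for $x$ and compute
$(\deltaOp{op_2} \circ \DeltaOp{op_1})(x, \Delta x)
= \deltaOp{op_2}(\opname{op_1}(x),\, \deltaOp{op_1}(x, \Delta x))$.
To make sense of this expression I need the second argument to be minimal for the first; this holds by part (1) of delta-correctness of $\deltaOp{op_1}$. Having established that, part (1) of delta-correctness for the composition follows from part (1) of delta-correctness of $\deltaOp{op_2}$ applied to the pair $(\opname{op_1}(x), \deltaOp{op_1}(x,\Delta x))$: the result is minimal for $\opname{op_2}(\opname{op_1}(x)) = (\opname{op_2} \circ \opname{op_1})(x)$.

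For part (2), I would compute
\begin{align*}
(\opname{op_2} \circ \opname{op_1})(x \deltaapp \Delta x)
&= \opname{op_2}(\opname{op_1}(x) \deltaapp \deltaOp{op_1}(x, \Delta x)) \\
&= \opname{op_2}(\opname{op_1}(x)) \deltaapp \deltaOp{op_2}(\opname{op_1}(x), \deltaOp{op_1}(x, \Delta x)) \\
&= (\opname{op_2} \circ \opname{op_1})(x) \deltaapp (\deltaOp{op_2} \circ \DeltaOp{op_1})(x, \Delta x),
\end{align*}
where the first step uses delta-correctness (2) of $\deltaOp{op_1}$ and the second uses delta-correctness (2) of $\deltaOp{op_2}$, applicable because $\deltaOp{op_1}(x,\Delta x)$ is minimal for $\opname{op_1}(x)$.

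For the second claim about $\DeltaOp{op_2} \circ \DeltaOp{op_1}$, I would just unfold the definition of $\DeltaOp{op_2}$ to observe that
\[
(\DeltaOp{op_2} \circ \DeltaOp{op_1})(x, \Delta x) = \bigl((\opname{op_2} \circ \opname{op_1})(x),\, (\deltaOp{op_2} \circ \DeltaOp{op_1})(x, \Delta x)\bigr),
\]
which is exactly $\DeltaOp{(op_2 \circ op_1)}(x, \Delta x)$, so delta-correctness reduces to the first part. The only subtle step is verifying that the output of $\deltaOp{op_1}$ can legitimately serve as a minimal-delta input to $\deltaOp{op_2}$; this is built into the definition of delta-correctness and is exactly what motivates including minimality preservation in the definition.
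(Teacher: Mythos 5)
Your proof is correct: the paper states this lemma without an explicit proof, and your argument is exactly the intended one — the key point, which you identify, is that minimality preservation (part (1) of delta-correctness for $\deltaOp{op_1}$) is what licenses applying the delta-correctness of $\deltaOp{op_2}$ to the pair $(\opname{op_1}(x), \deltaOp{op_1}(x,\Delta x))$, after which both the minimality and the equational condition for the composite follow by chaining. The reduction of the claim about $\DeltaOp{op_2} \circ \DeltaOp{op_1}$ to the first claim by unfolding the pairing is also right.
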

Furthermore, this implies we may incrementalise any function built up
out of incrementalisable relational operations, by replacing ordinary
operators with their incremental counterparts, largely as described by
\citet{cai2014pldi}.  Given a query $q(R_1,\ldots,R_n)$, we can transform it to a
delta-correct (but not necessarily efficient) incremental version
by taking $\Derive{q} = \letexpr{(R,\Delta R)}{\Dagger{q}}{\Delta R}$,
  where the transformation $\Dagger{\cdot}$ is defined as follows:
\[\begin{array}{rclcrcl}
\Dagger{M} &=& (M,\emptyset) &&
\Dagger{op(q_1,\ldots,q_n)} &=&
                               (\DeltaOp{op}(\Dagger{q_1},\ldots,\Dagger{q_n}))
\\
\Dagger{R} &=& (R,\Delta R) & &
\Dagger{   \letexpr{R}{q}{q'}} &=& \letexpr{(R,\Delta
                                   R)}{\Dagger{q}}{\Dagger{q'}}
\end{array}
\]
Essentially $\Dagger{q}$ traverses the query, replacing relation variables with
pairs of variables and deltas, replacing constant relations with pairs
$(M,\emptyset)$ and dealing with individual operations and
let-bindings compositionally.  We abuse notation slightly by adding
syntax for pairs.
\begin{restatable}{theorem}{restatethmcompositional}
\label{thm:compositional}
If $q : \reltype{U_1}\times \cdots \times \reltype{U_n} \to
\reltype{U}$ then $\Derive{q}$ and
$\Dagger{q}$ are delta-correct with respect to $q$.
\end{restatable}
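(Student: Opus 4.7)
The plan is to prove the theorem by structural induction on the query $q$, establishing delta-correctness for $\Dagger{q}$ (which immediately yields it for $\Derive{q} = \pi_2 \circ \Dagger{q}$ by projecting out the delta component). The induction needs to be strengthened slightly to account for the fact that $q$ may contain several free relation variables $R_1,\ldots,R_n$: the statement we actually prove is that for any minimal deltas $\Delta R_i$ for $R_i$, the pair $\Dagger{q}$ evaluates to $(q(\vec R), \Delta q)$ where $\Delta q$ is minimal for $q(\vec R)$ and $q(R_1 \deltaapp \Delta R_1, \ldots, R_n \deltaapp \Delta R_n) = q(\vec R) \deltaapp \Delta q$.

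First I would handle the two base cases. For a constant relation $M$, we have $\Dagger{M} = (M,\emptyset)$; since $q(\vec R) = M$ regardless of the deltas, delta-correctness reduces to showing $\emptyset$ is minimal for $M$ (immediate from Definition~\ref{def:minimality}) and $M = M \deltaapp \emptyset$ (immediate from $\deltaplus$). For a variable $R$, we have $\Dagger{R} = (R,\Delta R)$, and delta-correctness here is precisely the hypothesis that $\Delta R$ is minimal for $R$.

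For the inductive step on $\opname{op}(q_1,\ldots,q_k)$, I would invoke Lemma~\ref{lem:delta-correct-composition} in its natural $k$-ary generalisation: by the induction hypothesis each $\Dagger{q_i}$ produces a pair $(M_i,\Delta M_i)$ where $\Delta M_i$ is minimal for $M_i = q_i(\vec R)$ and satisfies the propagation law, and then applying $\DeltaOp{op}$ (which is delta-correct by assumption, being one of the primitives $\Deltaselect{}{\cdot}, \Deltaproj{\cdot}{}, \DeltaJoin, \Deltacup$, etc.\ established earlier in this section or in Section~\ref{sec:incremental}) yields a delta-correct pair for $\opname{op}(q_1,\ldots,q_k)$. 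For the $\mathtt{let}$ case $\letexpr{R}{q}{q'}$, the transformation unfolds to $\letexpr{(R,\Delta R)}{\Dagger{q}}{\Dagger{q'}}$; by IH on $q$ the bound pair satisfies the invariant, and then IH on $q'$ (with the enlarged variable context) delivers delta-correctness of the whole.

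The main obstacle is not the induction itself, which is routine once framed correctly, but making the generalisation to $n$-ary free variables precise and ensuring that every primitive $\DeltaOp{op}$ actually satisfies delta-correctness; in particular the $k$-ary form of Lemma~\ref{lem:delta-correct-composition} needs a short separate argument (decompose $\opname{op}(q_1,\ldots,q_k)$ as the composition of $(q_1,\ldots,q_k)$ with $\opname{op}$ and apply the binary composition lemma iteratively). A secondary subtlety is the $\mathtt{let}$ case: one must verify that binding $(R,\Delta R)$ to a pair where $\Delta R$ is minimal for $R$ preserves the induction hypothesis for $q'$, which holds precisely because delta-correctness of $\Dagger{q}$ guarantees minimality of the bound delta. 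The conclusion for $\Derive{q}$ follows because extracting the second component of a delta-correct pair trivially yields a delta-correct delta operation.
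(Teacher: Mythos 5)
Your proposal matches the paper's proof: both proceed by structural induction on $q$, dispatch the constant and variable cases directly from the definitions of minimality and $\deltaplus$, handle $\opname{op}(q_1,\ldots,q_k)$ via the delta-correctness of the primitive $\deltaOp{op}$ together with the induction hypothesis and Lemma~\ref{lem:delta-correct-composition}, and treat the $\texttt{let}$ case by threading the minimal bound delta through the body. The subtleties you flag (the $k$-ary generalisation of the composition lemma and minimality preservation at the $\texttt{let}$ binding) are exactly the points the paper's proof addresses, so the approach is essentially identical.
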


\subsection{Optimisation rules for delta operations}
\label{sec:properties-delta-operations}

To sum up, we have established that for any query there is a an extensionally
unique incrementalisation, obtained by computing the difference between the
updated query result and the original result. Of course, this is far from an
efficient implementation strategy. In this section, we present a number of
optimisation rules for incremental relational operations, as well as relational
revision and merge.

Most of the following characterisations of incremental relational
operations are presented in prior work such as
\citet{griffin1997incremental}, but without detailed proofs;
\begin{cameraready}
  we include detailed proofs in the full version~\cite{irl-full}.
\end{cameraready}
\begin{techreport}
  we include detailed proofs in the appendix.
\end{techreport}

\begin{restatable}{lemma}{restatelemcorrectnessdeltaops}[Valid optimisations]
   \label{lem:correctness-delta-ops}
Assume $\Delta M$, $\Delta N$ are minimal for $M,N$ respectively. Then:
   \begin{enumerate}
      \item $\deltaselect{P}{M, \Delta M} =
   (\select{P}{\positive{\Delta M}}, \select{P}{\negative{\Delta M}})$
      \item $ \deltaproj{M, \Delta M}{U}
  =
   (\proj{\positive{\Delta M}}{U} \setminus \proj{M}{U},
    \proj{\negative{\Delta M}}{U} \setminus \proj{M \deltaapp \Delta M}{U}) $
      \item $(M, \Delta M) \deltaJoin (N, \Delta N)
   =
   (((M \deltaapp \Delta M) \Join \positive{\Delta N}) \cup (\positive{\Delta M} \Join (N \deltaapp \Delta N)),
    (\negative{\Delta M} \Join N) \cup (M \Join \negative{\Delta N})) $
      \item $\deltarename{A/B}{M, \Delta M}
   =
   (\rename{A/B}{\Delta M^+}, \rename{A/B}{\Delta M^-})$
      \item If $N \subseteq M$ and
     $N \deltaplus \Delta N \subseteq M \deltaplus \Delta M$ then $(M, \Delta M) \deltasetminus (N, \Delta N)
   =
   \Delta M \deltaminus \Delta N$
   \end{enumerate}
\end{restatable}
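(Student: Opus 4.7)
The plan is to verify each rule by exhibiting the two defining conditions of delta-correctness and invoking Lemma~\ref{lem:delta-correct-unique} (or equivalently its two-part characterisation: minimality for $\opname{op}(x)$, and $\opname{op}(x) \deltaplus \deltaOp{op}(x,\Delta x) = \opname{op}(x \deltaplus \Delta x)$). Throughout I will use Lemma~\ref{lem:expand_minimal} to rewrite $M \deltaplus \Delta M$ as $(M \cup \pDelta M) \setminus \nDelta M$ (or the other form), since all hypotheses assume $\Delta M, \Delta N$ minimal.

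For \textbf{selection} (1), minimality is immediate: $\select{P}{\pDelta M} \cap \select{P}{M} \subseteq \pDelta M \cap M = \emptyset$ and $\select{P}{\nDelta M} \subseteq \nDelta M \subseteq M$, so projection onto $\select{P}{M}$ is sound. The application law reduces to $\select{P}{(M \cup \pDelta M) \setminus \nDelta M}=(\select{P}{M}\cup\select{P}{\pDelta M})\setminus\select{P}{\nDelta M}$, which is just the distributivity of selection over union and set difference. For \textbf{renaming} (4) the argument is identical since $\rho$ is a bijection on records of the appropriate domain and distributes over all set operations. For \textbf{difference} (5), the inclusion hypotheses $N \subseteq M$ and $N \deltaplus \Delta N \subseteq M \deltaplus \Delta M$ guarantee that $M \setminus N = (M \deltaplus \Delta M) \setminus (N \deltaplus \Delta N) \deltaplus \Delta$ simplifies cleanly; I would expand both sides using Lemma~\ref{lem:expand_minimal} and show that the positive part $\pDelta M \setminus \nDelta N$ is disjoint from $M \setminus N$ while the negative part $\nDelta M \setminus \pDelta N$ is contained in $M \setminus N$, then check the application law by direct set-algebraic manipulation.

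The two substantive cases are \textbf{projection} (2) and \textbf{join} (3), because neither operation preserves cardinality-style minimality trivially. For projection, minimality of the positive component is by construction (we subtract $\proj{M}{U}$); for the negative component, $\proj{\nDelta M}{U} \setminus \proj{M \deltaapp \Delta M}{U} \subseteq \proj{M}{U}$ since $\nDelta M \subseteq M$. The application law requires showing
\begin{equation*}
\bigl(\proj{M}{U} \cup \proj{\pDelta M}{U}\bigr) \setminus \bigl(\proj{\nDelta M}{U} \setminus \proj{M \deltaapp \Delta M}{U}\bigr) \;=\; \proj{(M \cup \pDelta M) \setminus \nDelta M}{U}.
\end{equation*}
The key subtlety is that a value $u \in \proj{\nDelta M}{U}$ may survive on the right because some other record $m' \in (M \cup \pDelta M)\setminus\nDelta M$ still projects to $u$; this is precisely captured by the subtraction of $\proj{M \deltaapp \Delta M}{U}$ on the left. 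I would verify the equality by a two-way inclusion argument on an arbitrary element $u$, splitting on whether $u \in \proj{M \deltaapp \Delta M}{U}$.

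For the join case, I would expand $(M \deltaapp \Delta M) \Join (N \deltaapp \Delta N)$ using minimality and the fact that $\Join$ distributes over $\cup$, then use the disjointness properties $\pDelta M \cap M = \pDelta N \cap N = \emptyset$ together with $\nDelta M \subseteq M$, $\nDelta N \subseteq N$ to isolate the genuinely new tuples (those appearing only because of $\pDelta M$ or $\pDelta N$) from the genuinely removed ones (those dropped because of $\nDelta M$ or $\nDelta N$). Minimality of the stated delta then follows because the positive tuples involve a record in $\pDelta M$ or $\pDelta N$ disjoint from the originals, and the negative tuples are all in $M \Join N$. The main obstacle I anticipate is bookkeeping in the join case: ensuring that tuples inserted on one side and deleted on the other are not double-counted, which is what forces the asymmetric use of $M \deltaapp \Delta M$ versus $\pDelta M$ on the positive side, and the use of the original $M, N$ on the negative side. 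A clean way to manage this is to compute $((M \deltaapp \Delta M)\Join(N \deltaapp \Delta N)) \deltaminus (M \Join N)$ directly via Lemma~\ref{lem:diff} and massage it into the stated form using the distributivity and the minimality hypotheses.
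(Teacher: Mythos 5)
Your overall strategy --- exhibit a candidate delta, verify the application law and minimality, and conclude by uniqueness of minimal deltas (Lemma~\ref{lem:delta-correct-unique}) --- is exactly the paper's. Parts (1) and (4) are the paper's proofs verbatim in spirit. For part (2) you propose element-chasing where the paper instead uses the algebraic identities $\proj{M \setminus N}{U} = \proj{M}{U} \setminus (\proj{N}{U} \setminus \proj{M \setminus N}{U})$ and $X \cup (Y \setminus Z) = (X \cup Y) \setminus (Z \setminus X)$; both routes work and you have correctly identified the one subtlety (a projected value surviving via another witness). For part (3) you suggest computing $((M \deltaapp \Delta M) \Join (N \deltaapp \Delta N)) \deltaminus (M \Join N)$ directly, whereas the paper first proves the two special cases (insertions only, deletions only) and then composes them; your route is viable but the staged decomposition is what keeps the bookkeeping tractable, so expect that to reappear as a sub-step.

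The genuine gap is in part (5). You identify the target delta's positive part as $\pDelta M \setminus \nDelta N$ and its negative part as $\nDelta M \setminus \pDelta N$; these are the first components of $\Delta M \deltaplus \Delta N$, not of $\Delta M \deltaminus \Delta N$. Unfolding the definitions, $\Delta M \deltaminus \Delta N = \Delta M \deltaplus (\nDelta N, \pDelta N)$ has positive part $(\pDelta M \setminus \pDelta N) \cup (\nDelta N \setminus \nDelta M)$ and negative part $(\nDelta M \setminus \nDelta N) \cup (\pDelta N \setminus \pDelta M)$. The cross terms you have dropped are the whole content of the statement: a tuple deleted from $N$ but not from $M$ newly enters $M \setminus N$, which is the $\nDelta N \setminus \nDelta M$ contribution, and symmetrically for $\pDelta N \setminus \pDelta M$. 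Aiming the ``direct set-algebraic manipulation'' at the wrong expression means the computation cannot close. Moreover, your sketch does not say where the hypotheses $N \subseteq M$ and $N \deltaplus \Delta N \subseteq M \deltaplus \Delta M$ enter; in the paper's proof they are needed (via an auxiliary fact that if $N \cap O = \emptyset$, $N' \subseteq N$ and $M \subseteq N$ then $M \subseteq N' \cup O$ implies $M \subseteq N'$) to derive containments such as $N \cap \nDelta M = \nDelta N \cap \nDelta M$ and $\pDelta N \subseteq (M \setminus \nDelta M) \cup \pDelta M$, without which the simplification to the two-term unions above does not go through. Part (5) therefore needs to be reworked from the corrected target expression, and is substantially longer than the other four parts.
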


Relational revision is only used directly for drop lenses, where only the first
argument $M$ may change. The following lemma provides an optimisation for this
case:

\begin{restatable}{lemma}{restatelemdeltarelrev}
   \label{lem:delta_rel_rev}
Suppose $M \models X \to A$ and $M \deltaplus \Delta M \models X \to
A$.  Then
   $\deltarelrevise{(M, \Delta M)}{X \to A}{(N, \emptyset)}
    =
	(\relrevise{\pDelta{M}}{X \to A}{N},\relrevise{\nDelta{M}}{X
          \to A}{N})$.
\end{restatable}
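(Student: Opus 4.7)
The plan is to reduce the claim to a set-theoretic calculation via the uniqueness of delta-correct operations (Lemma~\ref{lem:delta-correct-unique}), exploiting the fact that relational revision is applied pointwise and is injective on any relation satisfying $X\to A$.

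First, I would invoke Lemma~\ref{lem:delta-correct-unique} to rewrite
$\deltarelrevise{(M,\Delta M)}{X\to A}{(N,\emptyset)}$ as the canonical difference
$\relrevise{M\deltaplus \Delta M}{X\to A}{N} \deltaminus \relrevise{M}{X\to A}{N}$,
which by Lemma~\ref{lem:diff} splits into the positive component
$\relrevise{M\deltaplus \Delta M}{X\to A}{N} \setminus \relrevise{M}{X\to A}{N}$
and the negative component
$\relrevise{M}{X\to A}{N} \setminus \relrevise{M\deltaplus \Delta M}{X\to A}{N}$.
So the goal becomes matching each of these against $\relrevise{\pDelta M}{X\to A}{N}$ and $\relrevise{\nDelta M}{X\to A}{N}$ respectively.

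Next, I would establish two elementary lemmas about $\Relrevise$. The first is \emph{pointwise decomposition}: since $\Relrevise$ is defined as $\{\newrecrevise{m}{F}{N} \mid m\in M\}$, it commutes with unions in its first argument, so $\relrevise{S\cup T}{X\to A}{N} = \relrevise{S}{X\to A}{N}\cup \relrevise{T}{X\to A}{N}$ for any $S,T$. The second is \emph{injectivity under the FD}: if $K\models X\to A$, then $\newrecrevise{\cdot}{X\to A}{N}$ is injective on $K$, because record revision only changes $A$, so revised records agreeing on all fields must originally agree on $U\setminus A$, hence on $X$, and then by the FD on $A$ as well.

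Now, since $\Delta M$ is minimal for $M$, Lemma~\ref{lem:expand_minimal} gives $M \deltaplus \Delta M = (M\setminus \nDelta M) \cup \pDelta M$, with the union disjoint (as $\pDelta M\cap M = \emptyset$ and $\nDelta M\subseteq M$). Setting $S \defeq M\setminus \nDelta M$, pointwise decomposition yields
\[
\relrevise{M}{X\to A}{N} = \relrevise{S}{X\to A}{N}\cup \relrevise{\nDelta M}{X\to A}{N},
\qquad
\relrevise{M\deltaplus\Delta M}{X\to A}{N} = \relrevise{S}{X\to A}{N}\cup \relrevise{\pDelta M}{X\to A}{N}.
\]
Applying injectivity to $M$ (which satisfies the FD by hypothesis) shows that the first union is disjoint, and applying injectivity to $M \deltaplus \Delta M$ (which also satisfies the FD by hypothesis) shows the second is disjoint. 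Subtracting one from the other then cancels the shared $\relrevise{S}{X\to A}{N}$ summand, leaving the positive part as $\relrevise{\pDelta M}{X\to A}{N}$ and the negative part as $\relrevise{\nDelta M}{X\to A}{N}$, matching the stated equality.

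The main obstacle is the careful use of \emph{both} FD hypotheses: the hypothesis on $M$ alone only gives disjointness of $\relrevise{S}{\cdot}{N}$ and $\relrevise{\nDelta M}{\cdot}{N}$, while the hypothesis on $M\deltaplus\Delta M$ is what pins down disjointness of $\relrevise{S}{\cdot}{N}$ and $\relrevise{\pDelta M}{\cdot}{N}$, which is precisely what makes the common summand cancel cleanly in the set differences. Without both assumptions, revision of distinct inputs could collide and block the cancellation.
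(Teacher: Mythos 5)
Your proposal is correct and takes essentially the same route as the paper's own proof: the paper likewise reduces the claim to uniqueness of minimal deltas, views $\Relrevise$ as pointwise application of the record-revision function $\newrecrevise{\cdot}{X \to A}{N}$, proves that this function is injective on any relation satisfying $X \to A$ (using that revision only touches the $A$ field), and then cancels the images of the shared part $M \setminus \nDelta{M}$ --- the only difference being that the paper packages the decomposition as generic lemmas about mapping an injective function over a minimal delta, whereas you do the union bookkeeping by hand. One small caveat: your final cancellation step also needs $\relrevise{\pDelta{M}}{X \to A}{N}$ and $\relrevise{\nDelta{M}}{X \to A}{N}$ to be disjoint from \emph{each other}, not merely from the shared summand $\relrevise{M \setminus \nDelta{M}}{X \to A}{N}$, since otherwise the set difference leaves $\relrevise{\pDelta{M}}{X \to A}{N} \setminus \relrevise{\nDelta{M}}{X \to A}{N}$ rather than $\relrevise{\pDelta{M}}{X \to A}{N}$; neither of the two stated FD hypotheses covers $\pDelta{M} \cup \nDelta{M}$ jointly, and the paper's own proof elides exactly the same point.
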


For relational merge, the join lens makes use of the following special case:

\begin{restatable}{lemma}{restatelemdeltarelmerge}
   \label{lem:delta_rel_merge}
   If $\relmerge{M}{F}{N} = M$ then
   $\deltarelmerge{(M, \emptyset)}{F}{(N, \Delta N)} =
   \relmerge{M}{F}{\pDelta{N}} \deltaminus M$.
\end{restatable}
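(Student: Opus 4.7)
My plan is to reduce the statement to a purely static claim about two relational merges, via the uniqueness characterisation of delta operations, and then verify that claim by unfolding definitions and exploiting the hypothesis $\relmerge{M}{F}{N} = M$.

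First, by Lemma~\ref{lem:delta-correct-unique} (in its binary form, since $\deltarelmerge{\cdot}{F}{\cdot}$ takes two relational arguments), the delta merge is uniquely given by
\[
\deltarelmerge{(M,\emptyset)}{F}{(N,\Delta N)} = \relmerge{M \deltaplus \emptyset}{F}{N \deltaplus \Delta N} \deltaminus \relmerge{M}{F}{N} = \relmerge{M}{F}{N \deltaplus \Delta N} \deltaminus M,
\]
using the hypothesis in the last step. It therefore suffices to prove
$\relmerge{M}{F}{N \deltaplus \Delta N} \deltaminus M = \relmerge{M}{F}{\pDelta{N}} \deltaminus M$,
which by Lemma~\ref{lem:diff} amounts to two set equalities: (i) $\relmerge{M}{F}{N \deltaplus \Delta N} \setminus M = \relmerge{M}{F}{\pDelta{N}} \setminus M$, and (ii) $M \setminus \relmerge{M}{F}{N \deltaplus \Delta N} = M \setminus \relmerge{M}{F}{\pDelta{N}}$.

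Second, I would extract the usable content of the hypothesis. Since $\relmerge{M}{F}{N} = \relrevise{M}{F}{N} \cup N = M$, both $N \subseteq M$ and $\relrevise{M}{F}{N} \subseteq M$ hold; in particular record revision by $N$ never escapes $M$. By Lemma~\ref{lem:expand_minimal}, minimality of $\Delta N$ gives $N \deltaplus \Delta N = (N \setminus \nDelta N) \cup \pDelta{N}$, a disjoint union, and since $N \setminus \nDelta N \subseteq N \subseteq M$ we immediately get $(N \deltaplus \Delta N) \setminus M = \pDelta{N} \setminus M$. Unfolding $\relmerge{M}{F}{K} = \relrevise{M}{F}{K} \cup K$ on both sides of (i) therefore reduces (i) to the revision-level statement
\[
\relrevise{M}{F}{N \deltaplus \Delta N} \setminus M = \relrevise{M}{F}{\pDelta{N}} \setminus M.
\]

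The main obstacle is precisely this revision-level identity (and its analogue driving (ii)). I would argue it by a case analysis on a single record $m \in M$ and its revision. Let $N' = N \deltaplus \Delta N$; since $N' \vDash F$ (which must hold for the merge on the left to be defined), matching records from $N'$ on the left side of any FD $X \to Y \in F$ agree on $Y$, so for each FD the revision of $m$ either uses a match in $\pDelta{N}$ (in which case it agrees with the revision using $\pDelta{N}$ alone) or uses only matches drawn from $N \setminus \nDelta N \subseteq N$. Using the tree-form structure of $F$ to process FDs in root order, one then shows by induction on the revision tree that if the $\pDelta{N}$-revision and $N'$-revision of $m$ ever diverge, the divergence is attributable only to $N$-matches; composing such $N$-only revisions lands inside $\relrevise{M}{F}{N} \subseteq M$, so the resulting record does not contribute to the difference with $M$. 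The symmetric argument for (ii), examining which $m \in M$ fail to be covered by the merge, proceeds analogously: a record $m \in M$ is ``deleted'' by either merge precisely when no chain of revisions via $\pDelta{N}$-matches reaches $m$, again because $N$-matches keep revisions inside $M$ and so cannot be the decisive cause of $m$'s exclusion. The combinatorial bookkeeping for this tree-form induction, together with showing that the added and removed sets coincide in both directions, is where the real work lies; the initial reduction to a static claim is straightforward.
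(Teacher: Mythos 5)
Your opening reduction is sound and is exactly how the paper closes its own argument: by Lemma~\ref{lem:delta-correct-unique}, $\deltarelmerge{(M,\emptyset)}{F}{(N,\Delta N)} = \relmerge{M}{F}{N \deltaapp \Delta N} \deltaminus \relmerge{M}{F}{N} = \relmerge{M}{F}{N \deltaapp \Delta N} \deltaminus M$, so everything hinges on comparing $\relmerge{M}{F}{N'}$ with $\relmerge{M}{F}{\pDelta{N}}$, where $N' = N \deltaapp \Delta N$. The genuine gap is in the mechanism you propose for the record-level claim. You argue that when the revision of some $m \in M$ by $N'$ diverges from its revision by $\pDelta{N}$ alone, the divergence is ``attributable only to $N$-matches'' and the result ``lands inside $\relrevise{M}{F}{N} \subseteq M$'', hence cannot affect the difference with $M$. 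This does not hold up. A single record may be revised using a $\pDelta{N}$-match for one functional dependency and an $(N \cap N')$-match for another, so $\newrecrevise{m}{F}{N'}$ is in general neither $\newrecrevise{m}{F}{N}$ nor $\newrecrevise{m}{F}{\pDelta{N}}$, and there is no reason for such a mixed record to lie in $M$. Moreover, even if both divergent results did lie in $M$, that would only dispose of your equality (i); for (ii) you must show that exactly the same records of $M$ are covered by the two merges, and your sketch of why ``$N$-matches cannot be the decisive cause of $m$'s exclusion'' is resting on the same unsupported containment.

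The fact you actually need --- and the one the paper's proof isolates --- is stronger and simpler: the two revisions never diverge. Whenever $m$ matches some $n \in N \cap N'$ on $X$ but no record of $\pDelta{N}$, the standing assumptions force $\recproj{m}{Y} = \recproj{n}{Y}$, so that revision step is a no-op; the paper packages this as the invariant $Inv_F(m,N,N')$ of Definition~\ref{def:fc_same_inv} and shows in Lemma~\ref{lem:fc_same_gen} that it is preserved through the tree-form recursion, yielding $\newrecrevise{m}{F}{N'} = \newrecrevise{m}{F}{\pDelta{N}}$ exactly, hence $\relrevise{M}{F}{N'} = \relrevise{M}{F}{\pDelta{N}}$ as sets. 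The remaining union component $N \setminus \nDelta{N}$ of the merge is then absorbed by Lemma~\ref{lem:relmere_contains_bsubs}, giving the full identity $\relmerge{M}{F}{N'} = \relmerge{M}{F}{\pDelta{N}}$ rather than merely equality of the differences with $M$. Your tree-form induction is the right shape, but it must carry this ``old matches are no-ops'' invariant; the containment-in-$M$ argument you propose in its place would fail on mixed revisions.
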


In select and join lenses, we will avoid explicitly recomputing
$\relmerge{M}{F}{N}$ by showing that it is sufficient to consider only a subset
of possibly-affected rows in $M$. We define a function called $\affected_F$
which returns a predicate selecting a (hopefully small) superset of the rows
that may be changed by relational merge according to $F$ and a set of view
records $N$. The returned predicate is the necessary condition for any changes
implied by $F$ and $N$.

\begin{definition}
	$\affected_F(N) \defeq \bigvee_{X \to Y \in F} X \in \proj{N}{X}.$
\end{definition}

\noindent It is then possible to replace the target relation $M$ with only those
rows in $M$ which are likely to be updated, allowing fewer rows to be queried
from the database:

\begin{restatable}{lemma}{restaterelmergeefficient}
	\label{lem:relmerge_efficient}
	If $P=\affected_F(\pDelta{N})$ then $
		\relmerge{M}{F}{\Delta N^+} \deltaminus M = \relmerge{\select{P}{M}}{F}{\Delta N^+} \deltaminus \select{P}{M}$.
\end{restatable}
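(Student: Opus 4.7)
The plan is to show that rows in $M$ not satisfying $P = \affected_F(\pDelta{N})$ contribute nothing to the delta on either side, so they can be discarded without changing the result. Let $M_1 = \select{P}{M}$ and $M_2 = \select{\neg P}{M}$, so $M = M_1 \disjunion M_2$.

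\textbf{Step 1 (revision leaves $M_2$ fixed).} I will first show $\newrecrevise{m}{F}{\pDelta{N}} = m$ for every $m \in M_2$. Unfolding $\affected_F$, $\neg P(m)$ means that for every $X \to Y \in F$, $\recproj{m}{X} \notin \proj{\pDelta{N}}{X}$, i.e.\ no $n \in \pDelta{N}$ satisfies $\recproj{n}{X} = \recproj{m}{X}$. By induction on the tree form of $F$, each step of the recursive definition of $\Genrecrevise$ falls into the ``otherwise'' branch, so $\newrecrevise{m}{F}{\pDelta{N}} = m$. Hence $\relrevise{M_2}{F}{\pDelta{N}} = M_2$, and by the disjoint-union decomposition
\[
\relmerge{M}{F}{\pDelta{N}} = \relrevise{M_1}{F}{\pDelta{N}} \cup M_2 \cup \pDelta{N}.
\]

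\textbf{Step 2 (disjointness).} Next I want to know that $\relrevise{M_1}{F}{\pDelta{N}} \cup \pDelta{N}$ is contained in the set of records satisfying $P$, and in particular disjoint from $M_2$. For $\pDelta{N}$ this is immediate: every $n \in \pDelta{N}$ has $\recproj{n}{X} \in \proj{\pDelta{N}}{X}$ for any chosen $X \to Y \in F$. For $\relrevise{M_1}{F}{\pDelta{N}}$, observe that record revision only modifies attributes in $\Right(F)$; the left-hand sides $X$ are untouched. Because $F$ is in tree form, the $X$ appearing in $P$'s disjuncts are root attribute sets and hence preserved, so $P$-membership is preserved by revision.

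\textbf{Step 3 (compute both components of the delta).} Using Lemma~\ref{lem:diff},
\[
\relmerge{M}{F}{\pDelta{N}} \deltaminus M = \bigl(\relmerge{M}{F}{\pDelta{N}} \setminus M,\; M \setminus \relmerge{M}{F}{\pDelta{N}}\bigr).
\]
For the positive side, $\relmerge{M}{F}{\pDelta{N}} \setminus M = (\relrevise{M_1}{F}{\pDelta{N}} \cup M_2 \cup \pDelta{N}) \setminus (M_1 \cup M_2)$; cancelling $M_2$ and using Step 2's disjointness gives $(\relrevise{M_1}{F}{\pDelta{N}} \cup \pDelta{N}) \setminus M_1 = \relmerge{M_1}{F}{\pDelta{N}} \setminus M_1$. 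For the negative side, $M \setminus \relmerge{M}{F}{\pDelta{N}}$: the $M_2$ part vanishes since $M_2 \subseteq \relmerge{M}{F}{\pDelta{N}}$, and the $M_1$ part gives $M_1 \setminus (\relrevise{M_1}{F}{\pDelta{N}} \cup \pDelta{N}) = M_1 \setminus \relmerge{M_1}{F}{\pDelta{N}}$, again using Step 2 to strip the irrelevant $M_2$. Reassembling via Lemma~\ref{lem:diff} yields $\relmerge{M_1}{F}{\pDelta{N}} \deltaminus M_1$, as required.

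\textbf{Main obstacle.} The only delicate point is Step 2's claim that revision preserves $P$-membership, which hinges on the tree-form assumption on $F$: the $X$ sets on the left of dependencies appear as roots and do not overlap with right-hand sides of other dependencies, so record revision never rewrites an attribute used to evaluate $P$. Once this is nailed down, the rest is straightforward bookkeeping with set differences.
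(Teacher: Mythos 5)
Your overall route is the same as the paper's: split $M$ into $\select{P}{M}$ and $\select{\neg P}{M}$, show that record revision fixes every row of $\select{\neg P}{M}$ (your Step~1 is the paper's auxiliary lemmas on $\affected$), establish that $\relmerge{\select{P}{M}}{F}{\pDelta{N}}$ is disjoint from $\select{\neg P}{M}$, and factor the $\deltaminus$ accordingly (the paper packages your Step~3 as a small lemma about $\deltaminus$ distributing over disjoint unions, but the computation is the same). Steps~1 and~3 are fine.

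The gap is in the justification of Step~2. It is not true that record revision ``leaves the left-hand sides $X$ untouched,'' nor that ``the $X$ appearing in $P$'s disjuncts are root attribute sets.'' The disjunction $\affected_F(\pDelta{N})$ ranges over \emph{all} $X \to Y \in F$, and in tree form a left-hand side may itself be the right-hand side of another dependency: for $F = \set{A \to B,\; B \to D}$ the disjunct $B \in \proj{\pDelta{N}}{B}$ involves the non-root attribute $B$, which the revision step for $A \to B$ can rewrite. Only $\roots(F)$ is guaranteed disjoint from $\Right(F)$. The conclusion you need --- that revision preserves satisfaction of $P$ --- is nonetheless true, but for a different reason: whenever $\newrecrevise{\cdot}{F}{\pDelta{N}}$ overwrites the attributes $X'$ of some disjunct, it overwrites them with $\recproj{n}{X'}$ for some $n \in \pDelta{N}$ (tree form ensures $X'$ coincides with a whole node and is replaced atomically), so the disjunct $X' \in \proj{\pDelta{N}}{X'}$ still holds afterwards. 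Without this repair the disjointness claim --- exactly what lets you cancel $\select{\neg P}{M}$ in the positive component of the delta --- is unsupported. A secondary point: your argument that every $n \in \pDelta{N}$ satisfies $P$ (``for any chosen $X \to Y \in F$'') silently assumes $F \neq \emptyset$; when $F = \emptyset$ the disjunction $P$ is identically false and that inclusion fails, so the degenerate case needs separate treatment, as the paper gives it.
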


\section{Incrementalising Relational Lenses}\label{sec:incremental}

\subsection{Incremental lenses}

Assume $S,V$ are sets of relations equipped with sets of deltas $\Delta S, \Delta V$ and
corresponding operations $\deltaplus, \deltaminus$.  A (well-behaved)
\emph{incremental lens} $\ell: S \lensto V$ is a well-behaved lens equipped with
additional operations $\deltaOp{get}_\ell: S \times \Delta S \to \Delta V$ and
$\deltaOp{put}_\ell: S \times \Delta V \to \Delta S$ satisfying
\begin{align*}
\getOp_\ell(s \deltaapp \Delta s) &= \getOp_\ell(s) \deltaapp \deltaOp{get}_\ell(s, \Delta s) \\
\tag{$\Delta$PutGet}
\putOp_\ell(s, \getOp_\ell(s) \deltaapp \Delta v) &= s \deltaapp \deltaOp{put}_\ell(s, \Delta v)
\end{align*}

\noindent and such that if $\Delta s$ is minimal for $s$ then
$\deltaOp{get}_\ell(s,\Delta s)$ is minimal for $\getOp_\ell(s)$, and likewise
if $\Delta v$ is minimal for $\getOp_\ell(s)$ then $\deltaOp{put}_\ell(s,\Delta
v)$ is minimal for $s$.

The $\deltaOp{get}_\ell$ direction simply performs incremental view maintenance,
which is not our main concern here; we include it to show how it fits together
with $\deltaOp{put}_\ell$ but do not discuss it further. The first equation and
minimality condition is simply delta-correctness of $\deltaOp{get}_\ell$
relative to $\getOp_\ell$.

Our focus here is the $\deltaOp{put}_\ell$ operation. In this direction, it
would be redundant to supply an argument holding the previous value of the view,
since it can be obtained via $\getOp_\ell$. The $\Delta$PutGet rule and
associated minimality condition is a special case of the delta-correctness rule,
where we only consider changes to $V$, not $S$:
\begin{align*}
\putOp_\ell(s, \getOp_\ell(s) \deltaapp \Delta v)
&=
\putOp_\ell(s, \getOp_\ell(s)) \deltaapp \deltaOp{put}_\ell(s, \Delta v)
\end{align*}

\noindent and the term $\putOp_\ell(s, \getOp_\ell(s))$ has been simplified to $s$ by
the GetPut rule.

We can equip the generic lens combinators from \Secref{background:lenses}
with suitable delta-correct $\deltaOp{put}$ operations as follows:
\begin{align*}
  \deltaOp{put}_{\IdPrim{}}(\_,\Delta x) & = \Delta x\\
  \deltaOp{put}_{\SymPrim{}}(\_,(\Delta y,\Delta x)) & = (\Delta
                                                       x,\Delta y)\\
  \deltaOp{put}_{\AssocPrim{}}(\_,((\Delta x, \Delta y), \Delta z)) &
                                                                      = (\Delta x, (\Delta y, \Delta z)) \\
  \deltaOp{put}_{\ell_1;\ell_2}(x,\Delta z) &=
                                      \deltaOp{put}_{\ell_1}(x,\deltaOp{put}_{\ell_2}(\getOp_{\ell_1}(x),\Delta
                                      z))\\
  \deltaOp{put}_{\ell_1 \otimes \ell_2}((x_1,x_2),(\Delta y_1, \Delta y_2)) &=
                                                                    (\deltaOp{put}_{\ell_1}(x_1,\Delta
                                                                    y_1),
                                                                    \deltaOp{put}_{\ell_2}(x_2,\Delta y_2))
\end{align*}
\noindent It is straightforward to show that the resulting
incremental lenses are well-behaved.

For each relational lens primitive $\ell$ described in
\Secref{background:relational-lenses}, $\SelectPrim{P}$, $\DropPrim{A}{X}{a}$,
$\JoinDLPrim$ and $\RenamePrim{A/B}$, we will define an incremental
$\deltaOp{put}_\ell$ operation as follows. First, we incrementalise the
corresponding $\putOp_\ell$ definition from
\Secref{background:relational-lenses}, obtaining a function $\deltaOp{Put}_\ell :
(S \times \Delta S) \times (V \times \Delta V) \to \Delta S$ that is
delta-correct with respect to $\putOp_\ell$.  Since we are only interested in
the case where $S$ does not change and $v = \getOp_\ell(s)$, we then specialize
this operation to obtain $\deltaOp{put}_\ell(s,\Delta v) =
\deltaOp{Put}_\ell((s,\emptyset),(\getOp_\ell(s),\Delta v))$, which yields a
well-behaved lens.  We then apply further optimisations to simplify this
expression to a form that can be evaluated efficiently.

The well-behavedness of the generic lens combinators and the relational lens
primitives imply the well-behavedness of any well-typed lens expression.

\subsection{Select lens}

The incremental lens $\ell = \DeltaSelectPrim{P}: \vtype{U}{Q}{F} \lensto
\vtype{U}{P \wedge Q}{F}$ is the lens $\SelectPrim{P}$ of the same type,
equipped with $\deltaOp{put}_\ell$ defined as follows:
\[\begin{array}{lrcl}
& \deltaOp{put}_\ell
  &:&
  \vtype{U}{Q}{F} \times \Delta\vtype{U}{P \wedge Q}{F}
  \to
  \Delta\vtype{U}{Q}{F}
  \\
& \deltaOp{put}_\ell(M, \Delta N) &=& \letin{N}{\select{P}{M}} \\
&&& \letin{(M_0, \Delta M_0)}{\Deltarelmerge{\Deltaselect{\neg P}{M, \emptyset}}{F}{(N, \Delta N)}} \\
&&& \letin{(N_\#, \Delta N_\#)}{\Deltaselect{P}{M_0, \Delta M_0} \Deltasetminus (N, \Delta N)} \\
&&& (M_0, \Delta M_0) \deltasetminus (N_\#, \Delta N_\#)
\end{array}\]

\begin{lemma}
   \label{lem:delta-putget-select}
The incremental select lens $\DeltaSelectPrim{P}$ is well-behaved.
\end{lemma}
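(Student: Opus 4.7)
The plan is to exploit the compositional delta-correctness machinery from Section~\ref{sec:framework}. The definition of $\deltaOp{put}_\ell$ was built by taking the state-based $\putOp_\ell(M,N) = \letexpr{M_0}{\relmerge{\select{\neg P}{M}}{F}{N}}{\letexpr{N_\#}{\select{P}{M_0} \setminus N}{M_0 \setminus N_\#}}$, treating it as a functional composition of $\select{\neg P}{\cdot}$, $\relmerge{\cdot}{F}{\cdot}$, $\select{P}{\cdot}$ and two uses of $\setminus$, applying the $\Dagger{\cdot}$ transform from Section~\ref{sec:framework}, and then specialising to $\Delta M = \emptyset$ and $N = \getOp_\ell(M) = \select{P}{M}$.

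The first step is to invoke Lemma~\ref{lem:correctness-delta-ops} and Lemma~\ref{lem:delta_rel_merge} to confirm that each sub-operation appearing in the definition of $\deltaOp{put}_\ell$ is delta-correct (in the binary case for $\Deltasetminus$, in the unary case for $\Deltaselect{P}{\cdot}$, and for the one-sided-change special case of $\Deltarelmerge{\cdot}{F}{\cdot}$). Then, by Lemma~\ref{lem:delta-correct-composition} and (the binary generalisation of) Theorem~\ref{thm:compositional}, the resulting composite $\deltaOp{Put}_\ell : (S \times \Delta S) \times (V \times \Delta V) \to \Delta S$ is delta-correct with respect to $\putOp_\ell$.

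The second step is to specialise delta-correctness at $\Delta M = \emptyset$ and the view argument $N = \getOp_\ell(M)$. Delta-correctness of $\deltaOp{Put}_\ell$ then gives $\putOp_\ell(M, \getOp_\ell(M) \deltaapp \Delta N) = \putOp_\ell(M, \getOp_\ell(M)) \deltaapp \deltaOp{put}_\ell(M, \Delta N)$. Applying the ordinary GetPut law for $\SelectPrim{P}$ (already established by \citet{bohannon2006relational}) collapses the first factor on the right to $M$, yielding $\Delta$PutGet. Minimality of $\deltaOp{put}_\ell(M,\Delta N)$ for $M$ is then immediate from the minimality clause of delta-correctness, because $(M,\emptyset)$ is trivially minimal and $\Delta N$ is minimal for $\getOp_\ell(M)$ by hypothesis.

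The main technical obstacle lies in verifying the side conditions needed to apply the \emph{optimised} forms of the delta operations. In particular, Lemma~\ref{lem:delta_rel_merge} requires $\relmerge{\select{\neg P}{M}}{F}{\select{P}{M}} = \select{\neg P}{M}$; this holds because $M \vDash F$ (by the typing of $\SelectPrim{P}$) and $Q$ ignores $\outputs(F)$, so the rows in $\select{\neg P}{M}$ are already consistent under $F$ with those in $\select{P}{M}$, making the merge a no-op. Likewise the final use of Lemma~\ref{lem:correctness-delta-ops}(5) for $\deltasetminus$ needs $N_\# \subseteq M_0$ and its delta-analogue, which follow from the definition $N_\# = \select{P}{M_0} \setminus N \subseteq M_0$ together with minimality propagation through the preceding steps. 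Once these side conditions are discharged, the two well-behavedness properties follow by pure composition, with no further calculation required.
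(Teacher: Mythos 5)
Your first two steps are exactly the paper's argument: the paper gives no separate proof of this lemma, but covers it by the generic recipe at the start of \Secref{incremental} --- incrementalise $\putOp_\ell$ compositionally via Theorem~\ref{thm:compositional} and Lemma~\ref{lem:delta-correct-composition} to get a delta-correct $\deltaOp{Put}_\ell$, specialise at $(\ M,\emptyset)$ and $N=\getOp_\ell(M)$, and collapse $\putOp_\ell(M,\getOp_\ell(M))$ to $M$ by GetPut; minimality is inherited from delta-correctness. That is correct and complete for this lemma, because the lens is \emph{defined} using the lifted operations $\Deltarelmerge{\cdot}{F}{\cdot}$, $\Deltaselect{P}{\cdot}$ and $\Deltasetminus$, which are delta-correct by construction (Lemma~\ref{lem:delta-correct-unique}) with no side conditions.

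Your final paragraph, however, is mis-scoped and contains an error. The ``side conditions for the optimised forms'' belong to Theorem~\ref{thm:select:optimised} (the equivalence of $\deltaOp{put}_\ell$ with $\deltaOp{put}_{\ell'}$), not to well-behavedness; nothing in this lemma requires Lemma~\ref{lem:correctness-delta-ops} or Lemma~\ref{lem:delta_rel_merge}. Moreover the identity you assert, $\relmerge{\select{\neg P}{M}}{F}{\select{P}{M}} = \select{\neg P}{M}$, is false: relational merge unions its second argument into the result, so what actually holds (and what the paper's proof of the optimised theorem uses) is $\relmerge{\select{\neg P}{M}}{F}{\select{P}{M}} = \select{\neg P}{M}\cup\select{P}{M} = M$, i.e.\ $M_0=M$. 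Also, the select lens does not use Lemma~\ref{lem:delta_rel_merge} (that special case, with hypothesis $\relmerge{M}{F}{N}=M$, serves the join lens); the select-specific optimisation rests on Lemma~\ref{lem:select-merge-opt} and Lemma~\ref{lem:relmerge_efficient}. None of this undermines your proof of the present lemma, but carry the corrected identity forward if you go on to prove the optimised version.
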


\begin{definition}
\label{def:delta-select-opt}

Define an optimised incremental $\DeltaSelectPrim{P}$ lens $\ell'$ with
$\deltaOp{put}_{\ell'}$ defined as follows:
\[\begin{array}{lrcl}
& \deltaOp{put}_{\ell'}(M, \Delta N)
&=& \letin{Q}{\affected_F(\positive{\Delta N})} \\
&&& \letin{\Delta M_0}{(\relmerge{\select{Q \wedge \neg P}{M}}{F}{\positive{\Delta N}} \deltaminus \select{Q \wedge \neg P}{M}) \deltaminus \negative{\Delta N}} \\
&&& \letin{\Delta N_\#}{(\select{P}{\pDelta{M_0}},\select{P}{\nDelta{M_0}}) \deltaminus \Delta N} \\
&&& \Delta M_0 \deltaminus \Delta N_\#
\end{array}\]
\end{definition}

The optimised version works as follows. $\Delta M_0$ can be calculated by
querying the database for $\select{Q \wedge \neg P}{M}$ and then performing
relational merge using $\Delta N^+$.  The remaining computations involve only
deltas and can be performed in-memory. $\Delta M_0$ contains all changes to the
underlying table including any removed rows, but does not account for rows which
previously didn't satisfy $P$, but do after the updates. These rows, which would
violate lens well-behavedness, are found in $\Delta N_\#$. We calculate $\Delta
N_\#$ just using the delta difference operator $\deltaminus$ because $N_\#
\deltaplus \Delta N_\#$ is always a subset of $M_0 \deltaplus \Delta M_0$. The
final update consists of the changes to the table $M_0$ merged with the changes
to remove all rows in $\Delta N_\#$.

\begin{restatable}{theorem}{restateselectoptimised}[Correctness of optimised select lens]
   \label{thm:select:optimised}
   Suppose $N = \select{P}{M}$ where $M :\vtype{U}{Q}{F}$.  Suppose also that $\Delta N$ is
   minimal with respect to $N$ and that
   $N \deltaapp \Delta N : \vtype{U}{P \wedge Q}{F}$.  Then
   $\deltaOp{put}_\ell(M,\Delta N) = \deltaOp{put}_{\ell'}(M,\Delta
   N)$.
\end{restatable}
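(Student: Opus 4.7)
The plan is to unfold the definitions of $\deltaOp{put}_\ell$ and $\deltaOp{put}_{\ell'}$ piece by piece, making repeated use of the optimisation rules in \Secref{properties-delta-operations}, and showing that each intermediate quantity in $\deltaOp{put}_\ell$ equals the corresponding quantity in $\deltaOp{put}_{\ell'}$. By Corollary~\ref{cor:minimal-unique} it suffices to prove the deltas are equal as sets (so I do not need to re-verify minimality at every step, only at the end).

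First, I would simplify the expression for $(M_0, \Delta M_0)$ in the unoptimised definition. By delta-correctness of $\Deltarelmerge$ we have $M_0 = \relmerge{\select{\neg P}{M}}{F}{N}$ and $M_0 \deltaplus \Delta M_0 = \relmerge{\select{\neg P}{M}}{F}{N \deltaplus \Delta N}$. Using the assumption $N = \select{P}{M}$ together with $M \models F$ (which is part of the source type $\vtype{U}{Q}{F}$), relational revision leaves $\select{\neg P}{M}$ unchanged, so $M_0 = \select{\neg P}{M} \cup N = M$. Then $N_\# = \select{P}{M_0} \setminus N = \emptyset$, and using \reflemma{correctness-delta-ops}(1) we recover the identical formula for $\Delta N_\#$ in the two definitions (given the two $\Delta M_0$s coincide). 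The final output $(M_0, \Delta M_0) \deltasetminus (N_\#, \Delta N_\#)$ reduces to $\Delta M_0 \deltaminus \Delta N_\#$ by \reflemma{correctness-delta-ops}(5), after verifying the side conditions $N_\# \subseteq M_0$ (trivial as $N_\# = \emptyset$) and $N_\# \deltaplus \Delta N_\# \subseteq M_0 \deltaplus \Delta M_0$ (which holds because every record in $\pDelta{N_\#}$ comes from $\pDelta{M_0}$ and thus lies in $M_0 \deltaplus \Delta M_0$).

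The crux is then to show that the unoptimised $\Delta M_0 = \relmerge{\select{\neg P}{M}}{F}{N \deltaplus \Delta N} \deltaminus M$ equals the optimised one, $(\relmerge{\select{Q \wedge \neg P}{M}}{F}{\pDelta{N}} \deltaminus \select{Q \wedge \neg P}{M}) \deltaminus \nDelta{N}$, where $Q = \affected_F(\pDelta{N})$. I would do this in two stages: (a) argue that removing $\nDelta{N}$ from the merge target does not change revision, because $M \models F$ guarantees that records of $N \setminus \nDelta{N}$ already agree with $\select{\neg P}{M}$ on the outputs of $F$ and so trigger no real revision; consequently $\relrevise{\select{\neg P}{M}}{F}{N \deltaplus \Delta N} = \relrevise{\select{\neg P}{M}}{F}{\pDelta{N}}$, from which $\nDelta{N}$ can be factored out as a pure set-difference on the outside, yielding $(\relmerge{\select{\neg P}{M}}{F}{\pDelta{N}} \deltaminus \select{\neg P}{M}) \deltaminus \nDelta{N}$; (b) apply \reflemma{relmerge_efficient} with predicate $Q$ to replace $\select{\neg P}{M}$ by $\select{Q \wedge \neg P}{M}$ inside the remaining merge expression.

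The main obstacle is step (a), which requires careful reasoning about the semantics of relational revision. In particular, because $\pDelta{N}$ and $N \setminus \nDelta{N}$ may share $X$-attribute keys with common records of $\select{\neg P}{M}$, I need the hypothesis $N \deltaplus \Delta N : \vtype{U}{P \wedge Q}{F}$ (so that all candidate revision sources agree on the $Y$-attributes and the nondeterministic choice in $\Genrecrevise$ is irrelevant) in order to conclude that the revision depends only on $\pDelta{N}$. Once this observation is in place, the remaining bookkeeping---tracking how $\nDelta{N}$ sits in the negative component of $\Delta M_0$ under $\deltaminus$---is a routine calculation in the delta algebra using minimality of $\Delta N$ and the definition of $\deltaplus$.
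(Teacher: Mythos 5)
Your proposal is correct and follows essentially the same route as the paper's proof: it establishes $M_0 = M$, collapses the outer $\deltasetminus$ and the $\Delta N_\#$ computation via Lemma~\ref{lem:correctness-delta-ops} (parts 5 and 1), and handles the crux in exactly the paper's two stages --- the ``revision against $N \deltaapp \Delta N$ equals revision against $\pDelta{N}$'' argument (the content of Lemmas~\ref{lem:fc_same_gen}, \ref{lem:select_relrevise_rewrite} and \ref{lem:select-merge-opt}) followed by Lemma~\ref{lem:relmerge_efficient}. One small inaccuracy: not every record of $\pDelta{N_\#}$ comes from $\pDelta{M_0}$ (some arise from $\nDelta{N}$), but the needed side condition $N_\# \deltaapp \Delta N_\# \subseteq M_0 \deltaapp \Delta M_0$ still holds, most directly because delta-correctness gives $N_\# \deltaapp \Delta N_\# = \select{P}{M_0 \deltaapp \Delta M_0} \setminus (N \deltaapp \Delta N)$.
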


\subsection{Project lens}

The incremental lens $\ell = \DeltaDropPrim{A}{X}{a}: \vtype{U}{P}{F}\lensto
\vtype{U \setminus A}{\proj{P}{U \setminus A}}{F'}$, where $F \equiv F' \uplus
\set{X \to A}$, is the lens $\DropPrim{A}{X}{a}$ of the same type, equipped with
$\deltaOp{put}_\ell$ defined as follows:
\[\begin{array}{lrcl}
& \deltaOp{put}_\ell
  &:&
  \vtype{U}{P}{F} \times \Delta\vtype{U \setminus A}{\proj{P}{U \setminus A}}{F'}
  \to
  \Delta\vtype{U}{P}{F}
  \\
& \deltaOp{put}_\ell(M, \Delta N)
&=& \letin{N}{\proj{M}{U \setminus A}} \\
&&& \letin{(M', \Delta M')}{(N, \Delta N) \DeltaJoin (\set{\set{A = a}}, \emptyset)}\\
&&& \deltarelrevise{(M', \Delta M')}{X \to A}{(M, \emptyset)}
\end{array}\]

\begin{lemma}
   \label{lem:delta-putget-project}
The incremental projection lens $\DeltaDropPrim{A}{X}{a}$ is well-behaved.
\end{lemma}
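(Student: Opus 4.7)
The plan is to verify the two incremental well-behavedness conditions for $\ell = \DeltaDropPrim{A}{X}{a}$: the $\Delta$PutGet law and minimality preservation of $\deltaOp{put}_\ell$. The underlying state-based lens $\DropPrim{A}{X}{a}$ is already well-behaved (Bohannon et al.), so PutGet and GetPut hold. The strategy is to derive both incremental properties from the delta-correctness of $\deltaOp{put}_\ell$ with respect to $\putOp_\ell$, using GetPut at the end to collapse the residual $\putOp(M, \getOp(M))$ to $M$.

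First, I would observe that the body of $\deltaOp{put}_\ell$ is obtained from $\putOp(M, N) = \relrevise{N \Join \set{\set{A = a}}}{X \to A}{M}$ by replacing each relational operation with its delta-lifted counterpart: natural join becomes $\DeltaJoin$, which is delta-correct by Lemma~\ref{lem:correctness-delta-ops}(3), and relational revision becomes $\deltarelrevise{\cdot}{X \to A}{\cdot}$, which is delta-correct in the specialised shape at hand by Lemma~\ref{lem:delta_rel_rev}. Applying Lemma~\ref{lem:delta-correct-composition} and Theorem~\ref{thm:compositional} then yields that the full composite, viewed as a function of $(M, \Delta M)$ and $(N, \Delta N)$, is delta-correct with respect to $\putOp(M, N)$. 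Specialising to $\Delta M = \emptyset$ and $N = \getOp(M) = \proj{M}{U \setminus A}$ gives
\[
\putOp(M,\, \getOp(M) \deltaapp \Delta N) \;=\; \putOp(M, \getOp(M)) \deltaapp \deltaOp{put}_\ell(M, \Delta N),
\]
and then GetPut rewrites $\putOp(M, \getOp(M))$ to $M$, establishing $\Delta$PutGet. Minimality of $\deltaOp{put}_\ell(M, \Delta N)$ for $M = \putOp(M, \getOp(M))$ is the other half of delta-correctness.

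The main obstacle is discharging the preconditions of Lemma~\ref{lem:delta_rel_rev}, namely that both the first argument $M'$ and its update $M' \deltaapp \Delta M'$ satisfy $X \to A$. Fortunately this is immediate here: $M' = N \Join \set{\set{A = a}}$ assigns the constant value $a$ to every row at attribute $A$, and the $\DeltaJoin$ formula from Lemma~\ref{lem:correctness-delta-ops}(3) gives $M' \deltaapp \Delta M' = (N \deltaapp \Delta N) \Join \set{\set{A = a}}$, which likewise has constant $A$-column. Hence $X \to A$ holds trivially on both sides. Note also that the second argument $(M, \emptyset)$ carries the original $M : \vtype{U}{P}{F \uplus \set{X \to A}}$, which satisfies $X \to A$ by typing. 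With these preconditions satisfied, the argument above goes through without further case analysis or schema-level reasoning.
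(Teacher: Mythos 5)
Your proposal is correct and follows essentially the same route as the paper: the paper justifies this lemma (and its select/join analogues) by exactly the argument you give, namely that $\deltaOp{put}_\ell$ is the compositional incrementalisation of $\putOp_\ell$, hence delta-correct by Lemma~\ref{lem:delta-correct-composition} and Theorem~\ref{thm:compositional}, and specialising to $\Delta M=\emptyset$, $N=\getOp_\ell(M)$ and applying GetPut yields $\Delta$PutGet together with the minimality condition. The only minor remark is that Lemma~\ref{lem:delta_rel_rev} is not actually needed here --- the delta operations are delta-correct by construction (Lemma~\ref{lem:delta-correct-unique}), and that lemma, whose preconditions you do correctly discharge, is only required later for the optimised form in Theorem~\ref{thm:project:optimised}.
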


\begin{definition}
\label{def:delta-project-opt}
Define an optimised incremental $\DeltaDropPrim{A}{X}{a}$ lens $\ell'$ with
$\deltaOp{put}_{\ell'}$ defined as follows:
\[\begin{array}{lrcl}
& \deltaOp{put}_{\ell'}(M, \Delta N) &=& \letin{\Delta M'}{(\posDN \Join \set{\set{A = a}}, \negDN \Join \set{\set{A = a}})} \\
&&& (\relrevise{\pDelta{M'}}{X \to A}{M}, \relrevise{\nDelta{M'}}{X \to A}{M})
\end{array}\]
\end{definition}

\noindent $\Delta M'$ extends $\Delta N'$ with the extra attribute $A$ set to
the default value $a$, to match the domain of the underlying table. The final
step optimises the use of $\deltarelrevise{\cdot}{X \to A}{\cdot}$ in
$\deltaOp{put}_\ell$ using Lemma~\ref{lem:delta_rel_rev}.

\begin{restatable}{theorem}{restateprojectoptimised}[Correctness of optimised project lens]
   \label{thm:project:optimised}
   Suppose $M :\vtype{U}{P}{F}$ and  $N = \proj{M}{U \setminus A}$.  Suppose also that $\Delta N$ is
   minimal with respect to $N$ and that
   $N \deltaapp \Delta N : \vtype{U\setminus A}{\proj{P}{U \setminus A}}{F'}$, where $F \equiv
   F' \uplus \{X \to A\}$.  Then
   $\deltaOp{put}_\ell(M,\Delta N) = \deltaOp{put}_{\ell'}(M,\Delta
   N)$.
\end{restatable}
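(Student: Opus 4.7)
The plan is to unfold $\deltaOp{put}_\ell$ step by step and rewrite each of its two non-trivial operations using an optimisation lemma from \Secref{properties-delta-operations}, arriving at $\deltaOp{put}_{\ell'}$.

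First, I would evaluate $(M', \Delta M') = (N, \Delta N) \DeltaJoin (\set{\set{A=a}}, \emptyset)$ using Lemma~\ref{lem:correctness-delta-ops}(3). Since the second argument of the join carries empty delta, the general formula collapses to $\pDelta{M'} = \pDelta{N} \Join \set{\set{A=a}}$ and $\nDelta{M'} = \nDelta{N} \Join \set{\set{A=a}}$, matching the $\Delta M'$ computed in $\deltaOp{put}_{\ell'}$. The minimality hypotheses of that lemma are discharged immediately: $\Delta N$ is minimal for $N = \proj{M}{U \setminus A}$ by assumption, and $\emptyset$ is trivially minimal for $\set{\set{A=a}}$.

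Second, I would apply Lemma~\ref{lem:delta_rel_rev} to rewrite $\deltarelrevise{(M', \Delta M')}{X \to A}{(M, \emptyset)}$ as $(\relrevise{\pDelta{M'}}{X \to A}{M}, \relrevise{\nDelta{M'}}{X \to A}{M})$, which is precisely the final step of $\deltaOp{put}_{\ell'}$. The lemma requires $M' \models X \to A$ and $M' \deltaplus \Delta M' \models X \to A$; both conditions hold trivially because $M'$ is defined by joining with the singleton $\set{\set{A=a}}$, and $\pDelta{M'}$ and $\nDelta{M'}$ are formed analogously, so every record in $M'$ and in $M' \deltaplus \Delta M'$ has its $A$-attribute fixed to $a$ and therefore cannot violate $X \to A$.

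The main obstacle I anticipate is the minimality bookkeeping that connects the two rewrites. To pass cleanly from the first rewrite into the second I need $\Delta M'$ to be minimal for $M'$, which I would derive from delta-correctness of $\DeltaJoin$ together with Lemma~\ref{lem:delta-correct-composition}, using the minimality of $\Delta N$. Once this is dispatched, the remaining steps are direct substitutions and the unoptimised and optimised expressions coincide.
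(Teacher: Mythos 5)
Your proposal matches the paper's proof essentially line for line: unfold $\deltaOp{put}_\ell$, compute $\Delta M'$ via Lemma~\ref{lem:correctness-delta-ops} part~3 (which collapses to $(\pDelta{N} \Join \set{\set{A=a}}, \nDelta{N} \Join \set{\set{A=a}})$ since the second delta is empty), and then apply Lemma~\ref{lem:delta_rel_rev} to the relational revision step before concluding by transitivity. Your extra remarks -- that $M'$ and $M' \deltaplus \Delta M'$ satisfy $X \to A$ trivially because every record has $A = a$, and that minimality of $\Delta M'$ follows from delta-correctness of the join -- are correct and merely make explicit side conditions the paper leaves implicit.
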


\subsection{Join lens}

The incremental lens $\ell = \DeltaJoinDLPrim: \vtype{U}{P}{F} \otimes
\vtype{V}{Q}{G} \lensto \vtype{U \cup V}{P \Join Q}{F \cup G}$ is the lens
$\JoinDLPrim$ of the same type, equipped with $\deltaOp{put}_\ell$ defined as
follows:
\[\begin{array}{lrcl}
&	\deltaOp{put}_\ell &:&
   \vtype{U}{P}{F} \times \vtype{V}{Q}{G}
   \times \Delta\vtype{U \cup V}{P \Join Q}{F \cup G}\\
&&&
   \to \Delta\vtype{U}{P}{F} \times \Delta\vtype{V}{Q}{G}
	\\
&	\deltaOp{put}_\ell((M,N), \Delta O)
&=&
\letin{O}{M \Join N}
\\
&&&
\letin{(M_0, \Delta M_0)}{\Deltarelmerge{(M, \emptyset)}{F}{\Deltaproj{O, \Delta O}{U}}}
\\
&&&
\letin{(N', \Delta N')}{\Deltarelmerge{(N, \emptyset)}{G}{\Deltaproj{O, \Delta O}{V}}}
\\
&&&
\letin{(L, \Delta L)}{((M_0, \Delta M_0) \DeltaJoin (N', \Delta N')) \Deltasetminus (O, \Delta O)}
\\
&&&
\letin{\Delta M'}{(M_0, \Delta M_0) \deltasetminus \Deltaproj{L, \Delta L}{U}}
\\
&&&
(\Delta M', \Delta N')
\end{array}\]

\begin{lemma}
   \label{lem:delta-putget-join}
   The incremental join lens $\DeltaJoinDLPrim$ is well-behaved.
\end{lemma}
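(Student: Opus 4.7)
The plan is to derive well-behavedness of $\DeltaJoinDLPrim$ by combining the compositional delta-correctness framework of Section~\ref{sec:framework} with the optimisation lemmas for the relational operations and merges appearing inside $\putOp_\ell$. The strategy mirrors the one used implicitly for the incremental select and project lenses: first apply the $\Dagger{\cdot}$ transformation to $\putOp_\ell$, then restrict to the case where the source side has zero delta, following the general recipe described at the end of Section~\ref{sec:incremental}.

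First, I would observe that the body of $\deltaOp{put}_\ell((M,N),\Delta O)$ is precisely what $\Dagger{\cdot}$ from Section~\ref{sec:delta-relational} produces from the body of $\putOp_\ell((M,N),O)$, specialised to the case where the source pair $(M,N)$ has zero delta and the only varying input is the view delta $\Delta O$. Each let-binding of $\putOp_\ell$ is rewritten as a let-binding over a pair (value, delta), and each of the operations $\proj{\cdot}{U}$, $\relmerge{\cdot}{F}{\cdot}$, $\Join$, $\setminus$ is replaced by its delta-correct counterpart $\Deltaproj{\cdot}{U}$, $\Deltarelmerge{\cdot}{F}{\cdot}$, $\DeltaJoin$, $\Deltasetminus$, whose correctness is given by Lemmas~\ref{lem:correctness-delta-ops} and~\ref{lem:delta_rel_merge}.

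Next, I would invoke Theorem~\ref{thm:compositional} and Lemma~\ref{lem:delta-correct-composition}: delta-correctness propagates through let-bindings and composition, so the derived $\deltaOp{Put}_\ell$ is delta-correct with respect to $\putOp_\ell$. Specialising its delta-correctness equation to $(M,N)$ unchanged and $O = M \Join N = \getOp_\ell(M,N)$, and using the state-based GetPut law $\putOp_\ell((M,N), M \Join N) = (M,N)$ already established in Section~\ref{sec:background:relational-lenses}, yields exactly the $\Delta$PutGet law; the minimality clause of delta-correctness gives the minimality-preservation half of well-behavedness directly.

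The main obstacle will be discharging the side conditions of the optimised delta lemmas at each call site. For the two uses of $\Deltarelmerge{(M,\emptyset)}{F}{\cdot}$ and $\Deltarelmerge{(N,\emptyset)}{G}{\cdot}$, Lemma~\ref{lem:delta_rel_merge} requires $\relmerge{M}{F}{\proj{O}{U}} = M$ and $\relmerge{N}{G}{\proj{O}{V}} = N$. These hold because $O = M \Join N$ implies $\proj{O}{U} \subseteq M$ and $\proj{O}{V} \subseteq N$, so the revision step is the identity (each $m \in M$ either matches some record in $\proj{O}{U} \subseteq M$ on the LHS of an FD, in which case $M \models F$ forces the RHS update to be a no-op, or matches nothing) and the final union introduces no new records. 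For the two uses of $\Deltasetminus$, Lemma~\ref{lem:correctness-delta-ops}(5) requires subset conditions, which hold because $O \subseteq M_0 \Join N'$ (since $M \subseteq M_0$ and $N \subseteq N'$ after merging) and $L \subseteq M_0 \Join N'$ by construction, with the analogous post-update inclusions following from the correctness of the state-based $\JoinDLPrim$ lens. Once these side conditions are verified, the composition is routine.
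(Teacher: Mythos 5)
Your proposal is correct and follows essentially the same route as the paper, which establishes well-behavedness of all the incremental primitives at once via the general recipe in \Secref{incremental}: apply the $\Dagger{\cdot}$ transformation to the body of $\putOp_\ell$, invoke Theorem~\ref{thm:compositional} (with Lemma~\ref{lem:delta-correct-composition}) for delta-correctness of the resulting $\deltaOp{Put}_\ell$, and specialise to a zero source delta and $O = \getOp_\ell(M,N)$ so that the state-based GetPut law turns the delta-correctness equation into $\Delta$PutGet, with minimality preservation coming from the minimality clause of delta-correctness. One small clarification: your final paragraph about discharging the side conditions of Lemma~\ref{lem:delta_rel_merge} and Lemma~\ref{lem:correctness-delta-ops}(5) is not actually needed for this lemma, because the occurrences of $\lift{\opname{merge}}$, $\DeltaJoin$ and $\Deltasetminus$ in $\deltaOp{put}_\ell$ denote the canonical delta-correct liftings, which exist unconditionally by Corollary~\ref{cor:minimal-diff} and Lemma~\ref{lem:delta-correct-unique}; those side conditions (which you verify correctly) belong to the proof of Theorem~\ref{thm:join:optimised}, where the generic liftings are replaced by the cheaper explicit formulas.
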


\begin{definition}
\label{def:delta-join-opt}
Define an optimised incremental $\DeltaJoinDLPrim$ lens $\ell'$ with $\deltaOp{put}_{\ell'}$
defined as follows:
\[\begin{array}{lrcl}
&	\deltaOp{put}_{\ell'}((M,N), \Delta O)
&=& \letin{P_M}{\affected_F(\proj{\positive{\Delta O}}{U})} \\
&&& \letin{P_N}{\affected_G(\proj{\positive{\Delta O}}{V})} \\
&&&
\letin{\Delta M_0}{\relmerge{\select{P_M}{M}}{F}{\proj{\positive{\Delta O}}{U}} \deltaminus \select{P_M}{M}}
\\
&&&
\letin{\Delta N'}{\relmerge{\select{P_N}{N}}{G}{\proj{\positive{\Delta O}}{V}} \deltaminus \select{P_N}{N}}
\\
&&&
\texttt{let}\;\Delta L =
(((M \deltaplus \Delta M_0) \Join \positive{\Delta N'})
 \cup (\positive{\Delta M_0} \Join (N \deltaplus \Delta N')), \\
&&&
\phantom{\texttt{let}\;\Delta L}\mathrel{\phantom{=}}
\;\,(\negative{\Delta M_0} \Join N) \cup (M \Join \negative{\Delta N'})) \deltaminus \Delta O\;\texttt{in}
\\
&&&
\letin{\Delta M'}{\Delta M_0 \deltaminus \proj{\Delta L}{U}}
\\
&&&
(\Delta M', \Delta N')
\end{array}\]
\end{definition}

In the optimised join lens, $\Delta M_0$ and $\Delta N'$ can be calculated by
first querying $\select{P_M}{M}$ and $\select{P_N}{N}$, where $P_M$ and $P_N$
include all rows potentially affected by merging functional dependencies, and
then performing the appropriate relational merges using $\proj{\pDelta{O}}{U}$
and $\proj{\pDelta{O}}{V}$. Since this step may result in additional rows being
generated or deleted rows not being removed, any excess rows $\Delta L$ are
determined by calculating which rows would have been changed in the joined view
after updates to the underlying tables $\Delta M_0$ and $\Delta N'$, and then
comparing those to the desired changes $\Delta O$. We can calculate $\Delta L$
efficiently by querying the underlying $M$ and $N$ tables only for records
having identical join keys to records in $\Delta M_0$ and $\Delta N'$.

Finally, the updated left table can be calculated as the changes to the left
table $\Delta M_0$ minus all records that need to be removed to ensure the lens
is well behaved. The changes $\Delta N'$ are used for the right table.

\begin{restatable}{theorem}{restatejoinoptimised}[Correctness of
  optimised join lens]
   \label{thm:join:optimised}
   Suppose $ M : \vtype{U}{P}{F}$ and $N :
\vtype{V}{Q}{G}$ and $O = M \Join N$.  Suppose also that $\Delta O$ is minimal with
respect to $O$, and $O \deltaplus \Delta O : \vtype{U \cup V}{P \Join Q}{F \cup G}$.  Then $
   \deltaOp{put}_\ell((M, N),\Delta O) =
   \deltaOp{put}_{\ell'}((M,N),\Delta O)$.
\end{restatable}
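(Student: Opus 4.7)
The plan is to show that each intermediate binding in $\deltaOp{put}_{\ell'}$ equals the delta component of the corresponding binding in $\deltaOp{put}_\ell$, walking through the definition from top to bottom and discharging each step with one of the optimisation lemmas of Section~\ref{sec:properties-delta-operations}. Since $\Delta O$ is minimal for $O$ and relation--to--delta coercion preserves minimality, every pair appearing in $\deltaOp{put}_\ell$ is of the form $(x, \Delta x)$ with $\Delta x$ minimal for $x$, so the lemmas apply.

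For the bindings of $(M_0, \Delta M_0)$ and $(N', \Delta N')$, I would first observe that $O = M \Join N$ gives $\proj{O}{U} \subseteq M$ and $\proj{O}{V} \subseteq N$, which in turn imply $\relmerge{M}{F}{\proj{O}{U}} = M$ and $\relmerge{N}{G}{\proj{O}{V}} = N$. Lemma~\ref{lem:delta_rel_merge} then lets us rewrite $\Delta M_0$ as $\relmerge{M}{F}{\pDelta{\proj{O, \Delta O}{U}}} \deltaminus M$, and Lemma~\ref{lem:relmerge_efficient} restricts the base relation to $\select{P_M}{M}$ for $P_M = \affected_F(\pDelta{\proj{O, \Delta O}{U}})$. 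A short check using Lemma~\ref{lem:correctness-delta-ops}(2) shows $\pDelta{\proj{O, \Delta O}{U}} = \proj{\pDelta{O}}{U} \setminus \proj{O}{U}$, and since $\affected_F$ depends only on which key-tuples occur, enlarging the argument from this set to $\proj{\pDelta O}{U}$ yields the same predicate on the merged rows, matching the $P_M$ in $\deltaOp{put}_{\ell'}$. The derivation for $\Delta N'$ is symmetric, using $G$ and $V$ in place of $F$ and $U$.

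For the binding of $(L, \Delta L)$, Lemma~\ref{lem:correctness-delta-ops}(3) expands $(M_0, \Delta M_0) \DeltaJoin (N', \Delta N')$ to the large join expression occurring in $\deltaOp{put}_{\ell'}$. The outer $\Deltasetminus (O, \Delta O)$ then reduces to $\deltaminus \Delta O$ by Lemma~\ref{lem:correctness-delta-ops}(5), provided $O \subseteq M_0 \Join N'$ and $O \deltaplus \Delta O \subseteq (M_0 \deltaplus \Delta M_0) \Join (N' \deltaplus \Delta N')$. The first containment follows from $M_0 \supseteq M$, $N' \supseteq N$ and $O = M \Join N$; the second follows from the typing assumption $O \deltaplus \Delta O : \vtype{U \cup V}{P \Join Q}{F \cup G}$, together with the fact that $M_0 \deltaplus \Delta M_0$ and $N' \deltaplus \Delta N'$ are precisely the merges of $M$ and $N$ with $\proj{O \deltaplus \Delta O}{U}$ and $\proj{O \deltaplus \Delta O}{V}$, which by the definition of $\relmerge$ contain these projections as subsets. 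Finally, for $\Delta M'$, Lemma~\ref{lem:correctness-delta-ops}(2) yields $\Deltaproj{L, \Delta L}{U}$ as a pair of projections, and a further application of Lemma~\ref{lem:correctness-delta-ops}(5) collapses the outer $\deltasetminus$ to $\Delta M_0 \deltaminus \proj{\Delta L}{U}$, matching the optimised definition.

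The main obstacle I expect is the careful verification of the side-conditions of Lemma~\ref{lem:correctness-delta-ops}(5), namely the two subset conditions between the minuend and subtrahend and between their updated forms, at each application. These are not deep but require bookkeeping: one must keep track that the various $\Delta$-values remain minimal with respect to the values they modify, that $\relmerge$ is extensive (its result always contains the second argument), and that the view-typing assumption $O \deltaplus \Delta O : \vtype{U \cup V}{P \Join Q}{F \cup G}$ is strong enough to guarantee that the merged sources yield a join containing $O \deltaplus \Delta O$. Once these are discharged, the equality $\deltaOp{put}_\ell((M,N),\Delta O) = \deltaOp{put}_{\ell'}((M,N),\Delta O)$ falls out by straightforward equational rewriting.
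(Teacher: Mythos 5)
Your proposal is correct and follows essentially the same route as the paper's proof: unfold the bindings, observe $M_0 = M$ and $N' = N$, apply Lemma~\ref{lem:delta_rel_merge} and Lemma~\ref{lem:relmerge_efficient} to the two merges, use Lemma~\ref{lem:correctness-delta-ops} parts (2), (3) and (5) for the join, projections and differences, and discharge the subset side-conditions exactly as you describe. The one detail worth making explicit is that $L = (M_0 \Join N') \setminus O = \emptyset$, which is what lets $\Deltaproj{L, \Delta L}{U}$ collapse to the plain pair $(\proj{\positive{\Delta L}}{U}, \proj{\negative{\Delta L}}{U})$ in the last step; relatedly, the paper enlarges the merge argument from $\proj{\positive{\Delta O}}{U} \setminus \proj{O}{U}$ to $\proj{\positive{\Delta O}}{U}$ (using $\proj{O}{U} \subseteq M$) \emph{before} invoking Lemma~\ref{lem:relmerge_efficient}, which is cleaner than adjusting the $\affected_F$ predicate afterwards since that lemma is stated for the exact predicate of the merge argument.
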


\subsection{Rename lens}

The incremental lens $\ell = \DeltaRenamePrim{A/B} : \vtype{U}{P}{F} \lensto
\vtype{U[A/B]}{\rename{A/B}{P}}{F[A/B]}$, where $A \in U$ and $B \notin U$, is
the lens $\RenamePrim{A/B}$ with the additional function $\deltaOp{put}_\ell$
defined as follows:
\[\begin{array}{lrcl}
& \deltaOp{put}_\ell
  &:&
  \vtype{U}{P}{F} \times \Delta\vtype{U[A/B]}{\rename{A/B}{P}}{F[A/B]}
  \to
  \Delta\vtype{U}{P}{F}
  \\
& \deltaOp{put}_\ell(M, \Delta N)
&=& \letin{N}{\rename{B/A}{M}} \\
&&& \letin{(M', \Delta M')}{\Deltarename{B/A}{N,\Delta N}} \\
&&& \Delta M'
\end{array}
\]

\begin{definition}
\label{def:delta-rename-opt}
Define an optimised incremental $\DeltaRenamePrim{A/B}$ lens $\ell'$ with
$\Delta\putOp_{\ell'}$ defined as follows:
\[\begin{array}{lrcl}
		& \deltaOp{put}_{\ell'}(M, \Delta N) &=& (\rename{B/A}{\positive{\Delta N}}, \rename{B/A}{\negative{\Delta
    N}})
\end{array}
\]
\end{definition}

The optimised rename lens performs the inverse rename operation on both
components of the delta relation. No database queries are required.

\begin{restatable}{theorem}{restaterenameoptimised}[Correctness of rename lens]
   \label{thm:rename:optimised}
Suppose $M : \vtype{U}{P}{F}$ and $N= \rename{A/B}{M}$, and that $\Delta
N$ is minimal with respect to $N$ and satisfies $N \deltaapp \Delta N
: \vtype{U \cup V}{P \Join Q}{F \cup G}$.  Then $\deltaOp{put}_\ell(M,\Delta N) =
\deltaOp{put}_{\ell'}(M,\Delta N)$.
\end{restatable}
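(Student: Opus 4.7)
The plan is a direct unfolding of both sides followed by a single appeal to the already-established incremental rename identity; there is essentially no mathematical content beyond bookkeeping.

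First I would unfold $\deltaOp{put}_\ell(M,\Delta N)$ according to its definition. The first \texttt{let}-binding introduces $N$ but $N$ does not appear in the returned value (only inside the second \texttt{let}), so it may be discarded. The second \texttt{let}-binding introduces $(M',\Delta M') = \Deltarename{B/A}{N,\Delta N}$ and the operation returns $\Delta M'$. Expanding the generic definition $\DeltaOp{op}(x,\Delta x) = (\opname{op}(x),\deltaOp{op}(x,\Delta x))$, the projection onto the delta component gives simply $\Delta M' = \deltarename{B/A}{N,\Delta N}$.

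Next I would apply Lemma \ref{lem:correctness-delta-ops}(4), instantiated with the renaming $B/A$ in place of $A/B$ and with $N$, $\Delta N$ in place of $M$, $\Delta M$. Its hypothesis is that $\Delta N$ be minimal for $N$, which is part of the theorem's assumptions. The lemma then yields
\[
\deltarename{B/A}{N,\Delta N} = (\rename{B/A}{\pDelta{N}},\,\rename{B/A}{\nDelta{N}}),
\]
which coincides verbatim with the right-hand side of Definition~\ref{def:delta-rename-opt}, namely $\deltaOp{put}_{\ell'}(M,\Delta N)$. Chaining these equalities gives $\deltaOp{put}_\ell(M,\Delta N) = \deltaOp{put}_{\ell'}(M,\Delta N)$, as required.

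There is no real obstacle to this argument; the only small point to check is that the unused binding for $N$ in $\deltaOp{put}_\ell$ may legitimately be dropped, and that the typing assumption $N \deltaapp \Delta N$ lying in the view type is consistent with applying the inverse renaming $B/A$. Both are immediate from the shapes of the operations, so the proof reduces essentially to one invocation of Lemma \ref{lem:correctness-delta-ops}(4).
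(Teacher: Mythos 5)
Your proposal is correct and follows essentially the same route as the paper's proof: unfold $\deltaOp{put}_\ell$ to expose $\Delta M' = \deltarename{B/A}{N,\Delta N}$, then apply Lemma~\ref{lem:correctness-delta-ops}(4) to obtain $(\rename{B/A}{\pDelta{N}},\rename{B/A}{\nDelta{N}})$, which is $\deltaOp{put}_{\ell'}(M,\Delta N)$ by definition. The only nit is that the binding of $N$ cannot be discarded \emph{before} invoking the lemma, since $N$ occurs as an argument of $\deltarename{B/A}{N,\Delta N}$; the dependence on $N$ disappears only once part (4) is applied, exactly as in the paper.
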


\section{Evaluation}\label{sec:evaluation} We have implemented both naive and
incremental relational lenses in Links. Given the lack of an existing
implementation of relational lenses, for the naive version we implemented the
lenses described by \citet{bohannon2006relational}. A benefit is a fairer
performance comparison, as the non-incremental relational lenses are implemented
using the same set operation implementation as the incremental version.  We
evaluate the performance of the optimised $\deltaOp{put}$ operations defined
earlier.

Each performance experiment follows a similar pattern and are all executed on an
Intel(R) Core(TM) i5-6500 with 16GB of RAM and a mechanical hard disk.  The
computer was running Ubuntu 17.10 and PostgreSQL 9.6 was used to host the
database.  Our customised Links version was compiled using OCaml 4.05.0.  All
generated tables contained a primary key index and no other indices.

\subsection{Microbenchmarks}
\subsubsection{Lens primitives}
\label{sec:lens-primitives}

We first evaluate lens change-propagation performance as a set of
microbenchmarks over the lens primitives, using two metrics: total time to
compute the source delta for a single lens given a view delta as input, or in
the case of the naive lenses, to calculate new source tables, referred to as
\emph{total execution time}. We also measure the amount of the total execution
time which can be attributed to query execution.

The following steps are taken for each benchmark:

\begin{enumerate}
	\item Generate the required tables with a specified set of columns and fill the tables with random data.
	For test purposes, the random data can either be sequential, a bounded random number or a random number. The microbenchmarks use the following tables:

	\begin{itemize}
		\item Table $t_1$ with domain $A$, $B$ and $C$ and functional dependency $A \to B\ C$.
			Populated with $n$ rows, with $A$ calculated as a sequential value, $B$ a random number up to $n / 10$ and $C$ a random number up to $100$.
		\item Table $t_2$ with domain $B$ and $D$ and functional dependency $B \to D$.
			Populated with $n/10$ rows with $B$ being a sequential value and $D$ a random number up to $n / 10$.
	\end{itemize}
	\item Generate lenses for the underlying tables and compose the required lenses on top of these.
		Lenses that use both $t_1$ and $t_2$ always start by joining the two tables on column $B$.
	\item Fetch the output view of the lens using $\getOp$ and then make changes in a systematic fashion as described for each setup.
	The changes are designed to affect a small portion of the database.
	Most changes are of the form: ``take all rows with attribute $A$ having some value, and update attribute $B$''.
	\item Apply $\putOp$ for the first lens using the updated view.
	For the incremental lenses this means first calculating the
  delta from the view (not timed) and then timing $\deltaOp{put}$ which calculates the source delta.
	For naive lenses we measure the time required to
        recompute the source table using non-incremental $\opname{put}$, but not the time needed to update
        the database.
	This process is repeated multiple times and the median value taken.
\end{enumerate}

We repeat this process for each of the join, select and projection lenses.  We
exclude the time required to calculate the view delta or update the database
because these operations only need to be performed once per lens, regardless of
the number of intermediate steps defining the lens. Instead, we measure the
performance of these one-time costs in
\SecrefTwo{change_set_calc_perf}{change_set_app_perf}, and present a complete
example involving two lens primitives in \Secref{dblp_perf}.

\begin{figure}
	\centering
	\begin{subfigure}{0.33\textwidth}
		\includegraphics[width=\textwidth]{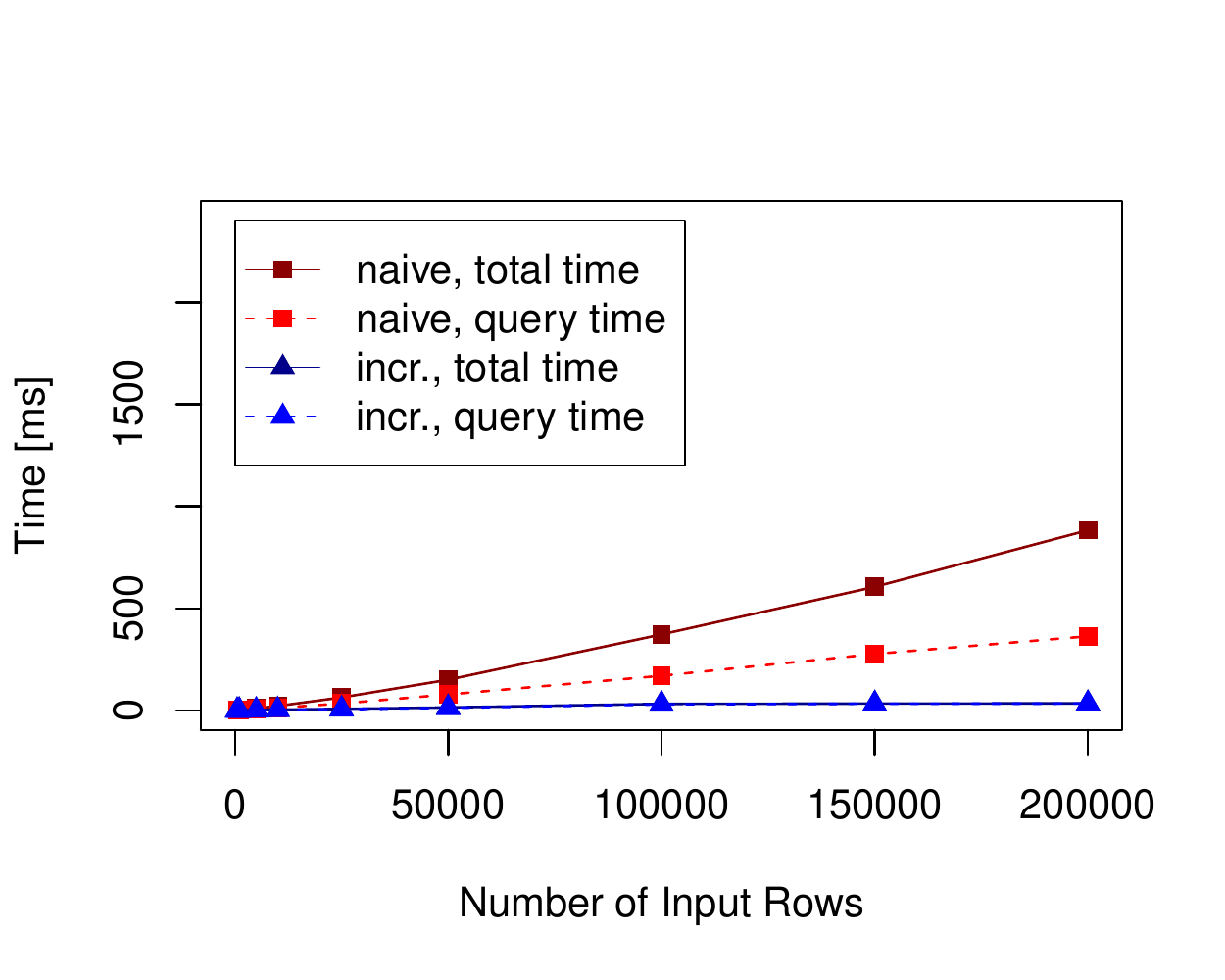}
		\caption{select lens}
		\label{fig:benchmark_select}
	\end{subfigure}
	~
	\begin{subfigure}{0.33\textwidth}
		\includegraphics[width=\textwidth]{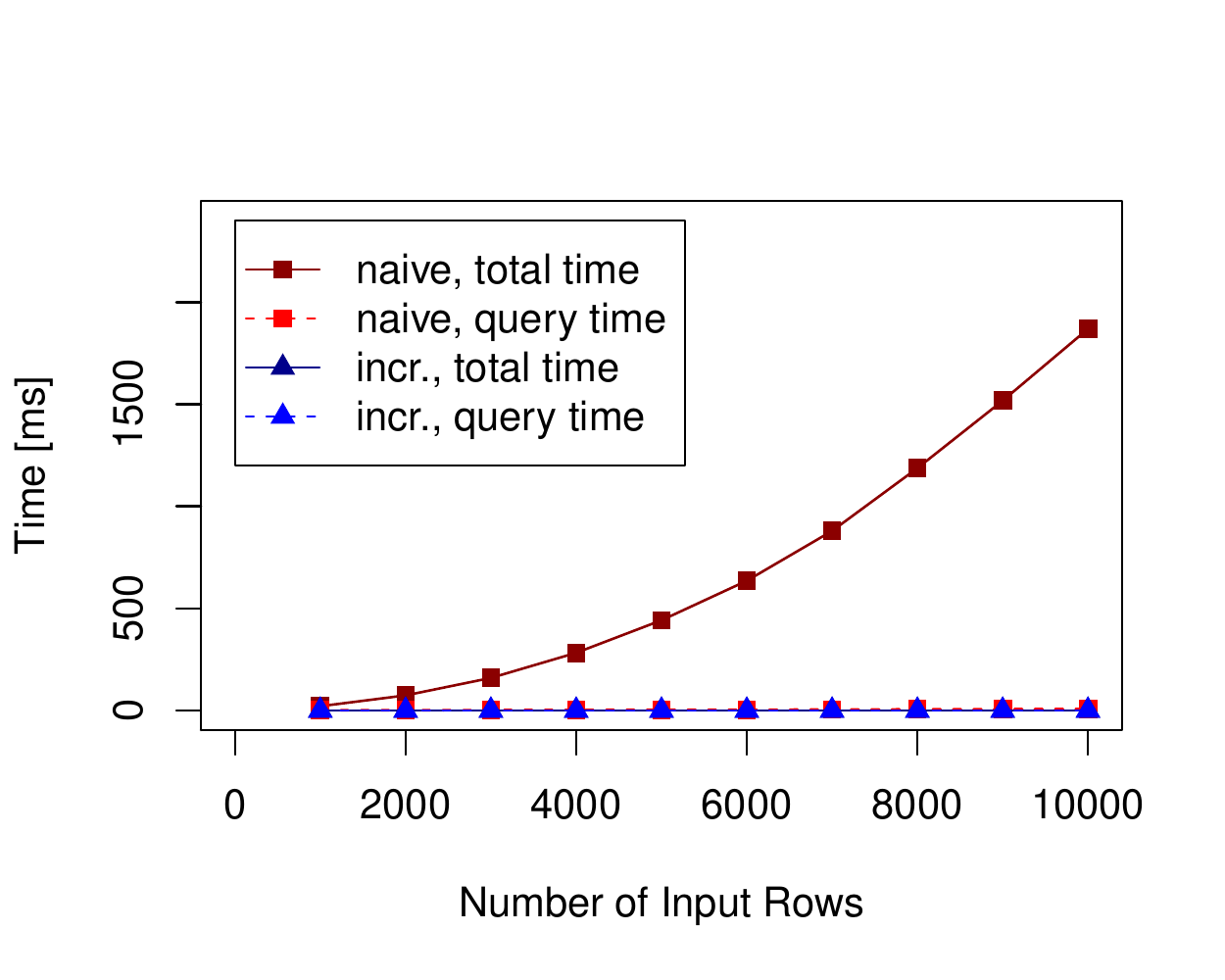}
		\caption{projection lens}
		\label{fig:benchmark_drop}
	\end{subfigure}
	~
	\begin{subfigure}{0.33\textwidth}
		\includegraphics[width=\textwidth]{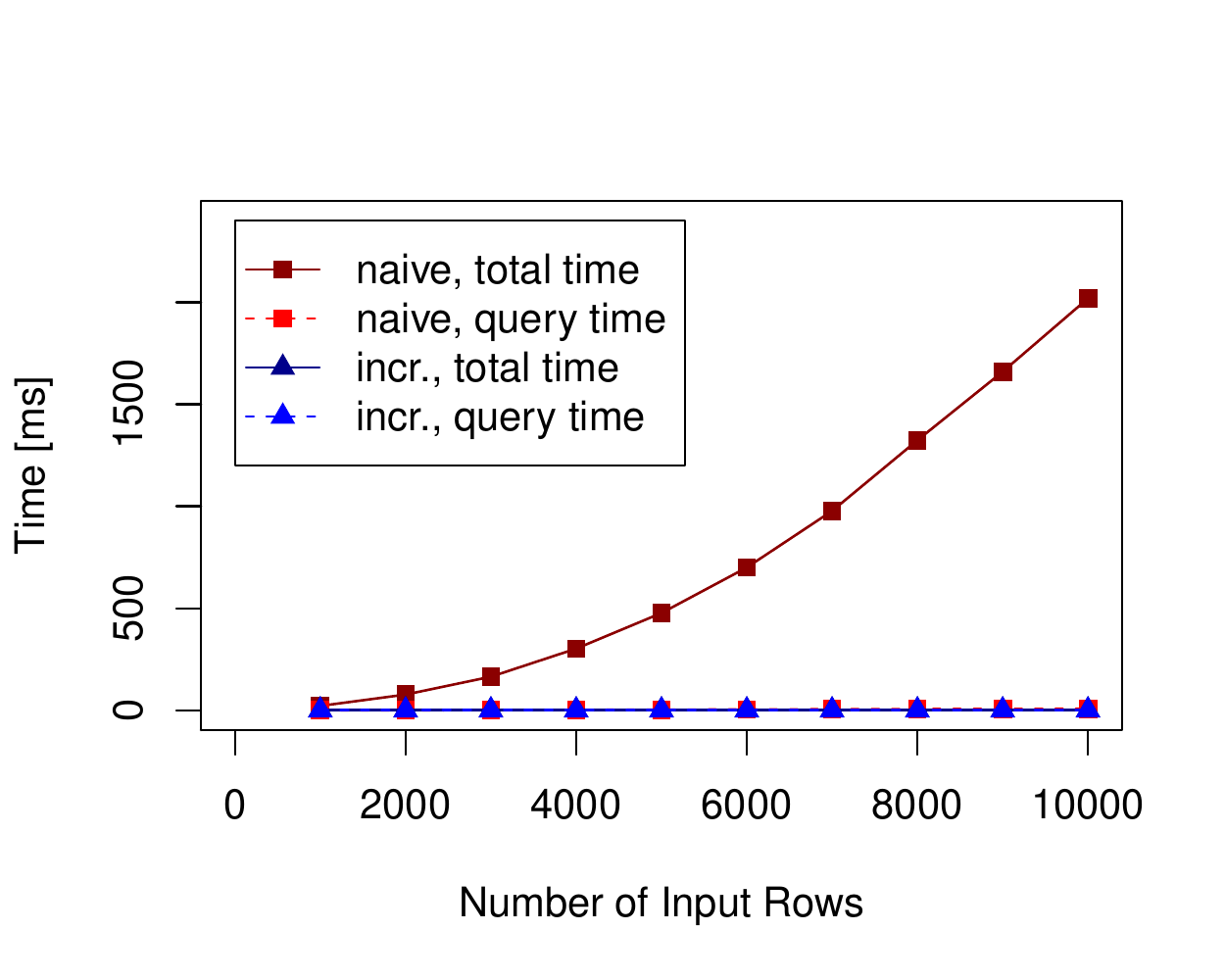}
		\caption{join lens}
		\label{fig:benchmark_join}
	\end{subfigure}
	\caption{Total and query execution time required by individual
          lens primitives vs. underlying table sizes}
\end{figure}

\paragraph{Select example}
\label{example:eval_select}
The select benchmark uses both $t_1$ and $t_2$.
We compose a select lens on top of the join lens with the predicate $C
= 3$, which produces a view with an average size of $n/100$ entries.
The view is modified so that all records with $0 \le B \le 100$ have their $D$ value set to 5.
This approach produces deltas of length 20 on average, containing one row removal and one row addition.

The naive $\opname{put}$ operation makes use of a single query and requires a
total computation time of between \ms{1} and \ms{867} depending on the row
count. While the performance is acceptable for small tables, it is still too
slow for most applications as the tables become larger. It also shows how an
unavoidable bottleneck is introduced, as the query time reaches up to \ms{363}.

Depending on the row count, the incremental version only needs between \ms{1}
and \ms{40} total computation time. Of this time, between \ms{1} and \ms{39} are
used to perform the required queries, accounting for the majority of the total
computation time. The execution time and query time scale proportionally to the
data size. The incremental performance reflects the fact that the view is much
smaller than the entire table, which needs to be recomputed for the
non-incremental version. It may be possible to improve performance by
configuring the database to index $C$, since this may reduce the query execution
time. Indexing would not affect naive performance, since we always fetch the
entire source tables.

\paragraph{Projection performance}
We define a drop lens over the table $t_1$ removing attribute $C$, which is defined by attribute $A$ with a default value of $1$.
The view is modified by setting $B$ to $5$ for all records where $60 < A < 80$.
This process modifies $20$ of the $n$ records in the view.

The performance of the lens is shown in Figure \ref{fig:benchmark_drop}. As in
the case of the join lens, the naive projection lens implementation quickly
becomes infeasible, requiring a total execution time of over \ms{2000} as it
processes $10000$ records. The naive version spends up to \ms{8} querying the
server. The incremental version is able to perform the $\deltaOp{put}$ operation
in \ms{1} and less for the given row counts. This time includes the time
required to query the database server for additional information.

\paragraph{Join example}
\label{example:eval_join}
The join benchmark uses the join lens defined over the two tables $t_1$ and $t_2$.
We fetch the resulting view, which will contain $n$ rows.
After that we modify all records containing a value for $B$ between $40$ and $50$ and set their $C$ value to $5$.

We benchmark the described setup with $n$ values ranging from $1000$ to $10000$
in increments of $1000$, timing the lens $\opname{put}$ duration for each $n$ as
specified. The performance results are shown in Figure \ref{fig:benchmark_join}.
The $\opname{put}$ operation for the naive join requires two queries but quickly
becomes impractical. At $10000$ rows it already requires over \ms{2000} and
continues to rise quadratically. Of the computation time, approximately \ms{1}
to \ms{9} depending on the table size is required for querying the database.
While the query time taken by the naive approach is relatively low, this is due
to the fact that the tables are relatively small and the time increases to
hundreds of milliseconds as the table size grows to hundreds of thousands of
rows.

In comparison, the incremental approach can scale to hundreds of thousands of
rows and requires only \ms{1} to \ms{2} of both computation and query time for
the given views. It requires 5 queries which are all simple to compute and
return small views.

\paragraph{Summary}

 The above experiments show that incremental evaluation outperforms
naive evaluation of relational lenses for data sizes up to 10K rows.
We have also measured the performance of incremental evaluation for
200K rows to confirm that incremental performance continues to scale.
Table~\ref{tab:largedata} shows the number of queries, query evaluation time
and total evaluation time for all three microbenchmarks discussed
above.

\begin{table}[h]
\caption{Query counts and times for large data sizes}\label{tab:largedata}
\begin{equation*}
	\begin{array}{c|ccc}
		& \text{select} & \text{project} & \text{join} \\
		\hline
		\text{query count}	& 1			& 1			& 5		\\
		\hline
		\text{query } n=200k	& 37ms		& <1ms		& 1ms	\\
		\text{total } n=200k  & 39ms		& < 1ms		& 2.5ms	\\
	\end{array}
\end{equation*}
\end{table}

\begin{figure}
	\centering
	\begin{subfigure}{0.33\textwidth}
		\includegraphics[width=\textwidth]{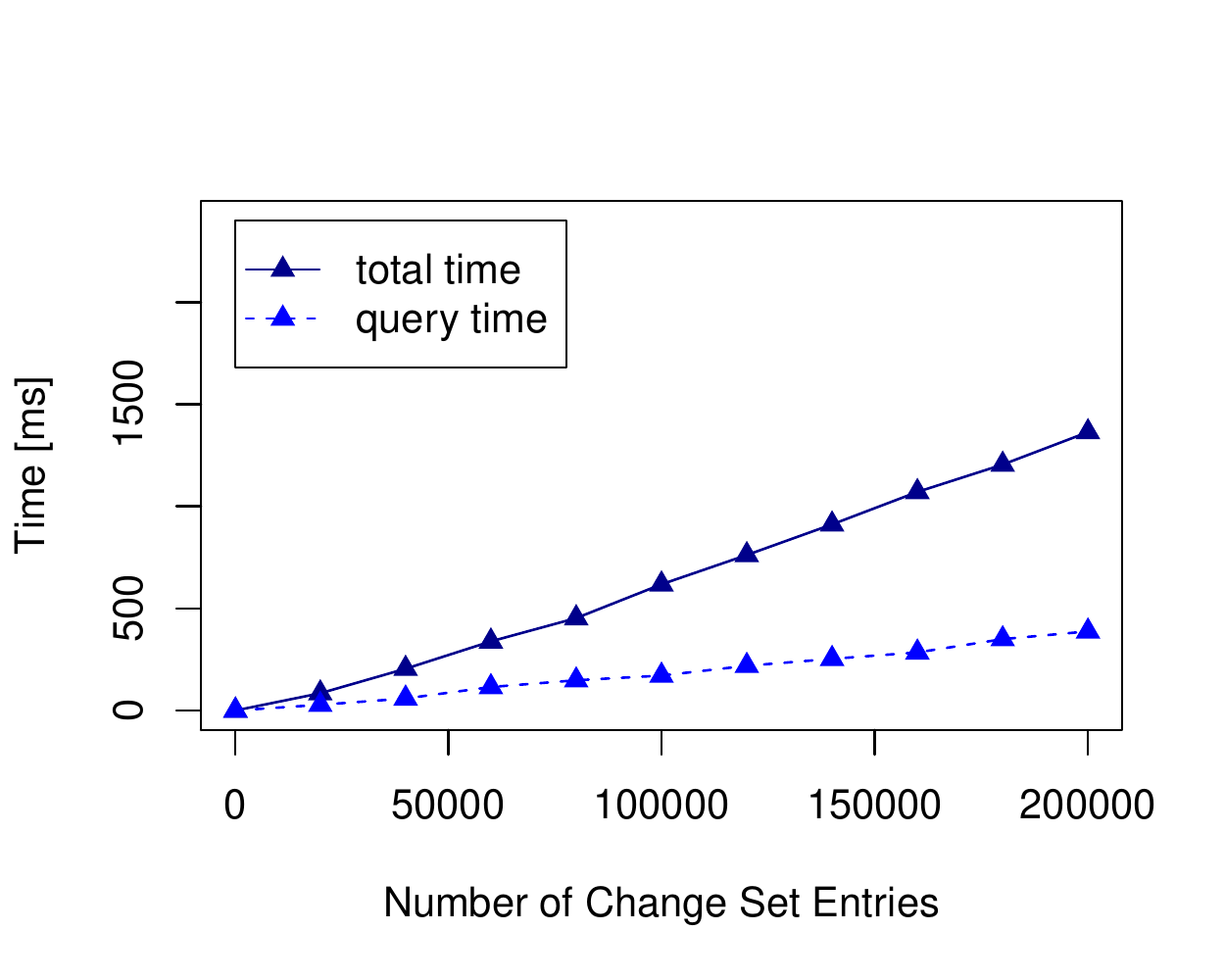}
		\caption{Time needed to calculate\\ delta between old
                  and new view}
		\label{fig:benchmark_delta_get}
	\end{subfigure}
	~
	\begin{subfigure}{0.33\textwidth}
		\includegraphics[width=\textwidth]{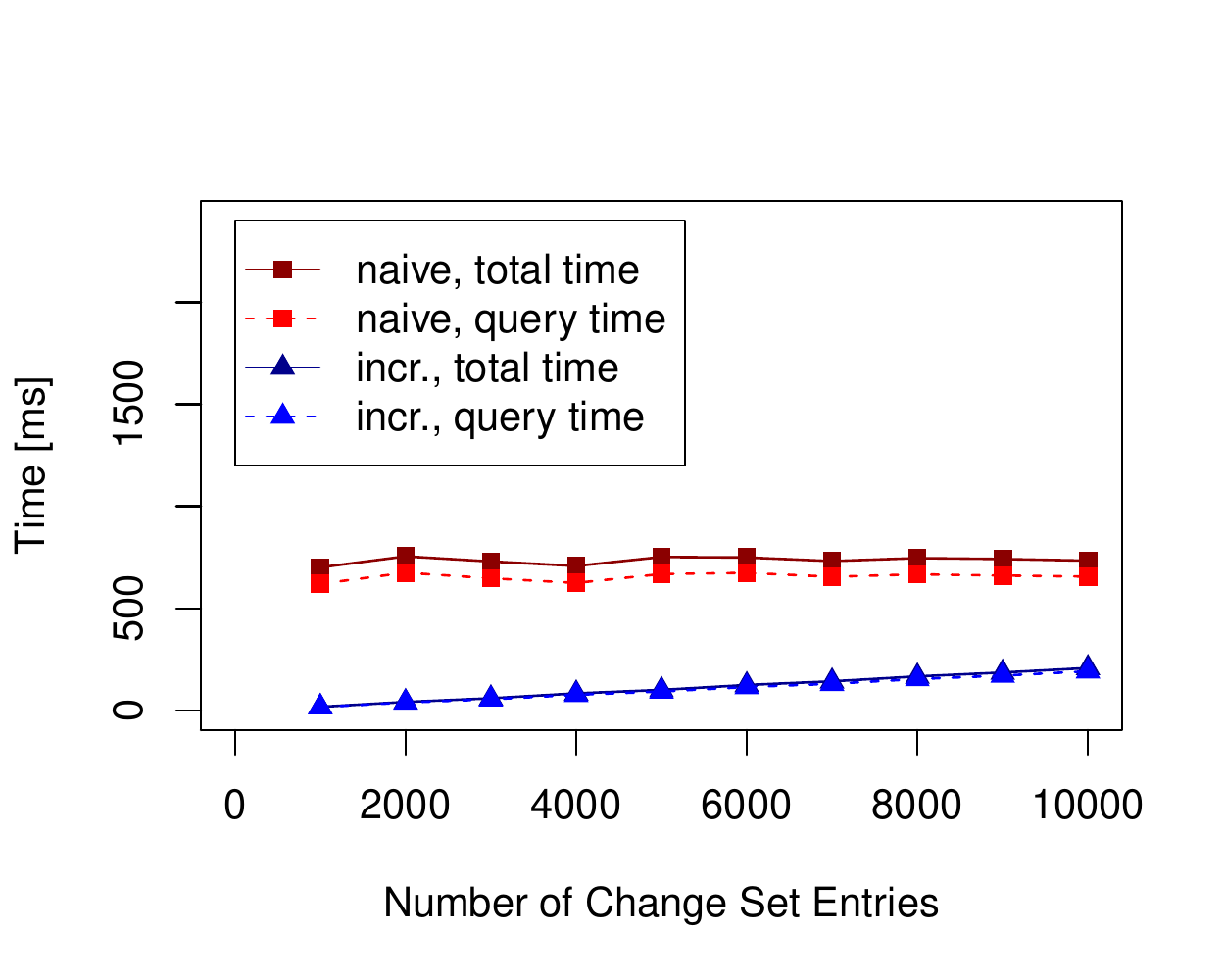}
		\caption{Time needed to apply\\source delta to the database}
		\label{fig:benchmark_delta_put}
	\end{subfigure}
	~
	\begin{subfigure}{0.33\textwidth}
		\includegraphics[width=\textwidth]{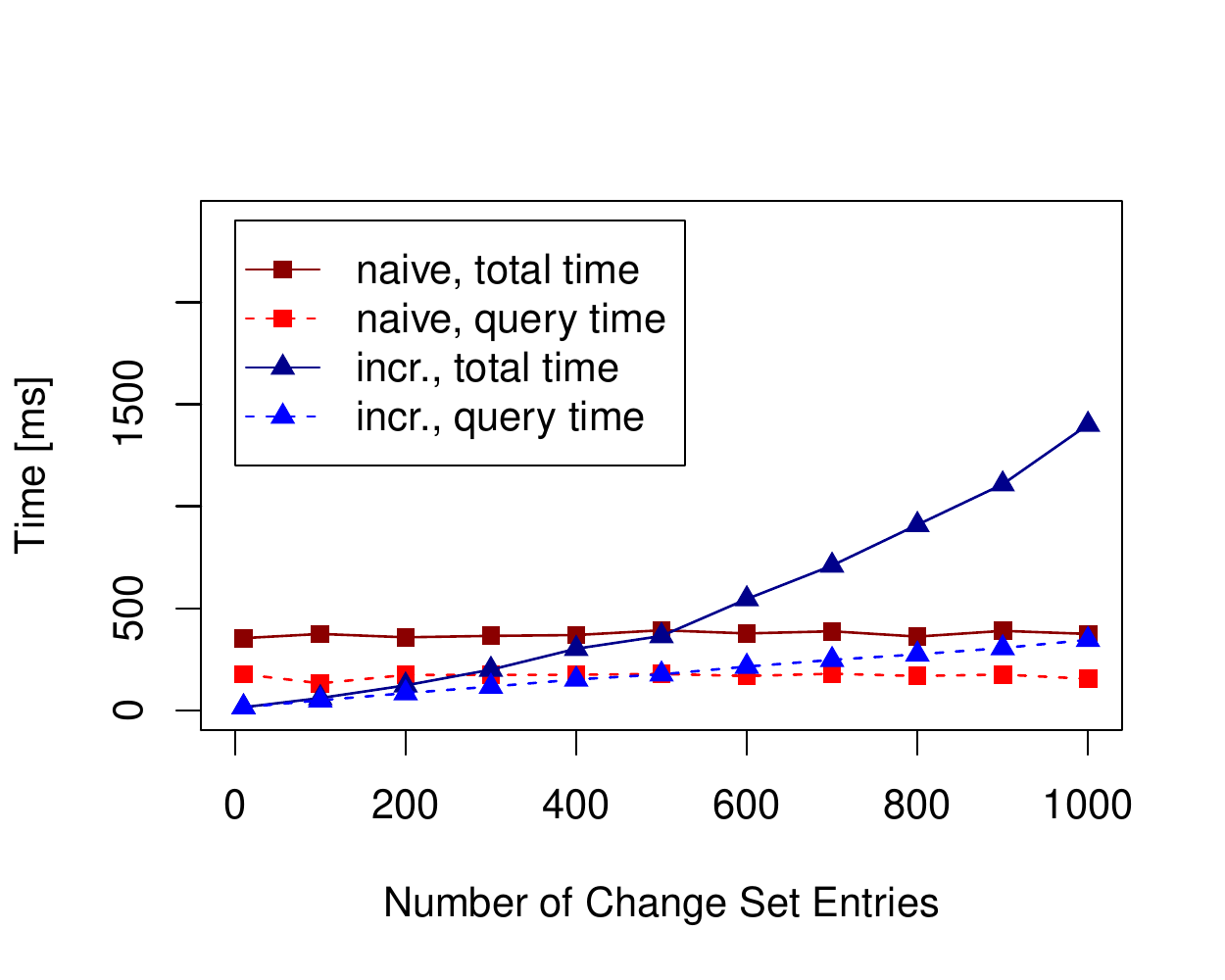}
		\caption{Delta propagation time as a function of view
                  update size (n=100000)}
		\label{fig:benchmark_ds_select}
	\end{subfigure}
	\caption{Evaluation of delta calculation, delta application
          and delta propagation time as a function of view update size.}
	\label{fig:plot_delta}
\end{figure}

\subsubsection{Delta calculation performance}
\label{sec:change_set_calc_perf}

While microbenchmarks on lens primitives give us some insight into the
performance of the lenses, they do not account for the time required to
calculate the initial delta, which is only required for incremental lenses. We
modify the view of the join lens defined over $t_1$ and $t_2$ by fetching the
view using $\getOp$, and by then performing changes as done in the other
experiments. Specifically we set $B$ to $5$ for all records where $0 < D < 10$.
Given that this example does not have any selection lenses, the size of the view
will always be $n$.

We measure the time taken to fetch the unchanged view from the database and then
subtract it from a modified view.  As in the previous examples we measure both
the time required to query the database server as well as the total execution
time on the client. We measure the time required for $n$ values ranging between
$100$ and $200000$.  The results are shown in Figure
\ref{fig:benchmark_delta_get}.

Both the query and execution time are approximately linear with respect to the
number of input rows. We require between \ms{<1} and \ms{1360} to compute the
delta, of which just under half (\ms{396} for $200000$ rows) is spent querying
the database.

\subsubsection{Delta application performance}
\label{sec:change_set_app_perf}

We also measure the time it takes to apply a delta to a table. This process
requires the generation of insert, update and delete SQL commands which must
then be executed on the server.

We consider delta application for a single table. We use the table $t_1$ from
\Secref{lens-primitives} and populate it with $n=10000$. We generate a delta
containing $m$ entries, where a quarter of the entries produce $m / 4$
insertions, another quarter produce $m / 4$ deletions and the remaining half
produce $m / 2$ updates.

Given such a delta, we time how long it takes to produce the SQL commands from
the already calculated delta with varying size $m$. The SQL update commands are
concatenated and sent to the database together as a single transaction. As in
the other cases we time both the total and query execution times, which are
shown in Figure \ref{fig:benchmark_delta_put}. For the naive version, we
generate an update command that deletes the current contents of the table and
inserts the new contents.  For the incremental version, we generate updates that
insert, delete, or replace only affected records.

The figure shows that the naive version's performance is independent of the
number of changes, requiring around \ms{718}, most of which is spent querying
the database. The incremental version, on the other hand, requires less time for
the given change sizes and scales linearly, requiring between \ms{18} and
\ms{207} depending on the delta size.

\subsubsection{Varying delta size}

In addition to varying the size of the underlying database tables we also
consider how the size of the delta may affect the performance of an update. To
do this we use the two tables $t_1$ and $t_2$ with $n=100000$ and define a
select lens on top of the join lens with the predicate $C = 3$. We then
determine a $b'$, starting from $0$ and in steps of $100$, so that modifying all
records where $0 < B < b'$ by setting $D = 5$ produces a delta of size greater
than $m$.

As in the other microbenchmarks we measure the total and query execution time
taken to perform the $\deltaOp{put}$ of an already calculated view delta or, in
the naive case, the time to recalculate the full source tables using
$\opname{put}$. We repeat this experiment for varying $m$ values, ranging from
$10$ to $1000$. The resulting execution times are plotted in Figure
\ref{fig:benchmark_ds_select}.

As would be expected, the naive lens is relatively constant regardless of the
number of entries in the delta.  While the incremental version starts slows down
as the delta becomes larger, eventually becoming even slower than the naive
version, it does show that when the changes are sufficiently small the
incremental method is much more efficient.  When the size of the deltas exceeds
around $500$, however, it starts to become more efficient to recalculate the
tables.

\subsection{DBLP example}
\label{sec:dblp_perf}

In addition to the microbenchmarks we also perform some experiments on a
real-world example involving the DBLP Computer Science Bibliography
\cite{ley2009dblp}, a comprehensive collection of bibliographic information
about computer science publications. It is published as a large and freely
available XML file with millions of records containing publications,
conferences, journals, authors, websites and more.

Our example uses a table containing a collection of conference publications
called \tblname{inproceedings} as well as a table of their respective authors
\tblname{inproceedings\_author}. This first table contains the \colname{title}
of the paper, the \colname{year} it was published as well as the
\colname{proceedings} it is in, while the \tblname{inproceedings\_author} table
contains an entry for each author on every paper, allowing a single publication
to have multiple authors.
\begin{cameraready}
  More information as well as example data is listed in the full
  version of this paper~\cite{irl-full}.
\end{cameraready}
\begin{techreport}
  More information as well as example data is listed in Appendix
  \ref{appendix:dblp_example}.
\end{techreport}

In order to determine how the application scales with varying database sizes, we
generate the underlying tables by selecting a set of entries so that the join of
\tblname{inproceedings} and \tblname{inproceedings\_author} contains $n$ rows.
Given that set of entries, we select all entries in the complete
\tblname{inproceedings\_author} table, which have a corresponding entry in the
subset of entries chosen.

\begin{figure}
	\centering
	\begin{subfigure}{0.33\textwidth}
		\includegraphics[width=\textwidth]{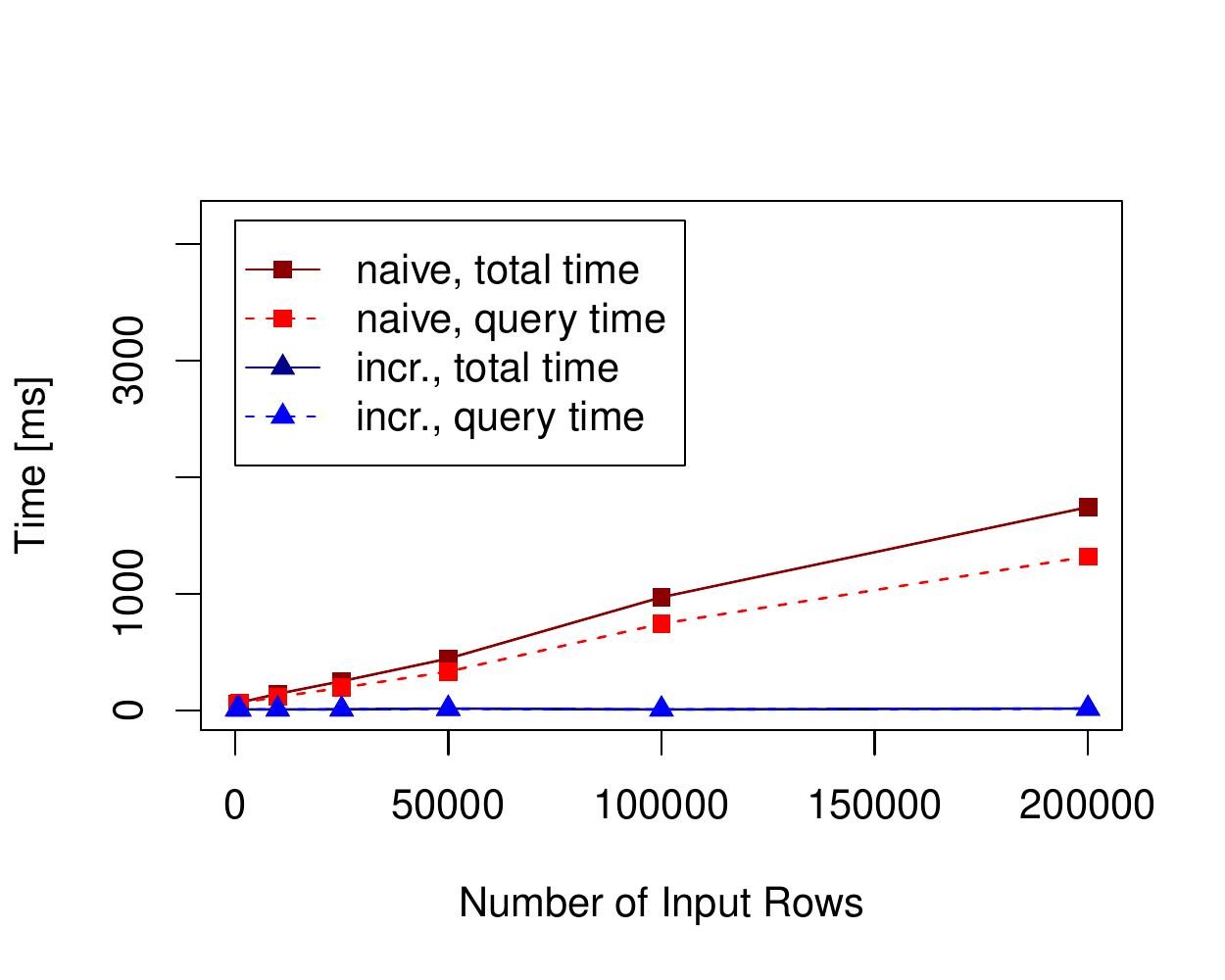}
		\caption{Local database server}
		\label{fig:benchmark_dblp_local}
	\end{subfigure}
	~
	\begin{subfigure}{0.33\textwidth}
		\includegraphics[width=\textwidth]{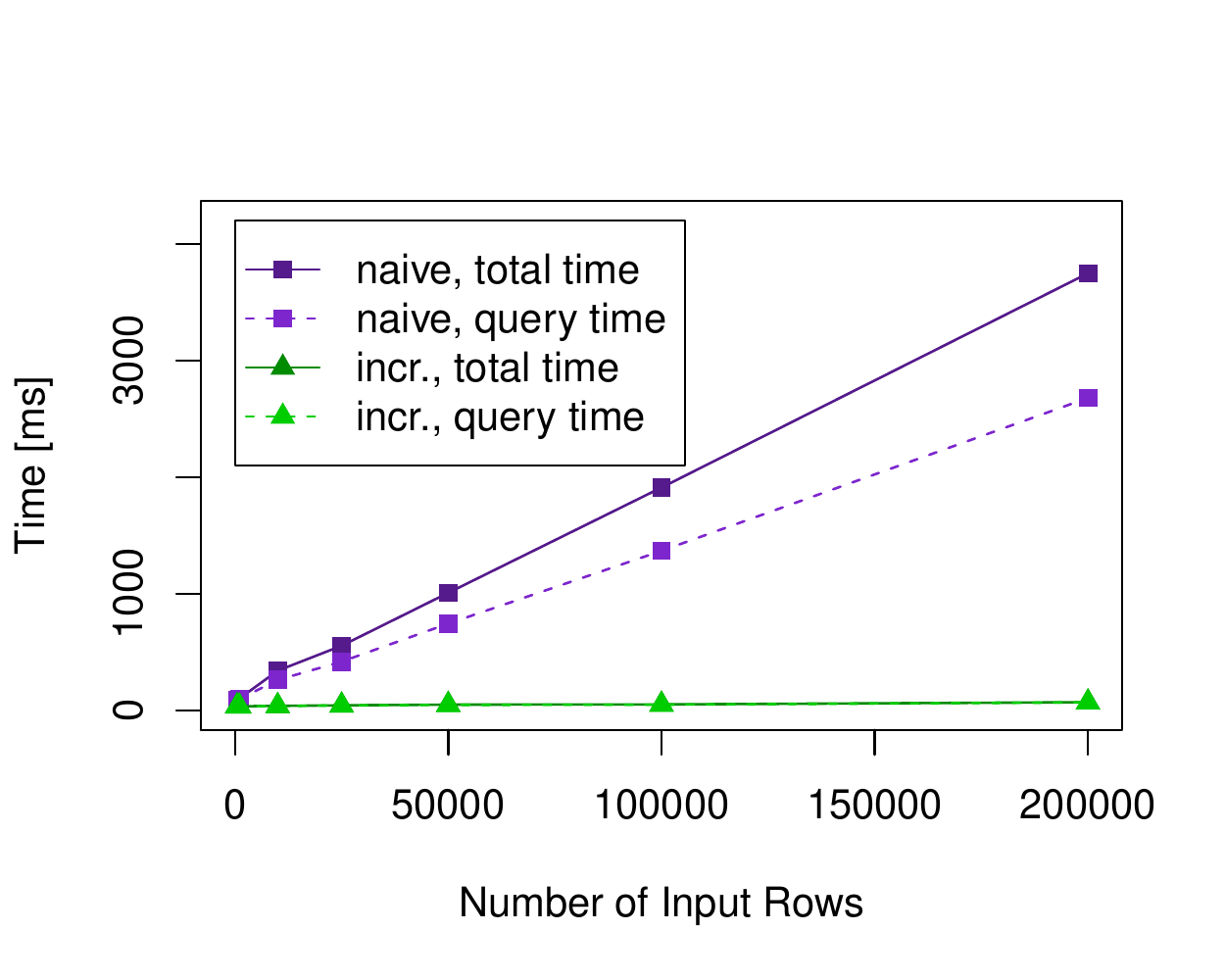}
		\caption{Remote database server}
		\label{fig:benchmark_dblp_remote}
	\end{subfigure}
	~
	\begin{subfigure}{0.33\textwidth}
		\includegraphics[width=\textwidth]{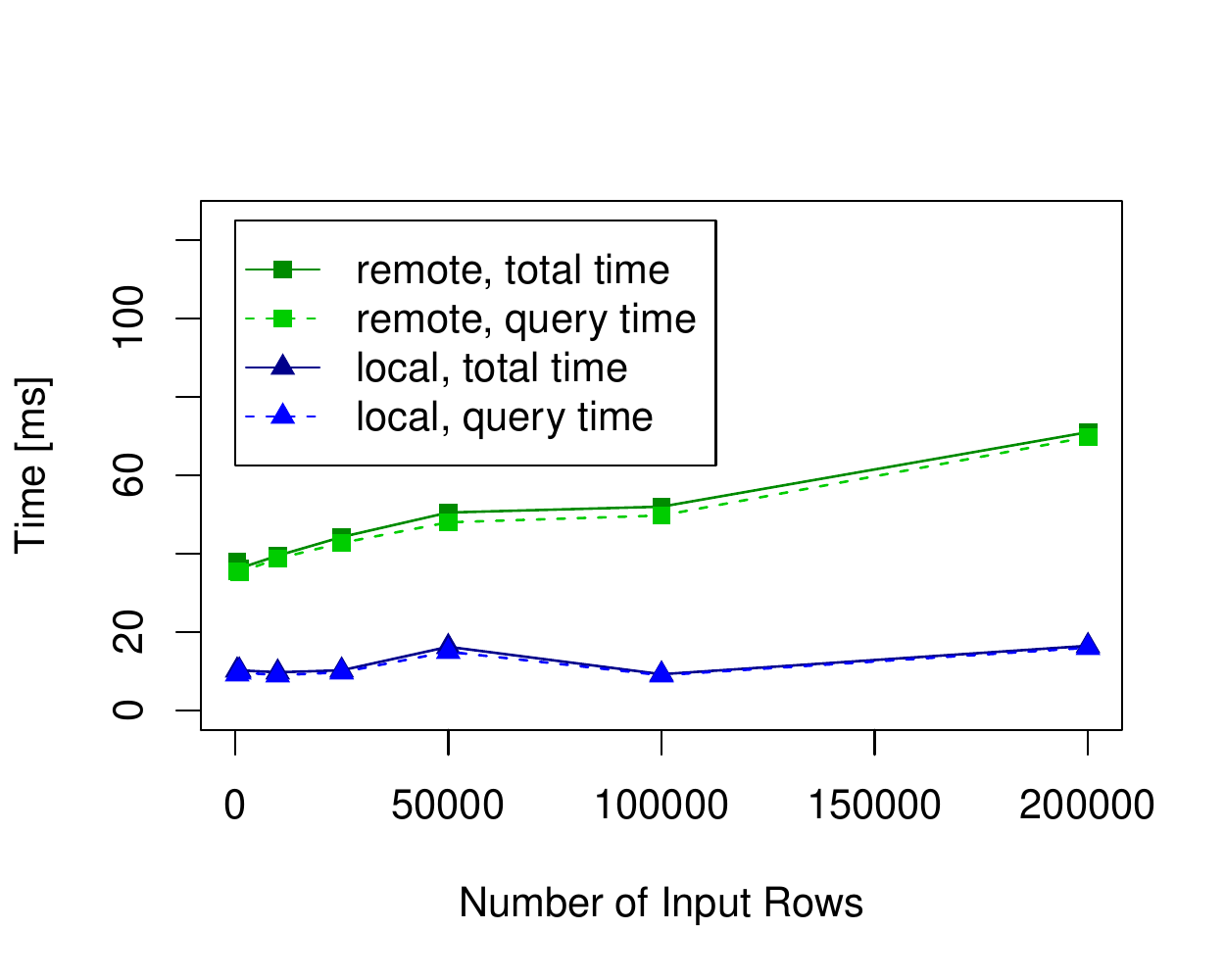}
		\caption{Local / remote performance}
		\label{fig:benchmark_dblp_cmp}
	\end{subfigure}
	\caption{The total/query execution time for the $\putOp$ operation applied to our DBLP database.}
\end{figure}

Using these tables we join the two tables on the
\colname{inproceedings} attribute and then  select all  entries from PODS 2006.
The Links code used to generate the lenses is shown below.

\begin{lstlisting}
	var joinL = lensjoin inproceedings_authorL with inproceedingsL on inproceedings;
	var selectL = lensselect from joinL where proceedings == "conf/pods/2006";
\end{lstlisting}

As in the other examples we fetch the output of the select lens using $\getOp$
and then make a small modification. We then perform the $\putOp$ operation to
apply those changes to the database. During the $\putOp$ we time the entire
process of generating a delta from the view, calculating the delta for the
underlying tables and updating the database for both the naive and incremental
lenses. While timing we keep track of how much time was spent querying the
database and the total time spent performing the operation. We also perform
$\putOp$ using a database located on a remote server and compare it to a
database located on the same machine.

Our performance results with the database hosted on the local machine are shown
in Figure~\ref{fig:benchmark_dblp_local}. Similar to our earlier benchmarks, the
incremental lenses perform favourably in comparison to the naive lenses. The
naive lenses require linear time as the data grows and need up to \ms{1743} to
update the database when the data grows to $200000$ rows. A majority of the
naive execution time (up to \ms{1320}) is used to query the entire database, and
so optimising the local algorithms will have little effect on the overall
performance.

The incremental lenses on the other hand perform much faster even as the data
grows to hundreds of thousands of rows, requiring only between \ms{9.5} and
\ms{17.75} total execution time depending on the size of the underlying tables.
Of that time between \ms{9} and \ms{17} are used to query the database, making
database performance the limiting factor for small datasets.

Figure \ref{fig:benchmark_dblp_remote} shows how the same application ran when
using a database server situated on another computer. This affects the query
time because the bandwidth is limited and the latency becomes higher. Less
bandwidth means that the data loads more slowly, while higher latency imposes a
delay per query. The total execution time of the naive version nearly doubles to
up to \ms{3748}, of which \ms{2679} is used to query the database server, while
the incremental version remains much faster requiring only up to \ms{62}, of
which \ms{59} are used for querying the database.

Figure \ref{fig:benchmark_dblp_cmp} directly compares the performance of
incremental relational lenses when the database is located on the same or a
different machine.  Given that the bandwidth required is relatively small, the
main additional overhead associated with using a remote database server is a
roughly constant increase of about \ms{30}.  Note that the data in
Figure~\ref{fig:benchmark_dblp_cmp} is the same as that for incremental
evaluation in Figures~\ref{fig:benchmark_dblp_local}
and~\ref{fig:benchmark_dblp_remote}, but with the scale of the y-axis adjusted
to allow easier comparison of the two incremental versions.

\section{Related Work}\label{sec:related}

\paragraph{Language-integrated query and web programming}

Our approach is implemented in Links~\cite{cooper2006links}, but should also be
applicable to other functional languages with support for language-integrated
query, such as Ur/Web~\cite{chlipala2015urweb} and F\#~\cite{syme2006linq}, or
the Database-Supported Haskell library~\cite{ulrich2015dsh}.  Our incremental
relational lens definitions could also be implemented inside a database system.
It would be interesting to explore what extensibility features could accommodate
relational lenses in other languages.

\paragraph{Incremental computation}

Incremental view maintenance is a well-studied topic in
databases~\cite{gupta1995maintenance}.  We employ standard incrementalisation
translations for relations (sets of
tuples)~\cite{qian91tkde,griffin1997incremental}.  More recently,
\citet{koch2010incremental} developed an elegant framework for incremental query
evaluation for bags (multisets of tuples), and \citet{koch2016incremental}
extended this approach to \emph{nested} relational queries. We think it would be
very interesting to investigate (incremental) lenses over nested collections or
multisets.

Incremental recomputation also has a large literature, including work on
adaptive functional programming and self-adjusting
computation~\cite{acar2006toplas,hammer2014pldi}. While closely related in
spirit, this work focuses on a different class of problems, namely recomputing
computationally expensive results when small changes are made to the inputs.  In
this setting, recording a large trace caching intermediate results can yield
significant savings if the small changes to the input only lead to small changes
in the trace.  It is unclear that such an approach would be effective in our
setting.  In any case, to the best of our knowledge, this approach has not been
used for database queries or view updates.

Our approach to incrementalisation does draw inspiration from the
\emph{incremental lambda calculus} of \citet{cai2014pldi}.  They used
Koch's incremental multiset operations in examples, but our set-valued
relations and deltas also fit into their framework.  Another relevant
system, SQLCache~\cite{scully2017sqlcache}, shows the value of
language support for caching: in SQLCache, query results (and derived
data) are cached and when the database is updated, dynamic checking is
used to avoid recomputing results if the query did not depend on the
changed data.  However, otherwise SQLCache recomputes the results from
scratch.  Language support for incremental query evaluation could be
used to improve performance in this case.

\paragraph{Updatable views and lenses} Updatable views have been studied
extensively in the database literature, and are supported (in very limited
forms) in recent SQL standards and systems.  We refer to
\citet{bohannon2006relational} for discussion of earlier work on view updates
and how relational lenses improve on it. Although updatable views (and their
limitations) are well-understood, they are still finding applications in current
research, for example for annotation propagation~\cite{buneman02pods} or to
``explain'' missing answers, via updates to the source data that would cause a
missing answer to be produced~\cite{missinganswers}.  \citet{date12view}
discusses current practice and proposes pragmatic approaches to view update.  To
the best of our knowledge, the work that comes closest to implementing
relational lenses is Brul~\cite{zan2016brul}, which builds on top of
BiGUL~\cite{ko2016bigul}, a put-oriented language for programming bidirectional
transformations.  Brul includes the core relational lens primitives and these
can be combined with other bidirectional transformations written in BiGUL.
However, \citet{zan2016brul} implement the state-based definitions of relational
lenses over Haskell lists and do not evaluate their performance over large
databases or consider efficient (incremental) techniques.  They also do not
consider functional dependencies or predicate constraints, so it is up to the
programmer to ensure that these constraints are checked or maintained.
\citet{ko2018axiomatic} recently proposed a Hoare-style logic for reasoning
about BiGUL programs in Agda, which could perhaps be extended to reason about
relational lenses.

Object-relational mapping (ORM) is a popular technique for accessing and
updating relational data from an object-oriented language.  ORM can impose
performance overhead but \citet{bernstein13incremental} show that incremental
query compilation is effective in this setting.  We would like to investigate
whether incremental relational lenses could be composed with more conventional
(edit) lenses to provide ORM-like capabilities for functional languages.

\citet{wang2011incremental} considered incremental updates for efficient
bidirectional programming over tree-shaped data structures, but not relations.
There are several approaches to lenses that are based on translating changes,
including \emph{edit lenses}~\cite{hofmann2012edit}, \emph{delta
lenses}~\cite{diskin2011delta}, \emph{c-lenses}~\cite{johnson2013delta} and
\emph{update lenses}~\cite{ahman2014coalgebraic}.  None of these approaches has
been applied to relational lenses as far as we know.  Rather than utilize (and
recapitulate) the needed technical background for these approaches, we have
opted for concrete approach based on incrementalisation in the style of
\citep{cai2014pldi}, but it would be interesting to understand the precise
relationships among these various formalisms.

Bidirectional approaches to query languages for XML or graph data models have
also been proposed~\cite{hidaka2010unql,liu2007xquery}. However, to the best of
our knowledge these approaches are not incremental and have not been evaluated
on large amounts of data. There is also work on translating updates to \emph{XML
views} over relational data, for example \citet{fegaras2010lineage}; however,
this work does not allow joins in the underlying relations.

\section{Conclusions}
\label{sec:concl}

View update is a classical problem in databases, with applications to database
programming, security, and data synchronisation.  Updatable views seem
particularly valuable in web programming settings, for bridging gaps between a
normalised relational representation of application data and the representation
the programmer actually wants to work with.  Updatable views were an important
source of inspiration for work on lenses in the functional programming languages
community.  There has been a great deal of research on lenses for functional
programming since the influential work of \citet{foster2007combinators}, but
relatively little of this work has found application to the classical view
update problem.  The main exception to this has been
\citet{bohannon2006relational}, who defined well-behaved relational lenses based
on a type system that tracks functional dependencies and predicate constraints
in addition to the usual type constraints.  Unlike updatable views in mainstream
relational databases, relational lenses support complex view definitions
(including joins) and offer strong guarantees of correct round-tripping
behavior.  However, to the best of our knowledge relational lenses have never
been implemented efficiently over actual databases.

In this paper we developed the first practical implementation of relational
lenses, based on incrementalisation.  Here, we again draw on parallel
developments in the database and functional programming communities: incremental
view maintenance is a classical topic in databases, and there has been a great
deal of work in the programming language community on \emph{adaptive} or
\emph{incremental} functional programming.  We show how to embed relational
lenses (and their associated type and constraint system) into Links and prove
the correctness of incremental versions of the select, drop, and join relational
lenses and their compositions.  We also presented an implementation and
evaluated its efficiency. In particular, we showed that the naive approach of
shipping the whole source database to a client program, evaluating the put
operation in-memory, and replacing the old source tables with their new versions
is realistic only for trivial data sizes. We demonstrate scalability to
databases with hundreds of thousands of rows; for reasonable view and delta
sizes, our implementation takes milliseconds whereas the naive approach takes
seconds.

Our work establishes for the first time the feasibility of relational lenses for
solving classical view update problems in databases. Nevertheless, there may be
room for improvement in various directions. We found a pragmatic solution that
uses a small number of simple queries, but other strategies for calculating
minimal deltas are possible.  Developing additional incremental relational lens
primitives or combinators, and combining relational lenses with conventional
lenses, are two other possible future directions.

\begin{acks}
  Effort sponsored by the Air Force Office of Scientific Research, Air
  Force Material Command, USAF, under grant number
  \grantnum{AFOSR}{FA8655-13-1-3006}. The U.S. Government and University of Edinburgh
  are authorised to reproduce and distribute reprints for their
  purposes notwithstanding any copyright notation thereon. Perera was
  also supported by UK EPSRC project \grantnum{EPSRC}{EP/K034413/1}, and Horn and
  Cheney were supported by ERC Consolidator Grant Skye (grant number
  \grantnum{ERC}{682315}).  We are grateful to Jeremy Gibbons, James McKinna and the
  anonymous reviewers for comments.
\end{acks}


\begin{thebibliography}{42}


\ifx \showCODEN    \undefined \def \showCODEN     #1{\unskip}     \fi
\ifx \showDOI      \undefined \def \showDOI       #1{#1}\fi
\ifx \showISBNx    \undefined \def \showISBNx     #1{\unskip}     \fi
\ifx \showISBNxiii \undefined \def \showISBNxiii  #1{\unskip}     \fi
\ifx \showISSN     \undefined \def \showISSN      #1{\unskip}     \fi
\ifx \showLCCN     \undefined \def \showLCCN      #1{\unskip}     \fi
\ifx \shownote     \undefined \def \shownote      #1{#1}          \fi
\ifx \showarticletitle \undefined \def \showarticletitle #1{#1}   \fi
\ifx \showURL      \undefined \def \showURL       {\relax}        \fi
\providecommand\bibfield[2]{#2}
\providecommand\bibinfo[2]{#2}
\providecommand\natexlab[1]{#1}
\providecommand\showeprint[2][]{arXiv:#2}

\bibitem[\protect\citeauthoryear{Abiteboul, Hull, and Vianu}{Abiteboul
  et~al\mbox{.}}{1995}]%
        {abiteboul1995foundations}
\bibfield{author}{\bibinfo{person}{Serge Abiteboul}, \bibinfo{person}{Richard
  Hull}, {and} \bibinfo{person}{Victor Vianu}.}
  \bibinfo{year}{1995}\natexlab{}.
\newblock \bibinfo{booktitle}{\emph{Foundations of Databases}}.
\newblock \bibinfo{publisher}{Addison Wesley}.
\newblock


\bibitem[\protect\citeauthoryear{Acar, Blelloch, and Harper}{Acar
  et~al\mbox{.}}{2006}]%
        {acar2006toplas}
\bibfield{author}{\bibinfo{person}{Umut~A. Acar}, \bibinfo{person}{Guy~E.
  Blelloch}, {and} \bibinfo{person}{Robert Harper}.}
  \bibinfo{year}{2006}\natexlab{}.
\newblock \showarticletitle{Adaptive functional programming}.
\newblock \bibinfo{journal}{\emph{{ACM} Trans. Program. Lang. Syst.}}
  \bibinfo{volume}{28}, \bibinfo{number}{6} (\bibinfo{year}{2006}),
  \bibinfo{pages}{990--1034}.
\newblock
\urldef\tempurl%
\url{https://doi.org/10.1145/1186634}
\showDOI{\tempurl}


\bibitem[\protect\citeauthoryear{Ahman and Uustalu}{Ahman and Uustalu}{2014}]%
        {ahman2014coalgebraic}
\bibfield{author}{\bibinfo{person}{Danel Ahman} {and} \bibinfo{person}{Tarmo
  Uustalu}.} \bibinfo{year}{2014}\natexlab{}.
\newblock \showarticletitle{Coalgebraic Update Lenses}.
\newblock \bibinfo{journal}{\emph{Electr. Notes Theor. Comput. Sci.}}
  \bibinfo{volume}{308} (\bibinfo{year}{2014}), \bibinfo{pages}{25--48}.
\newblock
\urldef\tempurl%
\url{https://doi.org/10.1016/j.entcs.2014.10.003}
\showDOI{\tempurl}


\bibitem[\protect\citeauthoryear{Bancilhon and Spyratos}{Bancilhon and
  Spyratos}{1981}]%
        {bancilhon1981update}
\bibfield{author}{\bibinfo{person}{Fran{\c{c}}ois Bancilhon} {and}
  \bibinfo{person}{Nicolas Spyratos}.} \bibinfo{year}{1981}\natexlab{}.
\newblock \showarticletitle{Update semantics of relational views}.
\newblock \bibinfo{journal}{\emph{ACM Transactions on Database Systems (TODS)}}
  \bibinfo{volume}{6}, \bibinfo{number}{4} (\bibinfo{year}{1981}),
  \bibinfo{pages}{557--575}.
\newblock


\bibitem[\protect\citeauthoryear{Bernstein, Jacob, P{\'{e}}rez, Rull, and
  Terwilliger}{Bernstein et~al\mbox{.}}{2013}]%
        {bernstein13incremental}
\bibfield{author}{\bibinfo{person}{Philip~A. Bernstein}, \bibinfo{person}{Marie
  Jacob}, \bibinfo{person}{Jorge P{\'{e}}rez}, \bibinfo{person}{Guillem Rull},
  {and} \bibinfo{person}{James~F. Terwilliger}.}
  \bibinfo{year}{2013}\natexlab{}.
\newblock \showarticletitle{Incremental mapping compilation in an
  object-to-relational mapping system}. In
  \bibinfo{booktitle}{\emph{Proceedings of the {ACM} {SIGMOD} International
  Conference on Management of Data, {SIGMOD} 2013, New York, NY, USA, June
  22-27, 2013}}. \bibinfo{pages}{1269--1280}.
\newblock
\urldef\tempurl%
\url{https://doi.org/10.1145/2463676.2465294}
\showDOI{\tempurl}


\bibitem[\protect\citeauthoryear{Bohannon, Pierce, and Vaughan}{Bohannon
  et~al\mbox{.}}{2006a}]%
        {bohannon2006relational}
\bibfield{author}{\bibinfo{person}{Aaron Bohannon}, \bibinfo{person}{Benjamin~C
  Pierce}, {and} \bibinfo{person}{Jeffrey~A Vaughan}.}
  \bibinfo{year}{2006}\natexlab{a}.
\newblock \showarticletitle{Relational lenses: a language for updatable views}.
  In \bibinfo{booktitle}{\emph{Proceedings of the 25th ACM SIGMOD-SIGACT-SIGART
  symposium on Principles of Database Systems}}. ACM,
  \bibinfo{pages}{338--347}.
\newblock


\bibitem[\protect\citeauthoryear{Bohannon, Pierce, and Vaughan}{Bohannon
  et~al\mbox{.}}{2006b}]%
        {bohannon2006tr}
\bibfield{author}{\bibinfo{person}{Aaron Bohannon}, \bibinfo{person}{Benjamin~C
  Pierce}, {and} \bibinfo{person}{Jeffrey~A Vaughan}.}
  \bibinfo{year}{2006}\natexlab{b}.
\newblock \bibinfo{booktitle}{\emph{Relational lenses: a language for updatable
  views}}.
\newblock \bibinfo{type}{{T}echnical {R}eport} MS-CIS-05-27.
  \bibinfo{institution}{Department of Computer and Information Science,
  University of Pennsylvania}.
\newblock


\bibitem[\protect\citeauthoryear{Buneman, Khanna, and Tan}{Buneman
  et~al\mbox{.}}{2002}]%
        {buneman02pods}
\bibfield{author}{\bibinfo{person}{Peter Buneman}, \bibinfo{person}{Sanjeev
  Khanna}, {and} \bibinfo{person}{{Wang-Chiew} Tan}.}
  \bibinfo{year}{2002}\natexlab{}.
\newblock \showarticletitle{On Propagation of Deletions and Annotations Through
  Views}. In \bibinfo{booktitle}{\emph{{PODS}}}. \bibinfo{pages}{150--158}.
\newblock


\bibitem[\protect\citeauthoryear{Cai, Giarrusso, Rendel, and Ostermann}{Cai
  et~al\mbox{.}}{2014}]%
        {cai2014pldi}
\bibfield{author}{\bibinfo{person}{Yufei Cai}, \bibinfo{person}{Paolo~G.
  Giarrusso}, \bibinfo{person}{Tillmann Rendel}, {and} \bibinfo{person}{Klaus
  Ostermann}.} \bibinfo{year}{2014}\natexlab{}.
\newblock \showarticletitle{A theory of changes for higher-order languages:
  incrementalizing {\(\lambda\)}-calculi by static differentiation}. In
  \bibinfo{booktitle}{\emph{{ACM} {SIGPLAN} Conference on Programming Language
  Design and Implementation, {PLDI} '14, Edinburgh, United Kingdom - June 09 -
  11, 2014}}. \bibinfo{pages}{145--155}.
\newblock
\urldef\tempurl%
\url{https://doi.org/10.1145/2594291.2594304}
\showDOI{\tempurl}


\bibitem[\protect\citeauthoryear{Cheney, Lindley, and Wadler}{Cheney
  et~al\mbox{.}}{2013}]%
        {cheney2013linq}
\bibfield{author}{\bibinfo{person}{James Cheney}, \bibinfo{person}{Sam
  Lindley}, {and} \bibinfo{person}{Philip Wadler}.}
  \bibinfo{year}{2013}\natexlab{}.
\newblock \showarticletitle{A practical theory of language-integrated query}.
  In \bibinfo{booktitle}{\emph{Proceedings of the 18th ACM SIGPLAN
  international conference on Functional programming}}
  \emph{(\bibinfo{series}{ICFP '13})}. \bibinfo{publisher}{ACM},
  \bibinfo{address}{New York, NY, USA}, \bibinfo{pages}{403--416}.
\newblock
\showISBNx{978-1-4503-2326-0}
\urldef\tempurl%
\url{https://doi.org/10.1145/2500365.2500586}
\showDOI{\tempurl}


\bibitem[\protect\citeauthoryear{Chlipala}{Chlipala}{2015}]%
        {chlipala2015urweb}
\bibfield{author}{\bibinfo{person}{Adam Chlipala}.}
  \bibinfo{year}{2015}\natexlab{}.
\newblock \showarticletitle{Ur/Web: {A} Simple Model for Programming the Web}.
  In \bibinfo{booktitle}{\emph{Proceedings of the 42nd Annual {ACM}
  {SIGPLAN-SIGACT} Symposium on Principles of Programming Languages, {POPL}
  2015, Mumbai, India, January 15-17, 2015}}. \bibinfo{pages}{153--165}.
\newblock
\urldef\tempurl%
\url{https://doi.org/10.1145/2676726.2677004}
\showDOI{\tempurl}


\bibitem[\protect\citeauthoryear{Cooper, Lindley, Wadler, and Yallop}{Cooper
  et~al\mbox{.}}{2006}]%
        {cooper2006links}
\bibfield{author}{\bibinfo{person}{Ezra Cooper}, \bibinfo{person}{Sam Lindley},
  \bibinfo{person}{Philip Wadler}, {and} \bibinfo{person}{Jeremy Yallop}.}
  \bibinfo{year}{2006}\natexlab{}.
\newblock \showarticletitle{Links: Web Programming Without Tiers}. In
  \bibinfo{booktitle}{\emph{Formal Methods for Components and Objects, 5th
  International Symposium, {FMCO} 2006, Amsterdam, The Netherlands, November
  7-10, 2006, Revised Lectures}}. \bibinfo{pages}{266--296}.
\newblock
\urldef\tempurl%
\url{https://doi.org/10.1007/978-3-540-74792-5_12}
\showDOI{\tempurl}


\bibitem[\protect\citeauthoryear{Copeland and Maier}{Copeland and
  Maier}{1984}]%
        {copeland84smalltalk}
\bibfield{author}{\bibinfo{person}{George Copeland} {and}
  \bibinfo{person}{David Maier}.} \bibinfo{year}{1984}\natexlab{}.
\newblock \showarticletitle{Making {S}malltalk a database system}.
\newblock \bibinfo{journal}{\emph{SIGMOD Rec.}} \bibinfo{volume}{14},
  \bibinfo{number}{2} (\bibinfo{year}{1984}).
\newblock


\bibitem[\protect\citeauthoryear{Date}{Date}{2012}]%
        {date12view}
\bibfield{author}{\bibinfo{person}{C.~J. Date}.}
  \bibinfo{year}{2012}\natexlab{}.
\newblock \bibinfo{booktitle}{\emph{View updating and relational theory}}.
\newblock \bibinfo{publisher}{O'Reilly}.
\newblock


\bibitem[\protect\citeauthoryear{Dayal and Bernstein}{Dayal and
  Bernstein}{1982}]%
        {dayal1982correct}
\bibfield{author}{\bibinfo{person}{Umeshwar Dayal} {and}
  \bibinfo{person}{Philip~A Bernstein}.} \bibinfo{year}{1982}\natexlab{}.
\newblock \showarticletitle{On the correct translation of update operations on
  relational views}.
\newblock \bibinfo{journal}{\emph{ACM Transactions on Database Systems (TODS)}}
  \bibinfo{volume}{7}, \bibinfo{number}{3} (\bibinfo{year}{1982}),
  \bibinfo{pages}{381--416}.
\newblock


\bibitem[\protect\citeauthoryear{Diskin, Xiong, and Czarnecki}{Diskin
  et~al\mbox{.}}{2011}]%
        {diskin2011delta}
\bibfield{author}{\bibinfo{person}{Zinovy Diskin}, \bibinfo{person}{Yingfei
  Xiong}, {and} \bibinfo{person}{Krzysztof Czarnecki}.}
  \bibinfo{year}{2011}\natexlab{}.
\newblock \showarticletitle{From State- to Delta-Based Bidirectional Model
  Transformations: the Asymmetric Case}.
\newblock \bibinfo{journal}{\emph{Journal of Object Technology}}
  \bibinfo{volume}{10} (\bibinfo{year}{2011}), \bibinfo{pages}{6: 1--25}.
\newblock
\urldef\tempurl%
\url{https://doi.org/10.5381/jot.2011.10.1.a6}
\showDOI{\tempurl}


\bibitem[\protect\citeauthoryear{Fegaras}{Fegaras}{2010}]%
        {fegaras2010lineage}
\bibfield{author}{\bibinfo{person}{Leonidas Fegaras}.}
  \bibinfo{year}{2010}\natexlab{}.
\newblock \showarticletitle{Propagating updates through {XML} views using
  lineage tracing}. In \bibinfo{booktitle}{\emph{Proceedings of the 26th
  International Conference on Data Engineering, {ICDE} 2010, March 1-6, 2010,
  Long Beach, California, {USA}}}. \bibinfo{pages}{309--320}.
\newblock
\urldef\tempurl%
\url{https://doi.org/10.1109/ICDE.2010.5447896}
\showDOI{\tempurl}


\bibitem[\protect\citeauthoryear{Foster, Greenwald, Moore, Pierce, and
  Schmitt}{Foster et~al\mbox{.}}{2007}]%
        {foster2007combinators}
\bibfield{author}{\bibinfo{person}{J~Nathan Foster}, \bibinfo{person}{Michael~B
  Greenwald}, \bibinfo{person}{Jonathan~T Moore}, \bibinfo{person}{Benjamin~C
  Pierce}, {and} \bibinfo{person}{Alan Schmitt}.}
  \bibinfo{year}{2007}\natexlab{}.
\newblock \showarticletitle{Combinators for bidirectional tree transformations:
  A linguistic approach to the view-update problem}.
\newblock \bibinfo{journal}{\emph{ACM Transactions on Programming Languages and
  Systems (TOPLAS)}} \bibinfo{volume}{29}, \bibinfo{number}{3}
  (\bibinfo{year}{2007}), \bibinfo{pages}{17}.
\newblock


\bibitem[\protect\citeauthoryear{Foster, Matsuda, and Voigtl{\"{a}}nder}{Foster
  et~al\mbox{.}}{2010}]%
        {foster10ssgip}
\bibfield{author}{\bibinfo{person}{Nate Foster}, \bibinfo{person}{Kazutaka
  Matsuda}, {and} \bibinfo{person}{Janis Voigtl{\"{a}}nder}.}
  \bibinfo{year}{2010}\natexlab{}.
\newblock \showarticletitle{Three Complementary Approaches to Bidirectional
  Programming}. In \bibinfo{booktitle}{\emph{Generic and Indexed Programming -
  International Spring School, {SSGIP} 2010, Oxford, UK, March 22-26, 2010,
  Revised Lectures}}. \bibinfo{pages}{1--46}.
\newblock
\urldef\tempurl%
\url{https://doi.org/10.1007/978-3-642-32202-0_1}
\showDOI{\tempurl}


\bibitem[\protect\citeauthoryear{Griffin, Libkin, and Trickey}{Griffin
  et~al\mbox{.}}{1997}]%
        {griffin1997incremental}
\bibfield{author}{\bibinfo{person}{Timothy Griffin}, \bibinfo{person}{Leonid
  Libkin}, {and} \bibinfo{person}{Howard Trickey}.}
  \bibinfo{year}{1997}\natexlab{}.
\newblock \showarticletitle{An Improved Algorithm for the Incremental
  Recomputation of Active Relational Expressions}.
\newblock \bibinfo{journal}{\emph{{IEEE} Trans. Knowl. Data Eng.}}
  \bibinfo{volume}{9}, \bibinfo{number}{3} (\bibinfo{year}{1997}),
  \bibinfo{pages}{508--511}.
\newblock
\urldef\tempurl%
\url{https://doi.org/10.1109/69.599937}
\showDOI{\tempurl}


\bibitem[\protect\citeauthoryear{Gupta and Mumick}{Gupta and Mumick}{1995}]%
        {gupta1995maintenance}
\bibfield{author}{\bibinfo{person}{Ashish Gupta} {and}
  \bibinfo{person}{Inderpal~Singh Mumick}.} \bibinfo{year}{1995}\natexlab{}.
\newblock \showarticletitle{Maintenance of Materialized Views: Problems,
  Techniques, and Applications}.
\newblock \bibinfo{journal}{\emph{{IEEE} Data Eng. Bull.}}
  \bibinfo{volume}{18}, \bibinfo{number}{2} (\bibinfo{year}{1995}),
  \bibinfo{pages}{3--18}.
\newblock
\urldef\tempurl%
\url{http://sites.computer.org/debull/95JUN-CD.pdf}
\showURL{%
\tempurl}


\bibitem[\protect\citeauthoryear{Hammer, Khoo, Hicks, and Foster}{Hammer
  et~al\mbox{.}}{2014}]%
        {hammer2014pldi}
\bibfield{author}{\bibinfo{person}{Matthew~A. Hammer},
  \bibinfo{person}{Yit~Phang Khoo}, \bibinfo{person}{Michael Hicks}, {and}
  \bibinfo{person}{Jeffrey~S. Foster}.} \bibinfo{year}{2014}\natexlab{}.
\newblock \showarticletitle{Adapton: composable, demand-driven incremental
  computation}. In \bibinfo{booktitle}{\emph{{ACM} {SIGPLAN} Conference on
  Programming Language Design and Implementation, {PLDI} '14, Edinburgh, United
  Kingdom - June 09 - 11, 2014}}. \bibinfo{pages}{156--166}.
\newblock
\urldef\tempurl%
\url{https://doi.org/10.1145/2594291.2594324}
\showDOI{\tempurl}


\bibitem[\protect\citeauthoryear{Herschel, Diestelk{\"{a}}mper, and {Ben
  Lahmar}}{Herschel et~al\mbox{.}}{2017}]%
        {missinganswers}
\bibfield{author}{\bibinfo{person}{Melanie Herschel}, \bibinfo{person}{Ralf
  Diestelk{\"{a}}mper}, {and} \bibinfo{person}{Houssem {Ben Lahmar}}.}
  \bibinfo{year}{2017}\natexlab{}.
\newblock \showarticletitle{A survey on provenance: What for? What form? What
  from?}
\newblock \bibinfo{journal}{\emph{{VLDB} J.}} \bibinfo{volume}{26},
  \bibinfo{number}{6} (\bibinfo{year}{2017}), \bibinfo{pages}{881--906}.
\newblock
\urldef\tempurl%
\url{https://doi.org/10.1007/s00778-017-0486-1}
\showDOI{\tempurl}


\bibitem[\protect\citeauthoryear{Hidaka, Hu, Inaba, Kato, Matsuda, and
  Nakano}{Hidaka et~al\mbox{.}}{2010}]%
        {hidaka2010unql}
\bibfield{author}{\bibinfo{person}{Soichiro Hidaka}, \bibinfo{person}{Zhenjiang
  Hu}, \bibinfo{person}{Kazuhiro Inaba}, \bibinfo{person}{Hiroyuki Kato},
  \bibinfo{person}{Kazutaka Matsuda}, {and} \bibinfo{person}{Keisuke Nakano}.}
  \bibinfo{year}{2010}\natexlab{}.
\newblock \showarticletitle{Bidirectionalizing graph transformations}. In
  \bibinfo{booktitle}{\emph{Proceeding of the 15th {ACM} {SIGPLAN}
  international conference on Functional programming, {ICFP} 2010, Baltimore,
  Maryland, USA, September 27-29, 2010}}. \bibinfo{pages}{205--216}.
\newblock
\urldef\tempurl%
\url{https://doi.org/10.1145/1863543.1863573}
\showDOI{\tempurl}


\bibitem[\protect\citeauthoryear{Hofmann, Pierce, and Wagner}{Hofmann
  et~al\mbox{.}}{2011}]%
        {hofmann2011symmetric}
\bibfield{author}{\bibinfo{person}{Martin Hofmann}, \bibinfo{person}{Benjamin
  Pierce}, {and} \bibinfo{person}{Daniel Wagner}.}
  \bibinfo{year}{2011}\natexlab{}.
\newblock \showarticletitle{Symmetric Lenses}. In
  \bibinfo{booktitle}{\emph{Proceedings of the 38th Annual ACM SIGPLAN-SIGACT
  Symposium on Principles of Programming Languages}}
  \emph{(\bibinfo{series}{POPL '11})}. \bibinfo{publisher}{ACM},
  \bibinfo{address}{New York, NY, USA}, \bibinfo{pages}{371--384}.
\newblock
\showISBNx{978-1-4503-0490-0}
\urldef\tempurl%
\url{https://doi.org/10.1145/1926385.1926428}
\showDOI{\tempurl}


\bibitem[\protect\citeauthoryear{Hofmann, Pierce, and Wagner}{Hofmann
  et~al\mbox{.}}{2012}]%
        {hofmann2012edit}
\bibfield{author}{\bibinfo{person}{Martin Hofmann}, \bibinfo{person}{Benjamin
  Pierce}, {and} \bibinfo{person}{Daniel Wagner}.}
  \bibinfo{year}{2012}\natexlab{}.
\newblock \showarticletitle{Edit Lenses}. In
  \bibinfo{booktitle}{\emph{Proceedings of the 39th Annual ACM SIGPLAN-SIGACT
  Symposium on Principles of Programming Languages}}
  \emph{(\bibinfo{series}{POPL '12})}. \bibinfo{publisher}{ACM},
  \bibinfo{address}{New York, NY, USA}, \bibinfo{pages}{495--508}.
\newblock
\showISBNx{978-1-4503-1083-3}
\urldef\tempurl%
\url{https://doi.org/10.1145/2103656.2103715}
\showDOI{\tempurl}


\bibitem[\protect\citeauthoryear{Johnson and Rosebrugh}{Johnson and
  Rosebrugh}{2013}]%
        {johnson2013delta}
\bibfield{author}{\bibinfo{person}{Michael Johnson} {and}
  \bibinfo{person}{Robert~D. Rosebrugh}.} \bibinfo{year}{2013}\natexlab{}.
\newblock \showarticletitle{Delta Lenses and Opfibrations}.
\newblock \bibinfo{journal}{\emph{{ECEASST}}}  \bibinfo{volume}{57}
  (\bibinfo{year}{2013}).
\newblock
\urldef\tempurl%
\url{http://journal.ub.tu-berlin.de/eceasst/article/view/875}
\showURL{%
\tempurl}


\bibitem[\protect\citeauthoryear{Ko and Hu}{Ko and Hu}{2018}]%
        {ko2018axiomatic}
\bibfield{author}{\bibinfo{person}{Hsiang{-}Shang Ko} {and}
  \bibinfo{person}{Zhenjiang Hu}.} \bibinfo{year}{2018}\natexlab{}.
\newblock \showarticletitle{An axiomatic basis for bidirectional programming}.
\newblock \bibinfo{journal}{\emph{{PACMPL}}} \bibinfo{volume}{2},
  \bibinfo{number}{{POPL}} (\bibinfo{year}{2018}),
  \bibinfo{pages}{41:1--41:29}.
\newblock
\urldef\tempurl%
\url{https://doi.org/10.1145/3158129}
\showDOI{\tempurl}


\bibitem[\protect\citeauthoryear{Ko, Zan, and Hu}{Ko et~al\mbox{.}}{2016}]%
        {ko2016bigul}
\bibfield{author}{\bibinfo{person}{Hsiang-Shang Ko}, \bibinfo{person}{Tao Zan},
  {and} \bibinfo{person}{Zhenjiang Hu}.} \bibinfo{year}{2016}\natexlab{}.
\newblock \showarticletitle{BiGUL: A Formally Verified Core Language for
  Putback-based Bidirectional Programming}. In
  \bibinfo{booktitle}{\emph{Proceedings of the 2016 ACM SIGPLAN Workshop on
  Partial Evaluation and Program Manipulation}} \emph{(\bibinfo{series}{PEPM
  '16})}. \bibinfo{publisher}{ACM}, \bibinfo{address}{New York, NY, USA},
  \bibinfo{pages}{61--72}.
\newblock
\showISBNx{978-1-4503-4097-7}
\urldef\tempurl%
\url{https://doi.org/10.1145/2847538.2847544}
\showDOI{\tempurl}


\bibitem[\protect\citeauthoryear{Koch}{Koch}{2010}]%
        {koch2010incremental}
\bibfield{author}{\bibinfo{person}{Christoph Koch}.}
  \bibinfo{year}{2010}\natexlab{}.
\newblock \showarticletitle{Incremental query evaluation in a ring of
  databases}. In \bibinfo{booktitle}{\emph{Proceedings of the twenty-ninth ACM
  SIGMOD-SIGACT-SIGART symposium on Principles of database systems}}. ACM,
  \bibinfo{pages}{87--98}.
\newblock


\bibitem[\protect\citeauthoryear{Koch, Lupei, and Tannen}{Koch
  et~al\mbox{.}}{2016}]%
        {koch2016incremental}
\bibfield{author}{\bibinfo{person}{Christoph Koch}, \bibinfo{person}{Daniel
  Lupei}, {and} \bibinfo{person}{Val Tannen}.} \bibinfo{year}{2016}\natexlab{}.
\newblock \showarticletitle{Incremental View Maintenance For Collection
  Programming}. In \bibinfo{booktitle}{\emph{Proceedings of the 35th {ACM}
  {SIGMOD-SIGACT-SIGAI} Symposium on Principles of Database Systems, {PODS}
  2016, San Francisco, CA, USA, June 26 - July 01, 2016}}.
  \bibinfo{pages}{75--90}.
\newblock
\urldef\tempurl%
\url{https://doi.org/10.1145/2902251.2902286}
\showDOI{\tempurl}


\bibitem[\protect\citeauthoryear{Ley}{Ley}{2009}]%
        {ley2009dblp}
\bibfield{author}{\bibinfo{person}{Michael Ley}.}
  \bibinfo{year}{2009}\natexlab{}.
\newblock \showarticletitle{DBLP: some lessons learned}.
\newblock \bibinfo{journal}{\emph{Proceedings of the VLDB Endowment}}
  \bibinfo{volume}{2}, \bibinfo{number}{2} (\bibinfo{year}{2009}),
  \bibinfo{pages}{1493--1500}.
\newblock


\bibitem[\protect\citeauthoryear{Liu, Hu, and Takeichi}{Liu
  et~al\mbox{.}}{2007}]%
        {liu2007xquery}
\bibfield{author}{\bibinfo{person}{Dongxi Liu}, \bibinfo{person}{Zhenjiang Hu},
  {and} \bibinfo{person}{Masato Takeichi}.} \bibinfo{year}{2007}\natexlab{}.
\newblock \showarticletitle{Bidirectional interpretation of XQuery}. In
  \bibinfo{booktitle}{\emph{Proceedings of the 2007 {ACM} {SIGPLAN} Workshop on
  Partial Evaluation and Semantics-based Program Manipulation, 2007, Nice,
  France, January 15-16, 2007}}. \bibinfo{pages}{21--30}.
\newblock
\urldef\tempurl%
\url{https://doi.org/10.1145/1244381.1244386}
\showDOI{\tempurl}


\bibitem[\protect\citeauthoryear{Meijer, Beckman, and Bierman}{Meijer
  et~al\mbox{.}}{2006}]%
        {meijer2006sigmod}
\bibfield{author}{\bibinfo{person}{Erik Meijer}, \bibinfo{person}{Brian
  Beckman}, {and} \bibinfo{person}{Gavin~M. Bierman}.}
  \bibinfo{year}{2006}\natexlab{}.
\newblock \showarticletitle{{LINQ}: Reconciling Object, Relations and {XML} in
  the {.NET} Framework}. In \bibinfo{booktitle}{\emph{SIGMOD}}.
\newblock


\bibitem[\protect\citeauthoryear{Qian and Wiederhold}{Qian and
  Wiederhold}{1991}]%
        {qian91tkde}
\bibfield{author}{\bibinfo{person}{Xiaolei Qian} {and} \bibinfo{person}{Gio
  Wiederhold}.} \bibinfo{year}{1991}\natexlab{}.
\newblock \showarticletitle{Incremental Recomputation of Active Relational
  Expressions}.
\newblock \bibinfo{journal}{\emph{{IEEE} Trans. Knowl. Data Eng.}}
  \bibinfo{volume}{3}, \bibinfo{number}{3} (\bibinfo{year}{1991}),
  \bibinfo{pages}{337--341}.
\newblock
\urldef\tempurl%
\url{https://doi.org/10.1109/69.91063}
\showDOI{\tempurl}


\bibitem[\protect\citeauthoryear{Ramakrishnan and Gehrke}{Ramakrishnan and
  Gehrke}{2003}]%
        {ramakrishnan2003database}
\bibfield{author}{\bibinfo{person}{Raghu Ramakrishnan} {and}
  \bibinfo{person}{Johannes Gehrke}.} \bibinfo{year}{2003}\natexlab{}.
\newblock \bibinfo{booktitle}{\emph{Database management systems {(3.} ed.)}}.
\newblock \bibinfo{publisher}{McGraw-Hill}.
\newblock
\showISBNx{978-0-07-115110-8}


\bibitem[\protect\citeauthoryear{Scully and Chlipala}{Scully and
  Chlipala}{2017}]%
        {scully2017sqlcache}
\bibfield{author}{\bibinfo{person}{Ziv Scully} {and} \bibinfo{person}{Adam
  Chlipala}.} \bibinfo{year}{2017}\natexlab{}.
\newblock \showarticletitle{A program optimization for automatic database
  result caching}. In \bibinfo{booktitle}{\emph{Proceedings of the 44th {ACM}
  {SIGPLAN} Symposium on Principles of Programming Languages, {POPL} 2017,
  Paris, France, January 18-20, 2017}}. \bibinfo{pages}{271--284}.
\newblock
\urldef\tempurl%
\url{http://dl.acm.org/citation.cfm?id=3009891}
\showURL{%
\tempurl}


\bibitem[\protect\citeauthoryear{Stevens}{Stevens}{2007}]%
        {stevens2007landscape}
\bibfield{author}{\bibinfo{person}{Perdita Stevens}.}
  \bibinfo{year}{2007}\natexlab{}.
\newblock \showarticletitle{A Landscape of Bidirectional Model
  Transformations.}
\newblock \bibinfo{journal}{\emph{GTTSE}}  \bibinfo{volume}{5235}
  (\bibinfo{year}{2007}), \bibinfo{pages}{408--424}.
\newblock


\bibitem[\protect\citeauthoryear{Syme}{Syme}{2006}]%
        {syme2006linq}
\bibfield{author}{\bibinfo{person}{Don Syme}.} \bibinfo{year}{2006}\natexlab{}.
\newblock \showarticletitle{Leveraging .NET meta-programming components from
  F\#: integrated queries and interoperable heterogeneous execution}. In
  \bibinfo{booktitle}{\emph{ML}}.
\newblock


\bibitem[\protect\citeauthoryear{Ulrich and Grust}{Ulrich and Grust}{2015}]%
        {ulrich2015dsh}
\bibfield{author}{\bibinfo{person}{Alexander Ulrich} {and}
  \bibinfo{person}{Torsten Grust}.} \bibinfo{year}{2015}\natexlab{}.
\newblock \showarticletitle{The Flatter, the Better: Query Compilation Based on
  the Flattening Transformation}. In \bibinfo{booktitle}{\emph{Proceedings of
  the 2015 {ACM} {SIGMOD} International Conference on Management of Data,
  Melbourne, Victoria, Australia, May 31 - June 4, 2015}}.
  \bibinfo{pages}{1421--1426}.
\newblock
\urldef\tempurl%
\url{https://doi.org/10.1145/2723372.2735359}
\showDOI{\tempurl}


\bibitem[\protect\citeauthoryear{Wang, Gibbons, and Wu}{Wang
  et~al\mbox{.}}{2011}]%
        {wang2011incremental}
\bibfield{author}{\bibinfo{person}{Meng Wang}, \bibinfo{person}{Jeremy
  Gibbons}, {and} \bibinfo{person}{Nicolas Wu}.}
  \bibinfo{year}{2011}\natexlab{}.
\newblock \showarticletitle{Incremental updates for efficient bidirectional
  transformations}. In \bibinfo{booktitle}{\emph{Proceeding of the 16th {ACM}
  {SIGPLAN} international conference on Functional Programming, {ICFP} 2011,
  Tokyo, Japan, September 19-21, 2011}}. \bibinfo{pages}{392--403}.
\newblock
\urldef\tempurl%
\url{https://doi.org/10.1145/2034773.2034825}
\showDOI{\tempurl}


\bibitem[\protect\citeauthoryear{Zan, Liu, Ko, and Hu}{Zan
  et~al\mbox{.}}{2016}]%
        {zan2016brul}
\bibfield{author}{\bibinfo{person}{Tao Zan}, \bibinfo{person}{Li Liu},
  \bibinfo{person}{Hsiang{-}Shang Ko}, {and} \bibinfo{person}{Zhenjiang Hu}.}
  \bibinfo{year}{2016}\natexlab{}.
\newblock \showarticletitle{Brul: {A} Putback-Based Bidirectional
  Transformation Library for Updatable Views}. In
  \bibinfo{booktitle}{\emph{Proceedings of the 5th International Workshop on
  Bidirectional Transformations, Bx 2016, co-located with The European Joint
  Conferences on Theory and Practice of Software, {ETAPS} 2016, Eindhoven, The
  Netherlands, April 8, 2016.}} \bibinfo{pages}{77--89}.
\newblock
\urldef\tempurl%
\url{http://ceur-ws.org/Vol-1571/paper_3.pdf}
\showURL{%
\tempurl}


\end{thebibliography}

\begin{techreport}
\pagebreak
\appendix

\section{Basic properties of sets and relational operators}

\subsection{Well-known facts about sets}

\begin{center}
\begin{tabular}{ll}
	additive inverse
	&
	$M \setminus M = \emptyset$
	\\
	$\cdot \setminus N$ decreasing
	&
	$M \setminus N \subseteq M$
	\\
	$\cdot \setminus N$ monotone
	&
	if $M \subseteq M'$ then $M \setminus N \subseteq M' \setminus N$
	\\
	M $\setminus \cdot$ antitone
	&
	if $N \subseteq N'$ then $M \setminus N \supseteq M \setminus N'$
	\\
	$\emptyset$ unit for $\cup$
	&
	$M \cup \emptyset = M$
	\\
	$M \cup \cdot$ increasing
	&
	$M \subseteq M \cup N$
	\\
	$\cup$ commutative
	&
	$M \cup N = N \cup M$
	\\
	$\cup$ monotone
	&
	if $M \subseteq M'$ and $N \subseteq N'$ then $M \cup N \subseteq M' \cup N'$
	\\
	$\cup$ least upper bound
	&
	$M \cup N \subseteq O$ iff $M \subseteq O$ and $N \subseteq O$
	\\
	$M \cap \cdot$ decreasing
	&
	$M \cap N \subseteq N$
	\\
	$\cap$ commutative
	&
	$M \cap N = N \cap M$
	\\
	$\cap$ monotone
	&
	if $M \subseteq M'$ and $N \subseteq N'$ then $M \cap N \subseteq M' \cap N'$
	\\
	$\cap$ greatest lower bound
	&
	$M \cap N \supseteq P$ iff $M \supseteq P$ and $N \supseteq P$
	\\
	$\cap$ in terms of $\setminus$
	&
	$M \setminus (M \setminus N) = M \cap N$
	\\
	$\setminus$ and $\cap$ complementary
	&
	$M = (M \setminus N) \cup (M \cap N) = (M \cup N) \setminus (N - M)$
	\\
	$\emptyset$ least
	&
	$\emptyset \subseteq M$
	\\
	$\emptyset$ annihilator for $\cap$
	&
	$M \cap \emptyset = \emptyset$
	\\
	$\cap$ and $\cup$ induce $\subseteq$
	&
	$M \subseteq N$ iff $M \cap N = M$ iff $M \cup N = N$
	\\
	$\cup$ distributes over $\cap$
	&
	$(M \cup N) \cap P = (M \cap P) \cup (N \cap P)$
	\\
	$\cap$ distributes over $\cup$
	&
	$(M \cap N) \cup P = (M \cup P) \cap (N \cup P)$
	\\
	$\setminus$ distributes over $\cup$
	&
	$(M \cup N) \setminus O = (M \setminus O) \cup  (N \setminus O)$
\end{tabular}
\end{center}

Proving $M \subseteq \emptyset$ suffices to establish $M = \emptyset$, by
minimality of $\emptyset$ and antisymmetry.

\subsection{Other basic properties of sets}

\begin{lemma}
	$(M \setminus N) \setminus O = M \setminus (N \cup O)$.
	\label{lem:minus_minus_cup}
\end{lemma}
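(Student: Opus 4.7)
The plan is to prove this by extensionality (element chasing), which is the most direct route for a basic set-theoretic identity of this kind. Although the table of well-known facts listed above includes distributivity laws, no single one of them immediately rewrites $(M\setminus N)\setminus O$ into $M\setminus(N\cup O)$, so invoking them one after another would be more cumbersome than unfolding the definitions of $\setminus$ and $\cup$ directly at the level of elements.

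Concretely, I would fix an arbitrary $x$ and establish the chain of biconditionals
\[
  x \in (M\setminus N)\setminus O
  \iff x \in M\setminus N \;\text{and}\; x \notin O
  \iff x \in M \;\text{and}\; x \notin N \;\text{and}\; x \notin O
  \iff x \in M \;\text{and}\; x \notin N\cup O
  \iff x \in M\setminus(N\cup O).
\]
The first and second steps use the definition of set difference, the third uses the definition of union (together with De~Morgan at the level of negated disjunction), and the final step again uses the definition of set difference.

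There is no real obstacle here: the argument is a single line once definitions are unfolded. I would close by appealing to extensionality to conclude $(M\setminus N)\setminus O = M\setminus(N\cup O)$. If a style more aligned with the surrounding appendix is preferred, one could instead present the same equality as two inclusions: $(\subseteq)$ if $x$ is in the LHS then $x\in M$ and $x$ is in neither $N$ nor $O$, so $x\in M\setminus(N\cup O)$; $(\supseteq)$ symmetrically. Either presentation is essentially a one-liner.
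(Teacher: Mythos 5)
Your proof is correct: the chain of biconditionals is valid, each step following from the definitions of $\setminus$ and $\cup$, and extensionality closes the argument. For comparison, the paper states this lemma in its appendix among the basic properties of sets and gives no proof at all, treating it as a well-known identity; your one-line element-chasing argument is exactly the kind of routine verification being elided, so there is nothing to reconcile.
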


\begin{lemma}
	$(M \setminus M') \setminus (M \setminus N') = (M \cap N') \setminus M'$
	\label{lem:minus_minus}
\end{lemma}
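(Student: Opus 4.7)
The plan is to prove this by direct element-chasing, since the statement is a pure propositional identity on sets and does not interact with any of the relational or incremental structure developed in the body of the paper. Element-wise verification reduces the claim to checking equivalence of two Boolean combinations of the predicates $x \in M$, $x \in M'$, $x \in N'$, which is routine.

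First I would prove the left-to-right inclusion. Assume $x \in (M \setminus M') \setminus (M \setminus N')$. Then $x \in M \setminus M'$, giving $x \in M$ and $x \notin M'$, and also $x \notin M \setminus N'$. The last condition is equivalent to $x \notin M$ or $x \in N'$; since we already have $x \in M$, it forces $x \in N'$. Combining $x \in M$ with $x \in N'$ yields $x \in M \cap N'$, and together with $x \notin M'$ we obtain $x \in (M \cap N') \setminus M'$, as required.

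For the right-to-left inclusion, suppose $x \in (M \cap N') \setminus M'$, so that $x \in M$, $x \in N'$, and $x \notin M'$. The first and third facts give $x \in M \setminus M'$. Since $x \in N'$, we have $x \notin M \setminus N'$, hence $x \in (M \setminus M') \setminus (M \setminus N')$. An equivalent algebraic derivation is available using only the listed facts (rewrite via $A \setminus B = A \setminus (A \cap B)$, then $(M \setminus M') \cap (M \setminus N') = M \setminus (M' \cup N')$ together with Lemma~\ref{lem:minus_minus_cup}), but it is strictly longer than the element-wise argument. There is no real obstacle; this lemma is purely preparatory bookkeeping for the delta computations in the main body.
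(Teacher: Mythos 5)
Your element-chasing argument is correct: both inclusions are verified soundly, and the key step (using $x \in M$ to discharge the disjunct $x \notin M$ in the negation of $x \in M \setminus N'$) is exactly right. The paper itself states Lemma~\ref{lem:minus_minus} without proof, listing it among basic set-theoretic facts, so there is no authorial argument to compare against; your proof simply supplies the routine verification the authors omitted, and the algebraic alternative you sketch via Lemma~\ref{lem:minus_minus_cup} would work equally well.
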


\begin{lemma}
	If $M \cap N = M \cap O$, then $M \setminus N = M \setminus O$.
	\label{lem:minus_similar}
\end{lemma}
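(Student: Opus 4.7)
The plan is to reduce both sides to expressions involving only $M$ and the common intersection, then invoke the hypothesis.

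First I would establish the auxiliary identity $M \setminus N = M \setminus (M \cap N)$, which holds for any sets $M,N$. This is an elementary fact: an element $x \in M$ fails to belong to $N$ exactly when it fails to belong to $M \cap N$, since if $x \in M$ then $x \in N \iff x \in M \cap N$. This identity already appears implicitly in the listed properties (for instance via ``$\cap$ and $\setminus$ complementary''), so I would either cite that bundle or give a two-line elementwise argument.

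With the identity in hand, the proof of Lemma~\ref{lem:minus_similar} is immediate: applying the identity to both sides gives
\[
M \setminus N \;=\; M \setminus (M \cap N) \;=\; M \setminus (M \cap O) \;=\; M \setminus O,
\]
where the middle equality uses the hypothesis $M \cap N = M \cap O$ (and the congruence property that equal sets have equal differences from $M$).

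As a backup, if a more self-contained style is preferred, I would fall back on a symmetric element-chase: for any $x \in M \setminus N$, suppose for contradiction $x \in O$; then $x \in M \cap O = M \cap N$, so $x \in N$, contradicting $x \notin N$. Hence $M \setminus N \subseteq M \setminus O$, and the reverse inclusion follows by swapping the roles of $N$ and $O$. I do not anticipate any real obstacle here — the statement is a direct consequence of the fact that $M \setminus N$ only depends on $N$ through $M \cap N$.
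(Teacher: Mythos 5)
Your proposal is correct. The paper itself states Lemma~\ref{lem:minus_similar} among its ``other basic properties of sets'' without giving any proof, so there is no argument in the paper to compare against; your reduction via the identity $M \setminus N = M \setminus (M \cap N)$ (or equivalently the element chase in your backup paragraph) is a valid and complete justification of exactly the kind the authors evidently considered too elementary to spell out.
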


\begin{lemma}
	$(N \setminus M) \cap M = \emptyset$.
	\label{lem:setminus_cap_empty}
\end{lemma}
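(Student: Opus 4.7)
The plan is a one-line element-chase. To show $(N \setminus M) \cap M = \emptyset$, it suffices (by the remark that $M \subseteq \emptyset$ implies $M = \emptyset$, together with $\emptyset$ least) to prove the containment $(N \setminus M) \cap M \subseteq \emptyset$. I would take an arbitrary $x \in (N \setminus M) \cap M$, unfold the intersection to get $x \in M$ and $x \in N \setminus M$, then unfold the set difference in the second conjunct to obtain $x \in N$ and $x \notin M$. The conjunction $x \in M$ and $x \notin M$ is absurd, so no such $x$ exists, which yields the desired containment and hence equality.

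If one prefers a purely calculational proof that avoids element-chasing, an alternative is to combine the identity $M \cap N = M \setminus (M \setminus N)$ (``$\cap$ in terms of $\setminus$'' in the list) with Lemma~\ref{lem:minus_minus_cup}: rewrite $(N \setminus M) \cap M = M \cap (N \setminus M) = M \setminus (M \setminus (N \setminus M))$, observe $M \setminus (N \setminus M) = M$ (since subtracting anything disjoint from $M$ leaves $M$ unchanged, itself a consequence of $N \setminus M$ lying outside $M$), and conclude $M \setminus M = \emptyset$ by additive inverse. However, this route is circular unless one has already established what the statement is asserting, so the element-chase is cleaner.

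There is no real obstacle: the statement is a basic set-theoretic tautology. The only stylistic decision is whether to justify the step $x \in N \setminus M \Rightarrow x \notin M$ by appeal to the definition of set difference or to phrase the whole argument via the listed identities. I would opt for the two-line elementwise proof for brevity, matching the level of detail of the other ``basic properties'' lemmas in this section.
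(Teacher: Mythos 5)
Your element-chase proof is correct: an $x$ in $(N \setminus M) \cap M$ would have to satisfy both $x \in M$ and $x \notin M$, so the intersection is empty. The paper states this lemma without proof, treating it as a basic set-theoretic fact, so your two-line argument is exactly at the intended level of detail and nothing further is needed.
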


\begin{lemma}
	$M \subseteq N \cup O$ iff $M \setminus O \subseteq N$.
	\label{lem:set_cup_to_minus}
\end{lemma}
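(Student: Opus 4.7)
The plan is to prove the biconditional by two short implications, following the style of the other set-theoretic lemmas in this appendix. Because the lemma is a routine fact, I would give a purely algebraic derivation using the identities already tabulated, so that the proof fits in without reverting to element-chasing.

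For the $(\Rightarrow)$ direction, assume $M \subseteq N \cup O$. Monotonicity of $\cdot \setminus O$ gives $M \setminus O \subseteq (N \cup O) \setminus O$, and distributivity of $\setminus$ over $\cup$ combined with additive inverse yields
\[
(N \cup O) \setminus O = (N \setminus O) \cup (O \setminus O) = N \setminus O.
\]
Since $\cdot \setminus O$ is decreasing, $N \setminus O \subseteq N$, and chaining these inclusions gives $M \setminus O \subseteq N$.

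For the $(\Leftarrow)$ direction, I would start from the decomposition $M = (M \setminus O) \cup (M \cap O)$ (the ``$\setminus$ and $\cap$ complementary'' entry in the table). Assuming $M \setminus O \subseteq N$, monotonicity of $\cup$ together with $M \cap O \subseteq O$ yields
\[
M = (M \setminus O) \cup (M \cap O) \subseteq N \cup O.
\]

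The main obstacle is essentially cosmetic: the lemma is a textbook tautology, and the only care needed is to pick the precise algebraic identities from the table so that the proof reads in the same compact style as the surrounding appendix rather than dropping to element reasoning. If a pointwise presentation is preferred instead, both implications also follow immediately by unfolding the definitions of $\subseteq$, $\cup$, and $\setminus$: in one direction any $x \in M \setminus O$ lies in $N \cup O$ by hypothesis but not in $O$, hence in $N$; in the other, any $x \in M$ either lies in $O$ (done) or lies in $M \setminus O \subseteq N$.
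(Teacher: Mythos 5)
Your proof is correct and follows essentially the same route as the paper's: the $(\Rightarrow)$ direction via monotonicity of $\cdot \setminus O$, distributivity of $\setminus$ over $\cup$, and $O \setminus O = \emptyset$; the $(\Leftarrow)$ direction via the decomposition $M = (M \setminus O) \cup (M \cap O)$ and monotonicity of $\cup$. No differences worth noting.
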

\begin{proof}
	\begin{salign}
		\intertext{First show that $M \setminus O \subseteq N$ implies $M \subseteq N \cup O$.}
		M \setminus O &\subseteq N
			& \\
		(M \setminus O) \cup (M \cap O) &\subseteq N \cup (M \cap O)
			& \text{$\cup$ monotone} \\
		M &\subseteq N \cup (M \cap O)
			& \text{$\setminus$ and $\cap$ complementary} \\
		M &\subseteq N \cup O
			& \text{$M \cap \cdot$ decreasing; $\cup$ monotone} \\
			\\
			\intertext{Now show that $M \subseteq N \cup O$ implies $M \setminus O \subseteq N$.}
		M &\subseteq N \cup O
			& \\
		M \setminus O &\subseteq (N \cup O) \setminus O
			& \text{$\cdot \setminus O$ monotone} \\
		M \setminus O &\subseteq (N \setminus O) \cup (O \setminus O)
			& \text{$\setminus$ distributes over $\cup$} \\
		M \setminus O &\subseteq N \setminus O 
			& \text{simpl.} \\
		M \setminus O &\subseteq N
			& \text{$\cdot \setminus O$ decreasing; trans.}
	\end{salign}
\end{proof}

\begin{lemma}
	If $N \cap O = \emptyset$ and $M \subseteq N \cup O$ then $M \setminus O = N \cap M$.
	\label{lem:subset_disjoint}
\end{lemma}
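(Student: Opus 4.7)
The plan is to prove the identity by straightforward algebraic manipulation using the basic set-theoretic laws from the preceding subsection (in particular distributivity of $\cap$ over $\cup$, distributivity of $\setminus$ over $\cup$, and additive inverse $O \setminus O = \emptyset$), together with the two hypotheses. I expect no real obstacle; this is a routine calculation of the kind already appearing in Lemmas~\ref{lem:minus_minus_cup} through~\ref{lem:set_cup_to_minus}.

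First I would use $M \subseteq N \cup O$ to rewrite $M$ as $M \cap (N \cup O)$, since by ``$\cap$ and $\cup$ induce $\subseteq$'' this equals $M$. Distributing $\cap$ over $\cup$ then gives $M = (M \cap N) \cup (M \cap O)$. Substituting this into $M \setminus O$ and distributing $\setminus$ over $\cup$ yields
\[
M \setminus O = ((M \cap N) \setminus O) \cup ((M \cap O) \setminus O).
\]
The second summand collapses to $\emptyset$ because $(M \cap O) \setminus O \subseteq O \setminus O = \emptyset$ (using ``$M \cap \cdot$ decreasing'' and monotonicity of $\cdot \setminus O$).

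It remains to show $(M \cap N) \setminus O = N \cap M$. Since $M \cap N \subseteq N$ (by ``$M \cap \cdot$ decreasing''), intersecting with $O$ gives $(M \cap N) \cap O \subseteq N \cap O = \emptyset$ by the first hypothesis and monotonicity of $\cap$. Hence by Lemma~\ref{lem:minus_similar} (taking the two sets to have equal intersection with $M \cap N$, namely the empty one, against $O$ and against $\emptyset$), or more directly by observing $(M \cap N) \setminus O = (M \cap N) \setminus ((M \cap N) \cap O) = M \cap N$ via ``$\cap$ in terms of $\setminus$'', we obtain $(M \cap N) \setminus O = M \cap N = N \cap M$ by commutativity of $\cap$. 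Chaining the equalities completes the proof.
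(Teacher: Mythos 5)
Your proof is correct. Note that the paper itself states Lemma~\ref{lem:subset_disjoint} without proof, listing it among the unproved ``other basic properties of sets,'' so there is no official argument to compare against; your calculation --- decomposing $M$ as $(M \cap N) \cup (M \cap O)$ via the hypothesis $M \subseteq N \cup O$, distributing $\setminus$ over $\cup$, killing the $(M \cap O) \setminus O$ summand, and using $N \cap O = \emptyset$ with Lemma~\ref{lem:minus_similar} to simplify $(M \cap N) \setminus O$ to $M \cap N$ --- is a valid and entirely routine way to fill that gap using only the primitive laws the paper lists.
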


\begin{lemma}
	If $M \subseteq M'$ then $(M \cup N) \setminus M' = N \setminus M'$.
	\label{lem:absorb_setminus}
\end{lemma}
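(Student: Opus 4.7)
The plan is to use distributivity of set difference over union together with the hypothesis $M \subseteq M'$ to collapse one summand to the empty set.

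First, I would apply the identity ``$\setminus$ distributes over $\cup$'' from the table of well-known facts, rewriting
\[
(M \cup N) \setminus M' = (M \setminus M') \cup (N \setminus M').
\]
Next, I would show that $M \setminus M' = \emptyset$ using the hypothesis $M \subseteq M'$. One quick way is to invoke the ``$M \setminus \cdot$ antitone'' property with $M \subseteq M'$ to obtain $M \setminus M' \subseteq M \setminus M$, and then the ``additive inverse'' law gives $M \setminus M = \emptyset$, so by minimality of $\emptyset$ we conclude $M \setminus M' = \emptyset$. Finally, using ``$\emptyset$ unit for $\cup$'', we simplify
\[
\emptyset \cup (N \setminus M') = N \setminus M',
\]
which yields the desired equality.

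There is no real obstacle here; the proof is a three-line chain of equalities citing facts already enumerated earlier in the excerpt. The only small decision is which of the equivalent characterisations of $M \subseteq M'$ to use to obtain $M \setminus M' = \emptyset$ (antitonicity versus, say, the equivalence $M \subseteq M' \iff M \cap M' = M$ combined with ``$\cap$ in terms of $\setminus$''); both take one line.
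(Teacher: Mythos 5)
Your proof is correct and follows exactly the same route as the paper's: distribute $\setminus$ over $\cup$, observe that $M \subseteq M'$ forces $M \setminus M' = \emptyset$, and simplify. The only difference is that you spell out the justification for $M \setminus M' = \emptyset$, which the paper leaves implicit.
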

\begin{proof}
	\begin{salign}
		(M \cup N) \setminus M' &= (M \setminus M') \cup (N \setminus M')
		& \text{ $\setminus$/$\cup$ distr.} \\
		&= \emptyset \cup (N \setminus M')
		& \text{ $M \subseteq M'$ } \\
		&= N \setminus M'
	\end{salign}
\end{proof}

\begin{lemma}
	If $M \cap N = \emptyset$ then $M \setminus (M' \cup N) = M \setminus M'$.
	\label{lem:discard_disjoint}
\end{lemma}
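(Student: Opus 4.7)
The plan is to reduce the goal to a single application of Lemma~\ref{lem:minus_minus_cup} followed by observing that removing a disjoint set has no effect. Specifically, I would first rewrite $M \setminus (M' \cup N)$ as $(M \setminus M') \setminus N$ by applying Lemma~\ref{lem:minus_minus_cup} from right to left (with the roles of the arguments renamed). This reduces the problem to showing that $(M \setminus M') \setminus N = M \setminus M'$.

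Next I would argue that $M \setminus M'$ is disjoint from $N$: since $M \setminus M' \subseteq M$ by the decreasing property of $\cdot \setminus M'$, and $\cap$ is monotone, we get $(M \setminus M') \cap N \subseteq M \cap N = \emptyset$, so $(M \setminus M') \cap N = \emptyset$.

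Finally, from the ``$\setminus$ and $\cap$ complementary'' identity applied to $M \setminus M'$ and $N$, we have $M \setminus M' = ((M \setminus M') \setminus N) \cup ((M \setminus M') \cap N) = ((M \setminus M') \setminus N) \cup \emptyset = (M \setminus M') \setminus N$, which combined with the first step yields the claim.

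I don't expect any real obstacle here; the only mild subtlety is making sure to invoke Lemma~\ref{lem:minus_minus_cup} in the correct direction and to cite the complementary identity (rather than appealing directly to an ``$X \cap N = \emptyset \implies X \setminus N = X$'' fact, which, though obvious, is not listed as a named lemma in the preceding bullet list). The whole argument should fit in a short \texttt{salign} block of three or four equational steps.
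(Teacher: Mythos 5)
Your proof is correct and follows essentially the same route as the paper's: both reduce the double subtraction via Lemma~\ref{lem:minus_minus_cup} and then discard the subtraction of $N$ using the disjointness hypothesis. The only (cosmetic) difference is that the paper first commutes $M' \cup N$ so that the lemma yields $(M \setminus N) \setminus M'$ and the hypothesis $M \cap N = \emptyset$ applies directly to $M$, whereas your ordering produces $(M \setminus M') \setminus N$ and so needs the extra one-line observation that $(M \setminus M') \cap N = \emptyset$; both are fine.
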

\begin{proof}
	\begin{salign}
		M \setminus (M' \cup N) &= M \setminus (N \cup M')
			& \text{ $\cup$ comm. } \\
		&= (M \setminus N) \setminus M
			& \text{ \reflemma{minus_minus_cup} } \\
		&= M \setminus M'
			& \text{ $M \cap N = \emptyset$ }
	\end{salign}
\end{proof}

\subsection{Well-known facts about relational operators}

\begin{center}
\begin{tabular}{ll}
	$\Join$ monotone
	&
	if $M \subseteq M'$ and $N \subseteq N'$ then $M \Join N \subseteq M' \Join N'$
	\\
	$\proj{\cdot}{U}$ monotone
	&
	if $M \subseteq M'$ then $\proj{M}{U} \subseteq \proj{M'}{U}$
\end{tabular}
\end{center}

\subsection{Other basic properties of relational operators}

\begin{lemma}[$\Join$ after $\pi_U \times \pi_V$ increasing]
   Suppose $M: U \cup V$. Then $M \subseteq \proj{M}{U} \Join \proj{M}{V}$.
\end{lemma}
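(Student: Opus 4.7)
The plan is to prove this by a direct unfolding of definitions, picking an arbitrary $m \in M$ and showing $m \in \proj{M}{U} \Join \proj{M}{V}$. Since $M : U \cup V$, every $m \in M$ has $\dom(m) = U \cup V$, so $m$ is in the codomain of the natural join of two relations of types $U$ and $V$ respectively, which is a necessary first observation for the membership check.

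The central step is the two projection witnesses. First I would note that $\recproj{m}{U} \in \proj{M}{U}$ directly from the definition $\proj{M}{U} \defeq \set{\recproj{m'}{U} \mid m' \in M}$, taking $m' = m$; symmetrically $\recproj{m}{V} \in \proj{M}{V}$. Combining these with the definition of natural join $M \Join N = \set{n : U \cup V \mid \recproj{n}{U} \in M \text{ and } \recproj{n}{V} \in N}$ (instantiated with $\proj{M}{U}$ and $\proj{M}{V}$) gives $m \in \proj{M}{U} \Join \proj{M}{V}$, as required.

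There is really no obstacle here: the result follows immediately once the definitions of $\proj{\cdot}{\cdot}$ and $\Join$ from Section~\ref{sec:background} are unfolded. The only subtlety worth pointing out is that we rely on $\dom(m) = U \cup V$ to ensure the typing condition $n : U \cup V$ in the definition of $\Join$ is met; this is exactly the hypothesis $M : U \cup V$. The converse inclusion of course fails in general, which is why we phrase the result as $\subseteq$ rather than equality.
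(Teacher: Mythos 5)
Your proof is correct and is essentially identical to the paper's: both pick an arbitrary $m \in M$, observe $\recproj{m}{U} \in \proj{M}{U}$ and $\recproj{m}{V} \in \proj{M}{V}$ from the definition of projection, and conclude by the definition of $\Join$. The remark about needing $\dom(m) = U \cup V$ for the typing side-condition of the join is a fine (if implicit in the paper) observation.
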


\begin{proof}
   \begin{salign}
      m &\in M
         & \text{suppose ($m$)} \\
      \recproj{m}{U} &\in \proj{M}{U}
         & \text{def.~$\proj{\cdot}{U}$} \\
      \recproj{m}{V} &\in \proj{M}{V}
         & \text{def.~$\proj{\cdot}{V}$} \\
      m &\in \proj{M}{U} \Join \proj{M}{V}
         & \text{def.~$\Join$}
   \end{salign}
\end{proof}

\begin{lemma}[$\pi_U$ after $\Join$ decreasing]
   Suppose $M: U$. Then $\proj{M \Join N}{U} \subseteq M$.
\end{lemma}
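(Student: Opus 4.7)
The plan is to prove this by direct unfolding of the definitions of $\proj{\cdot}{U}$ and $\Join$, which makes the inclusion essentially immediate. No auxiliary lemma or clever manipulation is needed.

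First I would fix some $V$ such that $N : V$, so that $M \Join N : U \cup V$ and the projection $\proj{M \Join N}{U}$ is well-typed. Then I would take an arbitrary $t \in \proj{M \Join N}{U}$, and by the definition of relational projection obtain some $m \in M \Join N$ with $t = \recproj{m}{U}$. Unfolding the definition of the natural join from Section~\ref{sec:background}, the fact that $m \in M \Join N$ with $m : U \cup V$ means precisely that $\recproj{m}{U} \in M$ and $\recproj{m}{V} \in N$. Hence $t = \recproj{m}{U} \in M$, establishing the desired containment.

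Formally, the argument runs:
\begin{salign}
t &\in \proj{M \Join N}{U}
& \text{suppose} \\
t &= \recproj{m}{U} \text{ for some } m \in M \Join N
& \text{def.\ $\proj{\cdot}{U}$} \\
\recproj{m}{U} &\in M
& \text{def.\ $\Join$} \\
t &\in M
& \text{subst.}
\end{salign}
There is no real obstacle here; the only subtlety is making sure to name the domain $V$ of $N$ so that the type of $m$ in the join is clear, and to invoke the right half of the natural join definition (the constraint on $\recproj{m}{U}$ rather than $\recproj{m}{V}$). This lemma is the converse direction to the previous ``$\Join$ after $\pi_U \times \pi_V$ increasing'' lemma, and together they will presumably be used to reason about round-tripping properties of the join lens.
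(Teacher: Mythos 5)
Your proof is correct and is essentially identical to the paper's: both take an arbitrary element of $\proj{M \Join N}{U}$, expose a witness $m \in M \Join N$ via the definition of projection, and read off $\recproj{m}{U} \in M$ directly from the definition of natural join. The extra care you take in naming the domain $V$ of $N$ is fine but not needed for the argument.
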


\begin{proof}
   \begin{salign}
      m &\in \proj{M \Join N}{U}
         & \text{suppose ($m$)} \\
      m' &\in M \Join U \wedge \recproj{m'}{U} = m
         & \text{def.~$\proj{\cdot}{U}$; exists $(m')$} \\
      m = \recproj{m'}{U} &\in M
         & \text{def.~$\Join$}
   \end{salign}
\end{proof}

\section{Proofs from \Secref{background}}
\subsection{Equivalence of project lens definitions}
\label{appendix:background}

\begin{theorem}
	The definition of the project lens given in \Secref{background:relational-lenses:project}:
	\[\begin{array}{lrcl}
	&	\getOp(M) &=& \proj{M}{U \setminus A}
		\\
	&	\putOp(M, N) &= &
	\letexpr{M_1}{N \Join \set{\{ A = a \}})}{}
	\\
	&&& \relrevise{M_1}{X \to A}{M}
	\end{array}\]
 	is equivalent to the definition given in \citet{bohannon2006relational}:
		\[\begin{array}{lrcl}
		&	\getOp(M) &=& \proj{M}{U \setminus A}
			\\
		&	\putOp(M, N) &= &
		\letexpr{N_{\opname{new}}}{N \setminus \proj{M}{U \setminus A}}{}
			\\
		&&&
		\letexpr{M_0}{(M \Join N) \cup (N_{\opname{new}} \Join \set{\{ A =
		a \}})}{}
		\\
		&&& \relrevise{M_0}{X \to A}{M}
		\end{array}\]
	\label{lem:project_equivalence}
\end{theorem}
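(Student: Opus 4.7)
The plan is to show $\relrevise{M_1}{X \to A}{M} = \relrevise{M_0}{X \to A}{M}$ by matching up the records of $M_0$ and $M_1$ through their projection onto $U \setminus A$, and observing that record revision by $\{X \to A\}$ depends only on that projection (given the preconditions). The main structural fact I will need is that tree form forces $X \cap \{A\} = \emptyset$, so $X \subseteq U \setminus A$; consequently, $\proj{m}{X}$ is determined by $\proj{m}{U \setminus A}$, and $\newrecrevise{m}{X \to A}{M}$ only modifies the $A$ attribute of $m$.

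First I would establish two bookkeeping facts about the projections. (1) $\proj{M_1}{U \setminus A} = N$, directly from $M_1 = N \Join \set{\set{A = a}}$. (2) $\proj{M_0}{U \setminus A} = N$ as well: records of $M \Join N$ project onto $N \cap \proj{M}{U \setminus A}$, records of $N_{\opname{new}} \Join \set{\set{A=a}}$ project onto $N_{\opname{new}} = N \setminus \proj{M}{U \setminus A}$, and the two pieces union to $N$. Moreover, since $M \models X \to A$ with $X \subseteq U \setminus A$, any two records in $M$ that agree on $U \setminus A$ must also agree on $A$ (hence are equal). So for each $n \in N$ there is a \emph{unique} record in $M_0$ with projection $n$: either the unique $m \in M$ with $\proj{m}{U \setminus A} = n$ (if $n \in \proj{M}{U \setminus A}$), or the extended record $(n, A = a)$ (otherwise); likewise, $M_1$ contains exactly $(n, A=a)$ for each $n \in N$.

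Next I would establish the key lemma: whenever $\proj{m}{U \setminus A} = \proj{m'}{U \setminus A}$, we have $\newrecrevise{m}{X \to A}{M} = \newrecrevise{m'}{X \to A}{M}$. This is because $X \subseteq U \setminus A$ gives $\proj{m}{X} = \proj{m'}{X}$, so the matching test against $M$ succeeds on both or neither; if it succeeds, both records have their $A$ overwritten by the same value (well-defined by the FD in $M$), leaving the $U \setminus A$ portions (which already agreed) unchanged; if it fails, the records are returned untouched, but I must still show they are equal --- here I split cases: for $m_0 \in M \Join N \subseteq M$ the match test always succeeds (take $m_0$ itself as witness), so the ``fails'' case only arises when $m_0$ comes from $N_{\opname{new}} \Join \set{\set{A=a}}$, in which case $m_0 = (n, A = a) = m_1$ coincides with the corresponding record of $M_1$.

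Combining these, the bijection $M_0 \leftrightarrow M_1$ by projection commutes with record revision, so the two revised sets are equal. The only point that requires any care is the ``no-match'' sub-case above, where one must exploit how $M_0$ is built (specifically, that records with no match in $M$ can only come from the $N_{\opname{new}}$ component). Everything else is routine set manipulation together with the tree-form side condition on $X$ and $\{A\}$.
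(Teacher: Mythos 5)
Your proposal is correct and follows essentially the same route as the paper's proof: both match records of $M_0$ and $M_1$ via their $U \setminus A$ projections, split on whether the record comes from $M \Join N$ (where a witness in $M$ always exists and revision overwrites $A$ identically on both sides) or from $N_{\opname{new}} \Join \set{\record{A=a}}$ (where the records literally coincide), and conclude that the revised sets agree. The only cosmetic difference is that the paper argues by two-way element inclusion rather than packaging the correspondence as an explicit bijection, so it never needs your uniqueness observation from $M \vDash X \to A$; your extra step is harmless but not required.
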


\begin{proof}
\def\currentprefix{proof:minimality_proj_special_lem}
\begin{sflalign}
&
M: U\text{ and }N: U \setminus A\text{ with }A \in U\text{ and }X \subseteq U \setminus A
&
\text{
	suppose $(M, U, A, X)$
}
\notag
\\
&
M_0 = (M \Join N) \cup ((N \setminus \proj{M}{U \setminus A}) \Join \set{\record{A = a}})
&
\text{
	define ($M_0$)
}
\locallabel{M_0}
\\
&
M_1 = N \Join \set{\record{A = a}}
&
\text{
	define ($M_1$)
}
\locallabel{M_1}
\\
&
\text{Now show $\relrevise{M_0}{X \to A}{M} = \relrevise{M_1}{X \to A}{M}$.}
\notag
\\
\interSubcaseZero{\implies\text{ direction} \notag}
	&
	\indent
	m \in M_0
	&
	\text{
		suppose ($m$)
	}
	\locallabel{case-one-exists-m}
	\\
	\interSubcaseOne{\text{if }m \in M \Join N \notag}
		&
		\indent\indent
		m \in M
		&
		\text{
			def.~$\Join$
		}
		\notag
		\\
		&
		\indent\indent
		\recproj{m}{U \setminus A} \in N
		&
		\text{
			def.~$\Join$
		}
		\notag
		\\
		&
		\indent\indent
		m' = \recupdate{\recproj{m}{U \setminus A}}{\record{A = a}}
		&
		\text{
			define ($m'$)
		}
		\locallabel{m-prime}
		\\
		&
		\indent\indent
		\recproj{m'}{U \setminus A} = \recproj{m}{U \setminus A}
		&
		\locallabel{m-prime-eq-m}
		\\
		&
		\indent\qedLocal
		m' \in M_1
		&
		\text{
			(\localref{M_0}); def.~$\Join$
		}
		\notag
		\\
		&
		\indent\qedLocal
		\newrecrevise{m'}{X \to A}{M} = \recupdate{m'}{\recproj{m}{A}}
		&
		\text{def.~$\newrecrevise{\cdot}{X \to A}{\cdot}$; $m \in M$}
		\notag
		\\
		&
		\indent\indent
		\phantom{\newrecrevise{m'}{X \to A}{M}} = \recupdate{m}{\recproj{m}{A}}
		&
		\text{(\localref{m-prime-eq-m})}
		\notag
		\\
		&
		\indent\indent
		\phantom{\newrecrevise{m'}{X \to A}{M}} = \newrecrevise{m}{X \to A}{M}
		&
		\text{
			$m \in M$; def.~$\newrecrevise{\cdot}{X \to A}{\cdot}$
		}
		\notag
		\\
	\interSubcaseOne{\text{if }m \in (N \setminus \proj{M}{U \setminus A}) \Join \set{\record{A = a}} \notag}
		&
		\indent\qedLocal
		m \in M_1
		&
		\text{
			$\Join$ monotone; (\localref{case-one-exists-m})
		}
		\notag
		\\
\interSubcaseZero{\impliedby\text{ direction} \notag}
	&
	\indent
	m \in M_1
	&
	\text{
		suppose ($m$)
	}
	\notag
	\\
	&
	\indent
	\recproj{m}{U \setminus A} \in N
	&
	\notag
	\\
	\interSubcaseOne{
		\text{if }\exists m' \in M.\recproj{m'}{U \setminus A} = \recproj{m}{U \setminus A}
		\locallabel{case-two-m-prime-eq-m}
	}
		&
		\indent\indent
		m' \in M \Join N
		&
		\notag
		\\
		&
		\indent\qedLocal
		m' \in M_0
		&
		\text{
			(\localref{M_0})
		}
		\notag
		\\
		&
		\indent\qedLocal
		\newrecrevise{m'}{X \to A}{M} = \recupdate{m'}{\recproj{m'}{A}}
		&
		\text{
			def.~$\newrecrevise{\cdot}{X \to A}{\cdot}$; $m' \in M$
		}
		\notag
		\\
		&
		\indent\indent
		\phantom{\newrecrevise{m'}{X \to A}{M}} = \recupdate{m}{\recproj{m'}{A}}
		&
		\text{
			(\localref{case-two-m-prime-eq-m})
		}
		\notag
		\\
		&
		\indent\indent
		\phantom{\newrecrevise{m'}{X \to A}{M}} = \newrecrevise{m}{X \to A}{M}
		&
		\text{
			$m' \in M$; def.~$\newrecrevise{\cdot}{X \to A}{\cdot}$
		}
		\notag
		\\
	\interSubcaseOne{\text{if }\nexists m' \in M.\recproj{m'}{U \setminus A} = \recproj{m}{U \setminus A} \notag}
		&
		\indent\indent
		\recproj{m}{U \setminus A} \in N \setminus \proj{M}{U \setminus A}
		&
		\notag
		\\
		&
		\indent\indent
		m \in (N \setminus \proj{M}{U \setminus A}) \Join \set{\record{A = a}}
		&
		\notag
		\\
		&
		\indent\qedLocal
		m \in M_0
		&
		\text{
			(\localref{M_0})
		}
		\notag
\end{sflalign}
\end{proof}

\section{Proofs from \Secref{framework}}
\subsection*{Proof of Lemma~\ref{lem:expand_minimal}}

\restatelemexpandminimal*

\begin{proof}
	\def\currentprefix{proof:expandsetdeltaplus}
	\begin{sflalign}
		&
		\Delta M \text{ minimal for }M
		&
		\text{
			suppose ($M$, $\Delta M$)
		}
		\notag
		\\
		&
		\negative{\Delta M} \subseteq M
		&
		\notag
		\\[2mm]
		&
		M \deltaplus \Delta M
		=
		(M, \emptyset) \deltaplus (\positive{\Delta M}, \negative{\Delta M})
		&
		\text{coerce}
		\notag
		\\
		&
		\phantom{M \deltaplus \Delta M}
		=
		( (M \setminus \negative{\Delta M}) \cup (\positive{\Delta M} \setminus \emptyset),
		(\emptyset \setminus \positive{\Delta M}) \cup (\negative{\Delta M} \setminus M) )
		&
		\text{
			def.~$\deltaplus$
		}
		\notag
		\\
		&
		\phantom{M \deltaplus \Delta M}
		=
		( (M \setminus \negative{\Delta M}) \cup \positive{\Delta M},
		 \negative{\Delta M} \setminus M )
		&
		\text{
			simpl.~$\emptyset$
		}
		\notag
		\\
		&
		\phantom{M \deltaplus \Delta M}
		=
		( (M \setminus \negative{\Delta M}) \cup \positive{\Delta M}, \emptyset )
		&
		\text{$\nDelta{M} \subseteq M$}
		\notag
		\\
		&
		\phantom{M \deltaplus \Delta M}
		=
		(M \setminus \negative{\Delta M}) \cup \positive{\Delta M}
		&
		\text{
			uncoerce
		}
		\notag
      \\
		&
		\phantom{M \deltaplus \Delta M}
		=
		(M \cup \positive{\Delta M}) \setminus \negative{\Delta M}
		&
		\text{
			$\positive{\Delta M} \cap \negative{\Delta M} = \emptyset$
		}
		\notag
	\end{sflalign}
\end{proof}

\subsection*{Proof of Lemma~\ref{lem:diff}}

\restatelemdiff*

\begin{proof}
\begin{salign}
M \deltaminus N &= (M,\emptyset) \deltaminus
  (N,\emptyset)
  & \text{coerce} \\
&= (M,\emptyset) \deltaplus
  (\emptyset,N)
  & \text{def.~$\deltaminus$} \\
&= ((M-N)\cup(\emptyset-\emptyset), (\emptyset-\emptyset)\cup(N-M))
  & \text{def.~$\deltaplus$} \\
&= (M-N, N-M)
\end{salign}

Moreover:
	\begin{sflalign}
		& M \setminus N \subseteq M
		& \text{ $\cdot \setminus N$ decreasing }
      \notag
      \\
		& (N \setminus M) \cap M = \emptyset
      & \text{ \reflemma{setminus_cap_empty} }
      \notag
      \\
		& (M \setminus N, N \setminus M) \text{ minimal for } M
		& \text{def.~minimal}
      \notag
	\end{sflalign}
\end{proof}

\subsection*{Proof of Lemma~\ref{lem:delta-basics-positive}}
\restatelemdeltabasicspositive*

\begin{proof}
	\begin{salign}
		(M \deltaapp \Delta M) \setminus M
		&= ((M \setminus \nDelta M) \cup \pDelta M) \setminus M
		& \text{\reflemma{expand_minimal}}
		\\
		&= ((M \setminus \nDelta M) \setminus M) \cup (\pDelta M \setminus M)
		& \text{$\setminus$/$\cup$ distr.}
		\\
		&= \pDelta M
		& \text{simpl.; $\pDelta M \cap M = \emptyset$}
	\end{salign}
\end{proof}

\subsection*{Proof of Lemma~\ref{lem:delta-basics-negative}}
\restatelemdeltabasicsnegative*

\begin{proof}
	\begin{salign}
		(M \cap (M \deltaapp \Delta M)) \deltaminus M
		&= (M \cap ((M \setminus \nDelta M) \cup \pDelta M)) \deltaminus M
		& \text{\reflemma{expand_minimal}}
		\\
		&= (M \cap (M \setminus \nDelta M)) \deltaminus M
		& \text{$M \cap \pDelta M = \emptyset$}
		\\
		&= (M \setminus \nDelta M) \deltaminus M
		& \text{$\nDelta M \subseteq M$}
		\\
		&= ((M \setminus \nDelta M) \setminus M, M \setminus (M \setminus \nDelta M))
		& \text{\reflemma{diff}}
		\\
		&= (\emptyset, \nDelta M)
		& \text{simpl.; $\nDelta M \subseteq M$}
		\\
		&= \deltaminus(\nDelta M, \emptyset)
		& \text{def.~$\deltaminus$}
		\\
		&= \deltaminus\nDelta M
		& \text{uncoerce}
	\end{salign}
In particular this implies that $\nDelta{M} = M \setminus ((M \cap (M \deltaapp
\Delta M))) = (M \setminus M) \cup (M \setminus (M \deltaapp \Delta
M)) =M \setminus (M \deltaapp \Delta M)$.
\end{proof}

\subsection*{Proof of Lemma~\ref{lem:correctness-delta-ops}}
\restatelemcorrectnessdeltaops*

To simplify notation, we abbreviate $M' = M \deltaplus \Delta M$ and $N' = N
\deltaplus \Delta N$. Recall that the positive and negative parts of a delta are
always disjoint; we freely use the fact that $(X \cup Y) - Z = (X - Z) \cup Y$
when $Y$ and $Z$ are disjoint.

The proof of parts (1)-(4) all follow a similar pattern: we first find
expressions $\pDelta{N},\nDelta{N}$ for the components of a minimal relational
delta such that $\opname{op}(M \deltaapp \Delta M) = \opname{op}(M) \deltaapp
\Delta N$.  By Lemma~\ref{lem:delta-correct-unique} it then follows that
$\deltaOp{op}(M,\Delta M) = \opname{op}(M \deltaplus \Delta M) \deltaminus
\opname{op}(M) = (\opname{op}(M) \deltaapp \Delta N) \deltaminus  \opname{op}(M)
=\Delta N$.

\subsubsection*{Proof of part (1)}

\begin{salign}
    \select{P}{M'}
&= \select{P}{(M \cup \pDelta{M}) \setminus \nDelta{M}}
\\
&= \select{P}{M \cup \pDelta{M}} \setminus \select{P}{\nDelta{M}}
\\
&= (\select{P}{M}   \cup \select{P}{\pDelta{M}})\setminus \select{P}{\nDelta{M}}
\\
&= \select{P}{M} \deltaplus (\select{P}{\pDelta{M}},\select{P}{\nDelta{M}})
\end{salign}

\noindent where clearly $\select{P}{\pDelta{M}} \cap \select{P}{\nDelta{M}} =
\emptyset$.  Moreover, $\select{P}{\pDelta{M}} \cap \select{P}{M} =
\emptyset$ by minimality of $\Delta M$, and likewise
$\select{P}{\nDelta{M}} \subseteq \select{P}{M}$ by minimality and
monotonicity of selection.

\subsubsection*{Proof of part (2)}
First we observe that
\begin{salign}
\proj{M\cup N}{U} &= \proj{M}{U} \cup \proj{N}{U}
\\
\proj{M-N}{U} &= \proj{M}{U} - (\proj{N}{U} - \proj{M-N}{U})
\end{salign}
Now we proceed as follows:
\begin{salign}
     \proj{M'}{U}
&=
   \proj{(M \cup \pDelta{M}) \setminus \nDelta{M}}{U}
\\
   &=
   (\proj{M \cup \pDelta{M}}{U})
   \setminus
   (\proj{\nDelta{M}}{U} \setminus \proj{M'}{U})
  \\
   &=
   (\proj{\pDelta{M}}{U} \cup \proj{M}{U})
   \setminus
   (\proj{\nDelta{M}}{U} \setminus \proj{M'}{U})
   \\
   &=
   ((\proj{\pDelta{M}}{U} \cup \proj{M}{U})\;\setminus\;(\proj{M}{U} \setminus \proj{M}{U}))
   \setminus
   (\proj{\nDelta{M}}{U} \setminus \proj{M'}{U})
   \\
   &=
   (\proj{M}{U} \cup (\proj{\pDelta{M}}{U} \setminus \proj{M}{U}))
   \setminus
   (\proj{\nDelta{M}}{U} \setminus \proj{M'}{U})
\\
   &=
   \proj{M}{U} \deltaapp
   (\proj{\pDelta{M}}{U} \setminus \proj{M}{U},
    \proj{\nDelta{M}}{U} \setminus \proj{M'}{U})
\end{salign}
\noindent where line 4 follows from the identity $X \cup (Y \setminus Z) = (X
\cup Y) \setminus (Z \setminus X)$, line 2 from the first observation above and line
1 by the second observation.

To establish minimality, clearly $(\proj{\pDelta{M}}{U} \setminus \proj{M}{U})
\cap \proj{M}{U} = \emptyset$, while $\proj{\nDelta{M}}{U} \setminus
\proj{M'}{U} \subseteq \proj{M}{U}$ by monotonicity of projection
since $\nDelta{M} \subseteq M$.

\subsubsection*{Proof of part (3)}
We need to prove:
\[M ' \Join N' = (M \Join N   \cup
    (M' \Join \pDelta{N} \cup
    \pDelta{M} \Join N')) \setminus (M \Join \nDelta{N} \cup \nDelta{M} \Join N)\]

We first consider the special cases where $\nDelta{M} = \nDelta{N} =
\emptyset$ and $\pDelta{M} = \pDelta{N} = \emptyset$.
In the first case we have:
\begin{salign}
 (M \cup \pDelta{M}) \Join (N \cup \pDelta{N})
&= M \Join N \cup (\pDelta{M} \Join N \cup M \Join \pDelta{N} \cup \pDelta{M}
    \Join \pDelta{N})\\
&= M \Join N \cup (\pDelta{M} \Join N \cup \pDelta{M}
    \Join \pDelta{N} \cup M \Join \pDelta{N} \cup \pDelta{M}
    \Join \pDelta{N})\\
&= M \Join N \cup ((M \cup \pDelta{M}) \Join \pDelta{N} \cup
    \pDelta{M} \Join (N \cup \pDelta{N}))
\end{salign}
In the second case we have:
\begin{salign}
  (M \setminus \nDelta{M}) \Join (N \setminus \nDelta{N})
&=
M \Join  (N \setminus \nDelta{N}) \setminus \nDelta{M} \Join  (N
    \setminus \nDelta{N})
\\
&=
M \Join  N \setminus M \Join \nDelta{N} \setminus \nDelta{M} \Join  (N
    \setminus \nDelta{N})
\\
&=
M \Join  N \setminus (M \Join \nDelta{N} \cup \nDelta{M} \Join  (N
    \setminus \nDelta{N}))
\\
&=
M \Join  N \setminus (M \Join \nDelta{N} \cup (\nDelta{M} \Join  N
    \setminus \nDelta{M} \Join \nDelta{N}))
\\
&=
M \Join  N \setminus ((M \Join \nDelta{N} \cup \nDelta{M} \Join  N)
    \setminus (\nDelta{M} \Join \nDelta{N} \setminus M \Join \nDelta{N}))
\\
&=
M \Join  N \setminus ((M \Join \nDelta{N} \cup \nDelta{M} \Join  N)
    \setminus (\nDelta{M} \setminus M) \Join \nDelta{N})
\\
&=
M \Join  N \setminus (M \Join \nDelta{N} \cup \nDelta{M} \Join  N)
    \cup (M \Join N \cap (\nDelta{M} \setminus M) \Join \nDelta{N})
\\
&=
M \Join  N \setminus (M \Join \nDelta{N} \cup \nDelta{M} \Join  N)
\end{salign}
where the final step follows because $M $ and $\nDelta{M} \setminus M$
are disjoint, and so $M \Join N \cap (\nDelta{M} \setminus M) \Join
\nDelta{N} = \emptyset$.

We now proceed as follows:
\begin{salign}
  M' \Join N'
&= ((M \setminus \nDelta{M}) \cup \pDelta{M}) \Join ((N
                  \setminus \nDelta{N}) \cup \pDelta{N})
\\
&= (M \setminus \nDelta{M}) \Join (N \setminus \nDelta{N})  \\
&{}\cup
    (((M\setminus \nDelta{M}) \cup \pDelta{M}) \Join \pDelta{N} \cup
    \pDelta{M} \Join ((N \setminus \nDelta{N}) \cup \pDelta{N}))
\\
&= (M \setminus \nDelta{M}) \Join (N \setminus \nDelta{N})  \cup
    (M' \Join \pDelta{N} \cup
    \pDelta{M} \Join N')
\\
&= (M \Join N  \setminus (M \Join \nDelta{N} \cup \nDelta{M} \Join N))  \cup
    (M' \Join \pDelta{N} \cup
    \pDelta{M} \Join N')
\\
&= (M \Join N   \cup
    (M' \Join \pDelta{N} \cup
    \pDelta{M} \Join N')) \setminus (M \Join \nDelta{N} \cup \nDelta{M} \Join N)
\end{salign}
where in the last step we use the disjointness of $\nDelta{N}$ with
$N'$ and $\pDelta{N}$, and likewise of $\nDelta{M}$ with $M'$ and
$\pDelta{M}$,  and the fact that $X,Y$ disjoint implies $X \Join Z$
and $Y \Join W$ disjoint.  Clearly, the positive and negative deltas
resulting in this case are disjoint.  Minimality follows since $M'
\Join \pDelta{N} \cap M \Join N = \emptyset = \pDelta{M} \Join N'
\cap M \Join N$
by minimality of $\Delta M, \Delta N$ and the fact that $X \cap Y =
\emptyset$ implies $X \Join Z \cap Y \Join Z = \emptyset$, and
likewise $M \Join \nDelta{N} \subseteq M \Join N \supseteq
\nDelta{M} \Join N$ by minimality and monotonicity of $\Join$.

\subsubsection*{Proof of part (4)}

\begin{salign}
  \rename{A/B}{M'} & = \rename{A/B}{(M \cup
                                            \pDelta{M}) \setminus
                                            \nDelta{ M}}
\\
 & = (\rename{A/B}{M} \cup
                                            \rename{A/B}{\pDelta{M}}) \setminus
                                            \rename{A/B}{\nDelta{ M}}
\\
&=     \rename{A/B}{M} \deltaplus(\rename{A/B}{\Delta M^+},
    \rename{A/B}{\Delta M^-})
\end{salign}
Clearly, $\rename{A/B}{\pDelta{M}}$ and $\rename{A/B}{\nDelta{ M}}$
are disjoint, and minimal since $\rename{A/B}{M} \cap
\rename{A/B}{\pDelta{M}} = \emptyset$ and $ \rename{A/B}{\nDelta{ M}}
\subseteq \rename{A/B}{M}$.

\subsubsection*{Proof of part (5)}

\begin{lemma}
	If $N \cap O = \emptyset$, $N' \subseteq N$ and $M \subseteq N$, then $M
	\subseteq N' \cup O$ implies $M \subseteq N'$. \label{lem:disjoint_subs}
\end{lemma}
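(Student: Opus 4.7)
The plan is to give a short set-algebraic argument using the basic identities listed earlier in the appendix. The key observation is that the hypothesis $M \subseteq N$ lets us intersect both sides of $M \subseteq N' \cup O$ with $N$ without losing any elements of $M$, and then distributivity together with $N \cap O = \emptyset$ collapses the right-hand side to $N'$.

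Concretely, I would argue as follows. First, since $M \subseteq N$, we have $M = M \cap N$ (using ``$\cap$ and $\cup$ induce $\subseteq$''). From the assumption $M \subseteq N' \cup O$ and monotonicity of $\cap$, we get $M = M \cap N \subseteq (N' \cup O) \cap N$. Applying the distributive law ``$\cap$ distributes over $\cup$'' yields $(N' \cup O) \cap N = (N' \cap N) \cup (O \cap N)$. The first summand simplifies to $N'$ since $N' \subseteq N$, and the second to $\emptyset$ by hypothesis $N \cap O = \emptyset$ (together with commutativity of $\cap$). Combining these, $M \subseteq N' \cup \emptyset = N'$, as required.

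A purely element-wise alternative also works and may even be shorter: pick any $m \in M$; then $m \in N$ by $M \subseteq N$, and $m \in N' \cup O$ by the main assumption, so $m \in N'$ or $m \in O$; the latter is impossible since it would give $m \in N \cap O = \emptyset$, so $m \in N'$.

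There is no real obstacle here — the statement is a direct consequence of distributivity and the disjointness hypothesis — so the only ``care'' needed is to invoke the earlier-stated lemma names consistently (``$\cap$ distributes over $\cup$'', ``$\cap$ monotone'', and the characterisation of $\subseteq$ in terms of $\cap$) rather than re-deriving them.
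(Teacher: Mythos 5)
Your argument is correct and follows essentially the same route as the paper's own proof: intersect $M \subseteq N' \cup O$ with $N$, distribute $\cap$ over $\cup$, kill the $O \cap N$ term by disjointness, and reduce $N' \cap N$ to $N'$ using $N' \subseteq N$. The element-wise variant you mention is an equivalent rephrasing, so there is nothing further to add.
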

\begin{proof}
	\begin{salign}
		M &\subseteq N' \cup O
			& \\
		M \cap N &\subseteq (N' \cup O) \cap N
			& \text{$\cap$ monotone} \\
		&\subseteq (N' \cap N) \cup (O \cap N)
			& \text{$\cup$ distributes over $\cap$} \\
		&\subseteq N' \cap N
			& \text{$N \cap O = \emptyset$} \\
		&\subseteq N'
			& \text{$N' \subseteq N$; $\cap$ induces $\subseteq$} \\
		M &\subseteq N'
			& \text{$M \subseteq N$; $\cap$ induces $\subseteq$}
	\end{salign}
\end{proof}

We want to show:
\begin{quote}
	If $N \subseteq M$ and $N' \subseteq M'$ then $(M, \Delta M) \deltasetminus (N, \Delta N) = \Delta M \deltaminus \Delta N$.
\end{quote}
\begin{proof}
	\def\currentprefix{proof:deltaminus_opt}
	\begin{sflalign}
		& \minimal{\Delta N}{N}					& \text{ suppose ($\Delta N, N$) } \notag
		\\ & \negDN \subseteq N					& \text{ def.~minimal }
			\locallabel{negnsubset}
		\\ & \posDN \cap N = \emptyset			& \text{ def.~minimal }
			\locallabel{disjointnplusn}
		\\[2mm] & \minimal{\Delta M}{M}		& \text{ suppose ($\Delta M, M$) } \notag
		\\ & \negDM \subseteq M					& \text{ def.~minimal }
			\locallabel{negmsubset}
		\\ & \posDM \cap M = \emptyset			& \text{ def.~minimal }
			\locallabel{disjointmplusm}
		\\[2mm] & N \subseteq M					& \text{ suppose } \locallabel{nsubm}
		\\ & \posDM \cap N = \emptyset
			& \text{ $\cap$ monotone (\localref{disjointmplusm}) } \locallabel{disjointmplusn}
		\\[2mm]
		& \Delta O = (M, \Delta M) \deltasetminus (N, \Delta N)
			& \text{ suppose ($\Delta O$) } \notag
		\\
		&
		\phantom{\Delta O}
		=
		(M' \setminus N') \deltaminus (M \setminus N)
		&
  		\text{delta-correctness}
		\notag
		\\
		&
		\phantom{\Delta O}
		=
		((M' \setminus N') \setminus (M \setminus N), (M \setminus N) \setminus (M' \setminus N'))
		&
		\text{
			\reflemma{diff}
		}
		\locallabel{defdo}
		\\[2mm] & N' \subseteq M'		& \text{suppose} \notag
		\\ & \uline{(N \setminus \negative{\Delta N})} \cup \positive{\Delta N} \subseteq (M \setminus \negative{\Delta M}) \cup \positive{\Delta M}
			& \text{ \reflemma{expand_minimal} } \notag
		\\ & \uline{(N \setminus \negative{\Delta N})} \subseteq \uline{(\uline{M} \setminus \negative{\Delta M})} \cup \uline{\positive{\Delta M}}
			& \text{ $\cdot \cup \positive{\Delta N}$ incr.; trans.} \locallabel{expupdsubset}
		\\ & \uline{N} \setminus \uline{\negative{\Delta N}} \subseteq \uline{M \setminus \negative{\Delta M}}
			& \text{\reflemma{disjoint_subs} (\localref{nsubm}, \localref{disjointmplusm})} \notag
		\\ & \uline{N} \subseteq \uline{(M \setminus \negative{\Delta M}) \cup \negative{\Delta N}}
			& \text{\reflemma{set_cup_to_minus}} \notag
		\\ & N \cap \negative{\Delta M} \subseteq ( \uline{(M \setminus \negative{\Delta M})} \cup \uline{\negative{\Delta N}}) \cap \uline{\negative{\Delta M}}
			& \text{$\cap$ monotone} \notag
		\\ & N \cap \negative{\Delta M} \subseteq ( (\uline{M} \setminus \uline{\negative{\Delta M}}) \cap \uline{\negative{\Delta M}}) \cup (\negative{\Delta N} \cap \negative{\Delta M})
			& \text{distr.} \notag
		\\ & N \cap \negative{\Delta M} \subseteq \negative{\Delta N} \cap \negative{\Delta M}
			& \text{\reflemma{setminus_cap_empty}; simpl.~$\emptyset$} \locallabel{ncapnegdm_subs}
		\\[2mm] & \negDN \cap \negDM \subseteq N \cap \negDM			& \text{$\cap$ monotone (\localref{negnsubset})} \notag
		\\ & \negDN \cap \negDM = N \cap \negDM				& \text{(\localref{ncapnegdm_subs}); antisym.} \locallabel{negdncapnegdm_equiv}
		\\[2mm] & \posDN \subseteq (M - \negDM) \cup \posDM				& \text{ (\localref{expupdsubset}); $\cup$ LUB; trans. } \locallabel{posdnsubs}
		\\[2mm] & N \setminus \negDN \subseteq M						& \text{ $\cdot \setminus \negDN$ decr.; (\localref{nsubm}); trans. } \notag
		\\ & M \cap (N \setminus \negDN) = N \setminus \negDN	& \text{ $\cap$ induces $\subseteq$ } \locallabel{mcapnetc}
		\\[2mm] & M \cap \negDM = \negDM								& \text{ $\cap$ induces $\subseteq$ (\localref{negmsubset}) } \locallabel{negdmcapm}
		\\[2mm] & \positive{\Delta O} =
			(\uline{M'} \setminus \uline{N'}) \setminus \uline{(M \setminus N)}
			& \text{(\localref{defdo})}\notag
		\\ & \phantom{\positive{\Delta O}} =
			\uline{M'} \setminus ( \uline{N'} \cup (M \setminus N))
			& \text{ \reflemma{minus_minus_cup} } \notag
		\\ & \phantom{\positive{\Delta O}} =
			(\uline{(M \setminus \negative{\Delta M})} \cup \uline{\positive{\Delta M}}) \setminus
			\uline{( (N \setminus \negative{\Delta N}) \cup \positive{\Delta N} \cup (M \setminus N) )}
			& \text{ \reflemma{expand_minimal}	} \notag
		\\ & \phantom{\positive{\Delta O}} =
			((M \setminus \negative{\Delta M}) \setminus ((N \setminus \negative{\Delta N}) \cup \posDN \cup (M \setminus N) ))\;\cup
			& \notag
		\\ & \phantom{\positive{\Delta O} =\;}
		   (\positive{\Delta M} \setminus (\uline{(N \setminus \negative{\Delta N})} \cup \posDN \cup \uline{(M \setminus N)} ))
			& \text{ $\setminus$/$\cup$ distr.} \notag
		\\ & \phantom{\positive{\Delta O}} =
			((\uline{M \setminus \negative{\Delta M}}) \setminus (\uline{(N \setminus \negative{\Delta N}) \cup \posDN} \cup \uline{(M \setminus N)} )) \cup
			(\positive{\Delta M} \setminus \posDN)
			& \text{ (\localref{disjointmplusn}, \localref{disjointmplusm}) } \notag
		\\ & \phantom{\positive{\Delta O}} =
			(((\uline{M} \setminus \uline{\negative{\Delta M}}) \setminus (M \setminus \uline{N})) \setminus ((N \setminus \negative{\Delta N}) \cup \posDN))\;\cup
			\notag
		\\ & \phantom{\positive{\Delta O} =\;}
			(\positive{\Delta M} \setminus \posDN)
			& \text{ $\cup$ comm.; \reflemma{minus_minus_cup} } \notag
		\\ & \phantom{\positive{\Delta O}} =
			(\uline{(N \setminus \negative{\Delta M})} \setminus (\uline{(N \setminus \negative{\Delta N})} \cup \uline{\posDN})) \cup
			(\positive{\Delta M} \setminus \posDN)
			& \text{ \reflemma{minus_minus}; $N \subseteq M$ } \notag
		\\ & \phantom{\positive{\Delta O}} =
			(((\uline{N} \setminus \uline{\negative{\Delta M}}) \setminus (N \setminus \uline{\negative{\Delta N}})) \setminus \posDN) \cup
			(\positive{\Delta M} \setminus \posDN)
			& \text{ \reflemma{minus_minus_cup} } \notag
		\\ & \phantom{\positive{\Delta O}} =
			(\uline{(\negative{\Delta N} \setminus \negative{\Delta M})} \setminus \uline{\posDN}) \cup (\positive{\Delta M} \setminus \posDN)
			& \text{ \reflemma{minus_minus}; $\negative{\Delta N} \subseteq N$ } \notag
		\\ & \phantom{\positive{\Delta O}} =
			(\negative{\Delta N} \setminus \negative{\Delta M}) \cup (\positive{\Delta M} \setminus \posDN)
			& \text{ $\nDelta N \cap \posDN = \emptyset$ } \locallabel{defdoplus}
		\\[2mm]	& \negative{\Delta O} =
			(M \setminus N) \setminus (\uline{M'} \setminus N')
		& \text{(\localref{defdo})} \notag
		\\ & \phantom{\negative{\Delta O}} =
 			(M \setminus N) \setminus ( ( \uline{(M \setminus \negative{\Delta M})} \cup \uline{\positive{\Delta M}}) \setminus \uline{N'} )
			& \text{\reflemma{expand_minimal}} \notag
		\\ & \phantom{\negative{\Delta O}} =
			(M \setminus N)\;\setminus (( (\uline{M} \setminus \uline{\negative{\Delta M}}) \setminus \uline{( N')} )
				\cup (\positive{\Delta M} \setminus N' ))
			& \text{$\setminus$/$\cup$ distr.} \notag
		\\ & \phantom{\negative{\Delta O}} =
			(M \setminus N) \setminus ((M \setminus (N' \cup \negative{\Delta M}))
				\cup (\positive{\Delta M} \setminus ( \uline{N} \deltaplus \uline{\Delta N})))
			& \text{\reflemma{minus_minus_cup}; $\cup$ comm.}\notag
		\\ & \phantom{\negative{\Delta O}} =
			(M \setminus N)\;\setminus \notag
		\\ & \phantom{\negative{\Delta O} =}
			\;\;((M \setminus (N' \cup \negative{\Delta M})) \cup
			(\uline{\positive{\Delta M}} \setminus ( \uline{(N \setminus \negative{\Delta N})} \cup \uline{\positive{\Delta N}})))
			& \text{ \reflemma{expand_minimal} }\notag
		\\ & \phantom{\negative{\Delta O}} =
			(\uline{M} \setminus \uline{N}) \setminus \uline{( ( M \setminus ( N' \cup \negative{\Delta M}))
			\cup (\positive{\Delta M} \setminus \positive{\Delta N}) )}
			& \text{(\localref{disjointmplusn}); \reflemma{discard_disjoint}}  \notag
		\\ & \phantom{\negative{\Delta O}} =
			\uline{M} \setminus ( \uline{( M \setminus (N' \cup \negative{\Delta M}))}
			\cup \uline{(\positive{\Delta M} \setminus \positive{\Delta N}) \cup N} )
			& \text{\reflemma{minus_minus_cup}} \notag
		\\ & \phantom{\negative{\Delta O}} =
			(\uline{M} \setminus ( M \setminus \uline{(N' \cup \negative{\Delta M})}))
			\setminus \uline{((\positive{\Delta M} \setminus \positive{\Delta N}) \cup N )}
			& \text{\reflemma{minus_minus_cup}} \notag
		\\ & \phantom{\negative{\Delta O}} =
			(\uline{M} \cap (N' \cup \uline{\negative{\Delta M}})) \setminus (
			(\positive{\Delta M} \setminus \positive{\Delta N}) \cup N )
			& \text{ $\cap$ in terms of $\setminus$ } \notag
		\\ & \phantom{\negative{\Delta O}} =
			(\uline{M} \cap (\uline{(N \setminus \negative{\Delta N})} \cup \uline{\positive{\Delta N}} \cup \uline{\negative{\Delta M}})) \setminus (
			(\positive{\Delta M} \setminus \positive{\Delta N}) \cup N )
			& \text{ \reflemma{expand_minimal}} \notag
		\\ & \phantom{\negative{\Delta O}} =
			(\uline{(N \setminus \negative{\Delta N})} \cup \uline{(M \cap \positive{\Delta N}) \cup \negative{\Delta M}}) \setminus
			\uline{((\positive{\Delta M} \setminus \positive{\Delta N}) \cup N )}
			& \text{distr.; (\localref{mcapnetc}, \localref{negdmcapm})} \notag
		\\ & \phantom{\negative{\Delta O}} =
			(\uline{(M \cap \positive{\Delta N})} \cup \uline{\negative{\Delta M}}) \setminus
			\uline{((\positive{\Delta M} \setminus \positive{\Delta N}) \cup N )}
			& \text{ \reflemma{absorb_setminus} } \notag
		\\ & \phantom{\negative{\Delta O}} =
			( \uline{(M \cap \posDN)} \setminus (\uline{(\posDM \setminus \posDN)} \cup \uline{N}) )\;\cup \notag
		\\ & \phantom{\negative{\Delta O} =}
			\;(\uline{\negative{\Delta M}} \setminus (\uline{(\posDM \setminus \posDN)} \cup \uline{N}))
		& \text{ $\setminus$/$\cup$ distr. } \notag
		\\ & \phantom{\negative{\Delta O}} =
		( \uline{M} \cap \uline{\posDN}) \cup (\negative{\Delta M} \setminus N)
			& \text{ (\localref{disjointnplusn}, \localref{disjointmplusm}); \reflemma{discard_disjoint} } \notag
		\\ & \phantom{\negative{\Delta O}} =
			(\posDN \setminus \posDM) \cup \uline{(\negative{\Delta M} \setminus N)}
			& \text{ \reflemma{subset_disjoint}; (\localref{posdnsubs})} \notag
		\\ & \phantom{\negative{\Delta O}} =
			(\posDN \setminus \posDM) \cup (\negDM \setminus \negDN)
			& \text{\reflemma{minus_similar} (\localref{negdncapnegdm_equiv})}
			\locallabel{defdominus}
		\\[2mm] & \Delta O = (\posDM, \negDM) \deltaplus (\negDN, \posDN)		& \text{ def.~$\cdot \deltaplus \cdot$ (\localref{defdoplus}, \localref{defdominus}) } \notag
		\\ & \phantom{\Delta O} =\Delta M \deltaminus \Delta N							& \text{ def.~$\cdot \deltaminus \cdot$ } \notag
	\end{sflalign}

\end{proof}

\subsection*{Proof of Theorem~\ref{thm:compositional}}

\restatethmcompositional*
\begin{proof}
  By induction on the structure of $q$.  First observe that in any case
  $\Dagger{q}$ is delta-correct with respect to $q$ if and only if
  $\Derive{q}$ is. Thus, we show that $\Derive{q}$ is delta-correct by
  induction, and the reasoning for $\Dagger{q}$ is similar.
  \begin{itemize}
  \item If $q = M$, a constant relation, then $\Derive{q} = \emptyset$
    which is delta-correct with respect to $q$
    since $q(R_1\deltaapp \Delta R_1,\ldots,R_n \deltaplus \Delta R_n) = M
    = q(R_1,\ldots,R_n) \deltaapp \emptyset$.  Minimality is obviously preserved.
\item If $q = R_i$, a relation reference, then $\Derive{R_i} = \Delta R_i$ is delta-correct with respect to $q$ since
  $q(R_1\deltaapp \Delta R_1,\ldots,R_n \deltaplus \Delta R_n) = R_i
  \deltaapp \Delta R_i
    = q(R_1,\ldots,R_n) \deltaapp \Delta R_i$.  Minimality is
    obviously preserved.
\item If $q = \opname{op}(q_1,\ldots,q_n)$ then the desired result
  follows from the definition of $\deltaOp{op}$ (which are
  delta-correct by construction), the induction hypothesis applied to
  the subexpressions $q_i$, and finally
  Lemma~\ref{lem:delta-correct-composition}.
\item If $q = \letexpr{R}{q_1}{q_2}$, then the translation is
\[\letexpr{(R,\Delta
                                   R)}{\Dagger{q_1}}{\Derive{q_2(R)}}\]
where $q_2$ has an additional parameter $R$, so $\Dagger{q_2}$
will have an additional pair of parameters $(R,\Delta R)$.  By induction both
$\Dagger{q_1}$ and
$\Dagger{q_2}(R,\Delta R) $ are delta-correct with respect to $q_1$ and $q_2$
respectively.  We reason as follows:
\begin{eqnarray*}
&&\letexpr{R'}{q_1(R_1 \deltaapp \Delta
  R_1,\ldots,R_n\deltaplus \Delta R_n)}{q_2(R',R_1 \deltaapp \Delta
  R_1,\ldots,R_n \deltaapp \Delta R_n)}
\\
&=&
\letexpr{R'}{q_1(R_1,\ldots,R_n)\deltaplus \Derive{q}((R_1,\Delta
    R_1),\ldots,(R_n,\Delta R_n))}{q_2(R',R_1 \deltaapp \Delta
  R_1,\ldots,R_n \deltaapp \Delta R_n)}
\\
&=&
\letexpr{(R',\Delta R')}{\Dagger{q_1}((R_1,\Delta
    R_1),\ldots,(R_n,\Delta R_n))}{q_2(R' \deltaapp\Delta R',R_1 \deltaapp \Delta
  R_1,\ldots,R_n \deltaapp \Delta R_n)}
\\
&=&
\letexpr{(R',\Delta R')}{\Dagger{q}_1((R_1,\Delta
    R_1),\ldots,(R_n,\Delta R_n))}{}
\\
&&q_2(R',R_1,\ldots,R_n) \deltaapp \Derive{q_2}((R' ,\Delta R'),(R_1 , \Delta
  R_1),\ldots,(R_n , \Delta R_n))
\\
&=&
(\letexpr{(R',\Delta R')}{\Dagger{q_1}((R_1,\Delta
    R_1),\ldots,(R_n,\Delta R_n))}{q_2(R',R_1,\ldots,R_n)})
\\
&&{}\deltaapp (\letexpr{(R',\Delta R')}{\Dagger{q_1}((R_1,\Delta
    R_1),\ldots,(R_n,\Delta R_n))}{ \Derive{q_2}((R' ,\Delta R'),(R_1 , \Delta
  R_1),\ldots,(R_n , \Delta R_n))})
\\
&=&
(\letexpr{R'}{q_1(R_1,\ldots,R_n)}{q_2(R',R_1,\ldots,R_n)}) \deltaapp \Derive{\letexpr{R'}{q_1(R_1,\ldots,R_n)}{q_2(R',R_1,\ldots,R_n)}}
\end{eqnarray*}
Moreover, minimality is preserved by $\Derive{q_1}$, so $\Delta R'$ is
minimal, which together with the minimality of other deltas implies
that the delta computed by $\Derive{\letexpr{R'}{q_1}{q_2(R')}}$  is
  also minimal.  This shows that $\Derive{\letexpr{R'}{q_1}{q_2(R')}}$ is
  delta-correct.
  \end{itemize}
\end{proof}

\subsection*{Proof of Lemma~\ref{lem:delta_rel_rev}}

To prove this property, we first observe that $\relrevise{M}{F}{N}$
can be written as $\opmap{f}{M} = \{f(m) \mid m \in M\}$ where $f =
\newrecrevise{\cdot}{F}{N}$.

\begin{lemma}
  \label{lem:delta_map}
Suppose $\Delta M$ is minimal with respect to $M$.  Then
\[\dopmap{f}{M,\Delta M} = (\opmap{f}{\pDelta M}
\setminus \opmap{f}{M},
\opmap{f}{\nDelta M} \setminus \opmap{f}{M \deltaplus \Delta M})\]
\end{lemma}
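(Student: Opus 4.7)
The plan is to use uniqueness of delta-correct operations (Lemma~\ref{lem:delta-correct-unique}) together with the expansion of a minimal delta applied to $M$ (Lemma~\ref{lem:expand_minimal}) and the fact that $\opname{map}_f$ distributes over unions. By Lemma~\ref{lem:delta-correct-unique}, it suffices to establish the equality
\[
\opmap{f}{M \deltaplus \Delta M} \deltaminus \opmap{f}{M} \;=\; (\opmap{f}{\pDelta M} \setminus \opmap{f}{M},\ \opmap{f}{\nDelta M} \setminus \opmap{f}{M \deltaplus \Delta M}),
\]
since the right-hand side clearly yields a delta that is minimal for $\opmap{f}{M}$ (the positive component is disjoint from $\opmap{f}{M}$ by construction, and the negative component is a subset of $\opmap{f}{\nDelta M} \subseteq \opmap{f}{M}$ using $\nDelta M \subseteq M$ and monotonicity of $\opname{map}_f$).

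For the equality itself, I would apply Lemma~\ref{lem:diff} to rewrite the left-hand side as the pair $(\opmap{f}{M \deltaplus \Delta M} \setminus \opmap{f}{M},\ \opmap{f}{M} \setminus \opmap{f}{M \deltaplus \Delta M})$, then treat each component. Using Lemma~\ref{lem:expand_minimal}, $M \deltaplus \Delta M = (M \setminus \nDelta M) \cup \pDelta M$, so by the obvious identity $\opmap{f}{A \cup B} = \opmap{f}{A} \cup \opmap{f}{B}$ we obtain
\[
\opmap{f}{M \deltaplus \Delta M} = \opmap{f}{M \setminus \nDelta M} \cup \opmap{f}{\pDelta M},
\]
and from $M \setminus \nDelta M \subseteq M$, monotonicity of $\opname{map}_f$ gives $\opmap{f}{M \setminus \nDelta M} \subseteq \opmap{f}{M}$, so this term is absorbed when subtracting $\opmap{f}{M}$, leaving $\opmap{f}{\pDelta M} \setminus \opmap{f}{M}$ as required. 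For the negative component I would do the symmetric manipulation: decompose $M = (M \setminus \nDelta M) \cup \nDelta M$, distribute $\opname{map}_f$, observe $M \setminus \nDelta M \subseteq M \deltaplus \Delta M$ so that $\opmap{f}{M \setminus \nDelta M} \subseteq \opmap{f}{M \deltaplus \Delta M}$ is absorbed when subtracting $\opmap{f}{M \deltaplus \Delta M}$, and conclude $\opmap{f}{\nDelta M} \setminus \opmap{f}{M \deltaplus \Delta M}$.

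The main subtlety is that $f$ need not be injective, so one cannot naively move $\opname{map}_f$ across set difference. The argument above avoids this by only ever using the two safe facts: that $\opname{map}_f$ distributes over union, and that $\opname{map}_f$ is monotone. Both of the absorbing inclusions $\opmap{f}{M \setminus \nDelta M} \subseteq \opmap{f}{M}$ and $\opmap{f}{M \setminus \nDelta M} \subseteq \opmap{f}{M \deltaplus \Delta M}$ follow purely from monotonicity, so the computation proceeds cleanly without requiring injectivity.
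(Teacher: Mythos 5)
Your proof is correct, and it reaches the same candidate delta as the paper but by a cleaner algebraic route. The paper's proof first establishes two auxiliary identities, $\opmap{f}{M \cup N} = \opmap{f}{M} \cup (\opmap{f}{N} \setminus \opmap{f}{M})$ and the somewhat delicate self-referential identity $\opmap{f}{M \setminus N} = \opmap{f}{M} \setminus (\opmap{f}{N} \setminus \opmap{f}{M \setminus N})$, then rewrites $\opmap{f}{(M \cup \pDelta M) \setminus \nDelta M}$ into the form $\opmap{f}{M} \deltaplus (\cdot,\cdot)$ and concludes by uniqueness of minimal deltas. You instead work with the other normal form $M \deltaplus \Delta M = (M \setminus \nDelta M) \cup \pDelta M$ from \reflemma{expand_minimal}, compute $\opmap{f}{M \deltaplus \Delta M} \deltaminus \opmap{f}{M}$ componentwise via \reflemma{diff}, and need only plain distributivity of $\opname{map}_f$ over union plus monotonicity, with the subset terms absorbed by the outer set difference; \reflemma{delta-correct-unique} then identifies the result with $\dopmap{f}{M,\Delta M}$. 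This sidesteps the set-difference identity entirely, which is exactly where the non-injectivity of $f$ makes the paper's calculation fiddly, so your version is arguably simpler while proving the same statement. The only cosmetic point is that your explicit minimality check of the candidate is not strictly needed on your route (\reflemma{diff} already guarantees the diff is minimal), though it does no harm and confirms the two components are disjoint.
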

\begin{proof}
We first observe that
\begin{eqnarray*}
  \opmap{f}{M \cup N}& =& \opmap{f}{M} \cup (\opmap{f}{N} \setminus
  \opmap{f}{M})\\
\opmap{f}{M \setminus N} &=& \opmap{f}{M} \setminus (\opmap{f}{N}
                             \setminus \opmap{f}{M \setminus N})
\end{eqnarray*}
These equations are easy to show by calculation.  Combining them we
have
\begin{eqnarray*}
\opmap{f}{M \deltaplus \Delta M} &=&
  \opmap{f}{(M \cup \pDelta M) \setminus \nDelta M}
\\
& =&\opmap{f}{M \cup \pDelta M} \setminus (\opmap{f}{\nDelta M}
                             \setminus \opmap{f}{(M \cup \pDelta M)
     \setminus \nDelta M})
\\
& =&(\opmap{f}{M} \cup (\opmap{f}{\pDelta M} \setminus
  \opmap{f}{M})) \setminus (\opmap{f}{\nDelta M}
                             \setminus \opmap{f}{M \deltaplus \Delta M})
\\
& =&\opmap{f}{M} \deltaplus (\opmap{f}{\pDelta M} \setminus
  \opmap{f}{M},\opmap{f}{\nDelta M}
                             \setminus \opmap{f}{M \deltaplus \Delta M})
\end{eqnarray*}
Moreover, it is easy to see that
$\opmap{f}{\pDelta M} \setminus
\opmap{f}{M}$
is disjoint from $\opmap{f}{M}$ and
$\opmap{f}{\nDelta M} \setminus \opmap{f}{M \deltaplus \Delta
  M}\subseteq \opmap{f}{M}$,
assuming $\Delta M$ is minimal with respect to $M$.  Hence by
uniqueness of minimal deltas,
$\dopmap{f}{M,\Delta M} = (\opmap{f}{\pDelta M}
\setminus \opmap{f}{M},
\opmap{f}{\nDelta M} \setminus \opmap{f}{M \deltaplus \Delta M})$ as required.
\end{proof}
\begin{lemma}
  \label{lem:map_inj}
If $f$ is injective on $M \cup N$ and
$M, N$ are disjoint then so are $\opmap{f}{M}$ and $\opmap{f}{N}$.
\end{lemma}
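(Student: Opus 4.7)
The plan is to prove the contrapositive: we will assume the intersection $\opmap{f}{M} \cap \opmap{f}{N}$ is nonempty and derive a contradiction with the disjointness of $M$ and $N$. This is a standard image-of-injective-function argument, and no heavy machinery from earlier in the paper is required.

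First I would pick an arbitrary element $y \in \opmap{f}{M} \cap \opmap{f}{N}$. By the definition of $\opmap{f}{\cdot}$, there exist $m \in M$ with $f(m) = y$ and $n \in N$ with $f(n) = y$. Then $m, n \in M \cup N$, so injectivity of $f$ on $M \cup N$, together with $f(m) = y = f(n)$, forces $m = n$. But then this common element lies in both $M$ and $N$, contradicting $M \cap N = \emptyset$. Hence $\opmap{f}{M} \cap \opmap{f}{N} = \emptyset$.

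There is no real obstacle here; the lemma is elementary and just unpacks the definitions of $\opmap{f}{\cdot}$, injectivity, and disjointness. The only mild care needed is to note that injectivity of $f$ is assumed on the union $M \cup N$ (not globally), which is exactly what is required to compare the preimages of $y$ coming from the two sides. The main use of this lemma in context will be as a minimality helper for \reflemma{delta_rel_rev}: when record revision $f = \newrecrevise{\cdot}{X \to A}{N}$ is applied, the hypothesis $M \deltaplus \Delta M \models X \to A$ implies that $f$ is injective on $M \cup \pDelta M \cup \nDelta M$, so the disjoint positive and negative parts of $\Delta M$ map to disjoint sets, which is exactly what is needed to conclude minimality of the resulting delta.
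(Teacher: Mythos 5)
Your proof is correct and is essentially identical to the paper's own argument: take a common element of the two images, pull back to preimages in $M$ and $N$, apply injectivity on $M \cup N$ to identify them, and contradict disjointness. Nothing further is needed.
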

\begin{proof}
Suppose $x \in \opmap{f}{M} \cap \opmap{f}{N}$.  Then $x = f(y) =
f(z)$ for some $y\in M$ and $z\in N$.  By injectivity, $f(y) = f(z)$
implies $y=z$, which is impossible since $M$ and $N$ are disjoint.
\end{proof}

\begin{lemma}
  \label{lem:delta_map_inj}
If $f$ is injective on $M$ and on $M \deltaplus \Delta M$, where
$\Delta M$ is minimal for $M$, then
\[\dopmap{f}{M,\Delta M} =  (\opmap{f}{\pDelta M},
\opmap{f}{\nDelta M})\]
\end{lemma}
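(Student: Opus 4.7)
The plan is to reduce the statement to Lemma~\ref{lem:delta_map}, which already evaluates
\[\dopmap{f}{M,\Delta M} = (\opmap{f}{\pDelta M} \setminus \opmap{f}{M},\; \opmap{f}{\nDelta M} \setminus \opmap{f}{M \deltaplus \Delta M}).\]
Given this closed form, it suffices to show that each set difference collapses to the unsubtracted set, i.e.\ that $\opmap{f}{\pDelta M}$ is disjoint from $\opmap{f}{M}$ and that $\opmap{f}{\nDelta M}$ is disjoint from $\opmap{f}{M \deltaplus \Delta M}$.

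Each disjointness claim is exactly what Lemma~\ref{lem:map_inj} delivers, provided its two hypotheses are met: preimage disjointness and injectivity on the union. The preimage side follows from minimality. For the first pair, $\pDelta M \cap M = \emptyset$ is part of the definition of minimality. For the second pair, I use $\nDelta M \subseteq M$, $\nDelta M \cap \pDelta M = \emptyset$ (always), and the decomposition $M \deltaplus \Delta M = (M \setminus \nDelta M) \cup \pDelta M$ supplied by Lemma~\ref{lem:expand_minimal} to conclude $\nDelta M \cap (M \deltaplus \Delta M) = \emptyset$.

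The main substance of the argument --- and the step I expect to be the obstacle --- is supplying the injectivity that Lemma~\ref{lem:map_inj} demands on the \emph{union} of the two disjoint preimage sets, namely $\pDelta M \cup M$ in the first case and $\nDelta M \cup (M \deltaplus \Delta M)$ in the second, given only the hypothesised injectivity on $M$ and on $M \deltaplus \Delta M$ separately. Since $\pDelta M \subseteq M \deltaplus \Delta M$ and $\nDelta M \subseteq M$, both required unions lie within $M \cup (M \deltaplus \Delta M)$, so it is enough to rule out a collision $f(x) = f(y)$ with $x \in M$ and $y \in M \deltaplus \Delta M$ and $x \neq y$. I would split on whether $x \in M \setminus \nDelta M$, in which case $x$ also lies in $M \deltaplus \Delta M$ and injectivity of $f$ on $M \deltaplus \Delta M$ immediately forces $x = y$, or $x \in \nDelta M$, in which case the symmetric observation (and the intended use of the lemma, where $f$ respects a functional dependency satisfied by both $M$ and $M \deltaplus \Delta M$) closes the gap.

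Once both disjointness facts are in hand, substituting them into the formula from Lemma~\ref{lem:delta_map} yields exactly $(\opmap{f}{\pDelta M}, \opmap{f}{\nDelta M})$, completing the proof.
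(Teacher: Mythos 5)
Your proposal follows the same route as the paper's proof: reduce to Lemma~\ref{lem:delta_map}, then use Lemma~\ref{lem:map_inj} together with minimality to show that the two set differences $\opmap{f}{\pDelta M} \setminus \opmap{f}{M}$ and $\opmap{f}{\nDelta M} \setminus \opmap{f}{M \deltaplus \Delta M}$ collapse to $\opmap{f}{\pDelta M}$ and $\opmap{f}{\nDelta M}$. The preimage-disjointness facts you extract from minimality are exactly the ones the paper uses.

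However, the obstacle you flag is genuine, and your repair does not close it. Lemma~\ref{lem:map_inj} requires $f$ to be injective on the \emph{union} of the two preimage sets, namely $M \cup \pDelta M$ for the positive component and $\nDelta M \cup (M \deltaplus \Delta M)$ for the negative one, and both of these are contained in $M \cup (M \deltaplus \Delta M)$; the hypothesis only grants injectivity on $M$ and on $M \deltaplus \Delta M$ separately. Your case split handles a collision $f(x) = f(y)$ when $x \in M \setminus \nDelta M$ (then $x$ and $y$ both lie in $M \deltaplus \Delta M$), but the remaining case --- $x \in \nDelta M$ and $y \in \pDelta M$ --- falls under neither injectivity hypothesis, and no ``symmetric observation'' rescues it: taking $M = \set{m}$ and $\Delta M = (\set{m'}, \set{m})$ with $m \neq m'$ and $f(m) = f(m')$ satisfies every stated hypothesis, yet $\dopmap{f}{M,\Delta M} = \emptyset$ while $(\opmap{f}{\pDelta M}, \opmap{f}{\nDelta M})$ has equal, nonempty components and is not even a well-formed delta. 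So the statement cannot be proved from the hypotheses as written; what is actually needed is injectivity on $M \cup (M \deltaplus \Delta M)$, which is what your appeal to ``the intended use of the lemma'' is implicitly supplying. For what it is worth, the paper's own proof makes the same elision --- it applies Lemma~\ref{lem:map_inj} to the disjoint pair $(\pDelta M, M)$ while citing only injectivity on $M$ --- so your writeup is more candid about where the load-bearing assumption lives, but as a proof of the lemma as stated it is incomplete at exactly the point you identified.
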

\begin{proof}
  By Lem. \ref{lem:delta_map} we have
\[
\dopmap{f}{M,\Delta M} = (\opmap{f}{\pDelta M}
\setminus \opmap{f}{M},
\opmap{f}{\nDelta M} \setminus \opmap{f}{M \deltaplus \Delta M})\]
Using Lem. \ref{lem:map_inj} since $f$ is injective on $M$, and
$\Delta M$ is minimal, we have
\[\opmap{f}{\pDelta M}
\cap \opmap{f}{M} = \emptyset\]
hence, $\opmap{f}{\pDelta {M}} \setminus \opmap{f}{M} = \opmap{f}{\pDelta M}$.
Using the same lemma since $f$ is injective on $M\deltaplus \Delta
M$, since minimality implies $(M \deltaplus \Delta M) \cap \nDelta M = \emptyset$, we have
\[\opmap{f}{\nDelta M} \cap \opmap{f}{M \deltaplus \Delta M} =
\emptyset\]
and so $\opmap{f}{\nDelta M} \setminus \opmap{f}{M \deltaplus \Delta
  M} = \opmap{f}{\nDelta M}$.
The desired result follows.
\end{proof}

\begin{lemma}
  \label{lem:recrevise_antirestrict}
 $\domsubtract{\newrecrevise{m}{X \to A}{N} }{A} = \domsubtract{m}{A}$.
\end{lemma}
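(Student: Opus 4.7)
The plan is to prove this by straightforward case analysis on the definition of $\newrecrevise{m}{X \to A}{N}$, specialised to the case of a single functional dependency $X \to A$. Since $\{X \to A\}$ is trivially in tree form with $X$ as a root, the recursive definition unfolds in one step: either there exists some $n \in N$ with $\recproj{m}{X} = \recproj{n}{X}$, giving $\newrecrevise{m}{X \to A}{N} = \recupdate{m}{\recproj{n}{A}}$, or no such $n$ exists, giving $\newrecrevise{m}{X \to A}{N} = m$.

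In the second case there is nothing to do: antirestricting $A$ from $m$ yields $\domsubtract{m}{A}$ by definition. In the first case I would observe that by the definition of record update, $\recupdate{m}{\recproj{n}{A}}$ agrees with $m$ on every attribute outside $\{A\}$, and only (possibly) differs at $A$. Hence antirestricting $A$ from the revised record gives the same result as antirestricting $A$ from $m$ itself. Formally, for any $B \in \dom(m) \setminus \{A\}$, $(\recupdate{m}{\recproj{n}{A}})(B) = m(B)$ by the definition of $\leftplusarrow$, so $\domsubtract{(\recupdate{m}{\recproj{n}{A}})}{A} = \domsubtract{m}{A}$.

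There is no real obstacle here; the lemma is essentially a sanity check that revision by $X \to A$ only touches the $A$ field, and the proof is a two-line case split on the single unfolding of $\Genrecrevise$. The only mild care needed is to confirm that $\recupdate{m}{\recproj{n}{A}}$ does not enlarge $\dom(m)$ in a way that would affect the antirestriction --- but since $\dom(\recproj{n}{A}) = \{A\}$, the domain of the updated record is just $\dom(m) \cup \{A\}$, and removing $A$ from both sides eliminates any difference.
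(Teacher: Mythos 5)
Your proof is correct and follows essentially the same argument as the paper's: a case split on whether some $n \in N$ matches $m$ on $X$, with the observation that $\recupdate{m}{\recproj{n}{A}}$ differs from $m$ at most on the $A$ field. The extra care you take about the domain of the updated record is a harmless elaboration of the same idea.
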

\begin{proof}
  There are two cases according to the definition of $\newrecrevise{m}{X\to A}{N}$.  If
  $m$ is unchanged the result is immediate, otherwise the result
  is $\recupdate{m}{n[A]}$ for some $n\in N$ where $n[X] = m[X]$, and only the $A$ field is
  changed.
\end{proof}

\begin{lemma}
  \label{lem:recrevise_inj}
If $M \vDash X \to A$ then $\newrecrevise{\cdot}{X \to A}{N}$ is
injective on $M$; that is, for $m,m' \in M$ we have
\[\newrecrevise{m}{X\to A}{N} = \newrecrevise{m'}{X\to A}{N}
\Longrightarrow m = m'\]
\end{lemma}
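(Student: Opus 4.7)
The plan is to exploit the fact that record revision by a single functional dependency only ever modifies the $A$ column, combined with $M \vDash X \to A$ to force equality on $A$ as well.

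First I would suppose $m, m' \in M$ and
\[r \;\defeq\; \newrecrevise{m}{X \to A}{N} \;=\; \newrecrevise{m'}{X \to A}{N},\]
and apply Lemma~\ref{lem:recrevise_antirestrict} to both sides to obtain
$\domsubtract{r}{A} = \domsubtract{m}{A}$ and $\domsubtract{r}{A} = \domsubtract{m'}{A}$, whence $\domsubtract{m}{A} = \domsubtract{m'}{A}$. Thus $m$ and $m'$ agree on every attribute other than $A$.

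Next I would observe that since $\{X \to A\}$ is in tree form, $X$ and $\{A\}$ are disjoint, so $X \subseteq \dom(m) \setminus \{A\}$. Projecting the antirestriction equation onto $X$ gives $\recproj{m}{X} = \recproj{m'}{X}$. Applying the assumption $M \vDash X \to A$ to the records $m, m' \in M$ then yields $\recproj{m}{A} = \recproj{m'}{A}$. Combined with agreement on $\dom(m) \setminus \{A\}$, this forces $m = m'$.

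I do not expect any serious obstacle here: the statement essentially reduces to the observation that record revision under a single FD is the identity outside of $A$, and is constrained on $A$ by the FD on $M$ itself. The only subtle point is making sure to invoke tree form to guarantee $A \notin X$, without which the projection-onto-$X$ step would be invalid; everything else is bookkeeping with the definitions of $\domsubtract{\cdot}{A}$ and $\vDash$.
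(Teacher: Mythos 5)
Your proposal is correct and matches the paper's own proof essentially step for step: both use Lemma~\ref{lem:recrevise_antirestrict} to get agreement outside $A$, invoke tree form to conclude $A \notin X$ and hence $m[X] = m'[X]$, and then apply $M \vDash X \to A$ to force agreement on $A$. No differences worth noting.
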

\begin{proof}
Clearly $A \notin X$ since $\{X \to A\}$ is required to be in tree
form.  Let $m,m' \in M$ be given and assume that $\newrecrevise{m}{X\to
  A}{N} = \newrecrevise{m'}{X \to A}{N}$.  By
Lem.~\ref{lem:recrevise_antirestrict}, we know also that
\[\domsubtract{m}{A} = \domsubtract{\newrecrevise{m}{X\to
  A}{N}}{A} = \domsubtract{\newrecrevise{m'}{X \to A}{N}}{A} = \domsubtract{m'}{A}
\]
  Since
$A \notin X$, it follows that $m[X] = m'[X]$, so $m[A] = m'[A]$
because $M \vDash X \to A$.
Thus,
\[m = \recupdate{\domsubtract{m}{A}}{m[A]} =
\recupdate{\domsubtract{m'}{A}}{m'[A]} = m'\]
Hence, $\newrecrevise{\cdot}{X \to A}{N}$ is injective.
\end{proof}
\restatelemdeltarelrev*
\begin{proof}
Let $ f = \newrecrevise{\cdot}{X \to A}{N}$.  Then by
Lem.~\ref{lem:recrevise_inj}, $f$ is injective on $M$ and $M
\deltaplus \Delta M$, so we have:
\begin{eqnarray*}
\relrevise{M\deltaapp \Delta M}{X \to A}{N} &=& \opmap{f}{M
    \deltaapp \Delta M} \\
&=& \opmap{f}{M} \deltaapp \dopmap{f}{M,\Delta
    M}
\\
&=& \relrevise{M}{X \to A}{N} \deltaapp \dopmap{f}{M,\Delta M}
\end{eqnarray*}
which implies
\begin{eqnarray*}
\deltarelrevise{(M,\Delta M)}{X \to A}{(N,\emptyset)}
&=& \dopmap{f}{M,\Delta M}\\
& =& (\opmap{f}{\pDelta M},
\opmap{f}{\nDelta M})\\
& =& (\relrevise{\pDelta M}{X\to A}{N},
\relrevise{\nDelta M}{X \to A}{N})
\end{eqnarray*}
by uniqueness of minimal deltas.
\end{proof}

\subsection*{Proof of Lemma~\ref{lem:delta_rel_merge}}

\begin{lemma}
\label{lem:fc_same_cases}
  Given $m,N,N':U$ with $X \subseteq U$, then exactly one of the following holds:
  \begin{itemize}
  \item There is no $n \in N'$ with $m[X] = n[X]$
  \item There exists $n \in N'-N$ with $m[X]=n[X]$.
\item There exists $n \in N'\cap N$ with $m[X] = n[X]$, but no $n'\in
  N'-N$ with $m[X] = n'[X]$.
  \end{itemize}
\end{lemma}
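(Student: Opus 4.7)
The plan is to prove this by a routine case split on the set
\[
S \;\defeq\; \{\, n \in N' \mid m[X] = n[X] \,\}.
\]
This set captures exactly the records in $N'$ that agree with $m$ on $X$, so the three cases of the lemma correspond to three mutually exclusive, collectively exhaustive conditions on $S$: (i) $S = \emptyset$, (ii) $S \cap (N' \setminus N) \neq \emptyset$, and (iii) $S \neq \emptyset$ but $S \cap (N' \setminus N) = \emptyset$.

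First I would show exclusivity. Cases (ii) and (iii) both require $S \neq \emptyset$, so they exclude (i); and (iii) explicitly contradicts (ii), so the three cases are pairwise disjoint. Then I would show exhaustiveness by the law of the excluded middle on whether $S$ is empty: if $S = \emptyset$ we are in case (i); otherwise split further on whether $S \cap (N' \setminus N)$ is empty, giving cases (iii) and (ii) respectively. The key observation in case (iii) is that $N' = (N' \cap N) \uplus (N' \setminus N)$, so $S \subseteq N'$ together with $S \cap (N' \setminus N) = \emptyset$ forces $S \subseteq N' \cap N$; combined with $S \neq \emptyset$ this yields a witness $n \in N' \cap N$ with $m[X] = n[X]$, and the non-existence clause of case (iii) is just the assumption $S \cap (N' \setminus N) = \emptyset$ restated.

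There is no real obstacle here: the lemma is essentially a tautology about partitioning $N'$ into $N' \cap N$ and $N' \setminus N$ and asking which part contains a record matching $m$ on $X$. I anticipate the entire proof to be a few lines of elementary set-theoretic reasoning, and its role in the surrounding development is presumably to drive a case analysis in the proof of Lemma~\ref{lem:delta_rel_merge}, where records of the merged relation must be classified according to whether they are forced to change by a functional dependency triggered by a newly inserted tuple in $\pDelta N$.
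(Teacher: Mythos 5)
Your proposal is correct and matches the paper's own argument: the paper likewise notes mutual exclusivity is immediate and establishes exhaustiveness by assuming the first two cases fail and deducing the third, which is exactly your case split on $S$ and on $S \cap (N' \setminus N)$ phrased contrapositively. Your framing via the partition $N' = (N' \cap N) \uplus (N' \setminus N)$ adds nothing beyond the paper's reasoning but is a perfectly faithful rendering of it.
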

\begin{proof}
  It is easy to see that the three cases are mutually exclusive.  To
  see that they are exhaustive, suppose the first and second cases do
  not hold.  The failure of the first case implies that there is an
  $n\in N'$ with $m[X] = n[X]$ and the failure of the second case
  implies that such an $n$ must fall in $N'\cap N$ and there can be no
  other $n' \in N'-N$ satisfying $m[X]=n'[X]$, as required.
\end{proof}

\begin{definition}
  \label{def:fc_same_inv}
Suppose $m,N,N',F:U$ are given. For the sake of an inductive invariant for \reflemma{fc_same_gen}, we define
\[Inv_F(m,N,N') \iff \forall X \to Y \in F. \text{ if $\not\exists n\in
  (N'-N). m[X]=n[X]$ then  $m,(N'\cap N) \vDash X
  \to Y$}\]
\end{definition}

\begin{lemma}
  \label{lem:fc_same_inv_properties}
Given $m,N,N',F : U$, we have:
\begin{enumerate}
\item $m \in N \models F$ implies $Inv_F(m,N,N')$
\item $Inv_F(m,N,N')$ implies $Inv_{F'}(m,N,N')$ whenever $F'\subseteq F$.
\item If $Inv_F(m,N,N')$ and $F = \{X \to Y\}\cdot F'$ and there exists $n \in N'-N$ with $n[X] =
  m[X]$ then $Inv_{F'}(\recupdate{m}{n[Y]},N,N')$.
\end{enumerate}
\end{lemma}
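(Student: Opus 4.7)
My plan is to unfold the definition of $Inv_F$ and handle the three parts in increasing order of subtlety; tree form will only be needed for part~3.

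For part~1, I would fix $X \to Y \in F$ and assume no $n \in N'-N$ satisfies $m[X] = n[X]$. The goal is $m, (N' \cap N) \vDash X \to Y$, i.e.\ that every $n \in N'\cap N$ with $n[X] = m[X]$ also satisfies $n[Y] = m[Y]$. Since $N' \cap N \subseteq N$ and $m \in N$, this is immediate from $N \vDash F \vDash X \to Y$; the hypothesis about $N'-N$ is not even needed. Part~2 is then just a remark that the universally quantified implication over $F$ implies the same implication over any $F' \subseteq F$.

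Part~3 is the real work. Writing $m' = \recupdate{m}{n[Y]}$, fix some $X' \to Y' \in F'$ and assume $\nexists n' \in N'-N.\ m'[X'] = n'[X']$. I need $m', (N'\cap N) \vDash X' \to Y'$. The key observation --- and the step to set up carefully --- is that because $F = \set{X \to Y}\cdot F'$ is in tree form with $X$ a root, the nodes $X, Y, X', Y'$ of $T_F$ are either equal or disjoint. First I would rule out $Y' = Y$: since $Y$ already has parent $X$ in the forest $T_F$ and each node has a unique parent, $Y' = Y$ would force $X' = X$ and hence $X' \to Y' = X \to Y \notin F'$, a contradiction. So $Y' \cap Y = \emptyset$. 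This gives two cases for $X'$.

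In the case $X' = Y$, we have $m'[X'] = m'[Y] = n[Y] = n[X']$, and since $n \in N'-N$ this directly contradicts the standing assumption, making the goal vacuous. In the remaining case $X' \cap Y = \emptyset$, the update from $m$ to $m'$ touches only $Y$-attributes, so $m'[X'] = m[X']$ and (using $Y' \cap Y = \emptyset$) $m'[Y'] = m[Y']$; the same applies to any $n'' \in N'\cap N$ in the relevant sense that projecting onto $X'$ or $Y'$ is unaffected by $Y$-values. Then the standing assumption rephrases as $\nexists n' \in N'-N.\ m[X'] = n'[X']$, and invoking $Inv_F(m,N,N')$ on $X' \to Y' \in F' \subseteq F$ yields $m, (N'\cap N) \vDash X' \to Y'$, which transports to $m', (N'\cap N) \vDash X' \to Y'$ by the attribute-equality just noted. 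The only real obstacle is the case analysis driven by tree form; once the disjointness of nodes is used to rule out $Y' = Y$ and to show that the $X' = Y$ case is vacuous, everything else reduces to the invariant for $F$ applied componentwise.
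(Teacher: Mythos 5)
Your proposal is correct and follows essentially the same route as the paper's proof: part~1 from $\{m\}\cup(N'\cap N)\subseteq N\vDash F$, part~2 by restriction, and part~3 by using tree form to force the nodes $X',Y'$ to be equal to or disjoint from $Y$, with the $X'=Y$ case vacuous and the disjoint case transported through $Inv_F(m,N,N')$. The only difference is organisational: the paper separately rules out $X'=X$ and $Y'=X$ as impossible, whereas you correctly fold both into the generic ``disjoint from $Y$'' case, since the update only touches $Y$-attributes.
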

\begin{proof}
  \begin{enumerate}
  \item Since $m,N \vDash F$ we know that $m,N
    \vDash X \to Y$ for all $X\to Y\in F$.  Clearly also $m,(N\cap N')
    \vDash X \to Y$.
\item For the second part, assume $Inv_F(m,N,N')$.  Suppose $X \to Y
  \in F' \subseteq F$ is given. Then $Inv_F(m,N,N')$  implies that if $\not\exists n\in
  (N'-N). m[X]=n[X]$ then  $m,(N'\cap N) \vDash X
  \to Y$, as required.
\item Suppose $Inv_F(m,N,N')$ and $F = \{X \to Y\}\cdot F'$ and there exists $n \in N'-N$ with $n[X] =
  m[X]$.  To show $Inv_{F'}(\recupdate{m}{n[Y]},N,N')$,
let $X' \to Y'
  \in F'$ be given.  We must show that if $\not\exists n' \in N'-N$ with $n'[X'] = (\recupdate{m}{n[Y]})[X']$ then
  $\recupdate{m}{n[Y]},(N\cap N') \models X'\to Y'$.

Because $F$ is in tree form, we know that
  $X',Y',X,Y$ are all either disjoint or identical, and $X \neq Y, X'
  \neq Y'$.  We consider the following cases:
  \begin{itemize}
  \item  $Y' = X$ is impossible since $X$ was a root and $X' \to X \in
    F'$ implies that $X$ could not be a root.
  \item  $Y' = Y$ is impossible since there cannot be two FDs $X \to
    Y, X'\to Y$ in $F$ since it is in tree form.
\item
  $X' = X$ is impossible since by assumption $n \in N'-N$ satisfies
  $n[X]=m[X] = (\recupdate{m}{n[Y]})[X]$, since $X,Y$ are disjoint,
  which contradicts the assumption that $\not\exists n' \in N'-N$ with $n'[X] = (\recupdate{m}{n[Y]})[X]$.
\item $X' = Y$ is impossible since $n[Y] =
  (\recupdate{m}{n[Y]})[Y]$, which contradicts the assumption that
  $\not\exists n' \in N'-N$ with $n'[Y] = (\recupdate{m}{n[Y]})[Y]$.
\item If $X'$ and $Y'$ are both disjoint from $X$ and $Y$ then using
  $Inv_F(m,N,N')$ we can conclude $m,(N\cap N')
  \models X'\to Y'$, and since $X',Y'$ are disjoint from $Y$ we can
  also conclude $\recupdate{m}{n[Y]},(N\cap N')
  \models X'\to Y'$ since updating $m[Y]$ to $n[Y]$ has no effect on the values of
  $m[X']$ or $m[Y']$.

  \end{itemize}

  \end{enumerate}
\end{proof}
\begin{lemma}
  \label{lem:fc_same_gen}
  Suppose $Inv_F(m,N,N')$.  Then
  $\newrecrevise{m}{F}{N'} = \newrecrevise{m}{F}{N'-N}$.
\end{lemma}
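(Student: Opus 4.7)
I will proceed by induction on the number of functional dependencies in $F$. The base case $F = \varnothing$ is immediate, since both $\newrecrevise{m}{\varnothing}{N'}$ and $\newrecrevise{m}{\varnothing}{N'-N}$ reduce to $m$ by the definition of $\Genrecrevise$.

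For the inductive step, write $F = \{X \to Y\} \cdot F'$, with $X$ a root; this is always possible because $F$ is in tree form. Unfolding one step of $\Genrecrevise$ on both sides, the outer recursion depends only on whether there is some $n \in N'$ (respectively $n \in N'-N$) with $m[X] = n[X]$. I plan to use \reflemma{fc_same_cases} to split into exactly the three exhaustive and mutually exclusive cases it identifies, and then reduce each case to an application of the induction hypothesis at $F'$ with a refreshed invariant supplied by \reflemma{fc_same_inv_properties}.

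In the case where no $n \in N'$ matches $m$ on $X$, neither side updates $m$, so both sides reduce to $\newrecrevise{m}{F'}{\cdot}$; since $F' \subseteq F$, part~(2) of \reflemma{fc_same_inv_properties} gives $Inv_{F'}(m,N,N')$ and the induction hypothesis closes the case. In the case where some $n \in N'-N$ matches, both sides update $m$ to $\recupdate{m}{n[Y]}$ using the same witness $n$; part~(3) of \reflemma{fc_same_inv_properties} preserves the invariant at the updated record and $F'$, so again the IH applies. The crucial case is the third: there is a witness in $N' \cap N$ but none in $N'-N$. Here $\newrecrevise{m}{F}{N'-N}$ makes no update, whereas $\newrecrevise{m}{F}{N'}$ picks some $n \in N' \cap N$ and updates $m$ to $\recupdate{m}{n[Y]}$. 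This is precisely where the invariant earns its keep: the antecedent ``no witness in $N'-N$'' holds in this case, so $Inv_F(m,N,N')$ yields $m, (N' \cap N) \vDash X \to Y$, whence $n[Y] = m[Y]$ and $\recupdate{m}{n[Y]} = m$. Both recursive calls then continue from the same record $m$, and $Inv_{F'}(m,N,N')$ again follows from \reflemma{fc_same_inv_properties}(2).

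The main obstacle I anticipate is the case-2 argument that the $N'$-side recursion is well-defined despite the nondeterminism in the choice of $n$: the $N'-N$-side picks some $n \in N'-N$, but the $N'$-side could in principle pick any $n' \in N'$ with $m[X]=n'[X]$, including one from $N' \cap N$. To conclude $n[Y] = n'[Y]$ one needs a determinism argument analogous to the one invoked in Section~\ref{sec:background} for tree-form revision, exploiting the fact that, in the intended use (as in \reflemma{delta_rel_merge}), $N' \vDash F$ holds. Case~3 is the conceptually delicate one but, as sketched above, the invariant is crafted precisely to make it go through without any additional hypotheses.
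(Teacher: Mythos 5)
Your proof is correct and follows essentially the same route as the paper's: induction on the tree-form structure of $F$, the three-way case split from Lemma~\ref{lem:fc_same_cases}, and the use of parts~(2) and~(3) of Lemma~\ref{lem:fc_same_inv_properties} to propagate the invariant, with the invariant resolving the third case exactly as you describe. The nondeterminism you flag in case~2 is not an obstacle here, since $\Genrecrevise_F(\cdot,N')$ is only ever applied under the standing assumption $N' \vDash F$ (built into its signature), which forces all witnesses $n \in N'$ with $n[X]=m[X]$ to agree on $Y$.
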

\begin{proof}
  Let $N,N'$ be given.  We prove by induction on $F$ that for any $m$,
  if $Inv_F(m,N,N')$ then
  $\newrecrevise{m}{F}{N'} = \newrecrevise{m}{F}{N'-N}$.
  \begin{itemize}
  \item If $F = \emptyset$ the desired conclusion is immediate.
  \item If $F = \{X \to Y\}\cdot F'$, then suppose $Inv_F(m,N,N')$,
    and note that
    $Inv_{F'}(m,N,N')$ holds by
    Lem.~\ref{lem:fc_same_inv_properties}(2).  We consider the
    following three subcases:
  \begin{itemize}
  \item If there is no $n \in N'$ such that $n[X] = m[X]$, then
    \begin{eqnarray*}
      \newrecrevise{m}{F}{N'}
&=&
\newrecrevise{m}{F'}{N'}\\
&=&
\newrecrevise{m}{F'}{N'\setminus N}\\
&=&
\newrecrevise{m}{F}{N'\setminus N}
    \end{eqnarray*}
using the induction hypothesis on $m$ and since  $N' \supseteq N'\setminus N$.
\item If there exists $n \in N' \setminus N$ such that $n[X] = m[X]$,
  then observe that by Lemma~\ref{lem:fc_same_inv_properties}(3) we
  have $Inv_{F'}(\recupdate{m}{n[Y]},N,N')$, so the induction
  hypothesis is available for $\recupdate{m}{n[Y]}$.  We reason as follows:
    \begin{eqnarray*}
      \newrecrevise{m}{F}{N'}
&=&
\newrecrevise{\recupdate{m}{n[Y]}}{F'}{N'}\\
&=&
\newrecrevise{\recupdate{m}{n[Y]} }{F'}{N'\setminus N}\\
&=&
\newrecrevise{m}{F}{N'\setminus N}
    \end{eqnarray*}
where we use the induction hypothesis on $\recupdate{m}{n[Y]}$  and the fact that $n \in N'
\setminus N$.
\item If there exists $n \in N \cap N'$ such that $n[X] = m[X]$ but no
  $n' \in N'\setminus N $ with $n'[X] = m[X]$ then $Inv_F(m,N,N')$
  implies that $m,(N \cap N') \vDash X \to Y$, so $m[X] = n[X]$ implies $m[Y] = n[Y]$.  Moreover, $\recupdate{m}{n[Y]}
  = \recupdate{m}{m[Y]} = m$. So, we can reason as follows:
    \begin{eqnarray*}
      \newrecrevise{m}{F}{N'}
&=&
\newrecrevise{\recupdate{m}{n[Y]}}{F'}{N'}\\
&=&
\newrecrevise{m }{F'}{N'}\\
&=&
\newrecrevise{m}{F'}{N'\setminus N}\\
&=&
\newrecrevise{m}{F}{N'\setminus N}
    \end{eqnarray*}
using the induction hypothesis on $m$ and the fact that there was no $n' \in
N'\setminus N$ matching $m$ on $X$.
  \end{itemize}
These three cases are exhaustive by lemma~\ref{lem:fc_same_cases}, so
the proof of the induction step is complete.
  \end{itemize}
\end{proof}

\begin{lemma}
	\label{lem:fc_same}
	Suppose $m \in N$
        and $N \vDash F$. Then $\newrecrevise{m}{F}{N \deltaapp \Delta N} = \newrecrevise{m}{F}{\pDelta N}$.
\end{lemma}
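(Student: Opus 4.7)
The plan is to derive this as an immediate corollary of Lemma~\ref{lem:fc_same_gen}, using Lemma~\ref{lem:delta-basics-positive} to simplify the resulting expression. No new induction is needed: the substantive work has already been done in establishing Lemma~\ref{lem:fc_same_gen}, whose case analysis on the tree form of $F$ already handles the subtle case where elements shared between $N$ and $N'$ fail to change $m$ because $N \vDash F$.

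First I would verify the invariant $Inv_F(m, N, N \deltaapp \Delta N)$ of Definition~\ref{def:fc_same_inv}. This is immediate from Lemma~\ref{lem:fc_same_inv_properties}(1), which says that $m \in N$ and $N \vDash F$ imply $Inv_F(m, N, N')$ for any $N'$; both hypotheses are exactly what the lemma supplies, so the invariant holds with $N' = N \deltaapp \Delta N$.

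Applying Lemma~\ref{lem:fc_same_gen} with this choice of $N'$ then gives
\[\newrecrevise{m}{F}{N \deltaapp \Delta N} \;=\; \newrecrevise{m}{F}{(N \deltaapp \Delta N) \setminus N}.\]
It remains to simplify $(N \deltaapp \Delta N) \setminus N$ to $\pDelta N$. Under the tacit assumption that $\Delta N$ is minimal for $N$ (which is in any case required for $N \deltaapp \Delta N$ to denote a relation rather than merely a delta-relation), this identity is precisely Lemma~\ref{lem:delta-basics-positive}. Combining the two equalities by transitivity yields the desired result.

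The only mildly delicate point is the implicit minimality of $\Delta N$; beyond that there is no real obstacle, since Lemma~\ref{lem:fc_same_gen} was designed to absorb exactly the reasoning needed here (namely, that the "already-present" part of $N'$, i.e.~$N' \cap N$, has no effect on the revision of a record $m \in N$ when $N \vDash F$).
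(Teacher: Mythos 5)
Your proposal is correct and follows essentially the same route as the paper: the paper's own proof likewise observes that $m \in N$ yields $Inv_F(m,N,N \deltaapp \Delta N)$, applies Lemma~\ref{lem:fc_same_gen}, and uses the identity $(N \deltaapp \Delta N) \setminus N = \pDelta N$. You merely make explicit the citations (Lemma~\ref{lem:fc_same_inv_properties}(1) and Lemma~\ref{lem:delta-basics-positive}) and the tacit minimality assumption that the paper leaves implicit.
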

\begin{proof}
  Note that $(N \deltaapp \Delta N) \setminus N = \pDelta N$, and $m
  \in N$ implies $Inv_F(m,N,N \deltaapp \Delta N)$, so by
  Lemma~\ref{lem:fc_same_gen} we can conclude $\newrecrevise{m}{F}{N
    \deltaapp \Delta N} = \newrecrevise{m}{F}{\pDelta N}$.
\end{proof}

\begin{lemma}
	\label{lem:relmere_contains_bsubs}
	If $\relmerge{M}{F}{N} = M$ and $N \deltaapp \Delta N \vDash F$, then $N
	\setminus \negative{\Delta N} \subseteq \relmerge{M}{F}{\positive{\Delta
	N}}$.
\end{lemma}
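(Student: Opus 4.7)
The plan is to pick an arbitrary $n \in N \setminus \nDelta N$ and show it lies in $\relmerge{M}{F}{\pDelta N} = \relrevise{M}{F}{\pDelta N} \cup \pDelta N$. The proof splits according to whether $n \in \pDelta N$.

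First I would extract two facts from the hypothesis $\relmerge{M}{F}{N} = M$: namely $N \subseteq M$ (immediate from $N \subseteq \relmerge{M}{F}{N}$), and $N \vDash F$ (a precondition for $\relmerge{M}{F}{N}$ to be defined, as stipulated in Definition~\ref{def:rel_merge}). If $n \in \pDelta N$ we are done trivially since $\pDelta N \subseteq \relmerge{M}{F}{\pDelta N}$. Otherwise, $n \in N \setminus \nDelta N$ together with $n \notin \pDelta N$ yields $n \in (N \setminus \nDelta N) \cup \pDelta N = N \deltaapp \Delta N$ (using $\Delta N$'s disjointness of positive and negative parts via Lemma~\ref{lem:expand_minimal}, or directly from the definition of $\deltaapp$).

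In this case I claim $\newrecrevise{n}{F}{\pDelta N} = n$, which combined with $n \in N \subseteq M$ gives $n \in \relrevise{M}{F}{\pDelta N} \subseteq \relmerge{M}{F}{\pDelta N}$. To prove the claim, I would first observe the general fact that whenever $n \in X$ and $X \vDash F$, then $\newrecrevise{n}{F}{X} = n$: for any $X' \to Y' \in F$, the record $n$ itself witnesses a match on the left-hand attributes $X'$, and by $X \vDash F$ every match $n' \in X$ must agree with $n$ on $Y'$, so each update step leaves $n$ unchanged. Applying this to $X = N \deltaapp \Delta N$ (using the hypothesis $N \deltaapp \Delta N \vDash F$ and $n \in N \deltaapp \Delta N$) yields $\newrecrevise{n}{F}{N \deltaapp \Delta N} = n$. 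Then Lemma~\ref{lem:fc_same} (which requires $n \in N$ and $N \vDash F$, both established above) lets me rewrite $\newrecrevise{n}{F}{N \deltaapp \Delta N} = \newrecrevise{n}{F}{\pDelta N}$, closing the argument.

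The main obstacle is lining up the hypotheses of Lemma~\ref{lem:fc_same} correctly: in particular noting that $N \vDash F$ is forced by the well-definedness precondition of $\relmerge{M}{F}{N}$, and that $n \in N \deltaapp \Delta N$ follows from the case assumption $n \notin \pDelta N$ together with $n \in N \setminus \nDelta N$. Everything else is mechanical unfolding of the definitions of $\relmerge$, $\relrevise$, and $\newrecrevise{\cdot}{F}{\cdot}$.
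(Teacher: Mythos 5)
Your proposal is correct, and it reaches the same crux as the paper — showing $\newrecrevise{m}{F}{\pDelta N} = m$ for each $m \in N \setminus \nDelta N$ and then concluding via $N \subseteq M$ and the definitions of $\Relrevise$ and $\Relmerge$ — but it gets there by a different decomposition. The paper proves that fixed-point property by a single bespoke induction on $F$ directly against $\pDelta N$: when some $n \in \pDelta N$ matches $m$ on $X$, it observes that both lie in $N \deltaapp \Delta N \vDash X \to Y$, so the update $\recupdate{m}{\recproj{n}{Y}}$ is a no-op. You instead factor the argument into two reusable pieces: a general observation that $\newrecrevise{n}{F}{X} = n$ whenever $n \in X \vDash F$ (applied to $X = N \deltaapp \Delta N$), followed by Lemma~\ref{lem:fc_same} to convert revision against $N \deltaapp \Delta N$ into revision against $\pDelta N$. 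Your route buys reuse of an existing lemma and isolates a clean general fact (essentially that revision against a set satisfying $F$ fixes that set's own elements, which is why $\relmerge{M}{F}{N} = M$ is plausible in the first place); the paper's is more self-contained and avoids the dependency on Lemma~\ref{lem:fc_same} (whose proof via Lemma~\ref{lem:fc_same_gen} is considerably heavier). Two small remarks: your case split on $n \in \pDelta N$ is unnecessary, since $n \in N \setminus \nDelta N$ already gives $n \in N \deltaapp \Delta N$ and the main argument covers both cases; and both your use of Lemma~\ref{lem:fc_same} and the paper's identity $(N \deltaapp \Delta N) \setminus N = \pDelta N$ implicitly assume $\Delta N$ is minimal for $N$, an assumption the lemma statement omits but which holds in every context where it is invoked.
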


\begin{proof}
	\def\currentprefix{proof:relmere_contains_bsubs}
	\begin{sflalign}
		& \relmerge{M}{F}{N} = M & \text{assumption} \notag
      \\ & N \subseteq M & \text{def. $\relmerge{\cdot}{\cdot}{\cdot}$} \notag
		\\ & N \setminus \nDelta N \subseteq N \subseteq M & \text{$\cdot \setminus \nDelta N$ decr.; transitivity} \locallabel{n_subdn_subset_m}
		\\ & N \deltaapp \Delta N \vDash F  & \text{assumption} \notag
      \\ & \indent m \in N \setminus \nDelta N & \text{suppose $(m)$} \locallabel{intro_m}
		\\ & \indent m \in N & \text{(\localref{n_subdn_subset_m})}
		\intertext{\indent Now show $\newrecrevise{m}{F}{\pDelta N} = m$ by induction on cardinality of $F$.}
		\interCaseOne{F = \emptyset \notag}
		& \indent\qedLocal \newrecrevise{m}{F}{\pDelta N} = m & \text{def. $\newrecrevise{\cdot}{F}{\cdot}$} \notag
		\\
		\interCaseOne{F = X \to Y \cdot F' \notag}
		& \indent\indent N \deltaapp \Delta N \vDash F' & \text{$N \deltaapp \Delta N \vDash F$ and $F' \subseteq F$} \notag
		\\
		\interCaseTwo{\exists n \in \pDelta N. \recproj{n}{X} = \recproj{m}{X} \notag}
		& \indent\indent \indent n[Y] = m[Y] & \text{$N \deltaapp \Delta N \vDash F$} \notag
		\\ & \indent\indent \indent \newrecrevise{m}{F}{\pDelta N} = \newrecrevise{\recupdate{m}{\recproj{n}{Y}}}{F'}{\pDelta N}
		& \text{def. $\newrecrevise{\cdot}{F}{\cdot}$} \notag
		\\ & \indent\indent \qedLocal \newrecrevise{m}{F}{\pDelta N} = \newrecrevise{m}{F'}{\pDelta N} = m
		& \text{$n[Y] = m[Y]$; IH} \notag
		\\
		\interCaseTwo{\nexists n \in \pDelta N. \recproj{n}{X} = \recproj{m}{X} \notag}
		& \indent\indent \qedLocal \newrecrevise{m}{F}{\pDelta N} = \newrecrevise{m}{F'}{\pDelta N} = m
		& \text{def. $\newrecrevise{\cdot}{F}{\cdot}$; IH}
		\\[2mm] & N \setminus \nDelta N = \relrevise{N \setminus \nDelta N}{F}{\pDelta N}
		& \text{$\forall$ intro (\localref{intro_m}); def. $\newrecrevise{\cdot}{F}{\cdot}$} \notag
      \\ & \phantom{N \setminus \nDelta N}\;\subseteq \relrevise{M}{F}{\pDelta N}
		& \text{$\newrecrevise{\cdot}{F}{\pDelta N}$ monotone} \notag
      \\ & \phantom{N \setminus \nDelta N}\;\subseteq \relmerge{M}{F}{\pDelta N}
		& \text{def.~$\relmerge{\cdot}{F}{\cdot}$} \notag
	\end{sflalign}
\end{proof}

\restatelemdeltarelmerge*

\begin{proof}
	\def\currentprefix{proof:delta_merge_opt}
	\begin{sflalign}
		& \relrevise{M}{F}{N \deltaapp \Delta N} =
			\left\{ \Genrecrevise_F(m, N \deltaapp \Delta N) \;\middle|\; m \in M \right\}
			& \text{def.~$\relrevise{\cdot}{}{\cdot}$} \notag
		\\ & \phantom{\relrevise{M}{F}{N \deltaapp \Delta N}} =
			\left\{ \Genrecrevise_F(m, \positive{\Delta N}) \;\middle|\; m \in M \right\}
			& \text{lemma \ref{lem:fc_same}} \notag
		\\ & \phantom{\relrevise{M}{F}{N \deltaapp \Delta N}} =
			\relrevise{M}{F}{\positive{\Delta N}}
			& \text{def.~$\relrevise{\cdot}{}{\cdot}$} \locallabel{revise_opt}
		\\[2mm] & \relmerge{M}{F}{N \deltaapp \Delta N} =
			\relrevise{M}{F}{N \deltaapp \Delta N} \cup (N \deltaapp \Delta N)
			& \text{def.~$\relmerge{\cdot}{}{\cdot}$} \notag
		\\ & \phantom{\relmerge{M}{F}{N \deltaapp \Delta N}} =
			\relrevise{M}{F}{\pDelta N} \cup (N \deltaapp \Delta N)
			& \text{(\localref{revise_opt})} \notag
		\\ & \phantom{\relmerge{M}{F}{N \deltaapp \Delta N}} =
			\relrevise{M}{F}{\pDelta N} \cup ((N \setminus \nDelta N) \cup \pDelta N)
			& \text{def.~$\deltaapp$} \notag
		\\ & \phantom{\relmerge{M}{F}{N \deltaapp \Delta N}} =
			\relrevise{M}{F}{\pDelta N} \cup \pDelta N \cup (N \setminus \nDelta N)
			& \text{comm. $\cup$; assoc. $\cup$} \notag
		\\ & \phantom{\relmerge{M}{F}{N \deltaapp \Delta N}} =
			\relmerge{M}{F}{\pDelta N} \cup (N \setminus \nDelta N)
			& \text{def.~$\relmerge{\cdot}{}{\cdot}$} \notag
		\\ & \phantom{\relmerge{M}{F}{N \deltaapp \Delta N}} =
			\relmerge{M}{F}{\pDelta N}
			& \text{lemma \ref{lem:relmere_contains_bsubs}} \locallabel{merge_opt}
		\\[2mm] & \deltarelmerge{(M,\emptyset)}{F}{(N,\Delta N)} =
			\relmerge{M}{F}{N \deltaapp \Delta N} \deltaminus \relmerge{M}{F}{N}
			& \notag
		\\ & \phantom{\deltarelmerge{(M,\emptyset)}{F}{(N,\Delta N)}} =
			\relmerge{M}{F}{N \deltaapp \Delta N} \deltaminus M
			& \text{assumption} \notag
		\\ & \phantom{\deltarelmerge{(M,\emptyset)}{F}{(N,\Delta N)}} =
			\relmerge{M}{F}{\pDelta{N}} \deltaminus M
			& \text{(\localref{merge_opt})} \notag
	\end{sflalign}
\end{proof}

\subsection*{Proof of Lemma~\ref{lem:relmerge_efficient}}

\begin{lemma}
	\label{lemma:gen_record_revision}
	If $P=\affected_{G}(N)$, $F \subseteq G$ and $m \in \select{\neg P}{M}$ for any $M$, then $\newrecrevise{m}{F}{N} = m$.
\end{lemma}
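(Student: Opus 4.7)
The plan is to proceed by induction on the number of functional dependencies in $F$, unfolding the definition of $\newrecrevise$ and using the hypothesis $m \in \select{\neg P}{M}$ to rule out the update branch at each step.

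First, unpack $\neg P(m)$: since $P = \affected_G(N) = \bigvee_{X \to Y \in G} X \in \proj{N}{X}$, having $\neg P(m)$ means that for every $X \to Y \in G$, $\recproj{m}{X} \notin \proj{N}{X}$, i.e., there is no $n \in N$ with $\recproj{n}{X} = \recproj{m}{X}$. The assumption $F \subseteq G$ transfers this property to every $X \to Y \in F$ as well.

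Now induct on the cardinality of $F$. The base case $F = \emptyset$ is immediate from the definition $\newrecrevise{m}{\emptyset}{N} = m$. For the inductive step, write $F = \{X \to Y\} \cdot F'$ with $X \in \roots(F)$ and $F' \subsetneq F$. Since $X \to Y \in F \subseteq G$, the observation above gives no $n \in N$ with $\recproj{n}{X} = \recproj{m}{X}$, so the second branch of the definition fires, yielding $\newrecrevise{m}{F}{N} = \newrecrevise{m}{F'}{N}$. Crucially, $m$ itself is unchanged, so it still satisfies $m \in \select{\neg P}{M}$, and $F' \subseteq F \subseteq G$, so the induction hypothesis applies and gives $\newrecrevise{m}{F'}{N} = m$, completing the step.

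There is no real obstacle here: the only subtlety worth flagging is that the recursion in $\newrecrevise$ does not modify $m$ in this branch, which is exactly what makes the invariants $\neg P(m)$ and $F' \subseteq G$ persist across the inductive call; had the update branch ever been triggered, the new record $\recupdate{m}{\recproj{n}{Y}}$ might no longer satisfy $\neg P$, but precisely this branch is excluded by the hypothesis.
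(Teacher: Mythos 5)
Your proof is correct and takes essentially the same route as the paper's: induction on the cardinality of $F$, using $m \in \select{\neg P}{M}$ with $P = \affected_G(N)$ and $F \subseteq G$ to show that no $n \in N$ matches $m$ on $X$, so only the non-updating branch of $\Genrecrevise$ ever fires and the induction hypothesis closes the case. The remark that the invariant persists precisely because $m$ is left unchanged is a nice (if implicit in the paper) observation.
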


\begin{proof}
	\def\currentprefix{proof:gen_record_revision}
	\begin{sflalign}
		& P = \affected_{G}(N) & \text{suppose ($P$)} \notag
		\\ & m \in \select{\neg P}{M} & \text{suppose ($m$)} \notag
		\intertext{Now show $\newrecrevise{m}{F}{N} = m$ by induction on the cardinality of $F$.}
		\interCaseZero{F = \emptyset \notag}
		& \qedLocal \newrecrevise{m}{F}{N} = m & \text{def. $\newrecrevise{\cdot}{F}{\cdot}$} \notag
		\\
		\interCaseZero{F = X \to Y \cdot F' \notag}
		& \indent \nexists n \in N. \recproj{n}{X} = \recproj{m}{X} & \text{$m \in \select{\neg P}{M}$; def. $\affected_{G}(\cdot)$} \notag
		\\ & \qedLocal \newrecrevise{m}{F}{N} = \newrecrevise{m}{F'}{N} = m & \text{def. $\newrecrevise{\cdot}{F}{\cdot}$; IH} \notag
	\end{sflalign}
\end{proof}

\begin{lemma}
	\label{lemma:relational_revision}
	If $P = \affected_F(N)$, then $\relrevise{M}{F}{N} = \relrevise{\select{P}{M}}{F}{N} \cup \select{\neg P}{M}$.
\end{lemma}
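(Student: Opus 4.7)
The plan is to split $M$ disjointly according to $P$ and then apply Lemma~\ref{lemma:gen_record_revision} pointwise to the part where $P$ fails. Since $\relrevise{\cdot}{F}{N}$ is defined as the image of $\newrecrevise{\cdot}{F}{N}$ over its first argument, it distributes over any union (not just disjoint unions), so we get
\[
\relrevise{M}{F}{N} \;=\; \relrevise{\select{P}{M}}{F}{N} \;\cup\; \relrevise{\select{\neg P}{M}}{F}{N}.
\]
This is the first step; it is purely definitional.

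Next, I would invoke Lemma~\ref{lemma:gen_record_revision} in the special case $G = F$ (which trivially satisfies $F \subseteq G$), with $P = \affected_F(N)$ as given. This yields $\newrecrevise{m}{F}{N} = m$ for every $m \in \select{\neg P}{M}$, so $\relrevise{\select{\neg P}{M}}{F}{N} = \select{\neg P}{M}$. Substituting into the previous equation gives exactly the conclusion.

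The main ``work'' has already been done in Lemma~\ref{lemma:gen_record_revision}, which handles the interesting induction on $F$: if $m$ fails the affected-predicate, then for the head functional dependency $X \to Y$ there is no $n \in N$ with $n[X] = m[X]$, so record revision takes the fall-through branch and leaves $m$ unchanged, letting the induction hypothesis finish. Consequently, the present lemma has no real obstacle; it is essentially a corollary packaging that pointwise invariance as a set-level equation. If anything, the only thing to be careful about is confirming that $\select{P}{M}$ and $\select{\neg P}{M}$ partition $M$ (which is immediate since $P$ is a two-valued predicate on records) so that the distribution step above is justified at the level of sets.
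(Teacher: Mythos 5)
Your proposal is correct and matches the paper's own proof: both decompose $M$ as $\select{P}{M} \cup \select{\neg P}{M}$, distribute $\relrevise{\cdot}{F}{N}$ over the union, and use Lemma~\ref{lemma:gen_record_revision} (with $G = F$) to show $\relrevise{\select{\neg P}{M}}{F}{N} = \select{\neg P}{M}$. No gaps.
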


\begin{proof}
	\def\currentprefix{proof:relational_revision}
	\begin{sflalign}
		& P = \affected_F(N) & \text{suppose ($P$)} \notag
		\\[2mm] & \indent m \in \select{\neg P}{M} & \text{suppose ($m$)} \notag
		\\ & \indent m = \newrecrevise{m}{F}{N} & \text{lemma \ref{lemma:gen_record_revision}} \notag
		\\ & \relrevise{\select{\neg P}{M}}{F}{N} = \select{\neg P}{M}
		& \text{$\forall$ intro ($m$); def. $\relrevise{\cdot}{F}{\cdot}$} \locallabel{revise_simpl}
		\\[2mm] & M = \select{P}{M} \cup \select{\neg P}{M} & \text{decompose $M$} \notag
		\\[2mm] & \relrevise{M}{F}{N} = \relrevise{\select{P}{M}}{F}{N} \cup \relrevise{\select{\neg P}{M}}{F}{N}
		& \text{def. $\relrevise{\cdot}{F}{\cdot}$} \notag
		\\ & \phantom{\relrevise{M}{F}{N}} = \relrevise{\select{P}{M}}{F}{N} \cup \select{\neg P}{M}
		& \text{(\localref{revise_simpl})} \notag
	\end{sflalign}
\end{proof}

\begin{lemma}
	\label{lemma:relational_merge}
	If $P = \affected_F(N)$, then $\relmerge{M}{F}{N} = \relmerge{\select{P}{M}}{F}{N} \cup \select{\neg P}{M}$.
\end{lemma}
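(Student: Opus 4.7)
The plan is to reduce this directly to the preceding Lemma~\ref{lemma:relational_revision} on relational revision, since relational merge is defined simply as $\relmerge{M}{F}{N} = \relrevise{M}{F}{N} \cup N$. No new reasoning about functional dependencies or the $\affected_F$ predicate is needed --- all such work has already been done for the revision case.

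Concretely, I would start by unfolding the definition of $\relmerge{M}{F}{N}$ to get $\relrevise{M}{F}{N} \cup N$. Then I would apply Lemma~\ref{lemma:relational_revision} (using the hypothesis $P = \affected_F(N)$) to rewrite $\relrevise{M}{F}{N}$ as $\relrevise{\select{P}{M}}{F}{N} \cup \select{\neg P}{M}$. The resulting expression is
\[
(\relrevise{\select{P}{M}}{F}{N} \cup \select{\neg P}{M}) \cup N,
\]
and by commutativity and associativity of $\cup$ this rearranges to
\[
(\relrevise{\select{P}{M}}{F}{N} \cup N) \cup \select{\neg P}{M},
\]
which by the definition of $\relmerge$ is exactly $\relmerge{\select{P}{M}}{F}{N} \cup \select{\neg P}{M}$, as required.

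There is no real obstacle here: the lemma is a corollary of Lemma~\ref{lemma:relational_revision} combined with the definition of merge, and the only ``work'' is a short sequence of equational rewrites on set unions. I would present it as a three- or four-line calculation in the same $\currentprefix$/\locallabel{} style used in the surrounding proofs, with side remarks citing \texttt{def.~$\relmerge{\cdot}{F}{\cdot}$}, Lemma~\ref{lemma:relational_revision}, and associativity/commutativity of $\cup$.
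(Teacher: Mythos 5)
Your proposal is correct and matches the paper's proof exactly: the paper also unfolds $\relmerge{M}{F}{N}$ to $\relrevise{M}{F}{N} \cup N$, applies Lemma~\ref{lemma:relational_revision}, rearranges the unions by commutativity, and folds the definition of merge back in. Nothing is missing.
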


\begin{proof}
	\def\currentprefix{proof:relational_merge}
	\begin{sflalign}
		& P = \affected_F(N) & \text{suppose ($P$)} \notag
		\\[2mm] & \relmerge{M}{F}{N} = \relrevise{M}{F}{N} \cup N & \text{def. $\relmerge{\cdot}{F}{\cdot}$} \notag
		\\ & \phantom{\relmerge{M}{F}{N} } =
		(\relrevise{\select{P}{M}}{F}{N} \cup \select{\neg P}{M}) \cup N & \text{lemma \ref{lemma:relational_revision}} \notag
		\\ & \phantom{\relmerge{M}{F}{N} } =
		(\relrevise{\select{P}{M}}{F}{N} \cup N) \cup \select{\neg P}{M} & \text{comm. $\cup$} \notag
		\\ & \phantom{\relmerge{M}{F}{N} } =
		\relmerge{\select{P}{M}}{F}{N} \cup \select{\neg P}{M} & \text{def. $\relmerge{\cdot}{F}{\cdot}$} \notag
	\end{sflalign}
\end{proof}

\begin{lemma}
	If $M \cap P = N \cap O = \emptyset$, then $(M \cup N) \deltaminus (O \cup P) = (M \deltaminus O) \deltaplus (N \deltaminus P)$.
	\label{lem:disjoint_deltaminus_deltaplus}
\end{lemma}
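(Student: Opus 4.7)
The plan is to unfold both sides of the equation into their underlying pairs of positive and negative relations, then simplify the four set-theoretic components using the disjointness hypotheses $M \cap P = \emptyset$ and $N \cap O = \emptyset$.

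First I would coerce the relations to delta-relations and apply Lemma~\ref{lem:diff} to rewrite $(M \cup N) \deltaminus (O \cup P)$ as the pair $((M \cup N) \setminus (O \cup P),\, (O \cup P) \setminus (M \cup N))$, and similarly for $M \deltaminus O$ and $N \deltaminus P$. Then using the definition of $\deltaplus$, the right-hand side $(M \deltaminus O) \deltaplus (N \deltaminus P)$ becomes
\[
\bigl(((M \setminus O) \setminus (P \setminus N)) \cup ((N \setminus P) \setminus (O \setminus M)),\ ((O \setminus M) \setminus (N \setminus P)) \cup ((P \setminus N) \setminus (M \setminus O))\bigr).
\]

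Next I would show the positive components match. Distributing $\setminus$ over $\cup$ gives $(M \cup N) \setminus (O \cup P) = (M \setminus (O \cup P)) \cup (N \setminus (O \cup P))$; since $M \cap P = \emptyset$ we have $M \setminus P = M$, hence $M \setminus (O \cup P) = M \setminus O$, and symmetrically $N \setminus (O \cup P) = N \setminus P$ by $N \cap O = \emptyset$. On the other side, $(M \setminus O) \cap (P \setminus N) \subseteq M \cap P = \emptyset$ gives $(M \setminus O) \setminus (P \setminus N) = M \setminus O$, and symmetrically $(N \setminus P) \setminus (O \setminus M) = N \setminus P$. The negative components match by an entirely symmetric argument (swap the roles of $M \leftrightarrow O$ and $N \leftrightarrow P$).

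There is no real obstacle here: the whole argument is a direct calculation, and the disjointness hypotheses are exactly strong enough to let the ``cross'' subtractions ($P \setminus N$ out of $M \setminus O$, etc.) collapse to no-ops. The only mild care needed is to keep track of which disjointness hypothesis kills which cross term, and to observe that both hypotheses are used symmetrically on the positive and negative sides.
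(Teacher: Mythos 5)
Your proposal is correct and follows essentially the same route as the paper's proof: both unfold the two sides via Lemma~\ref{lem:diff} and the definition of $\deltaplus$, distribute $\setminus$ over $\cup$, and use the two disjointness hypotheses to collapse the cross-subtraction terms (the paper rewrites the left side step by step into the right side, whereas you meet in the middle at $((M \setminus O) \cup (N \setminus P),\ (O \setminus M) \cup (P \setminus N))$, which is only a cosmetic difference). Your symmetry observation for the negative component is also sound, since swapping $M \leftrightarrow O$ and $N \leftrightarrow P$ exchanges the two hypotheses with each other.
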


\begin{proof}
	\def\currentprefix{proof:lem_deltaminus}
	First observe that because $M \cap P = N \cap O = \emptyset$, we have
	\begin{salign}
		(M \cup N) \setminus (O \cup P) &= (M \setminus (O \cup P)) \cup (N \setminus (O \cup P))
		& \text{$\setminus$/$\cup$ distr.}
		\\
		\phantom{(M \cup N) \setminus (O \cup P)} &= (M \setminus O) \cup (N \setminus P)
	\end{salign}
	Now we proceed as follows:
	\begin{salign}
		& (M \cup N) \deltaminus (O \cup P) = ( (M \cup N) \setminus (O \cup P), (O \cup P) \setminus (M \cup N) )
		& \text{\reflemma{diff}} \notag
		\\
		& \phantom{(M \cup N) \deltaminus (O \cup P)} =
		( (M \setminus O) \cup (N \setminus P), (O \setminus M) \cup (P \setminus N) ) & \text{observation} \notag
		\\
		& \phantom{(M \cup N) \deltaminus (O \cup P)} =
		( ( (M \setminus O) \setminus (P \setminus N) ) \cup ( (N \setminus P) \setminus (O \setminus M) ), & \notag
		\\
		& \phantom{(M \cup N) \deltaminus (O \cup P)} \quad
		\;\;( (O \setminus M) \setminus (N \setminus P) ) \cup ( (P \setminus N) \setminus (M \setminus O) ))
		& \text{$M \cap P = \emptyset$; $N \cap O = \emptyset$} \notag
		\\
		& \phantom{(M \cup N) \deltaminus (O \cup P)} = (M \setminus O, O \setminus M) \deltaplus (N \setminus P, P \setminus N)
		& \text{def.~$\deltaplus$} \notag
		\\
		& \phantom{(M \cup N) \deltaminus (O \cup P)} = (M \deltaminus O) \deltaplus (N \deltaminus P)
		& \text{\reflemma{diff}} \notag
	\end{salign}
\end{proof}

\restaterelmergeefficient*

\begin{proof}
	\def\currentprefix{proof:relational_merge}
	\begin{sflalign}
		& P = \affected_F(\pDelta N) = \bigvee_{X \to Y \in F} X \in \proj{\pDelta N}{X} & \text{suppose ($P$); def. $\affected_F(\cdot)$} \locallabel{blah}
		\\ & \text{First show $\relmerge{\select{P}{M}}{F}{\pDelta N} \cap \select{\neg P}{M} = \emptyset$ by case analysis on F.}
      & \locallabel{disjoint}
      \\
		\interCaseZero{F = \emptyset \notag}
		& \indent \relmerge{\select{P}{M}}{F}{\pDelta N} = \relmerge{\select{\bot}{M}}{F}{\pDelta N}
		& \text{def. $\affected_F(\cdot)$} \notag
		\\ & \indent \phantom{\relmerge{\select{P}{M}}{F}{\pDelta N}} =
		\relmerge{\emptyset}{F}{\pDelta N} & \text{$\select{\bot}{\cdot }$} \notag
		\\ & \indent \phantom{\relmerge{\select{P}{M}}{F}{\pDelta N}} =
		\relrevise{\emptyset}{F}{\pDelta N} \cup \pDelta N = \pDelta N & \text{def.~$\relmerge{\cdot}{F}{\cdot }$; def.~$\relrevise{\cdot}{F}{\cdot}$} \notag
		\\ & \qedLocal \relmerge{\select{P}{M}}{F}{\pDelta N} \cap \select{\neg P}{M} = \pDelta N \cap \select{\neg P}{M} = \emptyset
		& \text{def. $\affected_F(\cdot)$} \notag
		\\
		\interCaseZero{F = X \to Y \cdot F' \notag}
		& \indent P = X \in \proj{\pDelta N}{X} \vee \bigvee_{X' \to Y' \in F'} X' \in \proj{\pDelta N}{X'}
		& \text{expand $F$ in $P$} \notag
		\\ & \indent \neg P = X \notin \proj{\pDelta N}{X} \wedge \bigwedge_{X' \to Y' \in F'} X' \notin \proj{\pDelta N}{X'}
		& \text{de Morgan} \notag
		\\ & \indent\indent m \in \relmerge{\select{P}{M}}{F}{\pDelta N} = \relrevise{\select{P}{M}}{F}{\pDelta N} \cup \pDelta N
		& \text{suppose ($m$); def. $\relmerge{\cdot}{F}{\cdot}$} \notag
		\\
		\interCaseTwo{m \in \relrevise{\select{P}{M}}{F}{\pDelta N} \notag}
		& \indent\indent\indent \exists m' \in \select{P}{M}.m = \newrecrevise{m'}{F}{\pDelta N}\text{ and }\recproj{m'}{X} = \recproj{m}{X} & \text{def. $\relrevise{\cdot}{F}{\cdot}$} \notag
		\\ & \indent\indent\indent \recproj{m'}{X} \in \proj{\pDelta N}{X} & \notag
      \\ & \indent\indent\indent \recproj{m}{X} \in \proj{\pDelta N}{X} & \text{$\recproj{m}{X} = \recproj{m'}{X}$} \notag
		\\ & \indent\indent\indent \forall n \in \select{\neg P}{M}. \recproj{n}{X} \neq \recproj{m}{X} & \text{$\recproj{n}{X} \notin \proj{\pDelta N}{X}$} \notag
		\\ & \indent\indent\qedLocal m \notin \select{\neg P}{M} & \notag
		\\
		\interCaseTwo{m \in \pDelta N \notag}
		& \indent\indent\qedLocal m \notin \select{\neg P}{M} & \text{def. $\affected_F(\cdot)$} \notag
		\\ & \qedLocal \relmerge{\select{P}{M}}{F}{\pDelta N} \cap \select{\neg P}{M} = \emptyset & \text{$\forall$ intro ($m$)} \notag
		\\[2mm] & \relmerge{M}{F}{\pDelta N} \deltaminus M & \notag
		\\ & \qquad = (\relmerge{\select{P}{M}}{F}{\pDelta N} \cup \select{\neg P}{M}) \deltaminus (\select{P}{M} \cup \select{\neg P}{M})
		& \text{lemma \ref{lemma:relational_merge}; decompose $M$} \notag
		\\ & \qquad = (\relmerge{\select{P}{M}}{F}{\pDelta N} \deltaminus \select{P}{M}) \deltaplus (\select{\neg P}{M} \deltaminus \select{\neg P}{M})
		& \text{(\localref{disjoint}); lemma \ref{lem:disjoint_deltaminus_deltaplus}} \notag
		\\ & \qquad = \relmerge{\select{P}{M}}{F}{\pDelta N} \deltaminus \select{P}{M}
		& \text{simpl.} \notag
	\end{sflalign}
\end{proof}

\section{Proofs from \Secref{incremental}}
\subsection*{Proof of Theorem~\ref{thm:select:optimised}}

\begin{lemma}
	\label{lem:select_relrevise_rewrite_disjoint}
	Suppose $M \vDash F$. Then
	$\relrevise{\select{\neg P}{M}}{F}{N} \cap \select{P}{M} = \emptyset$.
\end{lemma}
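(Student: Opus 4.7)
}
I would argue by contradiction. Suppose there is some record $r$ in the intersection. Then on the one hand $r \in \select{P}{M}$, so $r \in M$ and $P(r)$ holds; on the other hand, $r = \newrecrevise{m}{F}{N}$ for some $m \in \select{\neg P}{M}$, so $m \in M$ and $\neg P(m)$. The goal is to derive $r = m$, at which point $P(r) \wedge \neg P(m)$ becomes an immediate contradiction.

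The first key step is a purely syntactic observation about record revision: every clause of the recursive definition of $\newrecrevise{\cdot}{F}{N}$ in Figure~\ref{fig:FD-operations} only updates attributes in the right-hand side of some functional dependency of $F$. So by a trivial induction on the size of $F$, $\newrecrevise{m}{F}{N}$ and $m$ agree on $U \setminus \Right(F)$; hence $r[U \setminus \Right(F)] = m[U \setminus \Right(F)]$.

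The second key step is that, because $F$ is in tree form, agreement on $U \setminus \Right(F)$ among records of $M$ forces agreement everywhere. Concretely, I would prove by induction on the depth of a node $Y$ in the forest $T_F$ that $m[Y] = r[Y]$: the roots lie in $\Left(F) \setminus \Right(F) \subseteq U \setminus \Right(F)$, so the base case is immediate; for a non-root $Y$ with unique parent $X$ (tree form), the edge $X \to Y$ is in $F$, so $m, r \in M \models X \to Y$ together with $m[X] = r[X]$ from the induction hypothesis yields $m[Y] = r[Y]$. Since the forest covers all of $\Right(F)$, combining with the preserved part gives $m = r$, completing the contradiction.

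The main obstacle, and the only place where subtlety enters, is the tree-form argument: one has to be careful that $\Left(F) \cap \Right(F)$ attributes really are determined inductively from the roots, which is exactly what the tree-form condition guarantees (distinct nodes of $T_F$ share no attributes, and every non-root is the RHS of a unique edge). The predicate $P$ plays no role beyond the final contradiction, so no additional assumption relating $P$ to $F$ is needed.
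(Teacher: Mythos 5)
Your proposal is correct and follows essentially the same route as the paper's proof: both hinge on the facts that record revision leaves all attributes outside $\Right(F)$ (equivalently outside $\outputs(F)$, since $F$ is in tree form) untouched, and that $M \vDash F$ forces any two records of $M$ that agree on those attributes to coincide. The paper phrases the second step contrapositively (a record genuinely changed by revision cannot lie in $M$ at all, hence not in $\select{P}{M}$) and merely asserts the key property of $U \setminus \outputs(F)$, whereas you derive $r = m$ and a predicate contradiction and spell out the tree-form induction, but the underlying argument is identical.
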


\begin{proof}
	\def\currentprefix{proof:select_relrevise_rewrite_disjoint}
	\begin{sflalign}
		& Z = U \setminus outputs(F) & \text{suppose ($Z$)} \notag
		\\ & m \in \select{\neg P}{M} & \text{suppose ($m$)} \notag
		\\
		\interCaseZero{\newrecrevise{m}{F}{N} = m \notag}
		& \qedLocal m \notin \select{P}{M} & \text{$m \in \select{\neg P}{M}$} \notag
		\\
		\interCaseZero{\newrecrevise{m}{F}{N} = m' \neq m \notag}
		& \indent m'[Z] = m[Z] & \text{def.~$\newrecrevise{\cdot}{F}{\cdot}$} \notag
		\\ & \indent \forall m'' \in M \text{ and } \recproj{m}{Z} = \recproj{m''}{Z}. m = m'' \neq m' & \text{$M \models F$} \notag
		\\ & \qedLocal m' \notin \select{P}{M} & \text{$m' \notin M$} \notag
	\end{sflalign}
\end{proof}

\begin{lemma}
	\label{lem:select_relrevise_rewrite}
	Suppose $M \vDash F$. Then $\relrevise{\select{\neg P}{M}}{F}{N} =
	\relrevise{\select{\neg P}{M}}{F}{N \setminus \select{P}{M}}$.
\end{lemma}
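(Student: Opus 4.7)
The plan is to apply Lemma~\ref{lem:fc_same_gen} pointwise to each record $m \in \select{\neg P}{M}$. Recall that lemma says: if $Inv_F(m, N_1, N_2)$ holds, then $\newrecrevise{m}{F}{N_2} = \newrecrevise{m}{F}{N_2 \setminus N_1}$. Instantiating with $N_1 = \select{P}{M}$ and $N_2 = N$ gives exactly the pointwise version of what we want, and the lemma then lifts to the full relational revision by the definition $\relrevise{M'}{F}{N'} = \set{\newrecrevise{m}{F}{N'} \mid m \in M'}$ applied to $M' = \select{\neg P}{M}$.

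The main step is therefore to establish $Inv_F(m, \select{P}{M}, N)$ for every $m \in \select{\neg P}{M}$. Unwinding Definition~\ref{def:fc_same_inv}, the invariant requires, for each $X \to Y \in F$, that if no $n \in N \setminus \select{P}{M}$ agrees with $m$ on $X$, then $m, (N \cap \select{P}{M}) \vDash X \to Y$. In fact I would prove the stronger fact that $\set{m} \cup (N \cap \select{P}{M}) \vDash F$ unconditionally: since $m \in \select{\neg P}{M} \subseteq M$ and $N \cap \select{P}{M} \subseteq \select{P}{M} \subseteq M$, the set $\set{m} \cup (N \cap \select{P}{M})$ is a subset of $M$, and $M \vDash F$ by hypothesis, so the dependencies are inherited. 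The conditional invariant clause follows immediately.

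With the invariant in hand, Lemma~\ref{lem:fc_same_gen} yields $\newrecrevise{m}{F}{N} = \newrecrevise{m}{F}{N \setminus \select{P}{M}}$ for each $m \in \select{\neg P}{M}$. Finally, unfolding the definition of $\relrevise{\cdot}{F}{\cdot}$:
\begin{salign}
\relrevise{\select{\neg P}{M}}{F}{N}
   &= \set{\newrecrevise{m}{F}{N} \mid m \in \select{\neg P}{M}} \\
   &= \set{\newrecrevise{m}{F}{N \setminus \select{P}{M}} \mid m \in \select{\neg P}{M}} \\
   &= \relrevise{\select{\neg P}{M}}{F}{N \setminus \select{P}{M}}.
\end{salign}

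I expect no significant obstacle: all the delicate tree-form reasoning has been absorbed into Lemma~\ref{lem:fc_same_gen}, and the only thing to check here is that the subset relation $\set{m} \cup (N \cap \select{P}{M}) \subseteq M$ combined with $M \vDash F$ gives the invariant for free. The one small thing to be careful about is that Lemma~\ref{lem:fc_same_gen} phrases the invariant in terms of $N' \cap N$ rather than just $N$, but since the lemma's conclusion only drops records of $N_1$, the extra records in $N \setminus \select{P}{M}$ do not enter into the invariant's premise and cause no trouble.
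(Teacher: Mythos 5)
Your proposal is correct and follows essentially the same route as the paper: both reduce the claim to a pointwise application of Lemma~\ref{lem:fc_same_gen} and discharge the invariant $Inv_F(m,\select{P}{M},N)$ by observing that $\set{m} \cup (N \cap \select{P}{M}) \subseteq M \vDash F$. Your additional remark that this establishes the invariant's conclusion unconditionally (not just under the premise) matches the paper's ``immediate'' step exactly.
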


\begin{proof}
It suffices to show that if $m \in \select{\neg P}{M}$, then $\Genrecrevise_F(m,
N)=\Genrecrevise_F(m, N \setminus \select{P}{M})$, by the definition of
relational revision.  This follows from Lem.~\ref{lem:fc_same_gen}, provided
that  $Inv_F(m,\select{P}{M},N)$ holds. To show this, let $X \to Y \in F $ be
given and assume that there is no $n\in N \setminus \select{P}{M}$ such that
$n[X] = m[X]$. We need to show that $m,(\select{P}{M} \cap N) \vDash X \to Y$.
This is immediate since $\{m\} \cup (\select{P}{M} \cap N) \subseteq M \vDash F$.
\end{proof}

\begin{lemma}
	\label{lem:select_relmerge_rewrite}
	If $M \vDash F$ then $
		\relmerge{\select{\neg P}{M}}{F}{N}=
		\relmerge{\select{\neg P}{M}}{F}{N \setminus \select{P}{M}} \cup (\select{P}{M} \cap N).
	$
\end{lemma}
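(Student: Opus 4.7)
The plan is to unfold both sides using the definition of $\relmerge{\cdot}{F}{\cdot}$ and then appeal to the preceding lemma~\ref{lem:select_relrevise_rewrite} to reduce the claim to an elementary set identity. The key observation is that the discrepancy in the arguments to $\relrevise$ on the two sides is exactly accounted for by a standard decomposition of $N$.

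First I would expand the LHS as $\relrevise{\select{\neg P}{M}}{F}{N} \cup N$ and the RHS as $\relrevise{\select{\neg P}{M}}{F}{N \setminus \select{P}{M}} \cup (N \setminus \select{P}{M}) \cup (\select{P}{M} \cap N)$. Next I would apply lemma~\ref{lem:select_relrevise_rewrite}, which (using the hypothesis $M \vDash F$) shows that $\relrevise{\select{\neg P}{M}}{F}{N} = \relrevise{\select{\neg P}{M}}{F}{N \setminus \select{P}{M}}$, so that the two revise components agree. It then remains to verify that $N = (N \setminus \select{P}{M}) \cup (\select{P}{M} \cap N)$, which is the textbook identity $N = (N \setminus S) \cup (N \cap S)$ instantiated at $S = \select{P}{M}$.

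No obstacle is expected here: once lemma~\ref{lem:select_relrevise_rewrite} is in hand, the remaining manipulation is purely set-theoretic and is already catalogued in the appendix's table of basic set identities (the one labelled ``$\setminus$ and $\cap$ complementary''). The only bookkeeping is to make sure the order of $\cup$ associativity is handled cleanly, which can be done in a single chain of equations.
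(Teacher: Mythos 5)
Your proposal is correct and matches the paper's proof essentially step for step: both unfold $\relmerge{\cdot}{F}{\cdot}$, invoke Lemma~\ref{lem:select_relrevise_rewrite} to align the $\Relrevise$ components, and finish with the decomposition $N = (N \setminus \select{P}{M}) \cup (\select{P}{M} \cap N)$. The only difference is presentational (you expand both sides and meet in the middle, the paper rewrites the left side into the right in one chain), which is immaterial.
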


\begin{proof}
\begin{salign}
&\phantom{=}\;
\relmerge{\select{\neg P}{M}}{F}{N}
\\
&=
\relrevise{\select{\neg P}{M}}{F}{N} \cup N
&
\text{def.~$\relmerge{\param}{F}{\param}$}
\\
&=
\relrevise{\select{\neg P}{M}}{F}{N \setminus \select{P}{M}} \cup N
&
\text{\reflemma{select_relrevise_rewrite}}
\\
&=
\relrevise{\select{\neg P}{M}}{F}{N \setminus \select{P}{M}} \cup (N \setminus \select{P}{M}) \cup (\select{P}{M} \cap N)
&
\text{decompose $N$}
\\
&=
\relmerge{\select{\neg P}{M}}{F}{N \setminus \select{P}{M}} \cup (\select{P}{M} \cap N)
&
\text{def.~$\relmerge{\param}{F}{\param}$}
\end{salign}
\end{proof}

\begin{lemma}
	\label{lem:select-merge-opt}
	Suppose $M \models F$ and $\Delta N$ minimal for $\select{P}{M}$. Then $\relmerge{\select{\neg
	P}{M}}{F}{\select{P}{M} \deltaapp \Delta N}\;\deltaminus$\\ $\relmerge{\select{\neg P}{M}}{F}{\select{P}{M}} = (\relmerge{\select{\neg
	P}{M}}{F}{\positive{\Delta N}} \deltaminus \select{\neg P}{M}) \deltaminus
	\negative{\Delta N}$.
\end{lemma}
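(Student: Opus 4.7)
The plan is to abbreviate $M^- = \select{\neg P}{M}$ and $M^+ = \select{P}{M}$, so $M = M^- \cup M^+$ with $M^- \cap M^+ = \emptyset$ (tautologically), and by hypothesis $\Delta N$ is minimal for $M^+$, meaning $\pDelta{N} \cap M^+ = \emptyset$ and $\nDelta{N} \subseteq M^+$. The first step is to rewrite both merge-terms on the left. Applying Lemma~\ref{lem:select_relmerge_rewrite} with $N := M^+$ gives $\relmerge{M^-}{F}{M^+} = \relmerge{M^-}{F}{\emptyset} \cup M^+ = M^- \cup M^+ = M$, so the subtracted term collapses to the whole of $M$. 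Applying the same lemma to $N := M^+ \deltaapp \Delta N = (M^+ \setminus \nDelta{N}) \cup \pDelta{N}$ and using minimality of $\Delta N$ to simplify $N \setminus M^+ = \pDelta{N}$ and $N \cap M^+ = M^+ \setminus \nDelta{N}$, I obtain $\relmerge{M^-}{F}{M^+ \deltaapp \Delta N} = \relmerge{M^-}{F}{\pDelta{N}} \cup (M^+ \setminus \nDelta{N})$. Hence the LHS reduces to $\bigl(\relmerge{M^-}{F}{\pDelta{N}} \cup (M^+ \setminus \nDelta{N})\bigr) \deltaminus M$, i.e.\ with $M = M^- \cup M^+$ on the right.

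The key structural observation is a disjointness fact: $\relmerge{M^-}{F}{\pDelta{N}}$ is disjoint from $M^+$. Indeed, $\relmerge{M^-}{F}{\pDelta{N}} = \relrevise{M^-}{F}{\pDelta{N}} \cup \pDelta{N}$; the first summand is disjoint from $M^+$ by Lemma~\ref{lem:select_relrevise_rewrite_disjoint} (using $M \models F$), and the second is disjoint from $M^+$ by minimality of $\Delta N$. With this in hand, the two unions $\relmerge{M^-}{F}{\pDelta{N}} \cup (M^+ \setminus \nDelta{N})$ and $M^- \cup M^+$ satisfy the cross-disjointness hypotheses of Lemma~\ref{lem:disjoint_deltaminus_deltaplus}: the ``first-with-second'' checks are $\relmerge{M^-}{F}{\pDelta{N}} \cap M^+ = \emptyset$ (just established) and $(M^+ \setminus \nDelta{N}) \cap M^- = \emptyset$ (immediate).

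Applying Lemma~\ref{lem:disjoint_deltaminus_deltaplus} therefore splits the LHS as
\[
  \bigl(\relmerge{M^-}{F}{\pDelta{N}} \deltaminus M^-\bigr) \deltaplus \bigl((M^+ \setminus \nDelta{N}) \deltaminus M^+\bigr).
\]
The second summand simplifies by Lemma~\ref{lem:diff} and $\nDelta{N} \subseteq M^+$ to $(\emptyset, \nDelta{N}) = \deltaminus \nDelta{N}$, so the whole expression rewrites to $\bigl(\relmerge{M^-}{F}{\pDelta{N}} \deltaminus M^-\bigr) \deltaminus \nDelta{N}$, which is exactly the RHS.

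The only real obstacle is the disjointness of $\relmerge{M^-}{F}{\pDelta{N}}$ from $M^+$; once that is secured (via Lemma~\ref{lem:select_relrevise_rewrite_disjoint} plus minimality), the proof is a mechanical application of Lemmas~\ref{lem:select_relmerge_rewrite}, \ref{lem:disjoint_deltaminus_deltaplus}, and \ref{lem:diff}. I anticipate checking the minimality side-conditions (in particular that $\pDelta{N}$ and $\nDelta{N}$ are disjoint from $M^+$ in the required ways) is the spot most likely to need care.
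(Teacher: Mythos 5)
Your proof is correct and follows essentially the same route as the paper: both decompose the two merge terms via Lemma~\ref{lem:select_relmerge_rewrite}, establish the key disjointness of $\relmerge{\select{\neg P}{M}}{F}{\pDelta{N}}$ from $\select{P}{M}$ via Lemma~\ref{lem:select_relrevise_rewrite_disjoint} plus minimality, split the delta difference with Lemma~\ref{lem:disjoint_deltaminus_deltaplus}, and collapse the second summand to $\deltaminus\nDelta{N}$. The only cosmetic difference is that you simplify $\select{P}{M} \cap (\select{P}{M}\deltaapp\Delta N)$ to $\select{P}{M}\setminus\nDelta{N}$ up front, where the paper carries it symbolically and invokes Lemma~\ref{lem:delta-basics-negative} at the end.
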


\begin{proof}
	\def\currentprefix{proof:select_merge_opt}
	\begin{sflalign}
		& (\select{P}{M} \deltaapp \Delta N) \setminus \select{P}{M}
		  = \pDelta N & \text{\reflemma{delta-basics-positive}} \locallabel{n_dn_sub_selpm}
		\\[2mm]
		& \relrevise{\select{\neg P}{M}}{F}{\pDelta N} \cap \select{P}{M} = \emptyset
			& \text{(\localref{n_dn_sub_selpm}); \reflemma{select_relrevise_rewrite_disjoint}} \notag
		\\ & \pDelta N \cap \select{P}{M} = \emptyset
			& \text{(\localref{n_dn_sub_selpm})} \notag
		\\ & (\relrevise{\select{\neg P}{M}}{F}{\pDelta N} \cup \pDelta N) \cap \select{P}{M} = \emptyset
			& \notag
		\\ & \relmerge{\select{\neg P}{M}}{F}{\pDelta N} \cap \select{P}{M} = \emptyset
			& \text{def.~$\Relmerge$} \locallabel{merge_disjoint}
		\\[2mm]
		& \select{P}{M} \cap \select{\neg P}{M} = \emptyset & \notag
		\\ & \select{P}{M} \cap (\select{P}{M} \deltaapp \Delta N) \cap \select{\neg P}{M} = \emptyset
		  & \text{$\cap$ monotone} \locallabel{cap_disjoint}
		\\[2mm]
		& (\select{P}{M} \cap (\select{P}{M} \deltaapp \Delta N)) \deltaminus \select{P}{M}
		  = \deltaminus \nDelta N & \text{\reflemma{delta-basics-negative}} \locallabel{derive_ndelta_n}
		\\[2mm]
		& \relmerge{\select{\neg P}{M}}{F}{\select{P}{M} \deltaapp \Delta N} \deltaminus \relmerge{\select{\neg P}{M}}{F}{\select{P}{M}} & \notag
		\\ & = (\relmerge{\select{\neg P}{M}}{F}{(\select{P}{M} \deltaapp \Delta N) \setminus \select{P}{M}} \cup (\select{P}{M} \cap (\select{P}{M} \deltaapp \Delta N))) & \notag
		\\ & \qquad \deltaminus (\relmerge{\select{\neg P}{M}}{F}{\select{P}{M} \setminus \select{P}{M}} \cup (\select{P}{M} \cap \select{P}{M}))
			& \text{\reflemma{select_relmerge_rewrite}; \reflemma{select_relmerge_rewrite}} \notag
		\\ & = (\relmerge{\select{\neg P}{M}}{F}{\pDelta N} \cup (\select{P}{M} \cap (\select{P}{M} \deltaapp \Delta N))) & \notag
		\\ & \qquad \deltaminus (\relmerge{\select{\neg P}{M}}{F}{\emptyset} \cup \select{P}{M}) & \text{(\localref{n_dn_sub_selpm}); simpl.} \notag
		\\ & = (\relmerge{\select{\neg P}{M}}{F}{\pDelta N} \cup (\select{P}{M} \cap (\select{P}{M} \deltaapp \Delta N)))
			\deltaminus (\select{\neg P}{M} \cup \select{P}{M}) & \text{$\relmerge{\cdot}{F}{\emptyset} = \mathop{id}$} \notag
		\\ & = (\relmerge{\select{\neg P}{M}}{F}{\pDelta N} \deltaminus \select{\neg P}{M}) \deltaapp
		((\select{P}{M} \cap (\select{P}{M} \deltaapp \Delta N)) \deltaminus \select{P}{M}) & \text{(\localref{merge_disjoint}, \localref{cap_disjoint}); \reflemma{disjoint_deltaminus_deltaplus}} \notag
		\\ & = (\relmerge{\select{\neg P}{M}}{F}{\pDelta N} \deltaminus \select{\neg P}{M}) \deltaminus \nDelta N
		  & \text{(\localref{derive_ndelta_n}); def.~$\deltaminus$} \notag
	\end{sflalign}
\end{proof}

\restateselectoptimised*

\setcounter{equation}{0}
\proofContext{thm:select:optimised}
\begin{proof}
\small
\begin{flalign}
&
N
=
\getOp_{\ell}(M)
&
\text{%
   suppose ($N, M$)
}
\notag
\\
&
(M_0, \Delta M_0)
=
\Deltarelmerge{\Deltaselect{\neg P}{M, \emptyset}}{F}{(N, \Delta N)}
&
\text{%
   suppose ($M_0, \Delta M_0, \Delta N$)
}
\notag
\\
&
(N_\#,\Delta N_\#)
=
\Deltaselect{P}{M_0, \Delta M_0} \Deltasetminus (N, \Delta N)
&
\text{%
   suppose ($N_\#,\Delta N_\#)$
}
\locallabel{def-N_hash}
\\[2mm]
&
\deltaOp{put}_{\ell}(M, \Delta N) = (M_0, \Delta M_0) \deltasetminus (N_\#, \Delta N_\#)
&
\text{%
   def.~$\deltaOp{put}_{\ell}$
}
\locallabel{def-put-v}
\\[2mm]
&
N
=
\select{P}{M}
&
\text{%
   def.~$\getOp_{\ell}$
}
\notag
\\[2mm]
&
M_0 = \relmerge{\select{\neg P}{M}}{F}{N}
&
\text{%
	def.~$\lift{\cdot}$; $(\cdot, \cdot)$ reflects $=$
}
\notag
\\
&
\phantom{M_0 }= \relmerge{\select{\neg P}{M}}{F}{\select{P}{M}}
&
\text{%
	$N = \select{P}{M}$
}
\notag
\\
&
\phantom{M_0 }= \select{\neg P}{M} \cup \select{P}{M}
&
\text{%
	$M \models F$; defs.~$\relrevise{\cdot}{F}{\cdot}$, $\relmerge{\cdot}{F}{\cdot}$
}
\notag
\\
&
\phantom{M_0 }= M
&
\text{%
}
\notag
\\[2mm]
&
N \subseteq N = \select{P}{M} = \select{P}{M_0}
&
\text{%
}
\locallabel{N-subseteq-select-P-M_0}
\\[2mm]
&
M_0 \deltaapp \Delta M_0 = \relmerge{\select{\neg P}{M}}{F}{N
  \deltaapp \Delta N}
&
\text{delta-correctness}
&
\notag
\\
&
N \deltaapp \Delta N
\subseteq
M_0 \deltaapp \Delta M_0
&
\text{%
  def.~$\Relmerge$
}
\notag
\\
&
\select{P}{N \deltaapp \Delta N}
\subseteq
\select{P}{M_0 \deltaapp \Delta M_0}
&
\text{%
   monotone $\select{P}{\cdot}$
}
\notag
\\
&
N \deltaapp \Delta N
\subseteq
\select{P}{M_0 \deltaapp \Delta M_0}
&
\text{%
   idemp.~$\select{P}{\cdot}$
}
\notag
\\
&
\phantom{N \deltaapp \Delta N}
\subseteq
\select{P}{M_0} \deltaapp \deltaselect{P}{M_0, \Delta M_0}
&
\text{%
   delta-correctness
}
\locallabel{N-plus-delta-N-subseteq}
\\[2mm]
&
(N_\#,\Delta N_\#)
=
(\select{P}{M_0} \setminus N,
 (\select{P}{M_0}, \deltaselect{P}{M_0, \Delta M_0}) \deltasetminus (N, \Delta N))
&
\text{%
   def.~$\lift{\cdot}$
}
\notag
\\
&
\Delta N_\#
=
(\select{P}{M_0}, \deltaselect{P}{M_0, \Delta M_0}) \deltasetminus (N, \Delta N)
&
\text{%
   $(\cdot, \cdot)$ reflects $=$
}
\notag
\\
&
\phantom{\Delta N_\#}
=
\deltaselect{P}{M_0, \Delta M_0} \deltaminus \Delta N
&
\text{%
   (\localref{N-subseteq-select-P-M_0}, \localref{N-plus-delta-N-subseteq}); \reflemma{correctness-delta-ops} part 5
}
\notag
\\
&
\phantom{\Delta N_\#}
=
(\select{P}{\pDelta{M_0}},\select{P}{\nDelta{M_0}}) \deltaminus \Delta N
&
\text{%
	\reflemma{correctness-delta-ops} part 1
}
\notag
\\[2mm]
&
Q = \affected_F(\positive{\Delta N})
&
\text{%
	suppose ($Q$)
}
\locallabel{affected-query}
\\[2mm]
&
\Delta M_0
=
\deltarelmerge{(\select{\neg P}{M}, \deltaselect{\neg P}{M, \emptyset})}{F}{(N, \Delta N)}
&
\text{%
   def.~$\lift{\cdot}$; $(\cdot, \cdot)$ reflects $=$
}
\notag
\\
&
\phantom{\Delta M_0}
=
\deltarelmerge{(\select{\neg P}{M}, \emptyset)}{F}{(N, \Delta N)}
&
\text{%
   delta-correctness
}
\notag
\\
&
\phantom{\Delta M_0}
=
\relmerge{\select{\neg P}{M}}{F}{N \deltaplus \Delta N} \deltaminus \relmerge{\select{\neg P}{M}}{F}{N}
&
\text{%
	def. $\deltaOp{op}$
}
\notag
\\
&
\phantom{\Delta M_0}
=
(\relmerge{\select{\neg P}{M}}{F}{\positive{\Delta N}} \deltaminus \select{\neg P}{M}) \deltaminus \negative{\Delta N}
&
\text{%
   lemma~\ref{lem:select-merge-opt}
}
\notag
\\
&
\phantom{\Delta M_0}
=
(\relmerge{\select{Q \wedge \neg P}{M}}{F}{\positive{\Delta N}} \deltaminus \select{Q \wedge \neg P}{M}) \deltaminus \negative{\Delta N}
&
\text{%
   (\localref{affected-query}); lemma~\ref{lem:relmerge_efficient}
}
\notag
\\[2mm]
&
\deltaOp{put}_{\ell'}(M, \Delta N) = \Delta M_0 \deltaminus \Delta N_\#
&
\text{%
   def.~$\deltaOp{put}_{\ell'}$
}
\locallabel{def-put-v-prime}
\\[2mm]
&
N_\# \subseteq N_\# = \select{P}{M_0} \setminus N \subseteq \select{P}{M_0} \subseteq M_0
&
\text{%
	def.~$N_\#$
}
\locallabel{N_hash-below-M_0}
\\
&
N_\# \deltaapp \Delta N_\#
=
\select{P}{M_0 \deltaapp \Delta M_0} \setminus (N \deltaapp \Delta N)
\subseteq
\select{P}{M_0 \deltaapp \Delta M_0}
&
\text{%
   (\localref{def-N_hash}); delta-correctness
}
\notag
\\
&
\phantom{
N_\# \deltaapp \Delta N_\#
=
\select{P}{M_0 \deltaapp \Delta M_0} \setminus (N \deltaapp \Delta N)
}
\subseteq
M_0 \deltaapp \Delta M_0
&
\locallabel{N_hash-oplus-below-M_0-oplus}
\\[2mm]
&
(M_0, \Delta M_0) \deltasetminus (N_\#, \Delta N_\#)
=
\Delta M_0 \deltaminus \Delta N_\#
&
\text{%
   (\localref{N_hash-below-M_0}, \localref{N_hash-oplus-below-M_0-oplus}); \reflemma{correctness-delta-ops} part 5)
}
\locallabel{ext-equiv}
\\[2mm]
&
\deltaOp{put}_{\ell'}(M, \Delta N)
=
\deltaOp{put}_{\ell}(M, \Delta N)
&
\text{%
   (\localref{def-put-v-prime}, \localref{ext-equiv}, \localref{def-put-v}); transitivity
}
\notag
\end{flalign}
\end{proof}

\subsection*{Proof of Theorem~\ref{thm:project:optimised}}

\restateprojectoptimised*

\setcounter{equation}{0}
\proofContext{thm:project:optimised}
\begin{proof}
\small
\begin{flalign}
&
N
=
\getOp_{\ell}(M)
&
\text{%
   suppose ($N, M$)
}
\notag
\\
&
(M', \Delta M')
=
(N, \Delta N) \DeltaJoin (\set{\set{A = a}}, \emptyset)
&
\text{%
   suppose ($M', \Delta M', \Delta N$)
}
\notag
\\[2mm]
&
\deltaOp{put}_{\ell}(M, \Delta N)
=
\deltarelrevise{(M', \Delta M')}{X \to A}{(M, \emptyset)}
&
\text{%
   def.~$\deltaOp{put}_{\ell}$
}
\locallabel{def-put-v}
\end{flalign}
\begin{flalign}
&
\Delta M'
=
(N, \Delta N) \deltaJoin (\set{\{A = a\}}, \emptyset)
&
\text{%
   def.~$\lift{\cdot}$; $(\cdot,\cdot)$ reflects $=$
}
\notag
\\
&
\phantom{\Delta M'}
=
(\posDN \Join \set{\{A = a\}}, \negDN \Join \set{\{A = a\}})
&
\text{%
   lemma \ref{lem:correctness-delta-ops} part 3
}
\notag
\\[2mm]
&
\deltaOp{put}_{\ell'}(M, \Delta N)
=
(\relrevise{\pDelta{M'}}{X \to A}{M}, \relrevise{\nDelta{M'}}{X \to A}{M})
&
\text{%
   def.~$\deltaOp{put}_{\ell'}$
}
\locallabel{def-put-v-prime}
\\[2mm]
&
(\relrevise{\pDelta{M'}}{X \to A}{M}, \relrevise{\nDelta{M'}}{X \to A}{M})
& \notag
\\ & \quad 
=
\deltarelrevise{(M', \Delta M')}{X \to A}{(M, \emptyset)}
&
\text{%
   lemma \ref{lem:delta_rel_rev}
}
\locallabel{ext-equiv}
\\[2mm]
&
\deltaOp{put}_{\ell'}(M, \Delta N)
=
\deltaOp{put}_{\ell}(M, \Delta N)
&
\text{%
   (\localref{def-put-v-prime}, \localref{ext-equiv}, \localref{def-put-v}); transitivity
}
\notag
\end{flalign}
\end{proof}

\subsection*{Proof of Theorem~\ref{thm:join:optimised}}

\restatejoinoptimised*

\setcounter{equation}{0}
\proofContext{thm:join:optimised}
\begin{proof}
\small
\begin{flalign}
&
O
=
M \Join N
&
\text{%
   suppose ($O, M, N$); def.~$\getOp_{\ell}$
}
\notag
\\
&
(M_0, \Delta M_0)
= \Deltarelmerge{(M, \emptyset)}{F}{\Deltaproj{O, \Delta O}{U}}
&
\text{%
   suppose ($M_0, \Delta M_0, \Delta O$)
}
\locallabel{def-M_0}
\\
&
(N', \Delta N')
=
\Deltarelmerge{(N, \emptyset)}{G}{\Deltaproj{O, \Delta O}{V}}
&
\text{%
   suppose ($N', \Delta N'$)
}
\locallabel{def-N-prime}
\\
&
(L, \Delta L)
=
((M_0, \Delta M_0) \DeltaJoin (N', \Delta N')) \Deltasetminus (O, \Delta O)
&
\text{%
suppose ($L, \Delta L$)
}
\locallabel{def-delta-L}
\\
&
\Delta M'
=
(M_0, M_0') \deltasetminus \Deltaproj{L, \Delta L}{U}
&
\text{%
   suppose ($\Delta M'$)
}
\notag
\\[2mm]
&
\deltaOp{put}_{\ell}((M, N), \Delta O)
=
(\Delta M', \Delta N')
&
\text{%
   def.~$\deltaOp{put}_{\ell}$
}
\locallabel{def-put-v}
\\[2mm]
&
P_M
=
\affected_F(\proj{\positive{\Delta O}}{U})
&
\text{%
	suppose ($P_M$)
}
\locallabel{affected-left-query}
\\
&
P_N
=
\affected_G(\proj{\positive{\Delta O}}{V})
&
\text{%
	suppose ($P_N$)
}
\locallabel{affected-right-query}
\\[2mm]
&
M_0
=
\relmerge{M}{F}{\proj{O}{U}}
=
M
&
\text{%
   def.~$\lift{\cdot}$; $\proj{O}{U} \subseteq M$
}
\notag
\\
&
\Delta M_0
=
\deltarelmerge{(M, \emptyset)}{F}{(\proj{O}{U}, \deltaproj{O, \Delta O}{U})}
&
\text{%
   def.~$\lift{\cdot}$
}
\notag
\\
&
\phantom{\Delta M_0}
=
\relmerge{M}{F}{\positive{\deltaproj{O, \Delta O}{U}}} \deltaminus M
&
\text{%
   lemma \ref{lem:delta_rel_merge}
}
\notag
\\
&
\phantom{\Delta M_0}
=
\relmerge{M}{F}{\proj{\positive{\Delta O}}{U} \setminus \proj{O}{U}} \deltaminus M
&
\text{%
	lemma \ref{lem:correctness-delta-ops} part 2
}
\notag
\\
&
\phantom{\Delta M_0}
=
\relmerge{M}{F}{\proj{\positive{\Delta O}}{U}} \deltaminus M
&
\text{%
   $\proj{O}{U} \subseteq M$
}
\notag
\\
&
\phantom{\Delta M_0}
=
\relmerge{\select{P_M}{M}}{F}{\proj{\positive{\Delta O}}{U}} \deltaminus \select{P_M}{M}
&
\text{%
	(\localref{affected-left-query}); lemma~\ref{lem:relmerge_efficient}
}
\notag
\\[2mm]
&
N'
=
\relmerge{N}{G}{\proj{O}{V}}
=
N
&
\text{%
   def.~$\lift{\cdot}$; $\proj{O}{V} \subseteq N$
}
\notag
\\
&
\Delta N'
=
\deltarelmerge{(N, \emptyset)}{G}{(\proj{O}{V}, \deltaproj{O, \Delta O}{V})}
&
\text{%
   def.~$\lift{\cdot}$
}
\locallabel{def-delta-N-prime}
\\
&
\phantom{\Delta N'}
=
\relmerge{N}{G}{\positive{\deltaproj{O, \Delta O}{V}}} \deltaminus N
&
\text{%
   lemma \ref{lem:delta_rel_merge}
}
\notag
\\
&
\phantom{\Delta N'}
=
\relmerge{N}{G}{\proj{\positive{\Delta O}}{V} \setminus \proj{O}{V}} \deltaminus N
&
\text{%
	\reflemma{correctness-delta-ops} part 2
}
\notag
\\
&
\phantom{\Delta N'}
=
\relmerge{N}{G}{\proj{\positive{\Delta O}}{V}} \deltaminus N
&
\text{%
   $\proj{O}{V} \subseteq N$
}
\notag
\\
&
\phantom{\Delta N'}
=
\relmerge{\select{P_N}{N}}{G}{\proj{\positive{\Delta O}}{V}} \deltaminus \select{P_N}{N}
&
\text{%
   (\localref{affected-right-query}); lemma~\ref{lem:relmerge_efficient}
}
\notag
\\[2mm]
&
L = (M_0 \Join N') \setminus O = (M \Join N) \setminus O = O \setminus O = \emptyset
&
\notag
\end{flalign}
\begin{flalign}
&
\Delta O'
=
(M, \Delta M_0) \deltaJoin (N, \Delta N')
&
\text{%
   define ($\Delta O'$)
}
\locallabel{def-delta-O-prime}
\\
&
\positive{\Delta O'}
=
((M \deltaapp \Delta M_0) \Join \positive{\Delta N'})
\cup
(\positive{\Delta M_0} \Join (N \deltaapp \Delta N'))
&
\text{%
	\reflemma{correctness-delta-ops} part 3
}
\notag
\\
&
\negative{\Delta O'}
=
(\negative{\Delta M_0} \Join N) \cup (M \Join \negative{\Delta N'})
&
\text{%
	\reflemma{correctness-delta-ops} part 3
}
\notag
\\[2mm]
&
M_0 \deltaapp \Delta M_0 = \relmerge{M}{F}{\proj{O \deltaapp \Delta O}{U}}
&
\text{%
   (\localref{def-M_0}); delta-correctness
}
\notag
\\
&
N' \deltaapp \Delta N' = \relmerge{N}{G}{\proj{O \deltaapp \Delta O}{V}}
&
\text{%
   (\localref{def-N-prime}); delta-correctness
}
\notag
\\
&
\proj{O \deltaapp \Delta O}{U} \subseteq M_0 \deltaapp \Delta M_0
&
\text{%
   def.~$\relmerge{\cdot}{F}{\cdot}$
}
\notag
\\
&
\proj{O \deltaapp \Delta O}{V} \subseteq N' \deltaapp \Delta N'
&
\text{%
   def.~$\relmerge{\cdot}{G}{\cdot}$
}
\notag
\\
&
\proj{O \deltaapp \Delta O}{U} \Join \proj{O \deltaapp \Delta O}{V}
\subseteq
(M_0 \deltaapp \Delta M_0) \Join (N' \deltaapp \Delta N')
&
\text{%
   $\Join$ monotone
}
\notag
\\
&
O \deltaapp \Delta O
\subseteq
(M_0 \deltaapp \Delta M_0) \Join (N' \deltaapp \Delta N')
&
\text{%
   $\Join$ after $\pi_U \times \pi_V$ incr.; trans.
}
\notag
\\
&
O \deltaapp \Delta O \subseteq O \deltaapp \Delta O'
&
\text{%
   $(\localref{def-delta-O-prime})$; delta-correctness
}
\locallabel{O-subseteq}
\end{flalign}
\begin{flalign}
&
\proj{L}{U} \subseteq M = M_0
&
\text{%
   $\pi_U$ after $\Join$ decr.
}
\locallabel{proj-L-U-subset-M}
\\
&
\proj{L}{U} \deltaapp \deltaproj{L, \Delta L}{U}
=
\proj{L \deltaapp \Delta L}{U}
&
\text{%
   delta-correctness
}
\notag
\\
&
\phantom{\proj{L}{U} \deltaapp \deltaproj{L, \Delta L}{U}}
=
\proj{((M_0 \deltaapp \Delta M_0) \Join (N' \deltaapp \Delta N')) \setminus (O \deltaapp \Delta O)}{U}
&
\text{%
   \text{(\localref{def-delta-L}); delta-correctness}
}
\notag
\\
&
\phantom{\proj{L}{U} \deltaapp \deltaproj{L, \Delta L}{U}}
\subseteq
\proj{(M_0 \deltaapp \Delta M_0) \Join (N' \deltaapp \Delta N')}{U}
&
\text{%
   \text{$\proj{\cdot}{U}$ monotone}
}
\notag
\\
&
\phantom{\proj{L}{U} \deltaapp \deltaproj{L, \Delta L}{U}}
\subseteq
M_0 \deltaapp \Delta M_0
&
\text{%
   \text{$\pi_U$ after $\Join$ decr.}
}
\locallabel{delta-proj-L-U-subset-M}
\\[2mm]
&
\Delta M'
=
(M_0, \Delta M_0) \deltasetminus (\proj{L}{U}, \deltaproj{L, \Delta L}{U})
&
\text{%
   def.~$\lift{\cdot}$
}
\notag
\\
&
\phantom{\Delta M'} = \Delta M_0 \deltaminus \deltaproj{L, \Delta L}{U}
&
\text{%
   (\localref{proj-L-U-subset-M}, \localref{delta-proj-L-U-subset-M});
   \reflemma{correctness-delta-ops} part 5
}
\notag
\\
&
\phantom{\Delta M'} = \Delta M_0
\deltaminus
(\proj{\positive{\Delta L}}{U} \setminus \proj{L}{U}, \proj{\negative{\Delta L}}{U} \setminus \proj{L \deltaapp \Delta L}{U})
&
\text{%
	\reflemma{correctness-delta-ops} part 2
}
\notag
\\
&
\phantom{\Delta M'}
=
\Delta M_0
\deltaminus
(\proj{\positive{\Delta L}}{U}, \proj{\negative{\Delta L}}{U} \setminus \proj{\positive{\Delta L}}{U})
&
\text{%
   $L = \emptyset$
}
\notag
\\
&
\phantom{\Delta M'}
=
\Delta M_0
\deltaminus
(\proj{\positive{\Delta L}}{U}, \proj{\negative{\Delta L}}{U})
&
\text{%
   $\positive{\Delta L} \cap \negative{\Delta L} = \emptyset$
}
\notag
\\
&
\phantom{\Delta M'}
=
\Delta M_0
\deltaminus
\proj{\Delta L}{U}
&
\text{%
   def.~$\proj{\cdot}{U}$ on deltas
}
\notag
\\[2mm]
&
\deltaOp{put}_{\ell'}((M, N), \Delta O)
=
(\Delta M', \Delta N')
&
\text{%
   def.~$\deltaOp{put}_{\ell'}$
}
\locallabel{def-put-v-prime}
\\[2mm]
&
\deltaOp{put}_{\ell'}((M, N), \Delta O)
=
\deltaOp{put}_{\ell}((M, N), \Delta O)
&
\text{%
   (\localref{def-put-v-prime}, \localref{def-put-v}); transitivity
}
\notag
\end{flalign}
\end{proof}

\subsection*{Proof of Theorem~\ref{thm:rename:optimised}}

\restaterenameoptimised*

\setcounter{equation}{0}
\proofContext{thm:rename:optimised}
\begin{proof}
\small
\begin{flalign}
&
N
=
\getOp_{\ell}(M)
&
\text{%
   suppose ($N, M$)
}
\notag
\\
&
(M', \Delta M')
=
\Deltarename{B/A}{N, \Delta N}
&
\text{%
   suppose ($N_{\opname{new}}, \Delta N_{\opname{new}}, \Delta N$)
}
\notag
\\
&
\phantom{(M', \Delta M')}
=
(\rename{B/A}{N},\deltarename{B/A}{N, \Delta N})
&
\text{
	def.~$\lift{\cdot}$
}
\locallabel{def-m-prime}
\\[2mm]
&
\deltaOp{put}_{\ell}
=
\Delta M'
&
\text{%
	def. $\deltaOp{put}_{\ell}$
}
\locallabel{def-put-v}
\\[2mm]
&
\Delta M'
=
\deltarename{B/A}{N, \Delta N}
&
\text{%
	(\localref{def-m-prime})
}
\notag
\\
&
\phantom{\Delta M'}
=
(\rename{B/A}{\positive{\Delta N}},\rename{B/A}{\negative{\Delta N}})
&
\text{%
	\reflemma{correctness-delta-ops} part 4
}
\locallabel{def-delta-m-prime}
\\[2mm]
&
\deltaOp{put}_{\ell'}
=
\Delta M'
&
\text{%
	def. $\deltaOp{put}_{\ell'}$
}
\notag
\\
&
\phantom{\deltaOp{put}_{\ell'}}
=
(\rename{B/A}{\positive{\Delta N}},\rename{B/A}{\negative{\Delta N}})
&
\text{%
	(\localref{def-delta-m-prime})
}
\locallabel{def-put-v-prime}
\\[2mm]
&
\deltaOp{put}_{\ell}
=
\deltaOp{put}_{\ell'}
&
\text{%
	(\localref{def-put-v-prime}, \localref{def-put-v}); transitivity
}
\notag
\end{flalign}
\end{proof}

\section{DBLP Example}
\label{appendix:dblp_example}

We use a parser to convert the given XML file into a set of PostgreSQL tables.
In particular we are interested in the conference proceedings as well as their inproceedings.

\begin{figure}[h]
	\centering
	\begin{subfigure}{1.0\textwidth}
		\begin{equation*}
			\tblfd{
				\begin{array}{c|cccc}
					& \hd{inproceedings} & \hd{title} & \hd{year} & \hd{proceedings} \\
					\hline
					& \tx{conf/pods/BohannonPV06} & \tx{Relational lenses: \dots} & 2006 & \tx{conf/pods/2006} \\
					& \vdots
				\end{array}
			}{ \hd{inproceedings} \to \hd{title}\ \hd{year}\ \hd{proceedings} }
		\end{equation*}
		\caption{The \tblname{inproceedings} table.}
	\end{subfigure}
	\begin{subfigure}{1.0\textwidth}
		\begin{equation*}
			\tblfd{
				\begin{array}{c|cc}
					& \hd{inproceedings} & \hd{author} \\
					\hline
					& \tx{conf/pods/BohannonPV06} & \tx{Aaron Bohannon} \\
					& \tx{conf/pods/BohannonPV06} & \tx{Benjamin C. Pierce} \\
					& \tx{conf/pods/BohannonPV06} & \tx{Jeffrey A. Vaughan} \\
					& \vdots
				\end{array}
			}{ }
		\end{equation*}
		\caption{The \tblname{inproceedings\_author} table.}
	\end{subfigure}
	\caption{The four tables used in our DBLP example.}
	\label{fig:dblp_tables}
\end{figure}

Example data of the tables used is shown in Figure \ref{fig:dblp_tables}.
The join of the \tblname{inproceedings} and \tblname{inproceedings\_author} tables produces a view with the columns \colname{author}, \colname{inproceedings}, \colname{title}, \colname{year}, \colname{proceedings}.
The functional dependencies of the joined table are shown in Figure \ref{fig:dblp_fun_deps}.

\begin{figure}
	\centering
	\begin{tikzpicture}[sibling distance=5em,
		node distance = 0.5cm and 1cm,
		every node/.style = {font={\small\itshape}}]
		\node (a) {author};
		\node[right=3em of a]{inproceedings}
			child { node {title} }
			child { node {year} }
			child { node {proceedings}
			};
	\end{tikzpicture}
	\caption{The functional dependencies in tree form of the DBLP example.}
	\label{fig:dblp_fun_deps}
\end{figure}

Retrieving the view of the lens using $\getOp$ results in table containing the entries as shown below.
We retrieve the view in links and make a change in the title to all entries with attribute \colname{inproceedings} = \tx{conf/pods/BohannonPV06}.
The updated view is applied to the database using the $\putOp$ operation.

\begin{equation*}
	\begin{array}{c|cccccc}
		& \hd{author} & \hd{inproceedings} & \hd{title} & \hd{year} & \hd{proceedings} \\
		\hline
		& \tx{Aaron Bohannon} & \tx{conf/pods/BohannonPV06} & \tx{Rel\dots} & 2006 & \tx{conf/pods/2006} \\
		& \tx{Benjamin C. Pierce} & \tx{conf/pods/BohannonPV06} & \tx{Rel\dots} & 2006 & \tx{conf/pods/2006} \\
		& \tx{Jeffrey A. Vaughan} & \tx{conf/pods/BohannonPV06} & \tx{Rel\dots} & 2006 & \tx{conf/pods/2006} \\
		& \vdots
	\end{array}
\end{equation*}

\end{techreport}

\end{document}